\documentclass[11pt]{article}
\usepackage[letterpaper,margin=1.00in]{geometry}
\usepackage{amsmath, amssymb, amsthm, amsfonts}
\usepackage{bbm}
\usepackage{amscd}
\usepackage{mathrsfs}
\usepackage{mathtools}

\usepackage{comment} 
\usepackage{ifthen}
\usepackage{tikz}
\usetikzlibrary{positioning,decorations.pathreplacing}
\usepackage{ bbold }
\usepackage{graphicx}
\usepackage{color}
\usepackage{algorithm}
\usepackage[noend]{algpseudocode}
\usepackage{epstopdf}
\usepackage{wrapfig}
\usepackage{paralist}
\usepackage{wasysym}
\usepackage[textsize=tiny]{todonotes}

\usepackage{thmtools}

\usepackage{listings} 

\usepackage{pdflscape} 

\usepackage[most]{tcolorbox}
\tcbset{colback=gray!10, colframe=gray, boxrule=0.5pt, arc=2pt, left=4pt, right=4pt, top=4pt, bottom=4pt}

\usepackage{framed}
\usepackage[framemethod=tikz]{mdframed}
\usepackage[bottom]{footmisc}
\usepackage{enumitem}
\setitemize{noitemsep,topsep=3pt,parsep=3pt,partopsep=3pt}
\usepackage[font=small]{caption}
\usepackage{xspace}

\usepackage{thmtools} 
\usepackage{thm-restate} 

\usepackage[hypertexnames=false]{hyperref}
\hypersetup{
    unicode=false,          
    colorlinks=true,        
    linkcolor=red,          
    citecolor=darkgreen,        
    filecolor=magenta,      
    urlcolor=cyan           
}

\newtheorem{theorem}{Theorem}[section]
\newtheorem{lemma}[theorem]{Lemma}
\newtheorem{meta-theorem}[theorem]{Meta-Theorem}
\newtheorem{claim}[theorem]{Claim}
\newtheorem{remark}[theorem]{Remark}

\newtheorem{observation}[theorem]{Observation}
\newtheorem{definition}[theorem]{Definition}

\usepackage[capitalize, nameinlink,noabbrev]{cleveref}

\crefname{theorem}{Theorem}{Theorems}
\crefname{proposition}{Proposition}{Propositions}
\crefname{observation}{Observation}{Observations}
\crefname{lemma}{Lemma}{Lemmas}
\crefname{claim}{Claim}{Claims}
\crefname{problem}{Problem}{Problems}
\crefname{conjecture}{Conjecture}{Conjectures}
\crefname{question}{Question}{Questions}
\crefname{example}{Example}{Examples}
\crefname{fact}{Fact}{Facts}

\definecolor{darkgreen}{rgb}{0,0.5,0}

\usepackage{algcompatible}
\algnewcommand\algorithmicswitch{\textbf{switch}}
\algnewcommand\algorithmiccase{\textbf{case}}

\algdef{SE}[SWITCH]{Switch}{EndSwitch}[1]{\algorithmicswitch\ #1\ \algorithmicdo}{\algorithmicend\ \algorithmicswitch}%
\algdef{SE}[CASE]{Case}{EndCase}[1]{\algorithmiccase\ #1}{\algorithmicend\ \algorithmiccase}%
\algtext*{EndSwitch}%
\algtext*{EndCase}%

\newcommand{\eps}{\varepsilon}

\newcommand{\poly}{\operatorname{poly}}

\renewcommand{\phi}{\varphi}

\renewcommand{\paragraph}[1]{\vspace{0.15cm}\noindent {\bf #1}}

\newcommand{\FullOrShort}{full}
\ifthenelse{\equal{\FullOrShort}{full}}{
  
  \newcommand{\fullOnly}[1]{#1}
  \newcommand{\shortOnly}[1]{}

  }{

    \newcommand{\fullOnly}[1]{}
    \newcommand{\IncludePictures}[1]{}
   
  }

\usepackage{appendix}
\title{True Work-Efficiency in Parallel Derandomization}

\begin{document}
\date{}
\author{Mohsen Ghaffari \\ \small MIT \\ \small ghaffari@mit.edu \and Cheng Jiang \\ \small MIT \\ \small chengj36@mit.edu }
\maketitle

\begin{abstract} A longstanding limitation of known techniques for parallel derandomization was that they increased the work bound by at least polylogarithmic factors. For instance, for basic and frequently used problems such as maximal independent set, maximal matching, and $\Delta+1$ coloring where $\Delta$ denotes the graph's maximum degree, in $n$-node $m$-edge graphs, polylogarithmic-depth \textit{deterministic} parallel algorithms required $\Omega((m+n)\poly(\log n))$ work, e.g., Luby [FOCS'98]. This meant one needs at least $\poly(\log n)$ processors to be faster than the naive single-processor algorithm for these problems. Recently, Ghaffari and Grunau [FOCS'25] provided a new parallel derandomization method that substantially reduced the overhead from $\poly(\log n)$ to $\poly(\log\log n)$, thus achieving $O((m+n)\poly(\log \log n))$ work bounds. In this paper, we settle this line of research by obtaining linear work bounds of $O(m+n)$, hence exhibiting truly work-efficient parallel derandomization.  
\end{abstract}

   \thispagestyle{empty}

{   
    \newpage
    \hypersetup{linkcolor=blue}
    \tableofcontents
    \setcounter{page}{0}
    \thispagestyle{empty}
}

\newpage
\setcounter{page}{1}
\section{Introduction}
\subsection{Context}
\paragraph{State of the Art.} Derandomization has been one of the classic topics studied in parallel algorithms, since the 1980s---see, e.g.,~\cite{karp1984fast,luby1985simple,alon86,luby1988removing, berger1989efficient,motwani1989probabilistic, berger1989simulating}. Despite that, known techniques increased the work by at least a $\poly(\log n)$ factor.
We give a brief overview of the methods in \Cref{subsec:method}. In particular, for classic graph problems---such as maximal independent set (MIS), maximal matching, and coloring, which were the primary targets in the area~\cite{karp1984fast,luby1985simple,alon86,luby1988removing}---this meant deterministic parallel algorithms are slower than the naive sequential algorithms for these problems, unless one has at least $\poly(\log n)$ processors. Recently, Ghaffari and Grunau~\cite{GG25} introduced a new parallel derandomization framework based on gradually rounding fractional solutions, inspired by local rounding methods developed in the context of distributed graph algorithms, e.g.,~\cite{GhaffariK21, faour2022local}. They used this to reduce this overhead exponentially, obtaining deterministic parallel algorithms with $O((m+n)\poly(\log\log n))$ work and $\poly(\log n)$ depth, for maximal independent set, maximal matching, and a certain hitting set problem. In this paper, we remove the remaining $\poly(\log\log n)$ overhead and obtain \emph{truly work-efficient} deterministic parallel algorithms with $O(m+n)$ work and $\poly(\log n)$ depth, for the same set of problems. This settles this line of work by achieving deterministic parallel algorithms that have linear work bounds, matching that of the corresponding randomized parallel algorithms. We also broaden the applicability of the method by providing linear-work polylogarithmic-depth deterministic parallel algorithms for minimum set cover approximation and graph coloring. Before stating the results formally, let us recall the setup and the standard terminology.

\paragraph{Parallel model, work, and work-efficiency.}
We work in the standard \emph{work-depth} model~\cite{jaja1992introduction, blelloch1996programming}. For an algorithm $\mathcal A$, its depth $D(\mathcal A)$ is the length of the longest dependency chain, and its work $W(\mathcal A)$ is the total number of operations. On $p$ processors, the running time satisfies $
T_p(\mathcal A)\ge \max\{D(\mathcal A), W(\mathcal A)/p\}$. In addition, by Brent's principle~\cite{brent1974parallel}, we have $T_p(\mathcal A)\le D(\mathcal A)+W(\mathcal A)/p.$ Thus, to obtain meaningful parallel speedups with a moderate number of processors, one needs low depth and a work bound close to that of the best sequential algorithm. Algorithms with work bound matching the best sequential algorithm are called \emph{work-efficient}; if there are extra $\poly(\log n)$ factors, the algorithm is often called \emph{nearly} work-efficient. \footnote{We comment that, while early parallel work often emphasized depth and tolerated large polynomial work, more recent research has increasingly focused on obtaining work-efficient or nearly work-efficient algorithms; see, e.g., \cite{fineman2018nearly, jambulapati2019parallel, blelloch2020parallelism, li2020faster, andoni2020parallel, cao2020efficient,dhulipala2021theoretically,anderson2021parallel,rozhovn2022undirected,rozhovn2022deterministic, ghaffari2023work,ghaffari2024work}.}

\subsection{Our Results.}
We show that the general approach of Ghaffari and Grunau~\cite{GG25} can be strengthened and sharpened to achieve deterministic parallel algorithms with \emph{linear work} and polylogarithmic depth, for a range of problems central to this area. Next, we state our concrete results. We comment that for each statement, the formal statement needs to clarify the representation of the input, concretely, that the graph is provided in a compact representation where different computations through different endpoints of each edge can be coordinated. To simplify things, we give informal variants of the theorem statements here, and we reference their formal variants appearing in the technical sections.
\bigskip


\paragraph{Maximal Independent Set.} Our first result is for the maximal independent set problem. We present the informal statement below; the precise statement appears in \Cref{{thm:mis}}.

\begin{restatable}{theorem}{mis}\textnormal{(\textbf{Maximal Independent Set Algorithm})}
\label{thmI:MIS}
There is a deterministic parallel algorithm that, given any $n$-node $m$-edge graph $G=(V,E)$, computes a maximal independent set using $O(m+n)$ work and $\poly(\log n)$ depth.
\end{restatable}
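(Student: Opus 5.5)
The plan is to follow the gradual-rounding framework of Ghaffari and Grunau~\cite{GG25}, but to organize the computation so that each phase costs only linear work in the part of the graph it touches, and so that the total work forms a geometric series rather than a sum of $\poly(\log\log n)$ equal terms.

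\textbf{The core subroutine.} The central ingredient is a deterministic linear-work, $\poly(\log n)$-depth procedure that, on the current graph $G'=(V',E')$, computes an independent set $I$. Removing $I\cup N(I)$ should delete a constant fraction of the edges of $G'$; since an isolated vertex can simply be added to $I$, the same holds for vertices with positive degree. I would obtain this by derandomizing one step of Luby's algorithm via fractional rounding.
\begin{itemize}
\item Start from a fractional assignment in which each vertex $v$ gets value roughly $1/(2\deg(v))$. These values define a ``fractional independent set'' whose neighborhood fractionally covers a constant fraction of the edges; the edge $\{u,v\}$ is charged to the endpoint of lower degree, as in Luby's analysis.
\item Round the values to integrality in $O(\log \Delta)$ stages. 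In stage $i$, vertices whose value has granularity $2^{-i}$ either double their value or drop to $0$.
\item Make these choices by a deterministic pairing/orientation argument: pair up the fractional units along a low-degree structure, e.g.\ an Euler-partition or a defective coloring computed with linear work. Each pair then makes complementary choices, so a pessimistic estimator of the covered edges minus conflicting edges changes only by a $(1\pm 1/\poly(\log n))$ or small additive amount.
\end{itemize}

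\textbf{Keeping the total work linear.} To make each rounding stage cost work linear in the number of \emph{currently active} fractional vertices and edges, rather than the full $m+n$, I would do two things.
\begin{itemize}
\item Drop, at each stage, vertices whose value became $0$ together with their incident edges. The active size then shrinks geometrically across stages, so the $O(\log\Delta)$ stages of one iteration sum to $O(m'+n')$ work.
\item Handle high-degree vertices separately, through buckets of degree classes.
\end{itemize}
Then iterate the core subroutine. Each iteration removes a constant fraction of the remaining edges (and of the vertices of positive degree), so the total work is $\sum_j O(m_j+n_j)=O(m+n)$. The number of iterations is $O(\log n)$, each of $\poly(\log n)$ depth.

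\textbf{Main obstacle.} I expect the hardest step to be the core subroutine's linear-work rounding. The approach of~\cite{GG25} pays $\poly(\log\log n)$ because it uses limited-independence or local-rounding machinery on each scale with fixed overhead. My first attempt would be:
\begin{itemize}
\item Bound the error introduced at stage $i$ by something like $2^{-\Omega(i)}$ times the current potential. The total loss is then a constant, not $O(\log\log n)$ times a constant.
\item Perform the pairing at stage $i$ on a graph whose size is charged to the edges that are removed or kept at that stage.
\item If pairing directly is too costly for high-degree vertices, first sparsify: sample deterministically a subgraph of degree $\poly(\log n)$ by splitting adjacency lists into chunks. Run the rounding there, whose cost is linear in the sparsified size, and argue that the sparsified estimator is within a constant factor of the true one.
\end{itemize}
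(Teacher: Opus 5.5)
\textbf{Verdict: genuine gap.} Your outer loop matches the paper at a high level: derandomize one Luby step by gradual $p\to 2p$ rounding, then iterate on geometrically shrinking graphs. But the step that carries the theorem is asserted rather than supplied. That step is making each stage's choices deterministically in linear work while preserving coverage, and the mechanisms you suggest do not deliver it.

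\textbf{The pairing mechanisms do not work.} A candidate $u$ lies in $IN^*(v)$ for up to $\deg(u)$ different good vertices $v$, so keeping every $v$'s in-mass balanced is a hypergraph balancing problem.
\begin{itemize}
\item An Euler partition balances only when each unit touches two constraints. Complementary choices within pairs cannot preserve the mass at all the $v$'s containing $u$ at once.
\item A defective coloring helps only once you have an explicit objective of linear size on which to run conditional expectations. Luby's pessimistic estimator has none: its coverage part $\sum_{u<u'\in IN^*(v)}x_ux_{u'}$ has $\sum_v |IN^*(v)|^2$ terms.
\item Your chunking fallback does not fix this. When probabilities are far below $1/\poly(\log n)$, a chunk of $\poly(\log n)$ in-neighbors carries $o(1)$ mass. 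A chunk-based estimator is then either not a valid lower bound on coverage (if cross-chunk pairs are dropped) or only an $o(1)$ fraction of it (if a single chunk stands in for the in-neighborhood).
\end{itemize}
The paper uses a different kind of objective. It cuts each $N(v)$ into bundles of size $B$ and minimizes a Chebyshev-type potential $\sum_i w_i(x_i-B/2)^2$. Alongside it go analogous bundle potentials that force the edge count and auxiliary-degree mass to shrink, plus the linear-size conflict objective $\Phi(A)=\sum_{u\to w,\,u,w\in A}S_u$. The auxiliary graph has $O(B|E|)$ edges and is built by near-independent bundling without sorting overhead. It controls $d(G,G')$ only up to a relative error polynomially small in $1/B$, plus an additive $O(pB)\,\hat W$ coming from the fewer than $B$ leftover neighbors of each right node.

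\textbf{The error/work accounting is missing.} In bundle-based rounding, a per-stage relative loss of $\sqrt\gamma$ costs a work factor $1/\gamma$. So a loss of $1/\poly(\log n)$ or $2^{-\Omega(i)}$ per stage is not free.
\begin{itemize}
\item The geometric shrinkage you invoke bounds the sum of active sizes. It does not pay for a non-constant work factor at a stage that absorbs a fresh, unshrunk degree class.
\item The additive leftover term $p/\gamma$ forbids large bundles once $p$ is large.
\end{itemize}
The paper reconciles these in two phases:
\begin{itemize}
\item First it rounds each class $2^{-i}$ separately up to $2^{-\lfloor i/2\rfloor}$, with $\gamma_{i,j}\approx 0.95^{j}$ (up to constants and a $1/\poly(\log n)$ floor). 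This is paid for by the $(2/3)^{j}$ shrinkage, and the error is charged to the class's own weight $W(G^{(i)})$, which sums to $W(G)$.
\item Only then does it merge classes and repeatedly round the smallest class $2^{-u}$ with $\gamma_u\approx 2^{-\beta u}$, where $(2/3)2^{\beta}<1$. The $(2/3)^{u}$ shrinkage inherited from the first phase pays for this.
\end{itemize}

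\textbf{The final extraction step would also need rework.} This rounding guarantees only $\Theta(1)$ hits on most of the weight and a constant-factor bound on $\Phi$, not a conflict-free set. So the paper extracts an independent-ish set (out-degree at most $C$ in $G[S]$), colors $G[S]$ with $4C+1$ colors by peeling and partial list coloring, and keeps the best color class. Your direct extraction of an independent set via Luby's estimator would have to be reworked once you switch to such a linear-size objective.

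As it stands, your proposal correctly identifies the hard step but does not establish the $O(m+n)$ work bound.
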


\paragraph{Maximal Matching.} We obtain an algorithm with similar bounds for maximal matching. We present the informal statement below; the precise statement appears in \Cref{thm:maximal_matching}.

\begin{restatable}{theorem}{mm}\textnormal{(\textbf{Maximal Matching Algorithm})}
\label{thmI:MM}
There is a deterministic parallel algorithm that, given any $n$-node $m$-edge graph $G=(V,E)$, computes a maximal matching using $O(m+n)$ work and $\poly(\log n)$ depth.
\end{restatable}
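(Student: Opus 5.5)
The plan is to reduce maximal matching to repeated calls of a linear-work deterministic rounding primitive. Each call should remove a constant fraction of the remaining edges with work linear in the current graph size, so that the total work is a geometric series summing to $O(m+n)$. Concretely, I would aim for the following iteration lemma. Given the current graph $G'=(V',E')$ with $m'$ edges, we deterministically compute in $O(m'+|V'|)$ work and $\poly(\log n)$ depth a matching $M$ such that the number of edges incident to $V(M)$ is at least $c\cdot m'$ for a constant $c>0$. We then delete $V(M)$ and all isolated vertices and recurse. After $O(\log n)$ iterations no edges remain, so the depth is $\poly(\log n)$. Isolated vertices are removed using compaction in linear work, so the vertex term also decays and the total work is $O(m+n)$.

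To prove the iteration lemma, I would follow the fractional-to-integral rounding of \cite{GG25}, but with a sharper budget.

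\textbf{Step 1.} Start from a fractional matching that covers a constant fraction of the edges, in the sense of "good" edges. Restrict to edges $e=\{u,v\}$ with $\deg(u)\le \deg(v)$ and $\deg(v)$ within a constant factor of a bucket, and assign $x_e=1/(2\deg(v))$ in the style of Israeli–Itai/Luby. The standard argument shows that the total "covered mass" $\sum_{e} x_e \cdot(\deg(u)+\deg(v))$ is $\Omega(m')$.

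\textbf{Step 2.} Round $x$ gradually. In each phase, double the values on a subset of edges and drop the others, using a pessimistic estimator that tracks (a) the covered mass and (b) the fractional-matching constraint violation $\sum_{e\ni v}x_e$. The choice of which edges to double is made by a deterministic low-degree derandomization, for example a 2-coloring or splitting step via pairwise-independent seeds with conditional expectations, applied only to the current active edges. There are $O(\log n)$ phases until the values reach a constant.

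\textbf{Step 3.} Once the values are constant, resolve the remaining conflicts, which now involve only vertices of constant fractional degree. Do this by a simple deterministic local step such as Cole–Vishkin-style symmetry breaking on a bounded-degree subgraph, which yields an actual matching that retains a constant fraction of the covered mass.

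The main obstacle is the work of Step 2. \cite{GG25} pays $\poly(\log\log n)$ because each rounding phase touches all edges and uses seeds of length $O(\log\log n)$-ish. To get true linearity, I would make each phase discard a constant fraction of the active edges, since each doubling step halves the support in expectation and the estimator can enforce this deterministically. Phase $i$ then costs $O(m'/2^{i})$ plus an overhead that must be $O(m'/2^i)$ as well. For the overhead I would first try to group edges into blocks of size $\poly(\log\log n)$ sharing seed bits, so that evaluating conditional expectations costs $O(1)$ amortized per edge. Where this fails, for early phases with many edges per vertex, I would first sparsify the graph deterministically to degree $\poly(\log n)$ by a cheap splitting step and only then run the costly estimator. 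The total per iteration then stays $O(m'+|V'|)$, and the geometric sum gives \Cref{thmI:MM}.
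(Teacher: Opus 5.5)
Your outline has the right shape: round a fractional matching by repeated doubling, let the active edge set decay geometrically, and finish on a bounded-degree graph. But Step 2, the only hard part, is left unresolved, and the mechanism you name would not give linear work.

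\textbf{The gap in Step 2.} Fixing a pairwise-independent seed by conditional expectations is Luby's binary search. The seed has $\Theta(\log n)$ bits, and each bit is fixed by re-evaluating the estimator over all active terms. So one rounding step already costs $\Omega(\log n)$ passes over the active edges. Geometric decay of the active set across phases does not remove this per-phase factor, and the total is $O(m\log n)$. Your two patches do not close this:
\begin{itemize}
\item Blocks of $\poly(\log\log n)$ edges with their own seed bits are still coupled through the per-vertex constraint terms. Their seeds cannot be fixed in parallel without some schedule that controls the cross-block interactions, and you give no way to evaluate conditional expectations in $O(1)$ amortized time per edge.
\item Sparsifying to degree $\poly(\log n)$ by a ``cheap splitting step'' is itself an instance of the same balanced-splitting derandomization, so invoking it is circular.
\end{itemize}
Your diagnosis of \cite{GG25} is also off: their $\poly(\log\log n)$ loss does not come from seed length. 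It comes from $O(\log\log n)$ Linial-style color-reduction rounds in their defective coloring, from integer sorting, and from bundles of size $\Omega(\log\log n)$ for high probabilities.

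\textbf{What the paper uses instead.} The paper's substitute for seeds is a weighted $(\lceil 2/\eps\rceil,\eps)$ defective coloring computed in $O(m+n)$ work (\cref{thm:near-optimal-defective-coloring}).
\begin{itemize}
\item The coloring serves as a schedule for conditional expectations with \emph{fully independent} choices, one color class at a time. This runs on an explicitly built quadratic objective and loses only the defective weight (\cref{lem:aqo}).
\item Per-vertex Chebyshev-type deviations are written over bundles of size $B=\lceil 1/\gamma\rceil$, constructed with linear-work $[\log n]$-key sorting. This costs $O(\text{size}/\gamma)$ work and gives $O(\sqrt\gamma)$ relative error plus an additive $O(p/\gamma)$ term.
\item The schedule $\gamma_i=\max\{\eta\max(\sqrt{p_{i-1}},0.95^i),\,1/\log^4 n\}$ makes both errors and work geometric against the $0.7^i$ edge decay (\cref{lem:fixed-prob-strict-sampling}).
\item An explicit overflow repair enforces halving at every vertex of degree at least $1/p$ (\cref{lem:strict-half-sampling-edges}). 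This is what makes the final graph have degree $O(1)$.
\end{itemize}

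\textbf{Two further problems.}
\begin{itemize}
\item Your iteration lemma asks for a constant fraction of \emph{all} edges per call. That forces Step 1 to mix all $\Theta(\log n)$ degree buckets with different values $x_e$, since a single bucket would not give covered mass $\Omega(m')$. You never say how to round these classes jointly with errors summing to a constant. The paper sidesteps this by rounding only the top degree bucket per round, using the large-degree extractor (\cref{big-degree-ds}) to keep total extraction cost at $O(m+n)$.
\item Cole--Vishkin in Step 3 uses $\Theta(\log^* n)$ linear-work rounds, so it is not linear as stated. The paper instead applies its linear-work $(\Delta+1)$-coloring to the conflict graph of the surviving edges and keeps the largest color class.
\end{itemize}
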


\paragraph{Hitting Set.} As in Ghaffari and Grunau~\cite{GG25}, a central ingredient in the above results is a deterministic parallel algorithm for a hitting set problem -- this is an abstraction that captures many basic usages of randomness in parallel graph algorithms. The setup is a bipartite graph $G=(U\sqcup V,E)$ together with fractional values $p_v\in[0,1]$ such that each $u\in U$ sees total mass $\sum_{v\in N_G(u)}p_v\in\Theta(1)$. The goal is to deterministically select a set $S\subseteq V$ that hits a large fraction of $U$, while ensuring each hit node is hit only $O(1)$ times. We present the informal statement below; the precise statement appears in \Cref{thm:main-hitting-set}.

\begin{restatable}{theorem}{hittingset}\textnormal{(\textbf{Hitting Set Algorithm})}
\label{thmI:hittingSet}
Consider an $n$-node $m$-edge bipartite graph $G=(U\sqcup V,E)$ and suppose that each $v\in V$ has a probability $p_v\in[0,1]$ such that $\forall u\in U$, we have $\sum_{v\in N_G(u)} p_v\in\Theta(1)$. Suppose also that each node $u\in U$ has a real-valued importance $imp_u\in\mathbb{R}_{+}$. There is a deterministic parallel algorithm that, using $O(m+n)$ work and $\poly(\log n)$ depth, computes two subsets $S\subseteq V$ and $U_{good}\subseteq U$ such that
\begin{itemize}
    \item $\sum_{u\in U_{good}} imp_u \ge 0.99 \sum_{u\in U} imp_u$,
    \item $\forall u\in U_{good}$, we have $|N_G(u)\cap S|\in\Theta(1)$.
\end{itemize}
\end{restatable}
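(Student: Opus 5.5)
We will prove \Cref{thmI:hittingSet} by gradual rounding of the fractional values $p_v$ in the style of Ghaffari and Grunau~\cite{GG25}. The new ingredient is a preprocessing step that shrinks the instance geometrically, so that the total work telescopes to $O(m+n)$.

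\textbf{Step 1: potential function.} Discretize each $p_v$ to a power of $2$ of the form $2^{-i}$, with $i\le L=O(\log n)$. For each $u\in U$ define the load $X_u=\sum_{v\in N(u)}p_v$. Call $u$ \emph{good} if $X_u\in[c_1,c_2]$; initially every $u$ is good, after adjusting constants. The potential is a pessimistic estimator that sums, over $u\in U$, the quantity $imp_u$ times an exponential-moment penalty. This penalty is a small multiple of $\mathbb{E}[(1+\delta)^{X_u}]$ plus a lower-tail term. Under independent rounding it is $O(1)\cdot\sum_u imp_u$, and when all $p_v\in\{0,1\}$ it upper bounds the importance of the bad nodes times a large constant. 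It therefore suffices to keep the potential from growing by more than a $(1+1/\poly\log)$-type factor per rounding level, with the per-level losses summing to a small constant.

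\textbf{Step 2: one rounding level.} Consider the nodes $v$ with $p_v=2^{-i}$. Pair them up, deterministically or via a bounded-degree structure on $V$. In each pair, one node doubles its value and the other drops to $0$. This keeps every $X_u$ unchanged in expectation, and each pairing choice can be fixed by the method of conditional expectations with respect to the potential. To do this in parallel, we use a limited-independence family: choosing a seed of a $k$-wise independent hash family gives $\poly\log$ depth. Alternatively, following~\cite{GG25}, we split the pairs into $O(\log n)$ batches where each $u$ is affected only mildly by each batch, so that the changes are nearly additive and a simple local decision per batch does not increase the potential.

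\textbf{Step 3: work efficiency (the main obstacle).} Done naively, each of the $L$ levels touches all edges, giving $O(mL)$ work. In~\cite{GG25} this was reduced to $\poly(\log\log n)$ overhead, and removing that overhead is the heart of the matter. The plan is a shrinking recursion. First we round only until a constant fraction of the edge mass becomes "settled": an edge incident to a $v$ with $p_v=0$ is deleted, and a node $u$ whose load is already determined up to a small error is frozen and removed. We then argue that in each phase the number of remaining edges drops by a constant factor, or a $(\log\log n)$-type factor, while the potential increases by at most a $(1+\epsilon_j)$ factor with $\sum_j\epsilon_j=O(1)$. The total work is then a geometric sum, $O(m+n)$.

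To get the geometric decrease, we first sparsify in $O(m)$ work. Each $u$ keeps only a sample of its neighbours of size $\poly(\log\log n)$ or so, with the $p$ values rescaled accordingly. The sample is chosen by an explicit deterministic rule that preserves $X_u$ up to a $1\pm\epsilon$ factor on the important mass, which is checked with the same potential-based argument. On this sparser instance, running the $O(\log n)$-level rounding with polyloglog overhead costs $O(m)$ total. Finally, the $0.99$ fraction of $imp$ follows from the final potential bound via Markov's inequality. The step I expect to be hardest is making the sparsification both deterministic and linear-work while keeping the potential loss summable.
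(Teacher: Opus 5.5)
Your overall shape (power-of-two classes, gradual rounding by conditional expectations, Markov at the end) matches the paper's, but there is a genuine gap: the part that is supposed to produce $O(m+n)$ work does not hold up.

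The sparsification in Step~3 is circular and ill-posed. Deterministically choosing, in linear work, a small subset of each $N(u)$ that preserves $\sum_{v\in N(u)}p_v$ up to $1\pm\epsilon$ on most of the importance is itself a hitting-set problem of the kind you are trying to solve. Moreover, $p_v$ and the output $S$ live on $V$ and are shared by all $u$ adjacent to $v$. So ``each $u$ keeps a sample with rescaled $p$'' does not define a consistent instance, and the dropped edges still count in $|N_G(u)\cap S|$, leaving the upper bound uncontrolled.

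Even granting the sparsification, the sparse instance has $|U|\cdot\poly(\log\log n)$ edges, so running the \cite{GG25}-style rounding on it costs $|U|\cdot\poly(\log\log n)$. That is not $O(m)$ when degrees are already $O(1)$ with constant $p_v$. This is exactly the high-probability regime ($\Theta(\log\log n)$ classes above $1/\poly(\log n)$, and as many rounding steps) where the overhead of \cite{GG25} originates.

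The $k$-wise independent seed option in Step~2 fails for a related reason. Fixing an $\Omega(\log n)$-bit seed by conditional expectations costs $\Omega(m)$ work per bit, which is Luby's $\poly(\log n)$ overhead.

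Finally, ``the number of remaining edges drops by a constant factor'' is asserted with no mechanism. A deterministic choice may always keep the higher-degree member of each pair. The paper enforces edge decay explicitly with a second Chebyshev term built via \cref{lem:near-independent-bundling}.

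What is missing are the three ingredients that actually remove the overhead.

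(i) The objective is quadratic, not an exponential moment. It consists of Chebyshev terms over bundles of $B=\lceil 1/\gamma\rceil$ neighbours of each $u$. This gives an auxiliary graph with $O(mB)$ edges and a loss of $O(\sqrt\gamma)$ times the probability mass plus $O(p/\gamma)$ times the total (reweighted) importance. After discarding the monochromatic edges of an $\epsilon$-relative defective coloring, such an objective is linear inside each color class, so a whole class is fixed in parallel (\cref{lem:aqo}). An exponential moment is a product over all of $N(u)$; it is only multilinear within a batch, and its pairwise conflict structure has $\sum_u|N(u)|^2$ edges.

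(ii) The defective coloring itself must take $O(m+n)$ work (\cref{thm:near-optimal-defective-coloring}). Batching ``following \cite{GG25}'' inherits their $O((m+n)\log\log n)$ coloring.

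(iii) The schedule makes both work and error geometric. Each class $2^{-t}$ is first rounded to $2^{-\lfloor t/2\rfloor}$ by $\lceil t/2\rceil$ half-sampling rounds with $\gamma_i=\max\{\eta\epsilon^2 0.95^i,1/\log^4 n\}$. The $1/\gamma_i$ work blow-up is then beaten by the $(2/3)^i$ edge decay, while $\sum_i\sqrt{\gamma_i}=O(\epsilon)$. Afterwards the class at level $2^{-u}$ holds only an $O((2/3)^u)$ fraction of the edges, which pays for $\gamma_u\approx 2^{-cu}$ in the finishing phase.

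Your plan has no substitute for (ii) or (iii). Without them, at best you recover the $\poly(\log\log n)$ overhead of \cite{GG25}.
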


\paragraph{Defective Coloring.} An additional ingredient, crucial to our linear-work bound in all these algorithms, is a linear-work subroutine that given a graph with edge weights and a parameter $\eps\geq 1/\poly(\log n)$, computes a coloring of the vertices so that at most $\eps$ fraction of the weight is monochromatic. This is used in all our derandomizations. We present the informal statement below; the precise statement appears in \Cref{thm:near-optimal-defective-coloring}.

\begin{restatable}{theorem}{defectivecoloring}\textnormal{(\textbf{Weighted Defective Coloring Algorithm})}
\label{thmI:defectiveColoring}
There is a deterministic parallel algorithm that, given any $n$-node $m$-edge graph $G=(V,E)$ with nonnegative edge weights and any parameter $\eps \in (0,1]$, computes a coloring of the vertices using $O(1/\eps)$ colors such that the total weight of monochromatic edges is at most an $\eps$ fraction of the total edge weight. The algorithm uses $O(m+n)$ work and $\poly(\log n)$ depth.
\end{restatable}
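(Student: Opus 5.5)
Write $W=\sum_e w_e$ and $k=\lceil 1/\eps\rceil$. The target is a coloring $\chi:V\to[C]$ with $C=O(k)$ such that the monochromatic weight is at most $\eps W$. A uniformly random $O(k)$-coloring achieves this in expectation, so the plan is to derandomize that coloring. Naive conditional expectations, or a pairwise-independent family, would cost $\Omega(m\log n)$ work or $\Omega(\log n)$ depth per vertex. So I would proceed in two stages: first a cheap reduction that shrinks the effective problem size by a $\poly(\log n)$ factor, then a heavier derandomization run only on the shrunken instance.

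\textbf{Stage 1: bucketing and handling heavy structure.} Round every weight down to a power of $2$, losing a factor $2$, and discard edges with $w_e<\eps W/(2m)$, which together contribute at most $\eps W/2$. Next, orient each edge toward the endpoint of larger weighted degree, breaking ties by ID. This costs $O(m+n)$ work via parallel sorting by bucket, using integer sort on $O(\log n)$ buckets. Then iterate a ``peeling'' step. Vertices whose weighted degree is small relative to $W/n$ are deferred to the end: each such vertex is colored greedily after its neighbors, picking the color that minimizes the monochromatic weight to already-colored neighbors. This greedy choice guarantees at most a $1/C$ fraction of its incident weight is monochromatic. Done naively the greedy is sequential, so in parallel I would instead take an acyclic orientation with out-degree layers, an $O(\log n)$-layer peeling as in arboricity arguments. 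Within each layer, vertices choose colors given their already-colored out-neighbors, independently of one another, charging the weight to the later endpoint. Each vertex scans its out-edges and picks the least-loaded color among $C$. To keep the work linear rather than $O(m+nC)$, it only inspects colors actually present among its neighbors, since any absent color gives zero defect.

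\textbf{Stage 2: derandomizing the conflict within a layer.} The remaining problem is edges whose two endpoints lie in the same peeling layer; there the choices are simultaneous. On these edges I would use a pairwise-independent hash $h:V\to[C]$ from a family of size $\poly(n)$. The expected monochromatic weight is $W'/C$, and the method of conditional expectations over the $O(\log n)$ seed bits reduces the depth to $\poly(\log n)$. The obstacle is work: each bit costs $O(m)$, for $O(m\log n)$ in total. To fix this, I would first contract the instance. Using Stage 1 recursively, or a low-out-degree orientation, together with grouping edges into super-edges by weight class, the within-layer graph can be compressed to $O(m/\log n)$ representative edges. Then the $O(\log n)$-bit conditional-expectation pass costs $O(m)$. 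Alternatively, all seed bits can be fixed in one pass per $O(\log\log n)$ bits by enumerating $\poly\log n$ candidates for blocks of seed bits on a sampled subset of edges of size $m/\poly\log n$. Because the estimator is linear, errors sum, and this yields monochromatic weight at most $(1+o(1))W'/C$.

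\textbf{Main obstacle.} The crux is the linear-work derandomization in Stage 2, namely reducing the edge set the seed search must evaluate to size $m/\poly\log n$ without losing control of the weight. I would attempt this with a geometric recursion: color a random-like subset defined by the first few hash bits, fix those bits by exact conditional expectation on edges restricted to the relevant buckets, and recurse on the surviving monochromatic-candidate edges. The goal is for the edge count to drop by a constant factor per round, so that the total work telescopes to $O(m+n)$. Summing the losses, namely discarded light edges, the factor-$2$ rounding, the greedy layers, and the hashing, each bounded by $O(\eps W)$, and rescaling $C$ by a constant, gives the statement.
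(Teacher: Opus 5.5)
There is a genuine gap: you never supply a linear-work mechanism for the core step. Stage~1 does not shrink the problem, because the peeling gives no control on the weight of edges \emph{inside} a layer. On a regular graph with uniform weights (e.g.\ $K_n$), all vertices have the same weighted degree and land in a single layer. So all of the weight is within-layer, and Stage~2 faces the original instance. The greedy step only controls edges to vertices processed earlier. It therefore needs a schedule of $\poly(\log n)$ batches with little internal weight, i.e.\ a low-defect coloring with few colors, which is exactly the object you are trying to construct. Stage~2 then uses conditional expectations over a pairwise-independent seed. That costs $\Theta(m)$ per seed bit and $\Theta(m\log n)$ overall, which is the very overhead the theorem removes. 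None of the proposed repairs works as stated:
\begin{itemize}
\item Edges with different endpoint pairs cannot be merged into ``super-edges,'' since the conditional probability of being monochromatic depends on the specific endpoints.
\item Choosing seed blocks on a fixed subset of $m/\poly(\log n)$ edges requires that subset to be representative for all candidate seeds simultaneously. That is itself a derandomization problem, and linearity of expectation gives no such guarantee.
\item In the ``geometric recursion,'' fixing a few seed bits of a family such as $h(x)=\langle a,x\rangle\oplus b$ determines no vertex's color. No edge becomes permanently bichromatic, so nothing drops out.
\item If you instead fix output (color) bits one at a time, each split is a 2-coloring derandomization. By the seed-bit method it again costs $\Theta(|E|\log n)$, and geometric decay only sums this to $O(m\log n)$.
\end{itemize}

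Recursing on color bits is in fact the right outer structure: the paper's final step refines by $(2,\tfrac12+\delta)$ defective colorings and recurses on the defective edges. But it needs a \emph{linear-work} splitting primitive, and the paper builds one without hashing. It first reduces $n$ colors to $n^{\epsilon}$ with small defect via \cite{GG25}. It then performs a single Linial step to $\log n$ colors with constant defect, using an explicit $(k,k/100,k/5000)$ combinatorial design on $k=\Theta(\log n)$ points, found in $o(n)$ preprocessing. Word-level tricks make this step linear-work: sets are stored as $O(\log n)$-bit words, neighbor weights are rounded to powers of two and bucketed by linear-work $\log n$ sorting, and a 3-to-2 carry-save reduction is applied. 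This lets each vertex find a lightly loaded element of its set in $O(\deg(v))$ work. After the defective edges are deleted, this $\log n$-coloring is a proper schedule for batched greedy. Parameter mixing plus iteration then yields $(u,1/u+O(1/\log^2 n))$ colorings in linear work.

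One further problem if you adopt the color-bit recursion: your floor of $\eps W/(2m)$ on edge weights only bounds the number of residual edges after $j$ halvings by $\min\{m,\,O(2^{-j}m/\eps)\}$. That can sum to $\Theta(m\log(1/\eps))$. The paper instead normalizes the total weight to $1$ and adds $1/(10m)$ to every edge. This $\eps$-independent floor makes the residual edge counts decay geometrically from $m$.
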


\paragraph{$(\Delta+1)$-Coloring.} We also expand the reach of these techniques by addressing two other classic problems in the area. The first such problem is coloring. We obtain a work-efficient deterministic parallel algorithm for $(\Delta+1)$-coloring. We present the informal statement below; the precise statement appears in \Cref{thm:delta+1_coloring}.

\begin{restatable}{theorem}{coloring}\textnormal{(\textbf{$(\Delta+1)$-Coloring Algorithm})}
\label{thmI:coloring}
There is a deterministic parallel algorithm that, given any $n$-node $m$-edge graph $G=(V,E)$ of maximum degree $\Delta$, computes a proper $(\Delta+1)$-coloring using $O(m+n)$ work and $\poly(\log n)$ depth.
\end{restatable}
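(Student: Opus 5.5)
We reduce $(\Delta+1)$-coloring to $O(\log n)$ phases of a list-coloring task. In each phase a constant fraction (by weight) of the remaining uncolored vertices gets colored. Each phase is derandomized using the hitting-set machinery of \Cref{thmI:hittingSet} together with the weighted defective coloring of \Cref{thmI:defectiveColoring}.

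The invariant is standard. Every uncolored vertex $v$ keeps a palette $L(v)$ of colors not used by its colored neighbors, with $|L(v)| \ge d_{\mathrm{unc}}(v)+1$, where $d_{\mathrm{unc}}(v)$ is the number of uncolored neighbors. This holds initially and is preserved when neighbors get colored. To get linear work overall, we use the importance $imp_v = d_{\mathrm{unc}}(v)+1$. In each phase we aim to color a set of vertices carrying a constant fraction of $\sum_v imp_v$. Since the uncolored edge mass then drops geometrically, the total work is $O(m+n)$, provided each phase costs work linear in the current uncolored subgraph plus the palette sizes we actually touch.

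Palettes are a concern, because storing all of $L(v)$ may cost $\Delta$ per vertex. We avoid this by storing only $\min(|L(v)|, O(\deg(v)+1))$ colors. These are obtained by taking the first $\deg(v)+1$ colors and deleting those used by neighbors, using work proportional to $\deg(v)$ via integer sorting/semisorting.

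One phase simulates the random trial in which each vertex picks a uniformly random color from its (truncated) palette and keeps it if no uncolored neighbor picked the same color. In the randomized version, a vertex $v$ succeeds with constant probability, since the expected number of conflicting neighbors is at most $\sum_{u\in N(v)} 1/|L(u)|$. That sum can be large, so we first restrict to a subset.

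\textbf{Step 1.} Compute a defective coloring via \Cref{thmI:defectiveColoring} with constant $\eps$, with edge weights chosen so that the edges within each color class carry at most an $\eps$ fraction of the weight. Set the weight of edge $\{u,v\}$ to $1/|L(u)|+1/|L(v)|$. Then in a single class, a vertex $v$ of large total weight retains expected conflict mass $O(\eps)$ on average. We keep the vertices whose own monochromatic weight is small (Markov's inequality, measured by importance).

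\textbf{Step 2.} Process the $O(1)$ classes one at a time. Within a class, make color choices via the hitting-set framework. Set $U$ to be the vertices, $V$ to be the (vertex, color) pairs, $p_{(v,c)}=1/|L(v)|$, and let $u\in U$ be adjacent to all of its own candidate pairs. Conflicts are encoded through the requirement that pairs $(w,c)$ with $w\in N(v)$ and $c\in L(v)$ are hit only $O(1)$ times.

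This conflict encoding is the delicate point. I would instead follow the route of \cite{GG25}-style gradual rounding directly: round the fractional choice vector $x_{v,c}=1/|L(v)|$ in $O(\log\log n)$ stages, halving supports. At each stage, use a defective coloring of the conflict graph to pair up the rounding decisions, and fix them by the method of conditional expectations on a pessimistic estimator $\sum_v imp_v(\text{conflict}_v)$. After the stages, each kept vertex has an integral color with conflict estimator at most a constant; vertices with zero actual conflicts commit their colors, and they form a constant fraction of the importance.

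The main obstacle is work-efficiency of this rounding. The conflict graph between (vertex, color) pairs can be much larger than $m$. I would handle this by only rounding pairs $(v,c)$ with $c$ in the truncated palette, and by grouping colors per edge via semisort, so that conflict edges are counted as $\sum_{\{u,v\}\in E}|L(u)\cap L(v)|$. This sum must still be bounded by $O(m)$ per phase. Achieving that bound needs care: I would sample palettes down to size $O(1)$ per vertex, using \Cref{thmI:hittingSet} with each vertex choosing a constant-size subpalette hit $\Theta(1)$ times. After that, conflicts cost $O(1)$ per edge, and the geometric decay across phases gives $O(m+n)$ total work with $\poly(\log n)$ depth.
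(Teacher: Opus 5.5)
There is a genuine gap at exactly the point you flag as ``the main obstacle''. Your rounding lives on (vertex, color) pairs, so its pessimistic conflict estimator is a quadratic form over the pair-conflict graph. That graph has $\sum_{\{u,v\}\in E}|L(u)\cap L(v)|$ edges, which is $\Theta(m\Delta)$ in a $\Delta$-regular graph with full palettes. As the remark after \cref{lem:aqo} stresses, that lemma (and likewise \cref{lem:strong-hitting-set}) needs this auxiliary graph built explicitly. Your remedy is to first sparsify every palette to $O(1)$ colors via \cref{thmI:hittingSet}, but that does not control conflicts at all. The hitting-set instance only connects each vertex to its own pairs. Its guarantee is therefore equally met if, in $K_{d+1}$ with all palettes $[d+1]$, every vertex keeps $\{1,\dots,k\}$. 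After that, all but at most $k$ vertices conflict whatever they then pick. Forbidding such correlated choices requires adding exactly the pair-level conflict objective you were trying to avoid, so the argument is circular. There are further problems with the rounding itself. Halving supports of size up to $\Delta+1$ takes $\log\Delta$ stages, not $O(\log\log n)$. Over that many stages, a constant-$\eps$ defective coloring per stage compounds to a polynomial loss. And ending with a ``conflict estimator at most a constant'' does not by itself yield a constant fraction of conflict-free vertices.

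The idea you are missing is the paper's (\cref{thm:partial_list_coloring}): never introduce pairs at all. Fix a common universe $[L+1]$ and split it into aligned dyadic blocks. At stage $i$, each vertex of block $U_j$ moves to one of the two halves, with probabilities proportional to $|\mathcal L_u\cap U_{2j-1}|$ and $|\mathcal L_u\cap U_{2j}|$. The potential $P_i=\sum_j\sum_{u\in V_j}\deg_j(u)/|\mathcal L_u\cap U_j|$ is then an exact martingale. It is a sum over edges of $G$ that lie inside a common block, with weight $1/|\mathcal L_u\cap U_j|+1/|\mathcal L_v\cap U_j|$. So each stage needs only a weighted defective coloring of a subgraph of $G$, with $\eps=\Theta(1/\log n)$, and loses a factor $1+1/(10\log n)$. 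After $\log(L+1)$ stages, $P_r\le 1.5|V|<2|V|$. Hence at least $|V|/4$ vertices have monochromatic degree at most $1$, and keeping one endpoint of each such monochromatic edge colors $|V|/8$ vertices properly. The work is linear for two reasons. First, the stage-$i$ auxiliary graph has at most $\max_j|U_j|\cdot P_i\le 1.5(L+1)|V|/2^i$ edge incidences. Second, by \cref{clm:reduction-delta-to-partial-list}, vertices are processed by dyadic \emph{original}-degree class from high to low, with $L=\min(\Delta,2^i-1)$ and lists $[L+1]$ minus CRCW-flagged neighbor colors; no semisort is needed. Thus each vertex is touched only in its own class, at cost $O(2^i)$ per iteration, over geometrically shrinking iterations. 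Your importance accounting with $d_{\mathrm{unc}}(v)+1$ would not give this. Palettes of size $\Theta(\deg(v))$ are re-touched in every phase, even after $v$'s uncolored degree has collapsed.
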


\paragraph{Minimum Set Cover Approximation.} Finally, as our other result expanding the applications, we give a work-efficient deterministic parallel approximation algorithm for set cover, achieving a $\tilde{O}(\log S)$ approximation, where $S$ denotes the maximum size of a set in the input instance. We present the informal statement below; the precise statement appears in \Cref{thm:set-cover}.

\begin{restatable}{theorem}{setcover}\textnormal{(\textbf{Set Cover Approximation Algorithm})}
\label{thmI:setcover}
Suppose we are given a set cover instance, as a bipartite graph with one side for the sets and the other side for the elements, where an element node is connected to all sets that contain it. Let $n$ and $m$ be the number of nodes and edges in this graph, and let $S$ be the size of the largest set. There is a deterministic parallel algorithm that computes an $O(\log S \log\log n)$ approximation of minimum set cover, using $O(m+n)$ work and $\poly(\log n)$ depth.
\end{restatable}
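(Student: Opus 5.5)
The plan is to follow the classical greedy framework for set cover and to use the hitting set routine (\Cref{thmI:hittingSet}) in place of random sampling. We then account for work geometrically, so that the total stays linear.

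\textbf{Bucketing by cost-effectiveness.} Run in phases indexed by degree classes. In phase $i$, every set still containing at least $2^{i}$ uncovered elements is called \emph{active}, and $i$ decreases from $\lceil\log S\rceil$ down to $0$. The phase invariant, inherited from greedy, is that no set contains more than $2^{i+1}$ uncovered elements. Under this invariant, any set chosen in phase $i$ is within a factor $2$ of the best cost-effectiveness. A phase is a sequence of hitting-set rounds on the bipartite graph $G_i$ between uncovered elements $U$ that have an active neighbor and the active sets $V$. Each element $u$ with $d_u$ active neighbors gets the fractional values $p_v = 1/\max_u d_u$, bucketed per element as in Rajagopalan--Vazirani / Berger--Rompel--Shor. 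To ensure $\sum_{v\in N(u)}p_v\in\Theta(1)$ holds for all $u$, I would further split elements into $O(\log\log n)$ or $O(\log n)$ degree sub-buckets. I would set the importance $imp_u$ so that a $0.99$ fraction of the \emph{active edge mass} is covered. Applying \Cref{thmI:hittingSet} yields $S$ and $U_{good}$, where each good element is hit $\Theta(1)$ times. Charging the cost of $S$ to the newly covered elements then gives $O(1)$ sets per $2^{i}$ covered elements. This shows each phase costs $O(\mathrm{OPT})$ over $O(\log S)$ phases, and the extra $\log\log n$ factor arises from the sub-bucketing needed to make the mass $\Theta(1)$.

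\textbf{Progress within a phase.} Each round covers a constant fraction of the weighted uncovered mass of active sets, so the degrees of active sets drop. After $O(\log n)$ rounds, every set has fallen below $2^{i}$ uncovered elements and leaves the active set. Sets whose degree falls below $2^{i}$ are deactivated immediately.

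\textbf{Work accounting (main obstacle).} The naive bound of $O(\log S\cdot\log n)$ rounds, each costing $O(m)$ work, is too much. To fix this, I would choose importances so that each round removes a constant fraction of the \emph{current edge count} of $G_i$. Edges incident to covered elements are deleted, and when a set deactivates, its edges leave $G_i$ permanently (it has fewer than $2^{i}$ elements and can be reconsidered only in later phases). With this choice, the work within a phase is geometric in the phase's starting edge count. The difficulty is that edges reappear across phases. To handle this, I would charge each phase's edges to the uncovered elements: a set active in phase $i$ has between $2^{i}$ and $2^{i+1}$ uncovered elements, and these get covered geometrically. Alternatively, the total $\sum_i |E(G_i)|$ could be bounded by showing that each edge $(v,u)$ participates in $O(1)$ phases amortized, since $v$'s degree halves between participations. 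Combined with the linear-work bucketing and compaction primitives, and with \Cref{thmI:hittingSet} running in work linear in its input, this gives $O(m+n)$ total work and $\poly(\log n)$ depth. This amortization is the step I expect to be delicate, and I would try the potential $\Phi=\sum_{(v,u)\text{ uncovered}} 1$ with the geometric per-round decrease first.
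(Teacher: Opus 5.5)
\textbf{Gap: nothing bounds the number of selected sets.} The charging step ``$O(1)$ sets per $2^i$ covered elements'' does not follow from \Cref{thmI:hittingSet}. That theorem only guarantees that elements of $U_{good}$ are hit $\Theta(1)$ times. It says nothing about $|S|$, about sets whose elements all lie outside $U_{good}$, or about unhappy elements, which may be hit arbitrarily often. If you write $|S|\,2^i\le\sum_{v\in S}|N(v)|$ and split incidences into good and bad elements, the bad part is uncontrolled.

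\textbf{The probability assignment cannot be repaired within your scheme.} A single per-set value $p_v$ cannot give every element mass $\Theta(1)$, and neither way of splitting by element degree works.
\begin{itemize}
\item If you run one instance per element-degree class $d$ with $p=1/d$ and take the union, a low-degree class selects about (number of sets touching it)$/d$ sets. That can be $\Theta(1)$ sets per newly covered element instead of $O(2^{-i})$, and the greedy charging collapses.
\item If you keep only the top class, as Berger--Rompel--Shor do, you need their separate argument that the selected sets' incidences into that class dominate. Your claim that each round removes a constant fraction of $|E(G_i)|$ then fails, since the top class may carry few edges. Both your $O(\log n)$-rounds-per-phase bound and your geometric work bound rest on that claim; covering a constant fraction of aggregate mass does not by itself shrink any individual set.
\end{itemize}
Finally, your $\log\log n$ factor is not derived from anything, since there are $\Theta(\log n)$ element-degree classes in general.

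\textbf{How the paper avoids this.} It never derandomizes greedy selection.
\begin{itemize}
\item It first computes a feasible fractional cover of cost $O(\log s)\,\mathrm{OPT}$, with $s$ the largest set size, deterministically and with no sampling (\cref{lem:fractional-set-cover}). This uses per-phase probability doubling and the large-degree extractor of \cref{big-degree-ds}, which is also what lets each phase's active sets be found without rescanning.
\item It then rounds with the formal pre-hitting-set theorem \cref{thm:main-hitting-set}. It adds a special right node $r^\star$ adjacent to every set, with weight $1$ equal to the total element weight. This forces $r^\star$ to be happy, so $|H|\le C(\sum_i p_i+1)=O(\log s)\,\mathrm{OPT}$. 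That is exactly the control on $|S|$ you are missing, and it requires the version of the theorem allowing $c_v\gg 1$, which the informal $\Theta(1)$-mass statement you cite does not provide.
\item Each rounding covers half of the remaining incidences. Repeating it $O(\log\log n)$ times is the true source of the $\log\log n$ factor.
\item Berger--Rompel--Shor then finish the residual $|E|/\log^{100}n$ incidences with an $O(\log s)$ approximation in $o(m)$ work.
\end{itemize}
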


\subsection{Methods}\label{subsec:method}
\paragraph{Classic work--limited independence and polynomial work.} To simplify the discussion, we focus on MIS, which has become a de facto benchmark problem for parallel derandomization. Karp and Wigderson~\cite{karp1984fast} gave the first deterministic parallel algorithm for MIS. Their method can be viewed as replacing a huge randomized search space by a smaller explicitly enumerable family of combinatorial objects, and then checking all possibilities in parallel. This yields polylogarithmic depth, but with very large work. Luby~\cite{luby1985simple,luby1988removing} made this paradigm substantially more general. His randomized MIS algorithm has expected $O(m)$ work and $\poly(\log n)$ depth, and its analysis uses only pairwise independence. A direct derandomization, therefore, restricts attention to a pairwise independent sample space of size $O(n^2)$ and checks all seeds in parallel, thus giving a deterministic parallel algorithm with $\tilde{O}(mn^2)$ work. Alon, Babai, and Itai~\cite{alon86} obtained similar algorithms and more generally showed how limited independence yields deterministic parallel derandomizations with overhead $n^{\Theta(d)}$ for $d$-wise independence.
\medskip

\paragraph{Classic work--binary search through a pairwise independent space.} In a subsequent paper focused on the work-bound overhead~\cite{luby1988removing}, Luby introduced a method to perform a binary search through the pairwise space. Instead of enumerating all seeds, one fixes the seed gradually, one bit at a time, maintaining (a pessimistic estimator on) the conditional expectation of a suitable potential function. This is the most generic parallel derandomization method known to date for obtaining near-linear work bounds, and it underlies several fundamental deterministic parallel algorithms, including those on maximal independent set, maximal matching, and set cover. See e.g., \cite{luby1988removing, berger1989efficient}. However, it inherently incurs at least a $\poly(\log n)$ work overhead: there are $\Omega(\log n)$ stages, and each stage already costs at least linear work (and indeed more, since $O(\log n)$ bits are used already for $1/2$ sampling and there are more stages for different probabilities).
\medskip

\paragraph{Local rounding.} Ghaffari and Grunau~\cite{GG25} showed that one can adapt a variant of the \textit{local rounding} technique recently developed in the context of \textit{distributed graph algorithms}~\cite{GhaffariK21, faour2022local} to substantially improve the work efficiency of the derandomization, achieving $O((m+n)\poly(\log\log n))$ work bounds. At a high level, one starts with a fractional solution that abstracts the probabilistic assignments in (a step of) the randomized parallel algorithm, and one gradually rounds this fractional solution, getting it closer to integrality in stages, while preserving certain bounds on the objective functions. 
\medskip

\paragraph{Our algorithms.} We follow the same general idea of iteratively rounding fractional solutions in gradual steps, while approximately preserving some quadratic objective functions stemming from pairwise analysis~\cite{GhaffariK21, faour2022local, GG25}. However, the outline of how we round various probabilities, and even some key ingredients per step of rounding---most notably, the defective coloring used to ignore certain parts of the objective function, so as to admit a fast schedule for the derandomization ---are quite different here. We next expand on this. 

Consider the toy problem of computing an approximate max cut in an edge-weighted graph $G=(V, E)$, which can be modeled as finding a $\mathbf{x}\in \{0,1\}^{V}$ that maximizes $Z(\mathbf{x}) = \sum_{\{v, v'\}\in E} w(e)\cdot (x_v + x_{v'} - 2x_v x_{v'})$. The fractional relaxation considers assignments $\mathbf{x}\in [0,1]^{V}$, and the natural fractional solution sets $x_v=1/2$ for all $v\in V$ and yields $Z(\mathbf{x})=\frac{1}{2}\sum_{e\in E} w(e)$. Approximate rounding, while preserving this quadratic objective function, means finding $\mathbf{x}\in \{0,1\}^{V}$ with $Z(\mathbf{x})\geq (\frac{1}{2}-\eps)(\sum_{e\in E} w(e))$ for some small $\eps>0$. This is a $(1/2-\eps)$ approximation of the max-cut, but actually a $(1-\Theta(\eps))$ approximation of the fractional solution's score. It gives a simple example for one step of rounding from $(1/2)$-fractional solutions to $0$ or $1$ integral solutions, while approximately preserving one quadratic objective function.

\medskip
\paragraph{Linear-work defective coloring}. Defective coloring is a key ingredient that enables the gradual rounding with low parallel depth, throughout the entire rounding scheme. Consider a coloring of the vertices with $Q$ colors such that the weight of the monochromatic edges is at most $\eps(\sum_{e\in E} w(e))$. This is called \textit{$\eps$-relative defect}. We can ignore monochromatic edges. Then, we process the vertices by going through the color classes, and per color $q\in\{1,2,\dots, Q\}$, we decide for each vertex of color $q$ greedily: put it on the side such that at least half of its edge weight to vertices in colors $[q-1]$ is across the cut. This gives an $O(m+n)$ work algorithm with depth $Q$, which obtains a $(1/2-\eps)$ approximation of max cut. The only aspect not spelled out is how one obtains such a defective coloring, which is the key algorithmic ingredient in such a single step of rounding.

One of the most basic sources of the $(\log\log n)$ overhead factors in the work of Ghaffari and Grunau~\cite{GG25} was their defective coloring. They computed a coloring with $O(1/\eps)$ colors and $\eps$-relative defect, using $O((m+n)\log\log n)$ work and $\poly(\log n)$ depth. We improve this to $O(m+n)$. This ingredient is used in all of our results. The algorithm of Ghaffari and Grunau~\cite{GG25} relied on a Linial-style color reduction~\cite{linial1987LOCAL}, starting from the trivial bound of $n$ colors. To have efficient parallel computations, they could reduce the number of colors only polynomially per step, spending $O(m+n)$ per step, and this necessitated $O(\log\log n)$ iterations and thus $O((m+n)\log\log n)$. We show that (after a small constant number of such reductions), we can perform an exponential step of color reduction, essentially from $O(n^{\alpha})$, for some small constant $\alpha>0$, to $O(\log n)$, with just $O(m+n)$ work. This involves building a certain combinatorial design efficiently and deterministically and then searching for the right color per node with the help of this design, via carefully structuring the word RAM operations. Afterward, we are able to reduce the colors to $O(1/\eps)$ with more standard techniques. See \Cref{subsec:Linial}. This already improves one of the overhead factors, and indeed, given our $O(1/\eps)$-color $\eps$-relative defective coloring, which is computed in $O(m+n)$ work and $\poly(\log n)$ depth for any $\eps>1/\poly(\log n)$, an $O(m+n)$ work and $\poly(\log n)$ depth deterministic parallel algorithm for $(1/2-\eps)$ approximation of (weighted) max cut is immediate.
\medskip

\paragraph{Complexities in general rounding.} The general rounding framework, in its applications for various problems considered in this paper, is much more complex than the nice example of max cut discussed above, in a number of ways: (1) Often, instead of a single direct objective function, there are many constraints that we would like to satisfy (and we will do so for most of them, in an importance-weighted sense). For instance, in the hitting set problem (\Cref{thmI:hittingSet}), we would want to control the number of deterministically sampled $V$-nodes neighboring each $U$-node---this is $|U|$ many different constraints. To simultaneously push all these constraints, the idea is to use a Chebyshev-like quadratic measure of deviations around each node. If done naively, the size of such an objective function would be quadratically large, hence defeating the linear-work goal. The workaround is to write such functions over small bundles of nodes, which allows one to control the losses to within an inverse polynomial of the bundle size, though also making the problem larger roughly by the bundle size. One needs careful control of the bundle sizes over time to ensure that the work bound and the accumulative losses remain small. (2) Usually, the rounding has to deal with fractional values that are much smaller, e.g., $1/n$, and thus we need $\tilde{\Theta}(\log n)$ steps of gradual rounding. Informally, the reason is that with any $\eps$-relative defective coloring, since we must have $\eps\geq 1/\poly(\log n)$ to allow us to process the colors one by one in $\poly(\log n)$ depth, there is $\eps\geq 1/\poly(\log n)$ fraction of the objective that is not really controlled, and a large $\poly(\log n)$-factor jump in fractionality could blow up this part. However, we need to ensure that the work bound over these $\tilde{\Theta}(\log n)$ steps does not add up proportionately to the number of steps, and for that, it is critical to ensure that certain ``size" measures decay quickly over the steps. (3) Often, the rounding has to deal with non-uniform fractional solutions where different variables have different fractional values (probabilities). Controlling the losses over different probability classes, as they get rounded, is not straightforward, especially since we need to ensure the size decay outlined before. We do not open the discussion of how all these aspects are handled. However, we point out one general aspect where the approach used in \cite{GG25} gives rise to $\poly(\log\log n)$ overhead factors and outline how we circumvent them, and we also point out a number of novel subroutines that we build here to avoid the overhead factors.
\medskip

\paragraph{Controlling the losses over different steps, and the work overhead.} An important source of the $O(\log\log n)$ overheads in \cite{GG25} is for the rounding of relatively high probabilities, which are above $1/\poly(\log n)$. Since one can roughly group these in factor-$2$ bands, there are $\Theta(\log \log n)$ such probability groups, and also $\Theta(\log \log n)$ steps of $2$-factor rounding until reaching integral solutions. To ensure that the objective function loses at most a constant factor overall, we need to enforce that most steps lose only an $O(1/\log\log n)$ fraction of the objective. This means writing the Chebyshev-like quadratic constraints mentioned in item (1) above in bundles of size at least $\Omega(\log\log n)$, and this implies the size of the objective function is at least $\Omega(\log\log n)$ larger than the number of elements. Since the scheme is dealing with the rounding of different probabilities, it is not clear how to argue or ensure that the problem size has had enough shrinkage over the steps to leave room for the $\Omega(\log\log n)$ overhead stemming from these bundle sizes. Indeed, the work bound in \cite{GG25} for this portion is at least $\Omega(m+n)$ per step (even assuming ideal bounds for other ingredients such as defective coloring and some subproblem selection and sorting subroutines), and thus $\Omega((m+n)\log \log n)$ overall. We have a different outline of rounding for different probabilities. In a short and informal sense, we round each probability class $p$ to roughly $\sqrt{p}$, and we show that after this, the problem size overall has had enough shrinkage to allow us to use relatively large bundles, and thus achieve fairly small errors per step, without blowing up the overall work bound, until reaching fairly high probabilities where we can gradually shrink the bundle sizes (the latter is because, unfortunately, large bundles themselves come with a certain loss, due to the bundle irregularities and nodes left out of bundles, and this needs a tight control once we deal with relatively large probabilities). The actual scheme is somewhat involved, and we refer to \Cref{sec:HittingSet}.
\medskip

\paragraph{Novel subroutines.} Besides the aspects discussed above, to remove $\poly(\log\log n)$ overheads, our algorithms rely on a number of novel parallel algorithmic subroutines. We mention some of them here. (I) An important subroutine is a deterministic parallel sorting algorithm that allows us to sort $n$ integers in a $[\poly(\log n)]$ range in $O(n)$ work and $\poly(\log n)$ depth. See \Cref{subsec:sort}. The best-known prior solution for this required $\Omega(n\log\log\log n)$ work~\cite{bhatt1991improved}. We believe this sorting itself may find wider applications. (II) Another subroutine allows us to iteratively extract, out of the original graph, the subgraphs ``incident" on the largest degree class, and maintain this under batch removal of vertices, in a total work that is linear in the graph size. This scheme is frequently used in our algorithms, as we solve various graph problems gradually, by each time focusing on the largest degree class. It is crucial that the work bound overall (and not just per iteration) is linear in the original graph size, and doing this via a deterministic parallel algorithm requires nontrivial ideas. See \Cref{subsec:extractor}. (III) To control the losses mentioned above, we often need careful bundling of the elements to be sampled, and we need to efficiently and explicitly build an auxiliary graph that captures the dependencies in these bundles (e.g., all pairwise terms per bundle) with a compact representation. Doing this is not straightforward since the natural approaches via integer sorting (and graph representation switches) would incur $O(\log\log n)$ overhead terms~\cite{bhatt1991improved}. See \Cref{subsec:12-sampling} for our solution.

\medskip
\paragraph{New problems.} Finally, we note that our algorithms for $(\Delta+1)$-coloring and minimum set cover approximation are the first deterministic ones to avoid $\poly(\log n)$ overhead factors in the work bound, and they show how the method's applicability can be broadened to some of the other important problems studied in the area. For instance, the set cover problem has to deal with $\poly(\log n)$ different stages of randomized sampling, and thus derandomizing them. See, e.g., the classic work of Berger, Rompel, and Shor~\cite{berger1989efficient} for a derandomization with $\poly(\log n)$ overhead for this problem. Doing all these iterations while ensuring that the total work bound overall remains linear requires additional ideas. 
\medskip


\section{Preliminaries}
\subsection{PRAM Model and Setting}

\paragraph{Model.} We work in the integer ARBITRARY Concurrent Read Concurrent Write (CRCW) PRAM model. In the case of multiple simultaneous write attempts, an \textit{arbitrary} one takes place. The word size is assumed to be $w=\log n$ bits, and all arithmetic operations are performed on $w$-bit integers. The supported RAM operations include indexing, bitwise operations, addition, subtraction, multiplication, and division on $w$-bit integers.

\paragraph{Graph notations.} Throughout, $n$ and $m$ refer to the size of the \emph{input} graph
$G_0=(V_0,E_0)$, where $|V_0|=n$ and $|E_0|=m$.
When we consider subgraphs $G=(V,E)$ of $G_0$, we use $|V|$ and $|E|$ for their
sizes. Note that the machine word size remains $w=\log n$, determined by the
input size, even when we process subgraphs.


\paragraph{Graph representation format.} We assume the input graph is given in compact adjacency-array form: each vertex stores an array containing all of its neighbors and knows its degree. The relative order of neighbors may be arbitrary. In addition, we assume that for each edge $(u,v)$, the endpoint $u$ knows the position of $u$ in the adjacency list of $v$. Notice that, even if this format is not given, it can be achieved if we have \(O(n^2)\) shared memory, of which only \(O(m)\) locations are ever used: for each edge $(u,v)$, the vertex \(v\) can simply write the relevant index into the shared cell labeled by the pair \((u,v)\). Alternatively, it could be obtained by sorting the edges by their endpoints, though at the moment the best known deterministic parallel sorting, for integers in a polynomial range, uses $O((n+m)\log\log n)$ work and $\poly(\log n)$ depth~\cite{bhatt1991improved} (despite the fact that in sequential computation, $O(n+m)$ work suffices for this, e.g., via radix sort).
\subsection{Subgraph Selection Tools}

In this section, we exploit word-level parallelism to speed up certain
operations by a factor of $O(\log n)$, effectively batching
$O(\log n)$ operations into one word operation.

For example, sorting integers bounded by $\poly(\log n)$ can be done in linear work.
In contrast, sorting $n$ integers of $\poly(n)$ magnitude still requires
$O(n\log\log n)$ work.

Another important application is subgraph splitting.
If a graph is colored using $\poly(\log n)$ colors,
we can construct compact representations of all induced subgraphs
in $O(n+m)$ work.
This requires nontrivial data movement and typically relies on sorting.

The following classical claim is stated explicitly because
we will reuse its structure later.

\begin{lemma}[Subgraph Selection]\label{clm:subgraph-splitting}
Given a graph $G$ in compact representation,
and a marking indicating which nodes and edges are selected,
we can construct the compact representation of the induced subgraph
in $O(|V|+|E|)$ work and $\poly(\log n)$ depth, where $|V|$ and $|E|$ are the numbers of vertices and edges of the graph, with $|V| \leq n$.
\end{lemma}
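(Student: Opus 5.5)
The plan is to reduce everything to parallel prefix sums, which on a CRCW PRAM take $O(k)$ work and $O(\log k)$ depth on an array of length $k$. I assume the marking is consistent: a selected edge has both endpoints selected, or we first unmark any edge with an unselected endpoint, which costs $O(|E|)$ work and $O(1)$ depth. The output must supply, for every selected vertex, a new index, its new degree, and a contiguous adjacency array of its selected neighbors. For every selected edge $(u,v)$ it must also give the position of $u$ in the new adjacency list of $v$.

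\textbf{Step 1 (relabel vertices).} Form the 0/1 indicator array over $V$ of selected vertices and take its prefix sum. This gives each selected vertex a new index $\mathrm{id}(v)\in\{1,\dots,|V'|\}$ and places it in a compact vertex array, using $O(|V|)$ work and $O(\log n)$ depth.

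\textbf{Step 2 (compact adjacency lists).} Concatenate all adjacency arrays into one global array of length $2|E|$. The offsets come from a prefix sum over the degrees, since each vertex knows its degree. Mark each entry $(v,u)$ with $1$ if the edge is selected and $0$ otherwise, and run a single segmented prefix sum over the global array, with segments being the vertices' adjacency arrays. This gives every selected entry its new position $\mathrm{pos}_v(u)$ within $v$'s new list, and the segment totals give the new degrees. A second prefix sum over the new degrees of selected vertices gives the start offset of each new adjacency array. Each selected entry then writes $\mathrm{id}(u)$ into its new slot. The total cost is $O(|V|+|E|)$ work and $O(\log n)$ depth. Working over one global array avoids any load-balancing issue caused by skewed degrees.

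\textbf{Step 3 (restore cross pointers).} This is the only step where care is needed. For an edge $(u,v)$, the old representation tells $u$ the old index $j$ of $u$ in $v$'s list. After Step 2, the entry at old position $j$ of $v$'s list holds its new position $\mathrm{pos}_v(u)$. So $u$ reads that value in $O(1)$ time and stores it as its new cross pointer. This needs only exclusive reads and writes and $O(|E|)$ work.

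Summing the steps gives $O(|V|+|E|)$ work and $O(\log n)\subseteq\poly(\log n)$ depth. No sorting is used at any point, which is why the bound is linear. Everything runs within the original global array, so no dependence on the structure of the subgraph arises.
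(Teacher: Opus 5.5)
Your proposal is correct and follows essentially the same route as the paper. Both use per-vertex prefix sums over the selection marks of incident edges to compute new positions, plus prefix sums over vertex marks and new degrees to relabel vertices and allocate the new arrays. Your Step 3 rebuilds the cross pointers by reading the new position stored at old slot $j$ of $v$'s list; this correctly makes explicit a detail the paper leaves implicit under ``the same technique applies.''
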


\begin{proof}
For each node, mark each incident edge with weight $1$ if selected
and $0$ otherwise. These marks can be computed in $O(1)$ depth
with $O(1)$ work per edge.

For each node, perform a prefix sum over its incident marks.
This takes $O(\log n)$ depth and $O(d_i)$ work,
where $d_i$ is the degree of the node.
The prefix sums determine the new positions of selected edges
in the compact adjacency array.

The same technique applies to selecting nodes and allocating the arrays for edges. 
\end{proof}

\begin{remark}
Note that the work here is charged to the edges of the input graph to be extracted, not to the extracted subgraph.
In later sections, we will show how to split the graph into up to $\poly(\log n)$ subgraphs within $O(|V|+|E|)$ work, where $|V| \leq n$.
\end{remark}

\begin{remark}
    To actually mark which edges are deleted usually requires the communication property of the compact representation. In other words, for each edge $(u,v)$, we need to make sure that both $(u,v)$ and $(v,u)$ are marked. We do not need this property if we are only deleting some vertices and their incident edges, as we can only mark the vertices in this case. 
\end{remark}

Notice that in some applications, we need more than extracting a single subgraph. A common case is when we split a graph into multiple subgraphs. This is generally a sorting-like task, but we can do it for $\poly(\log n)$ parameters, as stated below.

\begin{lemma}[$\log n$ subgraph splitting]\label{lem:logn-subgraph-splitting}
Suppose we color the vertices of a graph using at most $\log n$ colors.
Then we can construct compact representations of the subgraphs induced by each color
using $O(|V|+|E|)$ work and $\poly(\log n)$ depth, where $|V| \leq n$.
\end{lemma}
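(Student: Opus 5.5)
We reduce to \Cref{clm:subgraph-splitting}, using word-level parallelism to handle all $\log n$ colors at once. Running \Cref{clm:subgraph-splitting} separately for each color costs $O((|V|+|E|)\log n)$ work. Instead, we perform each of the $\log n$ prefix-sum computations on $\log n$ counters simultaneously. Since $\log n$ counters of $O(\log n)$ bits each do not fit in one word, we pack them in small fields and split the computation into two levels.

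\textbf{Step 1: global vertex placement.} We first lay out the vertices of each color class in its own array. Partition the vertex array into blocks of $b=\Theta(\log n/\log\log n)$ consecutive vertices. Inside a block, the per-color counts are at most $b$, so each fits in $O(\log\log n)$ bits. However, $\log n$ fields of $\log\log n$ bits do not fit in one word. We therefore use blocks of size $b=\Theta(\log n)$ processed sequentially by one processor in $O(b)$ time, and keep a plain array of $\log n$ counters per block. Per block this costs $O(b+\log n)=O(b)$ work to produce a count vector of length $\log n$.

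There are $|V|/b$ such vectors, so a prefix sum across blocks, coordinate-wise, takes $O((|V|/b)\cdot \log n)=O(|V|)$ work and $O(\log n)$ depth. Each block then rescans its vertices sequentially, adding its block offset per color, to assign every vertex its index in its color's array. We also compute the total size of each color class and allocate the $\log n$ output arrays contiguously.

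\textbf{Step 2: edges.} An edge $(u,v)$ survives only if both endpoints have the same color. Each endpoint reads the other's color, using the compact representation to reach the neighbor. We then run the selection of \Cref{clm:subgraph-splitting} once on the whole graph, marking an edge as kept iff it is monochromatic. This compacts each adjacency list in $O(\deg)$ work, and vertex $v$'s compacted list becomes its list in $G[\text{color}(v)]$. Because each vertex belongs to exactly one subgraph, no multi-color bookkeeping is needed at the edge level.

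The global edge-array offsets per color are handled by the same block trick as in Step 1, with weights equal to the new degrees. We again chunk vertices into groups of total size $\Theta(\log n)$, where size counts vertices plus edges, and take a prefix sum over chunks of $\log n$-length vectors. Finally, the cross-pointers (position of $u$ in $v$'s new list) come from the per-vertex prefix sums: $u$ writes its new index into the slot referenced by the old cross-pointer.

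\textbf{Main obstacle.} The delicate step is making the blocks have weight $\Theta(\log n)$ even when degrees are uneven. We would first compute a global prefix sum of $1+\deg'(v)$ and cut at multiples of $\log n$. A single high-degree vertex then forms its own oversized block, which is fine since its work is proportional to its degree. Each block's $O(\log n)$-length count vector is thus charged to $\Omega(\log n)$ units of input. This gives $O(|V|+|E|)$ work and $\poly(\log n)$ depth overall, since each block's sequential scan has length $O(\log n)$ apart from large single vertices, which are handled by their own prefix sums.
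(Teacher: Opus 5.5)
Your approach is essentially the paper's proof with the sorting step written out. The paper marks monochromatic edges and compacts the adjacency lists via \cref{clm:subgraph-splitting}, then sorts the vertices by color with \cref{logn-sorting}. Your Step 1 is exactly Case 1 of the paper's proof of \cref{logn-sorting}: blocks of $\Theta(\log n)$ vertices, a length-$\log n$ count vector per block, coordinate-wise prefix sums across blocks, and a rescan for local ranks. Plain counters suffice there, since the paper's packed accumulation must be unpacked into $\log n$ counters per group anyway.

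The gap is the small-input regime. Every block initializes and prefix-sums $\log n$ counters, and you allocate $\log n$ output arrays. So your work is $O(|V|+|E|+\log n)$, not $O(|V|+|E|)$. The lemma is stated for arbitrary subgraphs with $|V|\le n$ while the word size stays $\log n$, so it must also hold when $|V|+|E|=o(\log n)$. The paper treats exactly this case separately, as Case 2 of \cref{logn-sorting} with $k<\frac19\log n$:
\begin{itemize}
\item a count vector whose entries sum to less than $\frac19\log n$ is encoded in one word;
\item precomputed tables, built once in $O(n)$ preprocessing, supply increments, the index of a key among the nonzero buckets, and the start of each nonzero bucket;
\item consequently no length-$\log n$ array is touched at runtime, and only nonempty color classes are laid out.
\end{itemize}
You need this or an equivalent device to meet the stated bound. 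For example, build a one-word presence bitmask of the colors that occur, rank each color by a table-based popcount, and then counting-sort with only $|V|$ counters.

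A minor simplification: the weighted chunking in your ``main obstacle'' paragraph is unnecessary. After Step 1 places the vertices contiguously in color order, one prefix sum of the new degrees $\deg'(v)$ over that order gives every vertex's offset into its color's edge array. The color boundaries then give the per-color start positions directly.
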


\begin{proof}
We first perform subgraph selection (\cref{clm:subgraph-splitting}) to keep only the edges that belong to the induced subgraphs. Then we use $\log n$ sorting (\cref{logn-sorting}) to sort the vertices by color. Finally, for each color class, we use the kept edge list as the edge list of its induced subgraph.
\end{proof}

The next subsection shows how $\log n$ sorting (\cref{logn-sorting}) is done.

\subsection{Linear-work $(\log n)$ Sorting}
\label{subsec:sort}
We now introduce a compact representation of multiple integers that will be used extensively to speed up certain calculations by a factor of $\log n$. This representation packs $\log n$ integers of $k$ bits into $k$ words and uses bitwise operations to perform simultaneous operations on them in $O(k)$ time rather than $O(\log n)$ time.
\begin{definition}[$k$-compact (packed) form]
Assume the word size is exactly $w=\log n$ bits.
Given $\log n$ integers
\[
x_0,x_1,\dots,x_{\log n-1}\qquad\text{with }x_t\in[0,2^k-1]\ \text{for all }t,
\]
their \emph{$k$-compact (packed) form} is an array of exactly $k$ words
\[
(W_0,W_1,\dots,W_{k-1}), \qquad W_j\in[0,2^{\log n}-1],
\]
defined by viewing the $k$ words as one concatenated $k\log n$-bit string and
packing the $x_t$ consecutively into this string. Formally, let
\[
X \;=\; \sum_{t=0}^{\log n-1} x_t\,2^{tk},
\]
and define, for each $j\in\{0,1,\dots,k-1\}$,
\[
W_j \;=\; \left\lfloor \frac{X}{2^{j\log n}} \right\rfloor \bmod 2^{\log n}.
\]
Equivalently, in the concatenation $W_0 \parallel W_1 \parallel \cdots \parallel W_{k-1}$,
the integer $x_t$ occupies bits $[tk,\,(t+1)k-1]$ (i.e., $k$ consecutive bits).
\end{definition}

\begin{lemma}[Effective sum of $k$-compact forms]\label{lem:effective-sum-packed}
Let $k=O(\log n)$ be an integer. With $O(n^{2/3}\log n)$ preprocessing, we can:
\begin{itemize}
\item compute the sum of two $k$-compact forms in $O(k)$ work, and
\item convert a $k$-compact form into a $(k+1)$-compact form in $O(k)$ work.
\end{itemize}
The sum of two $k$-compact forms is a $(k+1)$-compact form. Both operations have $O(1)$ depth, and the preprocessing has $\poly(\log n)$ depth.
\end{lemma}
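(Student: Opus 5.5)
The approach is table lookup on small chunks, with the preprocessing being the construction of the tables. Set $b=\lfloor \log n/3\rfloor$ bits. Any operation whose input is two $b$-bit chunks (plus $O(1)$ carry bits) can be precomputed for all $2^{2b+O(1)}=O(n^{2/3})$ possible inputs. Each entry is computed in $O(\log n)$ work by a naive sequential routine, and all entries are computed in parallel. This gives $O(n^{2/3}\log n)$ total preprocessing work and $O(\log n)\subseteq\poly(\log n)$ depth. Plain word addition is not enough here, because it would let carries spill from one packed field $x_t$ into the next. The tables are what give us a carry-free field-wise addition, and they also handle the re-spacing of the fields.

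\textbf{Sum.} Let the two forms encode fields $x_t,y_t$, each of $k$ bits. The target is the $(k+1)$-compact form of the fields $x_t+y_t$.
\begin{itemize}
\item First convert each input to its $(k+1)$-compact form, using the second operation below. This puts a zero in the top bit of every field.
\item Now ordinary word addition of corresponding words performs field-wise addition, since no field overflows into its neighbour. Fields may straddle word boundaries, however, so a carry can leave word $j$ and enter word $j+1$.
\item Such a carry is exactly the internal carry of a field, which is legitimate, so we propagate it. Compute $W_j+W'_j$ in a double word, or as a sum plus a carry bit. The carry into word $j+1$ comes only from word $j$, and it belongs to the field straddling that boundary.
\item Carry propagation across $k+1$ words is sequential in the naive form. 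Because carries cannot cross field boundaries (those bits are zero), a carry out of word $j$ can propagate at most into word $j+1$ and then stop. The reason is that the field containing the boundary ends within word $j+1$ with a zero padding bit, since $k+1\le$ the word size up to constants. For fields longer than a word, a carry-lookahead over the words inside one field is needed; this is done in $O(1)$ depth with tables on (generate, propagate) pairs, or simply by noting $k=O(\log n)$ spans $O(1)$ words.
\end{itemize}
So every word is fixed in $O(1)$ depth, and the total work is $O(k)$.

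\textbf{Conversion $k\to k+1$.} Output word $j'$ of the $(k+1)$-form contains bits from the fields $t$ in a known range. These fields live at predictable bit offsets of $O(1)$ input words, because the offset drift is $t$ bits, which is at most $\log n$. So each output word is a fixed function of $O(1)$ consecutive input words together with the shift amount $j$. Decompose this function into $b$-bit chunks. Each output chunk of $b$ bits depends on at most two input chunks of $b$ bits, since the fields it touches span at most $b+b/k+1$ input bits. It also depends on a shift parameter in $[0,\log n]$ and the local field alignment. Tabulate this in $O(n^{2/3}\log n)$ entries indexed by (chunk pair, offset). Each output word then takes $O(1)$ lookups with shifts and masks, all independent, giving $O(k)$ total work and $O(1)$ depth.

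\textbf{Main obstacle.} The main difficulty is making the conversion table-driven with bounded table size while the field alignment varies with $t$. I would parametrize the table by the bit offset modulo $k+1$ of the chunk start, together with the cumulative shift. If the table becomes too large, I would first bucket fields by the word they start in, and insert one zero bit per field using a spread table on $b/k$ fields at a time.
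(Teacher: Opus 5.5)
Your proof is correct, but for the sum you take a different route from the paper.

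The paper tabulates both operations directly. With $b=\frac13\log n$, it cuts each form into blocks of $\lfloor b/k\rfloor$ whole fields. It then precomputes two tables:
\begin{itemize}
\item a two-input table that adds two such blocks and returns the result already in $(k+1)$-compact form (this is where the $n^{2/3}$ comes from), and
\item a one-input table that re-spaces a single block.
\end{itemize}
It handles $k>b$ separately by brute force, since then each word meets only $O(1)$ fields.

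You instead reduce the sum to the conversion. Once both inputs are padded, the zero top bits kill every carry at a field boundary. So ordinary multi-word addition coincides with field-wise addition, and each carry chain stays inside the $O(1)$ words spanned by one field. That gives $O(1)$ work per word and $O(1)$ depth. Your route needs only a one-input spreading table and treats all $k=O(\log n)$ uniformly, at the price of the carry-confinement argument. The paper's route never reasons about carries across word boundaries, because every step is a lookup on field-aligned blocks.

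One bookkeeping point. As you describe it, the conversion table is indexed by a pair of aligned chunks together with one or two offsets in $[0,\log n]$. At $O(\log n)$ work per entry this costs $O(n^{2/3}\log^2 n)$ or more, which exceeds the stated $O(n^{2/3}\log n)$. The fix is easy:
\begin{itemize}
\item The input bits that actually feed an output chunk form one contiguous window of at most $b$ bits, because the non-padding output positions map onto consecutive input positions. (Your bound of $b+b/k+1$ is not the right quantity here.)
\item Extract that window with shifts and masks on two words.
\item Index the table only by the window and the alignment of the chunk start modulo $k+1$.
\end{itemize}
This gives $O(n^{1/3}\log n)$ entries, so your route actually needs less preprocessing than the lemma allows.
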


\begin{proof}
Let $w=\Theta(\log n)$ be the word size. We divide each word into blocks of
$b=\frac{1}{3}\log n$ bits. Then each word contains $O(w/b)=O(1)$ blocks.

If $k>b$, then $k=\Theta(\log n)$ and we can handle the operation by brute force.
Otherwise, we preprocess a lookup table that specifies how to:
(i) add two blocks consisting of $\left\lfloor \frac{b}{k}\right\rfloor$ packed $k$-bit integers (producing the result in $(k+1)$-compact form), and
(ii) convert such a block from $k$-compact form into $(k+1)$-compact form.
\end{proof}

Using the compact representation above, we can accumulate over $\log n$ classes and thereby sort integers of size $\log n$ using linear work. Note that bucket sort from the sequential setting does not carry over directly to parallel computation, since it is unclear how to deal with several elements being inserted into the same bucket at the same time. More generally, obtaining a linear-work parallel algorithm for sorting integers of size $O(n)$ remains a major open problem. Here, we only leverage bitwise operations to address a limited special case.

\begin{lemma}[$\log n$ accumulation]\label{lem:logn-accumulation}
Let $k \leq \poly(n)$ be an integer. With $O(n^{2/3}\log^2 n)$ preprocessing,
we can add $k$ binary arrays of length $\log n$
(in $1$-compact form)
using $O(k)$ work and $O(\log k)$ depth.
The result is in $\lceil \log k\rceil$-compact form.
\end{lemma}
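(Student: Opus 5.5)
The plan is to add the $k$ arrays with a balanced binary tree of pairwise additions, using \Cref{lem:effective-sum-packed} at every internal node. Each input array of length $\log n$ is a single word, which is exactly its $1$-compact form.

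The tree has $\lceil \log k\rceil$ levels. Level $j$ (counting leaves as level $0$) holds about $k/2^j$ partial sums, and each of them is in $(j+1)$-compact form. To justify this, note that the sum of $2^j$ binary arrays has entries at most $2^j$, which fit in $j+1$ bits. Also, \Cref{lem:effective-sum-packed} says the sum of two $j$-compact forms is a $(j+1)$-compact form, so the invariant propagates up the tree.

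At the top we have a $(\lceil\log k\rceil+1)$-compact form, since the entries are at most $k$, which needs $\lfloor\log k\rfloor+1$ bits. To land in $\lceil\log k\rceil$-compact form as the statement asks, I would do one of two things. The first option is to pad with zero arrays so that $k$ becomes a power of two minus one; then the bound $k\le 2^{\lceil\log k\rceil}-1$ fits. The second option is to read the target format loosely, as $O(\log k)$ bits per entry. If $k$ is not a power of two, the unbalanced nodes are handled with the conversion operation of \Cref{lem:effective-sum-packed} ($k$-compact to $(k+1)$-compact), which puts both operands in the same format before adding.

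For the work, one addition at level $j$ costs $O(j)$ by \Cref{lem:effective-sum-packed}. Any format conversions at that level also cost $O(j)$. The total work is therefore
\[
\sum_{j\ge 1} \frac{k}{2^j}\cdot O(j) = O(k),
\]
because $\sum_j j/2^j$ converges.

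Each addition has $O(1)$ depth, so the tree has depth $O(\log k)$. Here $k\le\poly(n)$ guarantees that every level's format has $j=O(\log n)$ bits, which is what \Cref{lem:effective-sum-packed} requires. For the preprocessing, \Cref{lem:effective-sum-packed} needs $O(n^{2/3}\log n)$ work for each fixed $k$. I would build the lookup tables for every $j\le O(\log n)$, giving $O(n^{2/3}\log^2 n)$ in total. The tables for different $j$ are built in parallel, so this part has $\poly(\log n)$ depth.

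I expect the main obstacle to be the bookkeeping at the boundary regime of \Cref{lem:effective-sum-packed}. Once $j$ exceeds the block size $b=\frac13\log n$, additions are done by brute force at $O(\log n)$ cost each. The number of nodes at such levels is only $k/2^{\Omega(\log n)}=k/n^{\Omega(1)}$, though, so that cost is negligible. The other care point is confirming that each $j$-compact form really occupies exactly $j$ words, which is what makes the cost $O(j)$ rather than something larger.
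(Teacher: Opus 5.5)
Your proposal is correct and is the paper's argument: the paper also recursively halves the $k$ arrays, sums each half, extends formats where needed, and combines with \cref{lem:effective-sum-packed}. This gives $T(k)=2T(k/2)+O(\log k)=O(k)$ and $D(k)=D(k/2)+O(1)=O(\log k)$, which is your level-by-level bound $\sum_j (k/2^j)\,O(j)$ unrolled. You are right to be cautious about the output format: when $k$ is a power of two the entries can equal $k$ and need $\lceil\log k\rceil+1$ bits, padding cannot fix that, and the operations of \cref{lem:effective-sum-packed} only widen formats, so the $O(\log k)$-bit reading is the correct one (the paper's proof has the same off-by-one).
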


\begin{proof}
We recursively split the $k$ arrays into two halves,
sum each half,
extend if necessary,
and add the two resulting compact forms
using \cref{lem:effective-sum-packed}.

The recurrence is
\[
T(k)=2T(k/2)+O(\log k),
\]
which solves to $T(k)=O(k)$.
The depth satisfies
\[
D(k)=D(k/2)+O(1)=O(\log k).
\]
\end{proof}

\begin{lemma}[$\log n$ sorting]\label{logn-sorting}
Suppose we have $k$ pairs $(a_i,b_i)$ with key $a_i \in [\log n]$ and $b_i \in [n]$.
Then we can sort these pairs by $a_i$ (the order among pairs with the same $a_i$ may be arbitrary) using $O(k)$ work and $\poly(\log n)$ depth.
This requires $O(n)$ preprocessing.
\end{lemma}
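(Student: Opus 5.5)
The plan is a parallel counting sort, with the per-bucket counting done through packed words. First, I split the $k$ pairs into $\lceil k/\log n\rceil$ consecutive groups of $\log n$ pairs each. For each group, I compute its histogram over the $\log n$ keys in a packed form, one group after another. Concretely, for each key value $a\in[\log n]$ I form the $1$-compact indicator word $I_a^{(g)}$ of group $g$: bit $t$ is set iff the $t$-th pair of the group has key $a$. Building all these words sequentially within a group costs $O(\log n)$ per group, and $O(k)$ in total. This uses the transposed view: instead of one word per key and group holding bits over positions, I record, for each position in the group, a unit vector over keys. That is, pair $i$ contributes the $1$-compact array $e_{a_i}$ of length $\log n$, which is the single word $2^{a_i}$, built in $O(1)$ work.

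Next, I obtain global bucket counts and offsets. I apply \cref{lem:logn-accumulation} to the $k$ words $e_{a_1},\dots,e_{a_k}$; the resulting $\lceil\log k\rceil$-compact form gives the count $c_a$ of every key. The preprocessing is $O(n^{2/3}\log^2 n)\le O(n)$. Unpacking these $\log n$ counts costs $O(\log n\cdot\log k)=O(\log^2 n)$, and a prefix sum gives the bucket starts $s_a$.

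The ranks within a bucket are the main obstacle. Each pair needs its rank among earlier pairs with the same key, and these ranks must come from a prefix version of the accumulation. I would do a blocked two-level prefix sum. Take blocks of $B=\log^2 n$ consecutive pairs. For each block, compute its packed count vector by \cref{lem:logn-accumulation} in $O(B)$ work. There are only $k/\log^2 n$ blocks, so I can afford to unpack each block's vector into $\log n$ explicit counters at cost $O(\log^2 n)$ per block, which is $O(k)$ total. A standard parallel prefix sum over blocks, done separately for each key (a total of $O((k/\log^2 n)\log n)$ entries), then gives each block an offset $s_a+\sum_{\text{earlier blocks}}c_a$ for every key $a$.

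Within a block, the ranks come from a sequential scan: keep an array of $\log n$ counters initialized to the block offsets, and walk through the $B$ pairs in order. Each pair reads and increments the counter of its key, then writes itself to that position of the output. This costs $O(B)$ work and $O(\log^2 n)$ depth per block, and the blocks run in parallel. Summing up, the work is $O(k)$ plus $O(n)$ preprocessing, and the depth is $\poly(\log n)$. I expect the only delicate point to be checking that the $\log n$ per-key counters per block fit in the charged budget, which is exactly why $B$ is chosen as $\log^2 n$ rather than $\log n$. If $k<\log^2 n$, I handle everything in a single block directly.
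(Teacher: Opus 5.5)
Your blocked counting sort is correct once $k=\Omega(\log n)$, and it is essentially the paper's Case 1; the paper uses groups of $\log n$ pairs instead of blocks of $\log^2 n$. Two small remarks on that part:
\begin{itemize}
\item Unpacking a packed count vector costs only $O(\log n)$, not $O(\log n\cdot\log k)$, since each field is extracted with $O(1)$ shifts and masks.
\item A block's histogram can be obtained by a plain sequential scan with $\log n$ counters in $O(B+\log n)$ work, so the packed accumulation is not actually needed.
\end{itemize}

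The genuine gap is small $k$. Your fallback ``if $k<\log^2 n$, I handle everything in a single block directly'' still does three things of size $\log n$: it unpacks a $\log n$-entry count vector, takes a prefix sum over the $\log n$ buckets to get the $s_a$, and initializes $\log n$ counters. So it costs $\Omega(\log n)$ work however small $k$ is, and $\Theta(\log^2 n)$ by your own accounting. What you have shown is $O(k+\log n)$ work, not $O(k)$.

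This additive term breaks linear work in the way the lemma is used. For example, in the proof of \cref{lem:linial}, every vertex sorts its $t$ neighbor sets by weight class and may spend only $O(t)$ work, where $t$ can be $1$. An extra $\log n$ per vertex gives $\Theta(n\log n)$ total work.

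The missing idea is how to handle $k<c\log n$ without ever touching a $\log n$-size array. This is exactly where the $O(n)$ preprocessing is spent. The paper treats $k<\frac19\log n$ as a separate case:
\begin{itemize}
\item It encodes the whole count vector $(c_1,\dots,c_{\log n})$, whose entries sum to less than $\frac19\log n$, as one word. There are fewer than $n^{0.9}$ such states.
\item It precomputes tables over all states: for incrementing a state, for the rank $\mathrm{loc}(S,t)$ of key $t$ among the nonzero keys of $S$, and for the output start $\mathrm{start}(S,j)$ of the $j$-th nonzero bucket.
\item Two scans then place every pair using $O(1)$ lookups each and only $O(k)$ runtime space.
\end{itemize}

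A lighter fix in the same spirit also works. OR the words $2^{a_i}$ into a mask $M$ of present keys. Replace each $a_i$ by the number of set bits of $M$ below position $a_i$, computed by a table-based popcount with an $O(\sqrt n)$-size table. Then counting-sort with at most $k$ buckets.
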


\begin{proof}

We split into two cases: $k< \frac{1}{9}\log n$ and $k \geq \frac{1}{9}\log n$.

\paragraph{Case 1.} Assume $k \geq \frac{1}{9}\log n$.

It is safe to assume that $k$ is a multiple of $\log n$, by appending dummy pairs
$(a_i,b_i)=(\log n+1 ,n)$ with the maximum possible values. We then split the input
into groups of size $\log n$.

Within each group, we run $\log n$ accumulation (\cref{lem:logn-accumulation}) to compute, for each
$t\in[\log n]$, the number of indices $i$ in that group with $a_i=t$. After that, we unpack the compact forms and take a prefix sum over groups to compute, for each
group and each $t$, the total number of indices $i$ with $a_i=t$ in all previous
groups (and thus the starting offset for bucket $t$ inside this group). We also
obtain the global total count for each $t$, which lets us allocate the output array
and partition it into $\log n$ contiguous buckets (one per key value $t$), each of
the correct size.

Finally, within each group, we scan its $\log n$ elements once and maintain a
\emph{local} counter for each $t$ that records how many elements with key $t$ have
already been seen in this group. This local counter gives each element its rank
among the elements in the same group with the same key. We then write the element at position \(\text{pos}=\text{(global offset for bucket $t$ before this group)}+\text{(local rank within this group)}\).
The order among equal keys is arbitrary, so any consistent local ranking works.

\paragraph{Case 2.} Assume $k< \frac{1}{9}\log n$.

We handle this case using lookup tables, so that every operation costs $O(1)$ work
and we never allocate an array of length $\ell=\log n$ at runtime.

As before, we encode any count vector $(c_1,c_2,\dots,c_\ell)$ with
$\sum_{t=1}^{\ell} c_t < \frac{1}{9}\log n$ into a single word. The number of
such vectors is at most $k\binom{\ell+k}{k}<n^{0.9}$, so we can preprocess tables
over all encoded states in $O(n)$ work.

In addition to the update/extraction tables (increment/decrement, extract $c_t$,
and find the first nonzero $c_t$), we preprocess one more table that \emph{packs
the nonzero buckets}:
for each encoded state $S$ and each key $t\in[\ell]$, the table stores
\(\mathrm{loc}(S,t)\in\{0,1,2,\dots,k\}\),
where $\mathrm{loc}(S,t)=0$ iff $c_t=0$, and otherwise $\mathrm{loc}(S,t)$ is the
index of key $t$ among the nonzero keys of $S$ (in increasing order).
We also preprocess, for each state $S$ and each index $j\in[k]$, the quantity
\(\mathrm{start}(S,j)\), the starting position (prefix sum) of the $j$-th nonzero bucket in the output.

Now the algorithm is:
(1) scan the input once and update the encoded count state $S$ by applying
$\texttt{INC}(S,a_i)$ for all $i$ (total $O(k)$ work);
(2) allocate an output array of length $k$, and allocate a pointer array
$\texttt{ptr}[1..r]$ where $r$ is the number of nonzero buckets (this is at most $k$);
(3) scan the input again, and for each pair $(a_i,b_i)$ compute
$j=\mathrm{loc}(S,a_i)$ in $O(1)$ work, write it to position
$\mathrm{start}(S,j)+\texttt{ptr}[j]$, and increment $\texttt{ptr}[j]$.
This uses only $O(k)$ total work and only $O(k)$ runtime space.
\end{proof}

\begin{remark}
We can extend $\log n$ sorting (\cref{logn-sorting}) to keys in the range $[\log^c n]$, and similarly extend
graph splitting for the same range by iterating the algorithm $c$ times. This does not
immediately apply to $\log n$ accumulation (\cref{lem:logn-accumulation}), because storing the accumulated values may
require more words than the input representation.
\end{remark}

\subsection{Large Degree Extractor}
\label{subsec:extractor}
In many parallel algorithms, a common approach is to work only with vertices whose degree is at least half of the maximum degree. This is easy to maintain sequentially, but not in parallel: updating degrees is nontrivial, since several edges may try to update the same counter simultaneously, and resolving this may itself require sorting the edges. In this section, we provide a solution using $\log n$ sorting (\cref{logn-sorting}).

\begin{lemma}\label{big-degree-ds}
There is a deterministic parallel data structure that, given a simple bipartite graph
\(G=(V_L\cup V_R,E)\) in compact representation with \(|V_L|,|V_R|\le n\), supports
the following operations.

\begin{itemize}
    \item \textbf{Batched updates.} Remove an arbitrary batch of vertices, together with all
    incident edges. Each vertex is removed at most once.
    
    \item \textbf{Large-degree subgraph query.} Let the current maximum degree on the left side
    be \(\Delta\), and let \(k\) be such that \(\Delta\in[2^k,2^{k+1})\).
    The data structure outputs the induced subgraph on the set of left vertices of current
    degree at least \(2^k\), together with all of their neighbors on the right, in compact
    representation. The work of one query is \(O(|V'|+|E'|+R)\), where \(V'\) and \(E'\) are the vertex and edge sets of the output subgraph, and \(R\)
    is a redundancy term that sums to at most $O(|V|+|E|)$ among all queries. 
\end{itemize}

Moreover, the total work of all batched updates is \(O(|V|+|E|)\), and every operation has
\(poly(\log n)\) depth.
\end{lemma}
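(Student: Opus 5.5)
Maintain a bucket structure keyed by degree class. Each left vertex $v$ stores its \emph{class} $c(v)=\lfloor \log_2 \deg(v)\rfloor\in[\log n]$. Initially, degrees are read off the compact representation, and the left vertices are sorted into $\log n$ buckets by class using $\log n$ sorting (\cref{logn-sorting}) in $O(|V_L|)$ work. Degrees only decrease, so a vertex's class only moves downward, at most $\log n$ times in total. Buckets are stored as arrays that may contain \emph{stale} entries: vertices that were removed, or that have since dropped to a lower class. Stale entries are filtered lazily.

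\textbf{Batched updates.} When a batch $B$ is removed, mark the vertices of $B$ as dead. Each edge incident to $B$ is marked deleted on both endpoint copies, using the position pointers from the compact representation. The cost is charged to these edges, which are deleted only once.

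The hard part is recomputing degrees of surviving left vertices without concurrent-counter conflicts. Since every deleted edge is visible from its surviving endpoint's adjacency array, I would not update degrees eagerly. Instead, each surviving left vertex keeps a ``dirty'' flag, set (ARBITRARY CRCW suffices) when any incident edge is deleted. Dirty vertices are not processed yet; the degree is recomputed only when the vertex's bucket is queried.

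To keep the total work linear, I would compact each vertex's adjacency array with \cref{clm:subgraph-splitting} whenever its number of live edges drops by a factor of $2$ relative to the current array length. A geometric-sum argument charges this cost to deleted edges: after compaction the array has length $L$, and the next compaction requires at least $L/2$ further deletions. Right-side vertices are handled symmetrically, so neighbor lists on the right stay within a constant factor of their live size.

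\textbf{Queries.} Keep a pointer $k$ to the highest possibly nonempty bucket; it only moves downward, so it costs $O(\log n)$ per query plus $O(\log n)$ overall. For bucket $k$, in parallel:
\begin{enumerate}
\item Discard dead entries.
\item For each dirty entry, recompute its live degree by a prefix sum over its (compacted) adjacency array. This costs $O(\text{array length})=O(\deg)$.
\item Entries whose class is still $k$ are output.
\item Entries whose class dropped are reinserted into lower buckets, via $\log n$ sorting, and appended to those buckets.
\end{enumerate}
If bucket $k$ becomes empty, decrement $k$ and repeat.

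The output subgraph is built by \cref{clm:subgraph-splitting} restricted to the selected left vertices and their live edges. Its cost is $O(|V'|+|E'|)$, since the arrays are compacted to within a factor of $2$ of their live size.

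\textbf{Accounting for the redundancy $R$.} $R$ is the work spent on entries that turn out stale. A demoted vertex of old class $j$ costs $O(2^{j+1})$ to recompute. Its degree has fallen from at least $2^j$ to below $2^j$, though possibly only by a few edges. This is the main obstacle.

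I would fix it by recomputing the degree of a vertex only when it is certified to have lost a constant fraction of its degree. Each vertex would keep a ``deletions since last recount'' counter. However, incrementing that counter is exactly the conflict we are trying to avoid. The alternative is to use $\log n$ accumulation (\cref{lem:logn-accumulation}) per batch, grouped by endpoint, which is the place where word-level packing is needed.

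Failing that, I would use a potential argument. Each output of vertex $v$ at class $k$ is charged to the query itself ($|E'|\ge 2^k$). Each demotion of $v$ is charged to the sum $\sum_j 2^{j}$ over the classes $v$ passes through, which totals $O(\deg_0(v))$ per vertex. This holds provided each vertex is demoted through any given class at most once, and it gives $R=O(|V|+|E|)$ overall.
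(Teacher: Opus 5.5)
The gap is on the right side of the output. The compact representation you must return includes, for every $v\in V_R'$, an array holding exactly the neighbors of $v$ in $V_L'$. You build it with \cref{clm:subgraph-splitting}, but that lemma charges its work to the graph being selected from (see the remark after it). Here that means the full live array of each $v\in V_R'$, and this array also holds all live neighbors of $v$ in lower classes. Those neighbors are not deleted, so your halving-compaction can never shrink the array below their number. Your accounting of $R$ only covers left-side rescans, so this cost is charged nowhere.

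The cost is also not linear. Take right vertices $r_1,\dots,r_D$. For each $j=1,\dots,\log D$, add $D/2^j$ left vertices, each adjacent to its own block of $2^j$ right vertices. Also attach $\log D$ degree-one left vertices to each $r_s$, so $|E|=2D\log D$. Now alternately query and delete the returned left vertices. Each of the $\log D$ queries outputs only $D$ edges. Yet every $r_s$ lies in $V_R'$ and has live array length at least $\log D$, so the total work is $\Omega(D\log^2 D)=\Omega(|E|\log D)$. Producing the right lists from the left side instead would mean grouping $E'$ by right endpoint, i.e.\ integer sorting with keys in $[n]$. That is exactly the $\log\log n$-overhead primitive the paper is careful to avoid.

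What is missing is the paper's right-side mechanism. Each $N(v)$ is partitioned once, via \cref{lem:logn-subgraph-splitting}, into blocks $B_i(v)$ according to the \emph{initial} class of the left endpoint. The blocks are opened lazily in decreasing order of $i$, and $v$ sits in a bucket $R_i$ for the highest level at which it may still have an unreported neighbor. In phase $i$, only the already-opened part $C(v)$ is scanned, and edges into the current top bucket are reported. The other scanned edges move to a deferred list that is carried exactly one level down; if nothing is deferred, $v$ jumps to its highest unopened block. To make this affordable, the paper demotes a left vertex by exactly one level per cleaning rather than straight to its true class. Then every non-output rescan of an edge $(u,v)$ on the right can be charged to a one-level drop of $u$, totaling $O(d_0(u))$.

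Your left-side analysis, by contrast, is fine, and the obstacle you flagged there is not one. Classes only decrease, so $u$ leaves each class at most once. With compaction done lazily at scan time, its array while in class $j$ has length below $2^{j+1}$. Your fallback potential argument then gives $O(d_0(u))$, which is the paper's left-side charging. It needs neither per-vertex deletion counters nor the halving rule, which would itself require such counters.
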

\begin{proof}
We may assume without loss of generality that batched updates and queries alternate, since consecutive update batches can be merged. Also, a query has no external parameter: it always asks for the current largest degree scale on the left. We index the left buckets so that \(L_i\) corresponds to the degree range \([2^{i-1},2^i)\). Thus, if the first phase whose cleaned left bucket remains nonempty is \(h\), then the query threshold is \(2^{h-1}\).

\paragraph{Stored data.}
For each \(u\in V_L\), we keep a compact stored adjacency list \(A(u)\), initially equal to the full adjacency list of \(u\), and we place \(u\) into the unique bucket \(L_i\) with \(2^{i-1}\le d_0(u)\le 2^i-1\). We do not update the bucket placement eagerly under deletions. The invariant we maintain on the left is that whenever \(u\in L_j\), we have \(|A(u)|\le 2^j-1\).

For each \(v\in V_R\), we partition its neighbors by the \emph{initial} left bucket:
\(B_i(v):=\{u\in N(v): u \text{ initially belongs to } L_i\}\).
All these blocks are built once in \(O(|V|+|E|)\) work and \(\poly(\log n)\) depth using \cref{lem:logn-subgraph-splitting}. We also keep:
\begin{itemize}
    \item a bit-mask \(M(v)\), whose \(i\)-th bit is \(1\) iff \(B_i(v)\) is still unopened;
    \item a current candidate list \(C(v)\);
    \item a deferred list \(D(v)\);
    \item a target level \(p(v)\).
\end{itemize}
Initially \(C(v)=D(v)=\emptyset\), \(M(v)\) marks exactly the nonempty blocks \(B_i(v)\), and \(p(v)\) is the largest index \(i\) with \(B_i(v)\neq\emptyset\). We store the right vertices in buckets \(R_i:=\{v:p(v)=i\}\), built by one \(\log n\)-sorting step.

\paragraph{Batched updates.}
When a batch of vertices is removed, we only set deleted bits for these vertices. We do not eagerly update any stored list. Instead, whenever a list is scanned later, all deleted vertices and incident edges are filtered out at that time. Since each vertex is removed at most once, the total work of all batched updates is \(O(|V|)\).

\paragraph{Phase invariant.}
A query proceeds in descending phases. At the \emph{beginning} of phase \(i\), the following hold.

\begin{enumerate}
    \item \(L_i\) is the largest nonempty left bucket.
    \item For every \(j\) with \(1\le j\le i\) and every \(u\in L_j\), we have \(|A(u)|\le 2^j-1\).
    \item Every right vertex that may still have an unreported live neighbor in \(L_i\) lies in \(R_i\).
    \item For every \(v\in R_i\), we have \(D(v)=\emptyset\). The list \(C(v)\) consists of exactly the already activated neighbors of \(v\) that still need to be tracked in the current and future phases, up to stale entries that will be discarded when \(C(v)\) is next scanned. Every not-yet-activated live neighbor of \(v\) lies in some unopened block \(B_t(v)\) with \(t\le i\). In particular, every live neighbor of \(v\) that lies in \(L_i\) is either already in \(C(v)\) or belongs to \(B_i(v)\) with the \(i\)-th bit of \(M(v)\) set.
\end{enumerate}

These properties hold initially by construction.

\paragraph{Cleaning one left phase.}
Let \(i\) be the largest index such that \(L_i\) is nonempty. We first clean \(L_i\). For each \(u\in L_i\), we scan \(A(u)\), discard deleted right neighbors, compact the surviving entries, and let \(d(u)\) be the resulting current degree. If \(d(u)\ge 2^{i-1}\), then \(u\) stays in \(L_i\). If \(1\le d(u)\le 2^{i-1}-1\), then \(u\) is deferred by exactly one level, namely to \(L_{i-1}\). If \(d(u)=0\), then \(u\) is removed. Since all deferred vertices go to the single bucket \(L_{i-1}\), this redistribution is done by one prefix-sum/compaction step.

If the cleaned bucket \(L_i\) remains nonempty, then this is the phase at which the query stops. Otherwise, the query continues to the next lower phase.

\paragraph{Processing the corresponding right phase.}
We now process exactly the vertices in \(R_i\). Fix \(v\in R_i\). If the \(i\)-th bit of \(M(v)\) is set, we activate \(B_i(v)\) by appending it to \(C(v)\), and then clear that bit.

We then scan the current list \(C(v)\). While scanning, we discard all deleted left neighbors. Every remaining live edge \((u,v)\) in the current list is examined and handled as follows:
\begin{itemize}
    \item if \(u\in L_i\), then \((u,v)\) is reported as a witness edge of the current phase;
    \item otherwise, \(u\) lies in a lower left bucket, so we move \((u,v)\) from \(C(v)\) to \(D(v)\).
\end{itemize}
Thus, after the scan, the edges of \(C(v)\) whose left endpoint lies in a lower bucket have been moved to \(D(v)\). The edges from \(v\) into \(L_i\) are exactly the witness edges of the current phase; if the query continues, then these witness edges are empty because the cleaned bucket \(L_i\) is empty.

\paragraph{Rebuilding the right buckets.}
If the query stops at phase \(i\), then no further right-side rebuilding is needed. Otherwise the cleaned bucket \(L_i\) is empty, and we must prepare the state for the next phase.

There are two cases.

If \(D(v)\neq\emptyset\), then \(v\) must remain active immediately at the next lower level. In this case we set \(C(v):=D(v)\), \(D(v):=\emptyset\), and \(p(v):=i-1\). Thus \(v\) is deferred by exactly one level.

If \(D(v)=\emptyset\), then the already activated part of \(v\)'s neighborhood has been completely exhausted. Hence any future witness of \(v\) must come from an unopened block. Let \(j<i\) be the largest index such that the \(j\)-th bit of \(M(v)\) is still set. If such a \(j\) exists, we set \(C(v):=\emptyset\) and \(p(v):=j\); that is, \(v\) jumps directly to the largest unopened level below \(i\). If no such \(j\) exists, then \(v\) has no remaining possible future witness and is removed.

After all vertices of \(R_i\) are processed, we rebuild the right buckets by one \(\log n\)-sorting step according to the new target levels \(p(v)\). This restores the phase invariant at the beginning of the next phase.

\paragraph{Why the jump is correct.}
The only subtle point is the second case above. If \(D(v)=\emptyset\), then after processing phase \(i\) there is no already activated live neighbor of \(v\) left below level \(i\). Therefore the only possible future witnesses of \(v\) must come from unopened blocks. If \(j<i\) is the largest unopened level, then every live neighbor still hidden in an unopened block \(B_t(v)\) satisfies \(t\le j\), and since left buckets only move downward, such a neighbor can never contribute to any phase above \(j\). Hence \(v\) cannot become relevant again before phase \(j\), so jumping directly to \(R_j\) is correct.

\paragraph{Correctness of the query.}
Let \(h\) be the first phase such that the cleaned bucket \(L_h\) remains nonempty. After cleaning phase \(h\), every vertex still in \(L_h\) has current degree at least \(2^{h-1}\). On the other hand, every vertex in a lower bucket \(L_j\) with \(j<h\) has degree at most \(2^j-1\le 2^{h-1}-1\) by the left-side invariant. Therefore the queried left side is exactly \(V_L':=L_h\).

By the right-side invariant, every right vertex that still has a live neighbor in \(L_h\) lies in \(R_h\). For every \(v\in R_h\), every live neighbor of \(v\) in \(L_h\) is either already in \(C(v)\) or lies in the newly activated block \(B_h(v)\), and both are scanned in phase \(h\). Therefore the witness edges reported in phase \(h\) are exactly the edges between \(V_L'\) and their right neighbors. Let \(V_R'\) be the set of right vertices incident to at least one witness edge in phase \(h\). Then \(V_R'\) is exactly the right side of the desired induced bipartite subgraph.

The compact representation of the output subgraph is now immediate. On the right, we use the witness lists produced in phase \(h\). On the left, after cleaning phase \(h\), every \(u\in V_L'\) has \(A(u)\) equal to its exact current live neighborhood. Every such live neighbor lies in \(V_R'\), since it is adjacent to a vertex of \(V_L'\). Hence the lists \(A(u)\) for \(u\in V_L'\) are exactly the left adjacency lists of the output subgraph. A standard relabeling/compaction step constructs the final compact representation in \(O(|V'|+|E'|)\) additional work and \(\poly(\log n)\) depth.

\paragraph{Work bound.}
The total update cost is \(O(|V|)\).

For the query cost, the only non-output work is the redundancy term \(R\). On the left, whenever a vertex \(u\) drops from \(L_i\) to \(L_{i-1}\), every later non-output rescan of an edge incident to \(u\) is charged to this one-level drop. After the drop, \(u\) has at most \(2^{i-1}-1\) remaining live neighbors, so the total left-side redundancy charged to \(u\) over the whole sequence is \(O(d_0(u))\), and summing over all \(u\in V_L\) gives \(O(|E|)\).

On the right, an edge \((u,v)\) contributes redundancy in two ways. First, it may be scanned once when its block \(B_i(v)\) is activated; this is charged once to the edge itself. Second, after activation, it may survive from one phase to the next through the transition \(C(v)\to D(v)\to C(v)\). Every such future rescan is charged to a one-level drop of the left endpoint \(u\): an edge can continue to survive on the right only while the current left bucket of \(u\) keeps decreasing. Therefore the total right-side redundancy charged to a fixed \(u\) is \(O(2^{i_0}+2^{i_0-1}+2^{i_0-2}+\cdots)=O(d_0(u))\), where \(i_0\) is the initial bucket of \(u\). Summing over all \(u\in V_L\), the total right-side redundancy is \(O(|E|)\).

Hence each query costs \(O(|V'|+|E'|+R)\), where the sum of all redundancy terms \(R\) over the whole sequence of operations is \(O(|V|+|E|)\). All subroutines are scans, compaction steps, prefix sums, and \(\log n\)-sorting on \([\log n]\)-valued keys, so every operation has \(\poly(\log n)\) depth.
\end{proof}
\begin{remark}
We can apply the same structure to an undirected graph in order to find all vertices in the highest degree class. To do so, we replace the undirected graph by a bipartite graph with two copies of the vertex set, \(V_L\) and \(V_R\). Every undirected edge \(\{u,v\}\) is represented by the two bipartite edges \((u_L,v_R)\) and \((v_L,u_R)\). Thus the left degree of \(u_L\) is exactly the degree of \(u\) in the original graph, and querying the highest left degree class returns precisely the vertices of highest degree in the undirected graph. Likewise, deleting a vertex \(v\) in the original graph corresponds to deleting both copies \(v_L\) and \(v_R\).
\end{remark}
\section{Defective Coloring}

\subsection{Definition and Results}

\begin{definition}[Weighted defective coloring]
Let $G=(V,E)$ be a graph with $|V|\le n$ and positive edge weights, and let $S=\sum_{(u,v)\in E} w(u,v)$. An $(N,\epsilon)$ defective coloring is a mapping $f:V\to [N]$ such that $\sum_{(u,v)\in E,\ f(u)=f(v)} w(u,v)\le \epsilon S$. That is, the total weight of defective edges (edges whose endpoints receive the same color) is at most an $\epsilon$-fraction of the total weight.
\end{definition}

\begin{restatable}[Near Optimal Defective Coloring]{theorem}{NearOptimalDefectiveColoring}\label{thm:near-optimal-defective-coloring}
Consider $\epsilon \in (0,1]$. For every graph $G=(V,E)$ with $|V|\le n$ given in compact representation, there exists an algorithm with $O(n+m)$ work and $poly(\log n)$ depth that computes a $\left(\left\lceil \frac{2}{\epsilon}\right\rceil,\epsilon\right)$ defective coloring of $G$.
\end{restatable}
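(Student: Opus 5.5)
Our plan is to compute the coloring in three stages: a constant number of Linial-style polynomial color reductions, then one exponential jump in the number of colors, and finally a greedy/derandomized reduction down to $\lceil 2/\epsilon\rceil$ colors. Throughout we only need to control a weighted defect, so at each stage we charge an $\epsilon/3$-fraction (or a geometric share) of the total weight $S$ to newly created monochromatic edges. Since defects only accumulate additively, the final defect is at most $\epsilon S$.

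\textbf{Stage 1: polynomial reductions.} Starting from the trivial $n$-coloring by IDs, we apply $O(1)$ rounds of weighted Linial-type reduction. In each round a vertex with color $c$ is assigned a family of candidate new colors, for example the evaluations of the degree-$d$ polynomial encoded by $c$ over a field of size $q=n^{\alpha}$. The vertex then picks a candidate that minimizes the weight of incident edges colliding with neighbors' candidates. Two distinct polynomials agree in at most $d$ points, so an averaging argument shows some choice has collision weight at most a $d/q$ fraction of its incident weight. Each round reduces $n^{\beta}$ colors to $n^{\beta'}$ for a fixed $\beta'<\beta$. The work per round is $O(m+n)$ provided the per-vertex search over candidates is done with word-level parallelism, using evaluation tables of size $n^{o(1)}$ or $n^{2/3}$ that are precomputed as in \cref{lem:effective-sum-packed}. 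After $O(1)$ rounds we reach $O(n^{\alpha})$ colors for a small constant $\alpha$, with defect at most $\epsilon S/3$ (scaling $d/q\le \epsilon/6$ per round; since $\epsilon\ge 1/\poly(\log n)$ is the regime of interest, $q=n^{\Omega(1)}$ is more than enough).

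\textbf{Stage 2: the exponential jump.} This is the step we expect to be the main obstacle. We aim to map $O(n^{\alpha})$ colors to $O(\log n\cdot\poly(1/\epsilon))$ colors, or even $O(\log n)$ colors if $\epsilon$ is constant, in a single $O(m+n)$-work pass. We would deterministically build a combinatorial design: a family $\{F_c\}$ of subsets of a ground set $[\Theta(\log n/\epsilon)]$, one per old color $c$, with pairwise intersections at most an $\epsilon/6$ fraction of the set size. A code-based construction, such as Reed--Solomon over a field of size $\Theta(\log n)$ with the positions in the codeword, gives this in $n^{O(\alpha)}\ll n$ work. Each vertex $v$ then needs to choose $x\in F_{c(v)}$ minimizing $\sum_{u\sim v,\,x\in F_{c(u)}} w(u,v)$. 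A direct evaluation costs $\deg(v)\cdot|F|=\deg(v)\Theta(\log n)$. To avoid this, we store each $F_c$ as a bitmask of $O(1/\epsilon)$ words and accumulate weighted indicator vectors over neighbors in packed form, using \cref{lem:logn-accumulation} with weights rounded to $O(\log(1/\epsilon)+\log\log n)$-bit scales and handled bit by bit. We then extract the argmin, so the cost per edge is $O(1)$ word operations. Weight rounding loses only a constant factor in the averaging bound, which we absorb into the constants. By averaging, some $x$ has collision weight at most $(\epsilon/6)$ times the incident weight, and summing over $v$ gives defect $\le \epsilon S/3$. The color of $v$ in the new palette is $x$.

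\textbf{Stage 3: down to $\lceil 2/\epsilon\rceil$ colors.} We now have $K=\poly(\log n)$ colors, each forming a class with no internal (non-defective) constraint. We process the classes one by one, which gives $K$ sequential steps and hence $\poly(\log n)$ depth. Each vertex in the current class picks the color $y\in[\lceil 2/\epsilon\rceil]$ minimizing the weight to already-colored neighbors with color $y$. By averaging, this is at most an $\epsilon/2$ fraction of that weight. Every non-defective edge is charged exactly once, at its later endpoint, so the new defect is $\le \epsilon S/2$.

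To keep the work linear despite the $1/\epsilon$ palette, we compute, per vertex, the weights into each of the $\lceil 2/\epsilon\rceil$ colors by scattering over incident edges. This costs $O(\deg(v)+1/\epsilon)$; the $1/\epsilon$ term is made $O(1)$ per vertex by packing counts when $1/\epsilon\le \log n$. Splitting vertices by class uses \cref{lem:logn-subgraph-splitting} and \cref{logn-sorting} with keys in $[\poly(\log n)]$. Summing the three stages gives total defect $\le \epsilon S/3+\epsilon S/3+\epsilon S/2$, which exceeds $\epsilon S$. We therefore rebalance the constants to $\epsilon/4$, $\epsilon/4$, $\epsilon/2$; the final greedy still uses $\lceil 2/\epsilon\rceil$ colors and costs $\epsilon/2$.
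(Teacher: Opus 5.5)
\textbf{Gap 1: the work bound of Stage 2 fails.} Your design lives on a ground set of $\Theta(\log n/\epsilon)$ elements, so each $F_c$ occupies $\Theta(1/\epsilon)$ words. Merely intersecting a neighbor's mask with $F_{c(v)}$ already costs $\Theta(1/\epsilon)$ word operations per edge. Handling weights ``bit by bit'' multiplies this by the number of weight scales. Also, \cref{lem:logn-accumulation} only sums unweighted single-word ($\log n$-bit) arrays. The total work is therefore $\Theta(m\cdot\poly(1/\epsilon,\log\log n))$, e.g.\ $\Omega(m\log n)$ for $\epsilon=1/\log n$, which is exactly the overhead the theorem removes. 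Nor is an exact argmin of the weighted collision sums available in $O(\deg v)$ work; only an approximate minimizer is.

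This is not a presentation slip. If each $F_c$ must fit in one word, then $N$ sets of size $l\le\log n$ with pairwise intersections at most $t$ in a ground set of size $k$ satisfy $N\le (ek/l)^{t+1}$, since each $(t+1)$-subset lies in at most one set. So $N\ge n^{\Omega(1)}$ and $k=\poly(\log n)$ force $t/l=\Omega(1/\log\log n)$. Hence no single jump can certify small defect through your averaging argument.

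The missing idea is to demand only a constant defect from the exponential jump and amplify afterwards. The paper's \cref{lem:linial} uses a $(k,k/100,k/5000)$ design with $k=\frac1{10}\log n$, so each set is one word. It rounds weights to powers of $2$ and repeatedly replaces three equal-weight sets by one set of weight $w$ and one of weight $2w$. This finds, in $O(\deg v)$ work, an element with $W(u)\le 0.95W$, giving a $(\log n,c)$ coloring. Batched greedy turns this into a $(2,c)$ coloring. \cref{lem:almost_optimal_polylog} then reapplies that coloring $O(\log\log n)$ times to the residual defective subgraph, alternating unit and true weights so that both edge count and defective weight decay geometrically. This yields a $(4^k,c^k)$ coloring with $c^k<1/(11\log^2 n)$. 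Only then does a batched greedy like your Stage 3 give $(u,1/u+\delta)$.

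\textbf{Gap 2: small $\epsilon$ is not covered.} The statement is for every $\epsilon\in(0,1]$, with work and depth independent of $\epsilon$. You explicitly restrict to $\epsilon\ge 1/\poly(\log n)$. Your Stage 3 processes at least $\lceil 2/\epsilon\rceil$ classes sequentially, so its depth is $\Omega(1/\epsilon)$, which is polynomial for $\epsilon=n^{-\Omega(1)}$. Stage 1's per-round defect $d/q$ with $q=n^{\alpha}$ also breaks in that range.

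The paper never greedily colors into a large palette. For $\epsilon<1/n$ it uses distinct colors. Otherwise it builds the palette multiplicatively as $A=i\cdot 2^j\in[\frac{499}{500}T,T]$, where $T=\lceil 2/\epsilon\rceil$ and $i\le 1000$. It does this through $O(\log(1/\epsilon))$ rounds, each applying a $(u,1/u+\delta)$ coloring to the currently defective subgraph and refining colors into pairs. The final defect is at most $\frac{11}{10}\cdot\frac1A\prod_s(1+u_s\delta)<2/T\le\epsilon$. Weights are perturbed to $\widetilde w(e)=w(e)/S+1/(10m)$, so the number of residual edges is at most $10m$ times the residual weight and decays geometrically. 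This keeps the total work at $O(n+m)$ and the depth at $O(\log(1/\epsilon))\cdot\poly(\log n)$.
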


\begin{remark}
These parameters are optimal up to constant factors. For example, in a complete graph with equal weights, any coloring using $k$ colors incurs a defect of at least approximately $1/k$.
\end{remark}

\begin{remark}
All results in this section extend to multigraphs. That is, there may be multiple edges between the same pair of endpoints, and at no point does the algorithm require these parallel edges to be merged.
\end{remark}

\subsection{Method Overview and Parameter Mixing}

In this section, we show how to compute a near-optimal defective coloring in linear work. We restate the theorem here.
\NearOptimalDefectiveColoring*

We approach this through multiple steps. The most important step is \cref{lem:almost_optimal_polylog}, which computes a $\left(2,\frac12+\frac{1}{10\log^2 n}\right)$ defective coloring, and then iterates it to obtain the general parameters. We proceed in several steps, gradually adjusting the parameters to reach the claimed guarantee.

\Cref{lem:linial} resembles one round of Linial's classical deterministic distributed coloring algorithm: we reduce a larger list of colors to a smaller list at the cost of a controlled amount of defect. In such a step, we give each color a small list of candidate colors, and choose the one that minimizes the worst-case defect, assuming that neighbors choose adversarially to maximize it.

\Cref{lem:batched-greedy} uses the previous coloring as a schedule. This allows us to use a sequential greedy algorithm: processing colors one by one, we assign colors to vertices sequentially so as to minimize defect, achieving a small depth. This introduces only an optimal amount of additional defect: we discard only a $1/c$ fraction of the non-defective edges, where $c$ is the number of new colors.

To begin, we present a lemma that will be used to bootstrap the parameters.

\begin{lemma}[Parameter mixing]\label{lem:parameter_mixing}
Suppose we have:
\begin{itemize}
    \item an algorithm $\mathcal{A}$ that, given a graph $G=(V,E)$ with $|V|\le n$ equipped with a \emph{proper} $c_1$-coloring, computes a $(c_2,\epsilon_2)$ defective coloring in $O(|V|+|E|)$ work and $poly(\log n)$ depth; and
    \item an algorithm $\mathcal{B}$ that, on any graph $G=(V,E)$ with $|V|\le n$, computes a $(c_1,\epsilon_1)$ defective coloring in $O(|V|+|E|)$ work and $poly(\log n)$ depth.
\end{itemize}
Then there is an algorithm that, on any graph $G=(V,E)$ with $|V|\le n$, computes a $\bigl(c_2,\epsilon_1+\epsilon_2(1-\epsilon_1)\bigr)$ defective coloring. Moreover, the resulting algorithm also runs in $O(|V|+|E|)$ work and $poly(\log n)$ depth.
\end{lemma}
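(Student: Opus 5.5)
The plan is to compose the two algorithms. First run $\mathcal{B}$ on $G$ to get a $(c_1,\epsilon_1)$ defective coloring $f_1$. Let $E_{\mathrm{def}}=\{(u,v)\in E: f_1(u)=f_1(v)\}$ be its monochromatic edges; their total weight is at most $\epsilon_1 S$. Delete these edges to form $G'=(V,E\setminus E_{\mathrm{def}})$. On $G'$, the coloring $f_1$ is a proper $c_1$-coloring, so we may run $\mathcal{A}$ on $G'$ with $f_1$ and obtain $f_2:V\to[c_2]$, which we output.

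For the defect bound, write $S'=\sum_{e\in E\setminus E_{\mathrm{def}}} w(e)=S-W_{\mathrm{def}}$, where $W_{\mathrm{def}}\le\epsilon_1 S$. Every edge monochromatic under $f_2$ is either in $E_{\mathrm{def}}$ or is a defective edge of $f_2$ in $G'$. The total is therefore at most
\[
W_{\mathrm{def}}+\epsilon_2 S' = W_{\mathrm{def}}+\epsilon_2(S-W_{\mathrm{def}}) = \epsilon_2 S+(1-\epsilon_2)W_{\mathrm{def}}.
\]
This is nondecreasing in $W_{\mathrm{def}}$ because $\epsilon_2\le 1$; if $\epsilon_2>1$ the claim is trivial since the bound exceeds $S$. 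Substituting $W_{\mathrm{def}}\le\epsilon_1 S$ gives at most $(\epsilon_2+\epsilon_1-\epsilon_1\epsilon_2)S=(\epsilon_1+\epsilon_2(1-\epsilon_1))S$, as claimed. The only subtlety is that this monotonicity is what allows the worst case $W_{\mathrm{def}}=\epsilon_1S$ to be used.

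For complexity, $\mathcal{B}$ costs $O(|V|+|E|)$ work and $\poly(\log n)$ depth. Marking the edges with $f_1(u)=f_1(v)$ takes $O(1)$ work per edge: each endpoint can compare colors locally, so both copies $(u,v)$ and $(v,u)$ are marked consistently without needing the communication property. We then build the compact representation of $G'$ using Subgraph Selection (\cref{clm:subgraph-splitting}) in $O(|V|+|E|)$ work and $\poly(\log n)$ depth. Since $|V(G')|=|V|\le n$ and $|E(G')|\le|E|$, running $\mathcal{A}$ costs $O(|V|+|E|)$ work and $\poly(\log n)$ depth. The sum is linear work and polylogarithmic depth.

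The main point needing care is that $\mathcal{A}$ requires a proper coloring and a compact representation of its input graph. Removing exactly the $f_1$-monochromatic edges gives both, via \cref{clm:subgraph-splitting}. Multigraphs pose no issue, since parallel edges are simply kept or deleted individually.
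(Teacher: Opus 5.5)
Your proposal is correct and matches the paper's proof essentially step for step. You run $\mathcal{B}$, delete the $f_1$-monochromatic edges via subgraph selection so that $f_1$ becomes a proper coloring of $G'$, run $\mathcal{A}$ on $G'$, and bound the defect by $\epsilon_2 S+(1-\epsilon_2)W_{\mathrm{def}}$, using monotonicity in $W_{\mathrm{def}}$ to substitute the worst case $W_{\mathrm{def}}=\epsilon_1 S$.
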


\begin{proof}
Run $\mathcal{B}$ on the input graph $G=(V,E)$ to obtain a $c_1$-coloring $f$. Let $S=\sum_{e\in E} w(e)$ be the total edge weight, let $E_{\mathrm{bad}}=\{(u,v)\in E : f(u)=f(v)\}$ be the set of defective edges under $f$, and let $\epsilon=\frac{\sum_{e\in E_{\mathrm{bad}}} w(e)}{S}$ denote the \emph{actual} defectiveness of $f$. Then $\epsilon\le \epsilon_1$.

Delete the edges in $E_{\mathrm{bad}}$ and use subgraph selection to construct the compact representation of the remaining graph $G'=(V,E\setminus E_{\mathrm{bad}})$. By construction, $f$ is a \emph{proper} $c_1$-coloring of $G'$.

Now run $\mathcal{A}$ on $G'$ (using the proper $c_1$-coloring $f$) to obtain a $c_2$-coloring $g$. Let $S'=\sum_{e\in E\setminus E_{\mathrm{bad}}} w(e)=(1-\epsilon)S$ be the total edge weight of $G'$. Algorithm $\mathcal{A}$ guarantees that the total weight of defective edges of $g$ \emph{within $G'$} is at most $\epsilon_2 S'$. Measured as a fraction of the original total weight $S$, this contributes at most $(1-\epsilon)\epsilon_2$ additional defect.

Therefore, the total defectiveness of the final coloring (measured in $G$) is at most $\epsilon + (1-\epsilon)\epsilon_2$. Since $\epsilon\le \epsilon_1$ and $\epsilon_2\le 1$, the function $\epsilon + (1-\epsilon)\epsilon_2 = \epsilon_2 + (1-\epsilon_2)\epsilon$ is nondecreasing in $\epsilon$, and hence
\[
\epsilon + (1-\epsilon)\epsilon_2 \le \epsilon_1 + (1-\epsilon_1)\epsilon_2
= \epsilon_1 + \epsilon_2(1-\epsilon_1).
\]

For the work bound, the algorithm runs $\mathcal{B}$ on $G$, performs one subgraph-selection step to build $G'$, and then runs $\mathcal{A}$ on $G'$. Since $G'$ is a subgraph of $G$, all three steps together still use $O(|V|+|E|)$ work, and the depth remains $poly(\log n)$.
\end{proof}

\begin{remark}
As long as $\epsilon_1,\epsilon_2$ are constants strictly smaller than $1$, the resulting constant $\epsilon_1+\epsilon_2(1-\epsilon_1)$ is also strictly smaller than $1$.
\end{remark}

\subsection{Linial's Step}
\label{subsec:Linial}

This subsection shows how we mimic Linial's step of exponentially reducing the number of colors to $O(\log n)$ colors~\cite{linial1987LOCAL}, with a constant defect, in the parallel setting with linear work. Before doing that, we first apply a simpler scheme to reduce the number of colors to a small polynomial. Concretely, the next lemma is provided in the proof of Lemma~2.4 of \cite{GG25}. We use this to bring down the number of colors from $n$---which is the trivial bound implied by the numbering of the graph vertices---to $n^\epsilon$ colors for a desirably small $\epsilon$.

\begin{lemma}[Restatement of Lemma~2.4 in \cite{GG25}]\label{lem:gg25-restatement}
Suppose $G=(V,E)$ is a graph with $|V|\le n$ and a proper $k$-coloring. Then we can compute a \(\bigl(3k^{2/3},\,6k^{-1/3}\bigr)\) defective coloring of $G$ in $O(|V|+|E|)$ work and $poly(\log n)$ depth. The algorithm requires $O(n)$ preprocessing. As a result, for any fixed $\epsilon>0$, we obtain a $(O(n^{\epsilon}),\epsilon)$ defective coloring algorithm for $G$ with $O(|V|+|E|)$ work and $poly(\log n)$ depth.
\end{lemma}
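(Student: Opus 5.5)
We need a plan for Lemma 2.4 restatement: given a proper $k$-coloring, compute a $(3k^{2/3}, 6k^{-1/3})$ defective coloring with linear work. We also need the corollary that an $(O(n^\epsilon),\epsilon)$ defective coloring exists.

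\paragraph{Construction.} Write $q$ for a prime with $q\approx k^{1/3}$; by Bertrand's postulate $q\in[k^{1/3},2k^{1/3}]$, and it can be found in the $O(n)$ preprocessing. Identify each old color $c\in[k]$ with a polynomial $P_c$ over $\mathbb F_q$ of degree at most $2$, by reading off the base-$q$ digits of $c$. There are $q^3\ge k$ such polynomials, so this is an injection. The new color palette is $\mathbb F_q\times\mathbb F_q$, of size $q^2\le 4k^{2/3}$; the constant is tuned by picking $q$ more carefully, or by merging palette entries, to reach $3k^{2/3}$. Each vertex $v$ of old color $c$ must choose an evaluation point $x_v\in\mathbb F_q$ and receives the new color $(x_v,P_c(x_v))$.

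An edge $uv$ is defective only if $x_u=x_v$ and $P_{c(u)}(x_u)=P_{c(v)}(x_u)$. Because the old coloring is proper, $P_{c(u)}\neq P_{c(v)}$, so the two polynomials agree on at most $2$ of the $q$ points.

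\paragraph{Choosing $x_v$.} Choosing $x_v$ uniformly at random gives each edge defect probability at most $2/q^2$, far below $6k^{-1/3}$. We therefore settle for a weaker deterministic, local choice. Each vertex $v$ charges to itself only the incident edges it is responsible for, and picks the $x_v$ minimizing the weight of its own incident edges $uv$ with $P_{c(u)}(x_v)=P_{c(v)}(x_v)$. This is the Linial-style adversarial choice, in which the neighbor's own choice is ignored. Each incident edge of $v$ is bad for at most $2$ of the $q$ points, so by averaging some $x$ has bad weight at most $(2/q)\,w(v)$, where $w(v)$ is the total weight incident to $v$. Summing over both endpoints, every defective edge is counted at some endpoint, so the total defect is at most $\sum_v (2/q)\,w(v)=(4/q)S\le 4k^{-1/3}S\le 6k^{-1/3}S$.

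\paragraph{Work bound (main obstacle).} The difficulty is that computing, for every $x\in\mathbb F_q$, the bad weight at $v$ naively costs $q\cdot\deg(v)$ work. Instead, for each edge $uv$ we compute the at most $2$ roots of $P_{c(u)}-P_{c(v)}$ in $O(1)$ time. This uses table lookup: the lookup table covers all quadratics over $\mathbb F_q$, has size $q^3=O(k)\le O(n)$, and is built in preprocessing. The edge then adds its weight to at most $2$ counters at $v$.

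Aggregating these counters needs a per-vertex array of size $q$ with concurrent additions, which is the main obstacle. To avoid spending $\Theta(q)$ per vertex, we restrict attention to the candidate points: vertices with $\deg(v)<q/2$ can pick a point hit by none of their edges. Finding such a point can be done by examining $\deg(v)+1$ candidate points and sorting them locally, or simply by using $2\deg(v)+1$ candidates. Vertices with $\deg(v)\ge q/2$ can afford $O(q)$ work. For these high-degree vertices, the counters are accumulated by per-vertex sorting of the (root, weight) pairs. When the key range is $\poly(\log n)$, this uses Lemma~\ref{logn-sorting}; in general, we instead sum via a prefix-sum over a $q$-sized array, which is affordable since $q=O(\deg v)$. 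Depth is $\poly(\log n)$ throughout.

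\paragraph{Corollary.} Starting from the trivial proper $n$-coloring, one application yields roughly $3n^{2/3}$ colors with defect $6n^{-1/3}$. To iterate, we invoke Lemma~\ref{lem:parameter_mixing}, each time deleting the defective edges so that the current coloring is proper again. After $t=O(\log(1/\epsilon))$ rounds, $k$ falls to $n^{(2/3)^t}\le n^{\epsilon}$. The accumulated defect is at most $\sum_i 6k_i^{-1/3}$, a sum dominated by its last term, which is $O(n^{-\epsilon/3})\le\epsilon$ for large $n$. Since each round uses $O(|V|+|E|)$ work and $t$ is a constant, the total work remains linear.
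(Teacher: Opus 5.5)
Your final paragraph (iterate the base step, delete the defective edges so the coloring is proper again, and combine the defects via Lemma~\ref{lem:parameter_mixing}) is exactly the paper's proof. The paper does not prove the base $\bigl(3k^{2/3},6k^{-1/3}\bigr)$ step at all; it imports it from \cite{GG25}. Your self-contained proof of that step is combinatorially sound: each edge is bad for at most two evaluation points, which gives defect at most $4S/q$.

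The genuine gap is the linear-work claim, exactly at the step you call the main obstacle. For a vertex with $\deg(v)\ge q/2$ you need the total weight of the (root, weight) pairs at each $x\in\mathbb F_q$. These pairs are unsorted, and on an ARBITRARY CRCW PRAM concurrent additions into one cell are lost. So a prefix sum over a $q$-sized array does not produce the per-point totals; it only helps once the pairs are already grouped by root. Grouping $\Theta(\deg v)$ keys from a range of size $q=\Theta(\deg v)$ is integer sorting in a linear range, which the paper explicitly describes as open for linear work. Lemma~\ref{logn-sorting} only covers keys in $[\log^{O(1)}n]$. The first application has $k=n$, hence $q\approx n^{1/3}$, so your ``general'' case is precisely the one that matters, and as written it is unsupported.

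The missing idea is that you never need the histogram, only one point whose bad weight is about average. Binary-search over $\{0,\dots,q-1\}$ as follows.
\begin{itemize}
\item Keep an interval $I$ together with the compacted list of pairs whose root lies in $I$.
\item With one reduction, compute for each half $I_j$ its pair weight $\mathrm{wt}(I_j)$ and pair count $\mathrm{cnt}(I_j)$.
\item Descend into the half minimizing $\Phi(I_j)=\frac{q}{|I_j|}\bigl(\mathrm{wt}(I_j)/W+\mathrm{cnt}(I_j)/N\bigr)$, where $W,N$ are the initial totals.
\end{itemize}
Since $\Phi(I)$ is the $\frac{|I_j|}{|I|}$-weighted average of $\Phi(I_1)$ and $\Phi(I_2)$, it never exceeds its initial value $2$. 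Hence $\mathrm{cnt}(I)\le 2N|I|/q$ decays geometrically. The total work is $O(\deg v)$: the $O(1)$ overhead on each of the $\lceil\log q\rceil$ levels is $O(\deg v)$ in total because $\deg(v)\ge q/2$. The depth is $O(\log q\log n)$. The final point has bad weight at most $2W/q$. Charging each edge to a single endpoint, as you already suggest, gives $W\le 2r(v)$ and total defect at most $4S/q\le 4k^{-1/3}S$.

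Two smaller fixes are also needed.
\begin{itemize}
\item To get $q^2\le 3k^{2/3}$, pick a prime $q\in[k^{1/3},\sqrt3\,k^{1/3}]$; one exists for $k\ge 8$, and for $k\le 27$ the input coloring already suffices. Merging palette entries would instead raise the defect.
\item Build the root table in $O(q^3)$ work by enumerating factored forms $a(x-r_1)(x-r_2)$ and $b(x-r)$. Evaluating every quadratic at every point costs $q^4=\Theta(k^{4/3})$, which exceeds the $O(n)$ preprocessing budget when $k=n$.
\end{itemize}
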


\begin{proof}
We iterate the \(\bigl(3k^{2/3},6k^{-1/3}\bigr)\) defective-coloring algorithm a few times, each time removing the defective edges and feeding the remaining graph into the oracle with the new value of $k$ equal to the current number of colors. In each iteration, $k$ decreases by a polynomial factor, so after a constant number of rounds it drops to $O(n^\epsilon)$. The work and depth bounds follow, and \cref{lem:parameter_mixing} shows that the total defect is still $o(1)$.
\end{proof}

The above two steps are used only to obtain the combinatorial structure needed for the key step below. For this structure, we use a deterministic method to obtain combinatorial objects with concentration guarantees similar to those of randomized constructions.

\begin{definition}[$(k,l,t)$ combinatorial design]
A $(k,l,t)$ combinatorial design is a set family
\[
S_1,S_2,\cdots,S_N \subseteq [k],
\]
such that
\begin{itemize}
    \item $|S_i|=l$ for all $i$, and
    \item $|S_i \cap S_j| \le t$ for all $i \neq j$.
\end{itemize}
The number $N$ is called the size of the design.
\end{definition}

\begin{lemma}\label{lem:comb_design}
There exists a universal constant $c>0$ such that for each sufficiently large $k$ that is a multiple of $100$, there exists a $(k,\frac{k}{100},\frac{k}{5000})$ combinatorial design of size at least $2^{ck}$, and such a design can be found in $O(4^k\,poly(k))$ work and $poly(k)$ depth on a CREW PRAM with $k$-bit words. Each set is represented by a $k$-bit word, with the $i$th bit equal to $1$ if and only if the corresponding element belongs to the set.
\end{lemma}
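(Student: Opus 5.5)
We prove existence and construction together via a greedy (Gilbert--Varshamov style) argument, implemented with a parallel scan over all $2^k$ subsets. \textbf{Existence.} Fix $l=k/100$ and $t=k/5000$. For a fixed $l$-subset $S\subseteq[k]$, the number of $l$-subsets $T$ with $|S\cap T|>t$ is at most $\sum_{j>t}\binom{l}{j}\binom{k-l}{l-j}$. Compare this with $\binom{k}{l}$. The expected intersection of $S$ with a uniformly random $l$-set is $l^2/k=k/10^4=t/2$. A hypergeometric Chernoff tail then gives
\[
\Pr\bigl[|S\cap T|>t\bigr]\le 2^{-c'k}
\]
for some constant $c'>0$. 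Greedy selection therefore works: repeatedly pick any $l$-set not in the ``forbidden ball'' of the previously chosen sets. Each chosen set forbids at most $2^{-c'k}\binom{k}{l}$ sets, so the greedy process runs for at least $2^{c'k}$ steps. Setting $c=c'$ gives the size bound, valid for sufficiently large $k$.

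\textbf{Construction within the work bound.} A sequential greedy over $2^{ck}$ steps would have exponential depth, so the plan has to be adapted. I would use a \emph{block} greedy, which resembles maximal-independent-set computation in the conflict graph $H$. The vertices of $H$ are the $l$-subsets of $[k]$, encoded as $k$-bit words; there are at most $2^k$ of them. Two vertices are adjacent iff their intersection exceeds $t$. This adjacency can be tested in $O(1)$ word operations: AND the two words and popcount, where popcount takes $O(\log k)$ depth or uses a small table. Enumerating all pairs costs $4^k$ tests, which is exactly the $O(4^k\poly(k))$ work budget.

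Any maximal independent set of $H$ is a valid design. Its size is at least $|V(H)|/(\Delta(H)+1)\ge 2^{c'k}$, by the same ball estimate as above. It therefore suffices to compute a maximal independent set of $H$ deterministically in $\poly(k)$ depth with $4^k\poly(k)$ work.

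\textbf{The obstacle.} The main difficulty is the depth bound: $\poly(k)$ depth means polylogarithmic in $|V(H)|$. For this step I would invoke a classical deterministic NC MIS algorithm, for instance Luby's derandomized algorithm~\cite{luby1988removing}. On $H$, which has $N\le 2^k$ vertices and at most $4^k$ edges, that algorithm uses $\poly(\log N)=\poly(k)$ depth and $O(|E(H)|\poly(\log N))=4^k\poly(k)$ work on a CREW PRAM. Every required operation (degree sums, prefix sums, conditional expectations over pairwise-independent seeds of $O(k)$ bits) stays within these bounds.

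An alternative avoids MIS machinery. Since the ball bound gives slack, one could instead take the first design found by a lexicographic greedy restricted to a carefully chosen subfamily. I would keep that as a fallback if the precise MIS citation is inconvenient.
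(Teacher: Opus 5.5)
Your proposal is correct and follows essentially the same route as the paper. Both build the conflict graph on the $\frac{k}{100}$-subsets, use a hypergeometric Chernoff tail to bound its degree so that any maximal independent set has size at least $|V|/(\Delta+1)\ge 2^{ck}$, and compute one with Luby's deterministic parallel MIS algorithm in $(|V|+|E|)\poly(\log |V|)=O(4^k\poly(k))$ work and $\poly(k)$ depth; your separate greedy existence argument and the lexicographic fallback are unnecessary but harmless.
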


\begin{proof}
We construct a graph $G=(V,E)$, where $V$ consists of all subsets of $[k]$ of size $\frac{k}{100}$. Thus, $|V|=\binom{k}{k/100}$. Two vertices are adjacent if and only if their intersection has size at least $\frac{k}{5000}$.

By symmetry, all vertices have the same degree. The degree of any vertex is $p(|V|-1)$, where $p$ is the probability that if we sample two sets of size $\frac{k}{100}$ uniformly at random, then their intersection exceeds $\frac{k}{5000}$. It is clear that $p\le 2^{-ck}$ for some $c>0$ by a multiplicative Chernoff bound for hypergeometric variables. As a result, any maximal independent set of this graph has size at least $2^{ck}$, since it is lower bounded by $\frac{|V|}{\Delta +1}$, where $\Delta$ is the maximum degree.

For the work bound, note that Luby gave a deterministic parallel algorithm for computing a maximal independent set in $(|V|+|E|)\,poly(\log |V|)$ work and $poly(\log |V|)$ depth; see \cite{Luby93}. A direct application of that algorithm gives the claimed work bound.
\end{proof}

With this structure, we have prepared the key tool that was obtained by an exhaustive search in Linial's algorithm. Our next step shows that one step of that algorithm can be applied in a work-efficient way, reducing the number of colors to $O(\log n)$. This step is the key tool toward the claimed bounds.

\begin{lemma}\label{lem:linial}
There exist a universal constant $c<1$ and a constant $\epsilon>0$ such that the following holds.
Suppose that a graph $G=(V,E)$ with $|V|\le n$ is given with a proper $n^{\epsilon}$-coloring.
Then we can compute a $(\log n, c)$ defective coloring of $G$ in $O(|V|+|E|)$ work and $poly(\log n)$ depth.
The algorithm requires $o(n)$ preprocessing work and $poly(\log n)$ preprocessing depth.
\end{lemma}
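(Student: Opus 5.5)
Set $k=\log n$, rounded down to a multiple of $100$, and use \cref{lem:comb_design} to build a $(k,\frac{k}{100},\frac{k}{5000})$ design $S_1,\dots,S_N$ with $N\ge 2^{ck'}$. Choose $\epsilon$ small enough that $n^{\epsilon}\le N$ and $4^k\poly(k)$ is small. The exponent $4^k$ is problematic with $k=\log n$, so we instead run the design construction with $k'=\gamma\log n$ for a small constant $\gamma$. This gives preprocessing $4^{k'}\poly(k')=n^{2\gamma}\poly(\log n)=o(n)$ and $\poly(\log n)$ depth. We then choose $\epsilon=c\gamma$, so that $N\ge n^{\epsilon}$. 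Each old color $\chi(v)\in[n^{\epsilon}]$ is mapped to its set $S_{\chi(v)}\subseteq[k']$, stored as one word. The new color of $v$ will be an element of $S_{\chi(v)}$, so there are at most $k'\le\log n$ colors.

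For the choice of color, follow Linial's counting, but weighted. Because the input coloring is proper, every edge $\{u,v\}$ has $\chi(u)\ne\chi(v)$, so $|S_{\chi(u)}\cap S_{\chi(v)}|\le k'/5000$. Fix $v$ and pick $x\in S_{\chi(v)}$. An edge $\{u,v\}$ can become monochromatic only if $u$'s final color equals $x$, and this forces $x\in S_{\chi(u)}\cap S_{\chi(v)}$. Define the adversarial cost $\mathrm{cost}_v(x)=\sum_{u\in N(v):\,x\in S_{\chi(u)}} w(u,v)$. Summing over $x\in S_{\chi(v)}$ gives
\[
\sum_{x\in S_{\chi(v)}}\mathrm{cost}_v(x)=\sum_{u}w(u,v)\,|S_{\chi(u)}\cap S_{\chi(v)}|\le \frac{k'}{5000}W_v ,
\]
where $W_v$ is the weighted degree of $v$. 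Since $|S_{\chi(v)}|=k'/100$, the minimizing $x$ has $\mathrm{cost}_v(x)\le W_v/50$. Every monochromatic edge is then counted in the cost of both endpoints. Summing over $v$, the defect is at most $\frac{1}{50}\cdot 2S$, which gives a constant $c=1/25<1$. This bound holds under an adversarial choice by the neighbors, so all vertices can choose simultaneously and no schedule is needed.

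The main obstacle is the work bound. Computing $\mathrm{cost}_v(x)$ for all $x$ naively costs $O(\deg(v)\cdot k')$, which is $\log n$ too much. To avoid this, pack the counts word-parallel. For each edge $(v,u)$, the bitmask $S_{\chi(u)}\wedge S_{\chi(v)}$ is a single $k'$-bit word, i.e.\ a $1$-compact length-$k'$ binary array. With unit weights, \cref{lem:logn-accumulation} sums these $\deg(v)$ words in $O(\deg(v))$ work and $O(\log n)$ depth, producing the count vector in compact form. Extracting the minimum over $x\in S_{\chi(v)}$ then takes $O(\log n)$ work per vertex, which is charged to the vertex. This extraction should be done only for vertices of degree at least $\log n$; vertices of smaller degree can be handled by a brute-force scan over the set bits of the intersections, which costs $O(\deg(v)\cdot k')$. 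To stay linear, split low-degree vertices further using lookup tables, or note that $k'$ can be taken as $\gamma\log n$ and apply \cref{lem:effective-sum-packed} blockwise.

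General real weights do not fit the packed counters. The plan there is to first round weights into $O(\log n)$ geometric classes per vertex, or to scale each vertex's incident weights to integers bounded by $\poly(\log n)$ after discarding edges of weight below $W_v/(\deg(v)\log^2 n)$. Either way the loss is only $o(1)$ in defect. The weighted sum is then obtained by accumulating each weight bit-plane separately with \cref{lem:logn-accumulation}, at $O(\log\log n)$ planes of cost. Making this last step truly linear in $m$, rather than off by a $\log\log n$ factor, is the part I expect to need the most care. The first attempt would be to bucket edges by weight class using \cref{logn-sorting} and run packed accumulation only on the $O(1)$ dominant classes per vertex.
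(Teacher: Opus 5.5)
\textbf{Gap: the weighted case is not done in linear work.} Your combinatorial core is sound and matches the paper. You use a design over a ground set of size $\gamma\log n$ (the paper takes $\gamma=1/10$), map old colors injectively to design sets, and apply the weighted Linial averaging $\sum_{x\in S_{\chi(v)}}\mathrm{cost}_v(x)\le \frac{k'}{5000}W_v$, which lets all vertices choose at once.

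The gap is the one you flag yourself: you have no linear-work way to find a good $x$ when edges carry arbitrary weights. The lemma needs this, since it feeds \cref{lem:parameter_mixing} and \cref{lem:almost_optimal_polylog} with general weights. Your candidate fixes do not close it:
\begin{itemize}
\item Bit-plane accumulation multiplies the work by the number of planes.
\item The scaling step does not give $\poly(\log n)$-bounded integers. After discarding edges below $W_v/(\deg(v)\log^2 n)$, the surviving dynamic range is still $\deg(v)\log^2 n$.
\item Keeping only $O(1)$ dominant weight classes fails when $W_v$ is spread evenly over $\Theta(\log n)$ geometric classes. Nothing stops all the ignored neighbors' sets from containing the chosen $x$.
\end{itemize}
Separately, even with unit weights, your extraction step costs $\Theta(\log n)$ per vertex, and the brute-force fallback costs $\Theta(\deg(v)\log n)$. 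Neither is $O(|V|+|E|)$, and ``use lookup tables'' is not yet an argument.

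\textbf{The missing idea: never compute exact costs or the exact minimizer.}
\begin{itemize}
\item Round each weight up to a power of two $w'_u$. It then suffices to find any $x\in S_0:=S_{\chi(v)}$ with $W'(x)\le 0.9\,W'$, because $W(x)\le W-\tfrac12\bigl(W'-W'(x)\bigr)\le 0.95\,W$.
\item Bucket the neighbor masks (intersected with $S_0$) by weight class using \cref{logn-sorting}.
\item Repeatedly take the smallest nonempty class and apply a bit-sliced full adder in parallel. Three masks $A,B,C$ of weight $w$ are replaced by $A\oplus B\oplus C$ with weight $w$ and $\mathrm{maj}(A,B,C)$ with weight $2w$.
\item Each such step preserves every element's weighted count, costs $O(1)$ word operations, and eliminates one mask. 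So the work is $O(\deg(v))$, and the depth is $O(\log^2 n)$ since each class shrinks geometrically.
\item Afterwards every weight class holds at most two masks. If some element of $S_0$ lies in no mask, take it. Otherwise take an element whose largest containing weight $w$ is smallest, found by scanning the surviving masks from heavy to light while maintaining their union.
\item This element has $W'(x)\le 2(w+w/2+w/4+\cdots)=4w$. Every element of $S_0$ has $W'\ge w$, so averaging against $\sum_{y\in S_0}W'(y)\le\frac{1}{50}|S_0|W'$ gives $w\le W'/50$, hence $W'(x)\le \frac{4}{50}W'$.
\end{itemize}
No count vector is ever materialized, and the number of masks never exceeds $\deg(v)$. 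This removes both the $\log\log n$ overhead and the per-vertex $\Theta(\log n)$ extraction cost.
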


\begin{proof}

\paragraph{Proof overview.}
This step is essentially one step of Linial's classical algorithm, which colors a constant-degree graph using $O(\log n)$ colors. We first show how this idea applies in our defective-coloring setting. We then show how to reduce the work using bit operations, since a direct implementation of the distributed algorithm would incur an extra factor of $O(\log n)$ work when determining which candidate colors are already in use.

\paragraph{Reduce to Linial's algorithm on one vertex.}
We first apply \cref{lem:comb_design} with $k=\frac{1}{10}\log n$, rounding $k$ up to a multiple of $100$ if needed. The preprocessing work is \(O(4^k\,poly(k))=o(n)\), and the preprocessing depth is \(poly(\log n)\).

The size of the combinatorial design is at least \(2^{ck}\ge n^{0.1c}=n^\epsilon\), where we take $\epsilon=0.1c$. This allows us to map each original color injectively to a set in the design.

Now we use Linial's idea. Each vertex chooses a color from its corresponding set in the design. Suppose that this vertex corresponds to the set $S_0$, and suppose that it has $t$ neighbors with edge weights $w_1,w_2,\cdots,w_t$, corresponding to sets $S_1,S_2,\cdots,S_t$. Let $w_i'$ be the smallest power of $2$ greater than or equal to $w_i$, and let $W'=\sum_{i=1}^t w_i'$ be the total approximate weight.

For each element $u\in S_0$, define
\[
W(u)=\sum_{u \in S_i} w_i,
\qquad
W'(u)=\sum_{u \in S_i} w_i'.
\]
Notice that \(\sum_{u \in S_0}W'(u)=\sum_i w_i'|S_0 \cap S_i| \le \frac{k}{5000}W'\).

Since $|S_0|=\frac{k}{100}$, there exists some $u\in S_0$ with \(W'(u)\le \frac{1}{50}W'\). We further relax this to \(W'(u)\le 0.9W'\) to make such a $u$ easier to find. This implies
\[
W-W(u)=\sum_{u \notin S_i} w_i
\ge \frac12 \sum_{u \notin S_i} w_i'
= \frac12\bigl(W'-W'(u)\bigr)
\ge 0.05W'
\ge 0.05W,
\]
and hence \(W(u)\le 0.95W\).

Therefore, if for each vertex we find such a $u$, then we obtain a coloring with defectiveness $0.95$. So it suffices to solve the following subproblem within the required work bound.

\paragraph{Job of a single vertex.}
Let $k \leq \log n$ be an integer. Suppose that we have $t+1$ sets $S_0,S_1,S_2,\cdots,S_t$ with \(|S_0 \cap S_i| \le \frac{k}{5000}\) and \(|S_i|=\frac{k}{100}\). We also have weights $w_i$ that are powers of $2$ and are at most $n$. Let \(W=\sum_{i=1}^t w_i\), and for each \(u \in S_0\) define \(W(u)=\sum_{u \in S_i} w_i\). Then we can find a $u$ with $W(u)\le 0.9W$ in $O(t)$ work and $poly(\log n)$ depth. Note that each set is represented by a word, with the $i$th bit being $1$ if and only if $i$ belongs to the set, using the fact that the ground-set size is $O(\log n)$.

\paragraph{Completing the job by bit operations.}
Since the weights have at most $\log n$ possible values, we can use $\log n$ sorting (\cref{logn-sorting}) to classify the sets $S_i$ by weight. We use the following observation to reduce the number of sets.

\begin{observation}
Suppose that we have three sets $A,B,C$ with equal weight $w$. Then we can reduce them to one set $D$ with weight $w$ and one set $E$ with weight $2w$ in $O(1)$ work and $O(1)$ depth. The total weight of each element is preserved.
\end{observation}

\begin{proof}
For each element, we assign it to the two new sets according to whether its total number of occurrences among the three sets $A,B,C$ is $0,1,2,$ or $3$. Since this is independent across elements, this selection process can be simulated using $O(1)$ bit operations.
\end{proof}

We now start with a list of at most $t$ nonempty arrays, where each array corresponds to one weight class of sets. We may modify the weights using the above observation, but the weights never exceed $n^2$ (the total weight of each element is unchanged), so there are at most $2\log n$ possible weights. Hence, we can use a bitmap to maintain which weight classes are present and find the smallest present class in $O(1)$ work.

Each time, we inspect the smallest weight class and reduce its size in parallel using the previous observation. If we end with at most $2$ sets in that class, we put them aside. The total work is $O(t)$, since each application removes one set. The depth is $O(\log^2 n)$, since the size of the smallest class drops by a constant factor each time.

After the final reduction step, each weight appears at most twice. We choose some $u$ such that the largest weight contributing to $u$ is minimized. Let that largest weight be $w$. Then the total weight of $u$ is at most
\[
w+w+\frac12w+\frac12w+\frac14w+\frac14w+\cdots \le 4w.
\]
On the other hand, the weight of any other element is at least $w$. Since the sum of the total weights over all elements is at most \(\frac{1}{50}|S_0|W\), we get \(w \le \frac{1}{50}W\), so this $u$ satisfies the requirement.

To find such a $u$, we process the remaining sets from larger weights to smaller weights, maintain the union of those sets, and choose an element in the step at which the union becomes $S_0$.
\end{proof}

Combining this result with \cref{lem:parameter_mixing}, we get the following.

\begin{lemma}\label{lem:logn-c-defective}
For every graph $G=(V,E)$ with $|V|\le n$, we can compute a $(\log n,c)$ defective coloring of $G$ in $O(|V|+|E|)$ work and $poly(\log n)$ depth, for some universal constant $c<1$.
\end{lemma}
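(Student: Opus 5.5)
The plan is to combine \cref{lem:gg25-restatement} and \cref{lem:linial} through \cref{lem:parameter_mixing}. Let $\epsilon_0>0$ and $c_0<1$ be the constants of \cref{lem:linial}. We take $\mathcal{B}$ to be the algorithm of \cref{lem:gg25-restatement} with the fixed parameter $\epsilon_0$ in the exponent and a fixed small constant $\delta$ as the defect. This is legitimate because that lemma gives, for any fixed $\epsilon>0$, an $(O(n^{\epsilon}),\epsilon)$ defective coloring. We apply it with exponent slightly below $\epsilon_0$ so that the constant in $O(\cdot)$ is absorbed and the number of colors is at most $n^{\epsilon_0}$. We take $\mathcal{A}$ to be the algorithm of \cref{lem:linial}, which turns a proper $n^{\epsilon_0}$-coloring into a $(\log n,c_0)$ defective coloring.

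For a fixed small constant $\delta$ (say $\delta\le 1/2$), this yields a deterministic algorithm that outputs a $(\log n,\ \delta+c_0(1-\delta))$ defective coloring. The defect parameter is a constant strictly below $1$, by the remark following \cref{lem:parameter_mixing}. We set $c:=\delta+c_0(1-\delta)<1$.

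The main thing to check is bookkeeping rather than a new idea, in three points.
\begin{itemize}
\item The number of colors output by $\mathcal{B}$ must be at most $n^{\epsilon_0}$, so that the coloring is a valid input for $\mathcal{A}$. A proper coloring with fewer colors is still a proper $n^{\epsilon_0}$-coloring, so it is enough to choose the exponent in \cref{lem:gg25-restatement} small enough for large $n$. Small $n$ is handled trivially with constant work, for example by giving every vertex its own color when $n\le\log n$ fails only for constant-size $n$.
\item The preprocessing costs of both lemmas, $O(n)$ and $o(n)$ respectively, depend only on $n$. They are paid once, so they fit in the $O(n+m)$ budget. Since $|V|\le n$, the $O(|V|+|E|)$ work bounds compose as in \cref{lem:parameter_mixing}.
\item The subgraph $G'$ produced inside \cref{lem:parameter_mixing} still has $|V|\le n$ and uses the same word size, so \cref{lem:linial} applies to it with the same $n$.
\end{itemize}

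The depth is $\poly(\log n)$ because the algorithm is a constant number of sequential compositions of $\poly(\log n)$-depth steps. Edge weights need no special handling, since both lemmas already work on weighted graphs, and defect is measured relative to total weight at every stage, as in \cref{lem:parameter_mixing}.
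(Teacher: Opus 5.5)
Your proposal is correct and matches the paper, which derives this lemma in one line by feeding the $(O(n^{\epsilon}),\epsilon)$ coloring of \cref{lem:gg25-restatement} (as $\mathcal{B}$) and the Linial step of \cref{lem:linial} (as $\mathcal{A}$) into \cref{lem:parameter_mixing}. Your bookkeeping (taking the exponent below $\epsilon_0$, reusing $n$ for the subgraph, and paying preprocessing once) is the detail the paper leaves implicit. One minor point: your small-$n$ sentence is garbled, since with only $\log n$ colors you cannot give every vertex its own color; for constant $n$ a brute-force search over colorings works instead, and the paper does not treat this degenerate case at all.
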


\subsection{Bootstrapping steps}

This step builds on Linial's step and finally achieves defective coloring with optimal parameters.

First, we note that once we have a graph colored by \(poly(\log n)\) colors, it is easy to obtain an optimal defective coloring on that graph.

\begin{lemma}[Batched Greedy]\label{lem:batched-greedy}
Let $u\leq poly(\log n)$ be an integer. For every graph $G=(V,E)$ with $|V|\le n$ that is $k$-colored, where $k=poly(\log n)$, we can compute a $(u,\frac{1}{u})$ defective coloring of $G$ in $O(|V|+|E|)$ work and $poly(\log n)$ depth.
\end{lemma}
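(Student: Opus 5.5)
The plan is the standard greedy schedule: use the given $k$-coloring (which I assume is proper, as it is whenever this lemma is applied after \cref{lem:parameter_mixing}) as a processing order, and assign new colors from $[u]$ one old color class at a time. For $q=1,2,\dots,k$, every vertex $v$ of old color $q$ simultaneously picks a new color $g(v)\in[u]$. It chooses the color that minimizes the total weight of edges from $v$ to already-processed neighbors (old colors $<q$) having that new color. Vertices in the same old class are non-adjacent, so their choices do not interact and can be made in parallel.

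\textbf{Defect bound.} Let $W_v$ be the total weight of edges from $v$ to neighbors with smaller old color. Averaging over the $u$ candidate colors, the minimizing color has defective weight at most $W_v/u$. Every edge $\{x,y\}$ is charged exactly once, namely at whichever endpoint has the larger old color (by properness the two old colors differ). Hence $\sum_v W_v = S$, and the total defective weight is at most $S/u$. This gives a $(u,\frac1u)$ defective coloring. The depth is $k\cdot \poly(\log n)=\poly(\log n)$, provided each class is processed in $\poly(\log n)$ depth.

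\textbf{Main obstacle: work per vertex.} The step I expect to be hardest is keeping the work per vertex at $O(\deg(v)+1)$ rather than $O(u)$, since $u$ can be $\poly(\log n)$ and low-degree vertices cannot afford a length-$u$ array. I would handle this by restricting the candidate set to $[\min(u,\deg(v)+1)]$.
\begin{itemize}
\item If $\deg(v)+1\le u$, then among the first $\deg(v)+1$ colors at least one is used by no neighbor. Its defective weight is $0\le W_v/u$, so the averaging bound still holds.
\item If $\deg(v)\ge u$, an array of length $u$ costs $O(\deg(v))$.
\end{itemize}
In either case, $v$ computes the per-color weight sums over its already-processed neighbors as follows:
\begin{enumerate}
\item Sort its incident earlier edges by the neighbor's new color. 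The keys lie in a $\poly(\log n)$ range, so $\log n$ sorting (\cref{logn-sorting}, with the remark extending it to keys in $[\log^c n]$) does this in $O(\deg(v))$ work.
\item Take segmented prefix sums to obtain, for each candidate color, the total weight to neighbors of that color.
\item Find the minimizing color by a parallel min-reduction over the $\min(u,\deg(v)+1)$ candidates; a color absent from the sorted list has weight zero.
\end{enumerate}
Each edge is touched $O(1)$ times overall: once when its later endpoint is processed, plus reading the new color of the earlier endpoint. Grouping the vertices by old color is a single $\log n$ sorting/splitting step (\cref{lem:logn-subgraph-splitting}, extended to $\poly(\log n)$ keys). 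The total work is therefore $O(|V|+|E|)$, and the depth is $\poly(\log n)$.
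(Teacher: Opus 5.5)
Your proposal is correct and follows the same route as the paper: sort the vertices by the given (proper) coloring, use the color classes as a schedule, and let each vertex greedily pick the new color of least defect toward its already-processed neighbors. The paper gives this only as a three-sentence outline, so your averaging/charging argument and your per-vertex work accounting fill in details it leaves implicit. That accounting consists of restricting candidates to $[\min(u,\deg(v)+1)]$ and sorting incident earlier edges by the neighbor's new color.
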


\begin{proof}
We first use repeated $\log n$ sorting (\cref{logn-sorting}) to sort the vertices by color. We can then use the coloring as a schedule. Now we run the greedy algorithm color by color, choosing for each vertex the color that introduces the least defect.
\end{proof}

Using \cref{lem:parameter_mixing,lem:logn-c-defective,lem:batched-greedy} with \(u=2\) in \cref{lem:batched-greedy}, we get the following.

\begin{lemma}\label{lem:two-c-defective}
For every graph $G=(V,E)$ with $|V|\le n$, we can compute a $(2,c)$ defective coloring of $G$ in $O(|V|+|E|)$ work and $poly(\log n)$ depth, for some universal constant $c<1$.
\end{lemma}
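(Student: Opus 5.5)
The plan is to instantiate \cref{lem:parameter_mixing} with $\mathcal{B}$ the algorithm of \cref{lem:logn-c-defective} and $\mathcal{A}$ the batched greedy algorithm of \cref{lem:batched-greedy} with $u=2$. The parameters are $c_1=\log n$, $\epsilon_1=c$ for the universal constant $c<1$ of \cref{lem:logn-c-defective}, $c_2=2$, and $\epsilon_2=\frac12$.

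\textbf{Checking the two oracles.} First, $\mathcal{B}$ runs on any graph with $|V|\le n$ in $O(|V|+|E|)$ work and $\poly(\log n)$ depth, so it matches the hypothesis of \cref{lem:parameter_mixing}. Second, $\mathcal{A}$ is given a graph together with a proper $(\log n)$-coloring. Since $\log n\le\poly(\log n)$, the hypothesis $k=\poly(\log n)$ of \cref{lem:batched-greedy} holds, and it outputs a $(2,\frac12)$ defective coloring within the same work and depth bounds.

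Two small points need justifying here.
\begin{itemize}
\item \cref{lem:batched-greedy} only asks for a $k$-coloring used as a schedule. Properness ensures that vertices processed in the same batch share no edges, so each vertex's greedy choice sees only already-fixed neighbors. Greedy then places each vertex in the color class carrying at most half of its weight to earlier-processed neighbors. Every non-defective edge of the input graph is charged exactly once, to its later endpoint, so the defect is at most $\frac12$.
\item Any preprocessing needed by $\log n$ sorting is $O(n)$. It is shared and done once, so it fits the overall bound; I would note this in the same way the earlier lemmas treat it.
\end{itemize}

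\textbf{Conclusion.} \cref{lem:parameter_mixing} then yields a $\bigl(2,\ c+\tfrac12(1-c)\bigr)$ defective coloring in $O(|V|+|E|)$ work and $\poly(\log n)$ depth. The defect equals $\frac{1+c}{2}$, which is a universal constant strictly below $1$ because $c<1$. Renaming this constant as $c$ gives the statement.

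The only step requiring any care is confirming that the greedy guarantee of \cref{lem:batched-greedy} really is $\frac1u$ relative to the weight of the graph it is run on (here $G'$), which is exactly what parameter mixing needs. Everything else is bookkeeping.
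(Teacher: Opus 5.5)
Your proposal is correct and is exactly the paper's argument. The paper also instantiates \cref{lem:parameter_mixing} with $\mathcal{B}$ taken from \cref{lem:logn-c-defective} and $\mathcal{A}$ the batched greedy of \cref{lem:batched-greedy} with $u=2$, which yields a defect of $c+\tfrac12(1-c)=\tfrac{1+c}{2}<1$. Your added check that the greedy bound of $\tfrac12$ holds relative to the weight of $G'$ (since the $\log n$-coloring is proper there) is the right detail, and the paper leaves it implicit.
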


We can further bootstrap this lemma to obtain the following.

\begin{lemma}\label{lem:almost_optimal_polylog}
Let $u$ be an integer. For every graph $G=(V,E)$ with $|V|\le n$, we can compute a $(u,\frac{1}{u}+\frac{1}{10\log^2 n})$ defective coloring of $G$ in $O(|V|+|E|)$ work and $poly(\log n)$ depth.
\end{lemma}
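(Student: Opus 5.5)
The plan is to bootstrap \cref{lem:two-c-defective} by repeated application of \cref{lem:parameter_mixing}. This produces a coloring with very small defect and $\poly(\log n)$ colors. We then use that coloring as a schedule for \cref{lem:batched-greedy}. The key point is that \cref{lem:batched-greedy} needs a \emph{proper} $\poly(\log n)$-coloring as input. It then loses only a $1/u$ fraction of the remaining weight, so the total defect is (small defect) $+\,\frac1u(1-\text{small defect})\le \frac1u+\frac{1}{10\log^2 n}$.

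\textbf{Step 1: many colors, tiny defect.} Let $\mathcal{C}_1$ be the $(2,c)$ algorithm of \cref{lem:two-c-defective}. Inductively, suppose $\mathcal{C}_r$ produces a $(2^r,c^r)$ defective coloring in linear work. Run $\mathcal{C}_r$, delete its defective edges via \cref{clm:subgraph-splitting}, and split the remaining graph into the $2^r$ color classes' induced subgraphs. When $2^r\le\poly(\log n)$, \cref{lem:logn-subgraph-splitting} does this, iterated as in the remark after \cref{logn-sorting}. Run $\mathcal{C}_1$ independently on each class and combine the colors as pairs. An edge is defective in the combined coloring only if it was defective for $\mathcal{C}_r$, or it lies inside a class and is defective for $\mathcal{C}_1$ there. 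The class subgraphs together contain at most the non-defective weight of the first coloring. Hence the defect is at most $c^r$ plus a $c$-fraction of the intra-class weight, which is at most $c^r$ again. I will therefore instead bound the result by the recursion $\delta_{r+1}\le c\,\delta_r$ applied to the weight \emph{within} classes. The cleaner formulation: after $r$ rounds, every surviving non-defective edge has endpoints that were separated at some round, so defective weight $=$ weight never separated $\le c^r S$.

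We take $r=O(\log\log n)$ rounds, so that $c^r\le \frac{1}{10\log^2 n}$ and $2^r=\poly(\log n)$.

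\textbf{Step 2: bounding the work (main obstacle).} The main obstacle is work. Each round costs $O(|V|+|E|)$, and $O(\log\log n)$ rounds would give an extra $\log\log n$ factor. To avoid this, the edges passed to round $r+1$ should be only the not-yet-separated edges, which is what the recursion on classes does. The total work is then governed by the edge counts, not the weights. The remedy I would try is to apply the lemma with unit weights, or with weights adjusted so that unweighted counts also shrink geometrically. One option is to run each round of $\mathcal{C}_1$ on a weight function that is a mixture, e.g.\ $w'(e)=w(e)/S+1/|E|$. Since the defective fraction of $w'$ is at most $c$, both the weighted defect and the number of surviving edges are at most $2c$ times their previous values. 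Vertices with no surviving edges are dropped. The total work is then a geometric sum $O(|V|+|E|)$, up to the $O(|V|)$-per-round vertex handling. That vertex cost is also avoided by dropping isolated vertices, with a final linear relabeling pass. With $2c<1$ (choose $c$ small enough by running $\mathcal{C}_1$ a constant number of times as above), this gives $\epsilon'=(2c)^r$.

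\textbf{Step 3: finish.} The product coloring from Step 2 is a $\poly(\log n)$-coloring with defect $\le\frac{1}{10\log^2 n}$. Remove its defective edges; it is now proper. Apply \cref{lem:batched-greedy} with parameter $u$ to get a $(u,\frac1u)$ coloring of the remaining graph. \Cref{lem:parameter_mixing} then gives total defect $\le \frac{1}{10\log^2 n}+\frac1u$, in $O(|V|+|E|)$ work and $\poly(\log n)$ depth.
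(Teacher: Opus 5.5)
Your proposal is correct modulo two small fixes, and its architecture is the paper's. You iterate the $(2,c)$ algorithm of \cref{lem:two-c-defective} on the still-monochromatic edges for $O(\log\log n)$ rounds, dropping isolated vertices. This gives a $\poly(\log n)$-coloring with defect about $\frac{1}{\log^2 n}$, which you then combine with \cref{lem:batched-greedy} via \cref{lem:parameter_mixing}.

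The real difference is how you make the \emph{edge count}, and not only the weight, decay geometrically.
\begin{itemize}
\item The paper uses two runs per iteration. One run sets all weights of the residual graph $G_i$ to $1$, which gives $|E(G_i')|\le c|E(G_i)|$. The other run uses the true weights on $G_i'$, so the weight of $G_{i+1}$ is at most $c$ times that of $G_i$. This costs $4$ colors per iteration and needs only $c<1$.
\item You use one run per round on the mixture $w(e)/S+1/|E|$. This shrinks both quantities by $2c$ at once, but you must first boost to $2c<1$.
\item If you normalize by the \emph{original} $S$ and $|E|$ rather than the current ones, no boosting is needed. The residual modified weight $W_r/S+m_r/|E|$ is then at most $2c^r$, so only $c<1$ is required. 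This is the device the paper itself uses in the proof of \cref{thm:near-optimal-defective-coloring}, where it adds $1/(10m)$ to every weight.
\end{itemize}
The two routes are essentially interchangeable: both give $\poly(\log n)$ colors and a geometric work sum.

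\textbf{Fix 1: the opening of Step 1 is backwards.} After you delete the defective edges of $\mathcal{C}_r$, the color-class induced subgraphs of what remains are edgeless, because the intra-class edges are exactly the defective ones. For the same reason, the class subgraphs carry the defective weight, not the non-defective weight. The recursion must run on the defective edges, as your ``cleaner formulation'' and Step 2 in fact do. You also do not need \cref{lem:logn-subgraph-splitting}: every residual edge is already intra-class, so one run on the whole residual graph suffices.

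\textbf{Fix 2: large $u$.} \cref{lem:batched-greedy} is stated only for $u\le\poly(\log n)$, while this lemma allows arbitrary $u$. The paper targets defect $\frac{1}{11\log^2 n}$ in the first stage and replaces any $u>110\log^2 n$ by $110\log^2 n$, so the total is at most $\frac{1}{110\log^2 n}+\frac{1}{11\log^2 n}=\frac{1}{10\log^2 n}$. Your target of exactly $\frac{1}{10\log^2 n}$ leaves no slack for this cap. You need a constant number of extra rounds, for example reaching $\frac{1}{20\log^2 n}$, together with a similar cap on $u$.
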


\begin{proof}
We first solve the case \((u,\frac{1}{u}+\frac{1}{11\log^2 n})\) for \(u \le 110\log^2 n\). For larger \(u\), we can simply take \(u=110\log^2 n\), which already gives the required defect.

We start with $G_0=G$. Suppose that we have $G_i$. We define $G_i'$ and $G_{i+1}$ as follows. Assign weight $1$ to every edge of $G_i$, run the \((2,c)\) defective-coloring algorithm of \cref{lem:two-c-defective} on it, and remove isolated vertices to obtain $G_i'$. Then restore the edge weights of $G_i'$ to their original values, run the same \((2,c)\) defective-coloring algorithm on it, and remove isolated vertices to obtain $G_{i+1}$. We stop at $i=k$ for some \(k<poly(\log n)\) to be determined later.

This process takes \(O(|V|+|E|)\) total work: after isolated vertices are removed, \(|V_i|\le 2|E_i|\), and the transition \(G_i\to G_i'\) makes \(|E_i|\) decay geometrically.

Putting together all colors produced on \(G_i\) and \(G_i'\), we obtain a \((4^k,c^k)\) defective coloring of \(G\). We choose the smallest \(k\) such that \(c^k < \frac{1}{11\log^2 n}\). Then \(4^k=poly(\log n)\), so \cref{lem:batched-greedy} still applies. Therefore, combining the batched-greedy result with the \((4^k,c^k)\) defective coloring using \cref{lem:parameter_mixing}, we obtain a \(\left(u,\frac{1}{u}+\frac{1}{11\log^2 n}\right)\) defective coloring.
\end{proof}

We now complete the proof of the general \(\left(\left\lceil 2/\epsilon\right\rceil,\epsilon\right)\) defective-coloring theorem within the above framework.

We may assume without loss of generality that \(\epsilon \ge 1/n\). Indeed, if \(\epsilon<1/n\), then assigning distinct colors to all vertices gives defect \(0\) using \(n\) colors, and \(n \le \left\lceil \frac{2}{\epsilon}\right\rceil\).

We are now ready to prove the main theorem, which we restate here.

\NearOptimalDefectiveColoring*

\begin{proof}
Let \(\delta := \frac{1}{10\log^2 n}\). By \cref{lem:almost_optimal_polylog}, for every integer \(u\ge 2\), we can compute a \((u,\,1/u+\delta)\) defective coloring in \(O(|V|+|E|)\) work and \(poly(\log n)\) depth.

We recurse on the defective edges. To make the total work sum correctly over all recursive calls, we use a slightly perturbed weight function for the recursion and work analysis.

\paragraph{Modified weights.}
If \(m=0\), the theorem is trivial, so assume \(m\ge 1\). Scale the original edge weights so that their total weight is \(1\). Then add \(1/(10m)\) to each edge weight, and let \(\widetilde w(e)\) denote the resulting modified weight. The total modified weight is \(\frac{11}{10}\). Since \(\widetilde w(e)\ge w(e)\) for every edge \(e\), any upper bound on defective weight measured with \(\widetilde w\) is also an upper bound for the original weights.

\paragraph{Recursive refinement and parameter recurrence.}
Suppose that after some rounds, we have a coloring of the original graph, and let \(H\) be the subgraph consisting of the edges that are still defective under this coloring. Let
\[
A=\text{(current number of colors)},\qquad
B=\text{(total \(\widetilde w\)-weight of edges in \(H\))}.
\]
Applying a \((u,1/u+\delta)\) defective-coloring algorithm to \(H\) and refining the current colors by the new colors (that is, replacing each color by a pair) changes the parameters as
\[
(A,B)\longmapsto \left(Au,\;B\left(\frac1u+\delta\right)\right).
\]
Thus, after rounds with parameters \(u_1,u_2,\dots,u_r\), we obtain
\[
A_r=\prod_{s=1}^r u_s
\]
colors and residual defective modified weight at most
\[
B_r \le B_0 \prod_{s=1}^r\left(\frac1{u_s}+\delta\right),
\]
where \(B_0\le \frac{11}{10}\).

\paragraph{Choice of parameters \(u_s\).}
We now choose the parameters \(u_s\). They are all equal to \(2\), except that one of them may be a positive integer at most \(1000\), so they are within the parameter range of \cref{lem:almost_optimal_polylog}.

Let
\[
T:=\left\lceil \frac{2}{\epsilon}\right\rceil .
\]
We choose the total number of refinement colors in the form
\[
A=i2^j
\qquad\text{with }1\le i\le 1000,
\]
and implement this using \(j\) rounds with \(u=2\) and, if \(i>1\), one final round with \(u=i\).

If \(T\le 1000\), we take \(j=0\) and \(i=T\), so \(A=T\). Otherwise, choose \(j\) such that \(\frac{T}{2^j}\in [500,1000)\), and set \(i:=\left\lfloor \frac{T}{2^j}\right\rfloor\). Then \(500\le i\le 999\), hence \(i\le 1000\), and
\[
A=i2^j\le T,
\qquad
A\ge \left(1-\frac1{500}\right)T=\frac{499}{500}T.
\]
Using at most \(T\) colors is sufficient for a \((T,\epsilon)\) defective coloring, since we may leave some colors unused. So it suffices to produce an \((A,\epsilon)\) defective coloring.

\paragraph{Defect bound.}
Since all chosen \(u_s\le 1000\), we have
\[
\prod_{s=1}^r\left(\frac1{u_s}+\delta\right)
=
\frac1A \prod_{s=1}^r (1+u_s\delta).
\]
Therefore,
\[
B_r \le \frac{11}{10}\cdot \frac1A \prod_{s=1}^r (1+u_s\delta).
\]
Also, the number of rounds is \(r\le j+1\), with \(j=O(\log T)\), and
\[
\sum_{s=1}^r u_s \le 2j + 1000 = O(\log T) = O(\log n)
\]
using \(\epsilon\ge 1/n\), hence \(T=O(n)\). Thus
\[
\prod_{s=1}^r (1+u_s\delta)
\le
\exp\!\left(\delta\sum_{s=1}^r u_s\right)
=
\exp\!\left(O\!\left(\frac{1}{\log n}\right)\right).
\]
For all sufficiently large \(n\), this factor is at most \(1.1\). Hence
\[
B_r
\le
\frac{11}{10}\cdot \frac{11}{10}\cdot \frac1A
=
\frac{121}{100}\cdot \frac1A
\le
\frac{121}{100}\cdot \frac{500}{499}\cdot \frac1T
<
\frac{2}{T}
\le \epsilon.
\]
So the final coloring has defective weight at most \(\epsilon\) with respect to the modified weights, and therefore also with respect to the original weights.

\paragraph{Work bound.}
Let \(H_r=(V_r,E_r)\) be the residual defective graph after round \(r\), after removing isolated vertices. Each recursive call costs \(O(|V_r|+|E_r|)\) work.

Every residual edge has modified weight at least \(1/(10m)\), so if the residual modified weight is \(B_r\), then \(|E_r|\le 10m\,B_r\). Moreover, in every round we use \(u\ge 2\), so the residual modified weight is multiplied by at most \(\frac12+\delta<1\) for all sufficiently large \(n\). Hence \(B_r\) decays geometrically, so \(\sum_r |E_r| = O(m)\). After removing isolated vertices, we also have \(|V_r|\le 2|E_r|\) for all \(r\ge 1\), so \(\sum_r |V_r| = O(m)\). Including the initial graph contributes an additional \(O(n+m)\) term, and the total work is \(O(n+m)\).

Finally, the number of rounds is \(O(\log T)=O(\log n)\), and each round has \(poly(\log n)\) depth, so the total depth is \(poly(\log n)\cdot O(\log(1/\epsilon))=poly(\log n)\), again using \(\epsilon \geq \frac{1}{n}\).
\end{proof}



\section{Hitting Set}
\label{sec:HittingSet}
In this section, we describe our deterministic parallel algorithm for hitting set, with linear work and polylogarithmic depth.
\subsection{Definition and Results}

\paragraph{Overview.}
We follow the gradual rounding idea of \cite{GG25}, but we organize the treatment of the different probability classes differently, and we also use our improved $\log n$ sorting (\cref{logn-sorting}) and the near-optimal defective-coloring algorithm. These allow us to reduce the work from $O\!\left((n+m)\,\operatorname{poly}(\log\log n)\right)$ to $O(n+m)$. We reproduce parts of the proofs from \cite{GG25} to keep the presentation self-contained.

\begin{definition}[Pre-hitting-set instance]
A pre-hitting-set instance consists of a bipartite graph $G=(V_L \cup V_R,E)$.
Each left node $u\in V_L$ has a sampling probability $p_u=2^{-k_u}$, where
$k_u\in[2\log n]$.
For each right node $v\in V_R$, define its expected hit count
\[
c_v \coloneqq \sum_{u\in N(v)} p_u,
\]
where $N(v)\coloneqq \{u\in V_L : (u,v)\in E\}$.
Each right node $v\in V_R$ also has a weight $w_v>0$.
\end{definition}

\begin{definition}[Hitting set and happiness]
A hitting set is a subset $H\subseteq V_L$.
For each $v\in V_R$, let $h_v\coloneqq |N(v)\cap H|$.
We say that $v$ is \emph{happy} if
$\frac{1}{2}\,(c_v-1) < h_v \;\le\; C\,(c_v+1),$
for a sufficiently large absolute constant $C$.
\end{definition}

\begin{remark}
We briefly explain why the range $(\frac{1}{2}(c_v-1),\, C(c_v+1)]$ is used. The true expected hit count is $c_v$, and the left endpoint reflects that we allow zero hits when the expected hit count is small. The right endpoint simply reflects that we allow the hit count to increase by both an additive and a multiplicative constant. Moreover, the left endpoint being strict ensures that a happy right node is hit at least once when the expected hit count is at least $1$.
\end{remark}

The main theorem we prove is the following.

\begin{theorem}\label{thm:main-hitting-set}
Fix a constant \(\epsilon>0\). There exists an algorithm that, given a pre-hitting-set
instance \(G=(V_L\cup V_R,E)\) in compact representation, computes a hitting set
\(H\subseteq V_L\) such that the total weight of happy right nodes is at least $(1-\epsilon)\sum_{v\in V_R} w_v$.
The algorithm uses \(O(|V|+|E|)\) work and \(\poly(\log n)\) depth, where \(|V|\le n\).

Here, compact representation means that every vertex in both \(V_L\) and \(V_R\)
stores a compact list of all its neighbors.
\end{theorem}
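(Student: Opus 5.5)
The plan is to round the sampling probabilities gradually, in the style of \cite{GG25}. We begin with the fractional solution $x_u=p_u=2^{-k_u}$. In each \emph{rounding step}, a chosen probability class $2^{-k}$ is moved to either $0$ or $2^{-k+1}$ (or, for small probabilities, to roughly $\sqrt{p}$ in one jump), until every variable is in $\{0,1\}$. We then output $H=\{u: x_u=1\}$. For each right node $v$, the quantity to keep under control is the current fractional hit count $c_v(\mathbf x)=\sum_{u\in N(v)}x_u$. We want, for all but an $\epsilon$ weight fraction of right nodes, that $c_v$ stays within a constant factor of the original value, up to an additive constant, through all steps. Such nodes end happy: at the end $c_v(\mathbf x)=h_v$, and the slack $(\frac12(c_v-1),C(c_v+1)]$ absorbs the accumulated constant-factor drift.

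A single step is analyzed through a pairwise-independent random experiment. Each variable of the class being rounded is doubled with probability $1/2$ and zeroed otherwise. Chebyshev then bounds the deviation of $c_v$ by its variance. To obtain linear size, the neighborhood of each $v$ (restricted to the active class) is cut into bundles of size $b$, and the quadratic Chebyshev-like potential $\Phi=\sum_v w_v\sum_{\text{bundles }B}(\sum_{u\in B}(x'_u-x_u))^2/(\text{scale})$ is written over pairs inside bundles only. This potential has $O(b\cdot|E_{\text{active}}|)$ pairwise terms. The expected value of $\Phi$ under the random rounding shows that a $1-O(1/\poly(b))$ weight fraction of nodes deviate by at most a $(1\pm b^{-\Omega(1)})$ factor. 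Derandomization then proceeds as in the max-cut example. We build the auxiliary dependency graph on the active left nodes with edge weights given by the pairwise coefficients (via the bundling construction referenced as \Cref{subsec:12-sampling}). We run \Cref{thm:near-optimal-defective-coloring} with $\eps=1/\poly(\log n)$, discard the monochromatic terms (which lose only an $\eps$ fraction of the potential), and process the $O(1/\eps)$ color classes greedily by the method of conditional expectations. Each class is sorted and extracted using \Cref{logn-sorting} and \Cref{lem:logn-subgraph-splitting}. Right nodes whose bundle deviation becomes large are declared bad and frozen (removed with their weight charged to the $\epsilon$ budget), and the left nodes set to $0$ are deleted using \Cref{clm:subgraph-splitting}.

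The schedule is organized by probability class, and is what makes the work linear. The classes with $p\le 1/\poly(\log n)$ are rounded in jumps $p\mapsto\approx\sqrt p$, with geometrically shrinking steps. Since in expectation only a $p$ fraction of the mass survives each doubling, the number of surviving left nodes and edges shrinks geometrically. This lets us afford bundles of size $\poly(\log\log n)$, and hence per-step error $1/\poly(\log\log n)$, while the total work $\sum_{\text{steps}} b_s\,|E_s|$ remains $O(|V|+|E|)$. In the final $O(\log\log n)$ doubling steps, once probabilities are above $1/\poly(\log n)$, bundle sizes are decreased gradually so that $\sum_s b_s^{-\Omega(1)}$ and the bundle-irregularity losses sum to at most $\epsilon$.

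The main obstacle is this accounting. We must show simultaneously that (i) the total lost weight, coming from Chebyshev failures, defective-coloring discards, and leftover irregular bundles, summed over all $\tilde O(\log n)$ steps, is at most $\epsilon$, and (ii) the work $\sum_s b_s|E_s|$ is linear even though different probability classes coexist and shrink at different rates. I would handle (ii) by attaching a potential $\sum_{(u,v)\in E}x_u^{1/2}$ (or a similar power) to the edges and proving that each step decreases it by a constant factor relative to the work spent.
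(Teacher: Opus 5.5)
Your outline matches the paper's overview: class-by-class rounding toward $\sqrt p$, then a finishing phase with shrinking bundles, each step derandomized by defective coloring and conditional expectations on a bundled Chebyshev potential. However, the accounting you defer as the ``main obstacle'' is where the proof actually lives, and the concrete choices you commit to fail there.

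First, the size decay must hold for the \emph{derandomized} choice, not merely in expectation. Your only potential is the $w_v$-weighted right-neighborhood bundle potential. It places no constraint on edges at right nodes of tiny weight, or on leftover edges (for instance, all edges of right nodes with fewer than $b$ active neighbors). So the conditional-expectation rounding may keep almost all edges, and then neither $\sum_s b_s|E_s|$ nor your $\sum_{(u,v)\in E}x_u^{1/2}$ potential is guaranteed to drop. (Incidentally, a doubling keeps about half of the class, and a whole $p\mapsto\sqrt p$ transition keeps a $\sqrt p$ fraction, not a $p$ fraction.) The paper adds a second, unweighted quadratic term. It groups left vertices by degree via \cref{lem:near-independent-bundling} into multisets $T_i$ of size in $[B,2B]$ and adds $\frac{1}{B\ell}\sum_i (y_i-|T_i|/2)^2$ to the objective. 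This forces $|E(G')|\le\frac23|E(G)|$ in every step (\cref{lem:half-sampling}).

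Second, the bundle-size schedule cannot be uniform over the $\Theta(\log n)$ doublings of a class. The per-step errors $b^{-\Omega(1)}$ sum to $O(\epsilon)$ only if $b\ge(\log n)^{\Omega(1)}$. Yet the very first step, before any shrinkage, costs $\Theta(b|E|)$, because the auxiliary graph has $\Theta(b|E_s|)$ edges. Your $\poly(\log\log n)$ bundles are thus both superlinear there (exactly the overhead of \cite{GG25}) and too coarse in total error. A single jump $p\mapsto\sqrt p$ is no escape. With keep-probability $\rho=\sqrt p$, the pairwise weight discarded by an $\eps'$-defective coloring is about $\eps' b/\rho$ times the expected potential. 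Since $\eps'\ge 1/\poly(\log n)$, this is enormous for polynomially small $p$.

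The paper instead uses doublings with $B=\lceil 1/\gamma_j\rceil$ and $\gamma_j=\max\{\eta\epsilon^2\cdot 0.95^{j},1/\log^4 n\}$. Bundles start at constant size and grow geometrically, but slower than the $2/3$ edge shrinkage, giving $\sum_j\sqrt{\gamma_j}=O(\epsilon)$ and $\sum_j(2/3)^j/\gamma_j=O(1)$ (\cref{lem:fixed-prob-sampling}). The finishing phase uses $\gamma_u=\max\{1/\log^4 n,\eta\epsilon^2 2^{-cu}\}$, which is affordable because class $2^{-u}$ has by then shrunk by $(2/3)^u$.

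Finally, you cannot round all the way to $\{0,1\}$. Keeping $\sum_s b_s^{-\Omega(1)}\le\epsilon$ needs $b_s\ge\epsilon^{-\Omega(1)}$, while the leftover loss per unit of right weight is about $p_s b_s$. So both stay small only while $p_s$ is below a fixed power of $\epsilon$. The paper stops at $2^{-c_0}$ with $c_0=c_0(\epsilon)$ and keeps every left node of nonzero probability. It absorbs the factor $2^{c_0}$ into $C$ via the distance $d$ under the reweighting $w_v/(c_v+1)$ (\cref{obs:hitting-to-distance}).
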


\begin{definition}[Probability weight and distribution distance]
A pre-hitting-set instance consists of a bipartite graph $G=(V_L\cup V_R,E)$,
left probabilities $(p_u)_{u\in V_L}$, and right weights $(w_v)_{v\in V_R}$.
For each $v\in V_R$, recall
\[
c_v = \sum_{u\in N_G(v)} p_u.
\]
Define the probability weight
\[
W_G \coloneqq \sum_{v\in V_R} w_v\,c_v.
\]
and the total weight
\[
\hat{W}_G \coloneqq \sum_{v\in V_R} w_v.
\]
For two instances $G$ and $G'$ on the same vertex sets $(V_L,V_R)$ with the same
right weights $(w_v)_{v\in V_R}$ (but possibly different edge sets and left
probabilities), define the distribution distance
\[
d(G,G') \coloneqq \sum_{v\in V_R} w_v\,|c_v-c_v'|.
\]
\end{definition}

\begin{observation}\label{obs:distance-properties}
The distance $d(\cdot,\cdot)$ satisfies:
\begin{itemize}
\item (Triangle inequality.) If $G_1,G_2,G_3$ share the same $(V_L,V_R)$ and right weights, then
$d(G_1,G_3) \le d(G_1,G_2)+d(G_2,G_3).$
\item (Subadditivity under edge-disjoint union.)
Assume $G_1$ and $G_2$ have the same $(V_L,V_R)$, the same left probabilities, and the same right weights,
and their edge sets are disjoint; define $G_1\cup G_2$ by unioning edge sets (and keeping the common probabilities/weights).
Make the analogous assumptions for $G_3,G_4$.
Then
$d(G_1\cup G_2,\, G_3\cup G_4) \le d(G_1,G_3)+d(G_2,G_4).$
\end{itemize}
\end{observation}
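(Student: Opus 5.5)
Both items reduce to the triangle inequality for absolute values on the real line, applied separately to each right node $v\in V_R$, and then summed with the nonnegative weights $w_v$. The argument is pointwise: $d$ is a $w$-weighted $\ell_1$ distance between the vectors $(c_v)_{v\in V_R}$. The only care needed is to check how the hit counts $c_v$ behave under edge-disjoint union.

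\emph{Triangle inequality.} Write $c^{(1)}_v,c^{(2)}_v,c^{(3)}_v$ for the expected hit counts of $v$ in $G_1,G_2,G_3$. For each $v$ we have $|c^{(1)}_v-c^{(3)}_v|\le |c^{(1)}_v-c^{(2)}_v|+|c^{(2)}_v-c^{(3)}_v|$. We multiply by $w_v>0$ and sum over $v\in V_R$. This gives $d(G_1,G_3)\le d(G_1,G_2)+d(G_2,G_3)$. The shared vertex sets and right weights are exactly what is needed for all three distances to be defined over the same index set with the same weights.

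\emph{Subadditivity.} Since $G_1$ and $G_2$ have disjoint edge sets and the same left probabilities, $N_{G_1\cup G_2}(v)$ is the disjoint union of $N_{G_1}(v)$ and $N_{G_2}(v)$. Here we use that the graphs are simple, or that multi-edges are counted with multiplicity consistently. Hence $c^{G_1\cup G_2}_v=c^{G_1}_v+c^{G_2}_v$, because $c_v$ is a sum of $p_u$ over neighbors and the $p_u$ are common to both graphs. In the same way, $c^{G_3\cup G_4}_v=c^{G_3}_v+c^{G_4}_v$. Then
\[
|c^{G_1\cup G_2}_v-c^{G_3\cup G_4}_v| = |(c^{G_1}_v-c^{G_3}_v)+(c^{G_2}_v-c^{G_4}_v)| \le |c^{G_1}_v-c^{G_3}_v|+|c^{G_2}_v-c^{G_4}_v|.
\]
We weight by $w_v$ and sum to obtain the claim. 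Note that $G_1$ and $G_3$ may have different probabilities; only the additivity within each union is used.

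There is no real obstacle here. The one point to state explicitly is the additivity $c^{G_1\cup G_2}_v=c^{G_1}_v+c^{G_2}_v$. It relies on edge-disjointness, so that no $p_u$ is double counted or lost, and on the common left probabilities within each pair.
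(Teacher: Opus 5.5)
Your argument is correct. $d$ is a $w$-weighted $\ell_1$ distance between the hit-count vectors $(c_v)_{v\in V_R}$, so both items follow from the scalar triangle inequality once you note that $c^{G_1\cup G_2}_v=c^{G_1}_v+c^{G_2}_v$, which holds by edge-disjointness and the shared left probabilities. The paper states this as an observation without proof, and your pointwise argument is exactly the intended one.
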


Using the distance formulation, we also obtain a distance version of the hitting-set theorem.
At the beginning of the next section, we will show that the main theorem follows from it as a corollary.

\begin{theorem}[Distance Version of Hitting Set]\label{thm:distance_hitting_set}
Let \(\epsilon>0\) be a constant. Then there exists a constant \(c_0\) such that the
following holds.

There exists an algorithm that, given a pre-hitting-set instance
\(G=(V_L\cup V_R,E)\) in compact representation, computes in \(O(|V|+|E|)\) work
and \(\poly(\log n)\) depth a new pre-hitting-set
instance \(G'\) on the same bipartite graph and with the same right-node weights,
such that:
\begin{itemize}
    \item every left probability in \(G'\) is either \(0\) or at least \(2^{-c_0}\); and
    \item
    $d(G,G') \le \epsilon (W_G+\hat{W}_G).$
\end{itemize}
\end{theorem}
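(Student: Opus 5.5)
Starting from $G$, I would round left probabilities upward in stages, moving a left node with probability $p$ either to $0$ or to a larger value, until every surviving probability is $0$ or at least $2^{-c_0}$. The basic primitive is a single rounding step. Take a set of left nodes that all share probability $p=2^{-k}$. Each such node is doubled to $2p$ or set to $0$, so every expected hit count $c_v$ is preserved in expectation. The quantity controlled is the weighted deviation $\sum_v w_v|c_v-c'_v|$.

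To make this deterministic with linear work, I would use the quadratic, Chebyshev-type objective $\sum_v w_v (c'_v-c_v)^2$, written over bundles rather than over the full neighborhood of each $v$. Concretely, I would split the same-probability neighbors of each $v$ into bundles of size $b$. Within a bundle, pair the elements so that exactly one of each pair is doubled; this keeps the bundle's mass exactly fixed up to one leftover element. I would then build an auxiliary conflict graph whose edges are the pairwise terms inside bundles, weighted by $w_v p^2$ divided by the appropriate normaliser.

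On this conflict graph I would compute a defective coloring with $\eps'=1/\poly(\log n)$ via \cref{thm:near-optimal-defective-coloring}, discard the monochromatic terms, and fix the choices greedily color by color, using the method of conditional expectations on the remaining quadratic form. Sorting the nodes by probability class and by color uses \cref{logn-sorting}, and building subgraphs uses \cref{lem:logn-subgraph-splitting}. The error of one step on $v$ is roughly $w_v\bigl(\sqrt{c_v^{(k)}/b}+\eps' c_v+p\bigr)$, where $c_v^{(k)}$ is the mass of $v$ coming from class $k$. Summing this over classes and steps must be at most $\epsilon(W_G+\hat W_G)$. The additive $\hat W_G$ term absorbs the bundle leftovers and the small-mass nodes, since $\sqrt{x}\le x+1$.

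For scheduling I would follow the overview in \Cref{subsec:method}. In phase one, I would take each class $p<1/\poly(\log n)$ up to about $\sqrt p$ through $\Theta(\log(1/p))$ doubling steps. Each step removes about half the surviving left nodes and, after pruning zeroed nodes, about half the edges incident to them, so the per-class work forms a geometric series. Across classes this stays $O(|V|+|E|)$, provided the bundle size $b$ grows slowly enough that the series still converges. In phase two, with all probabilities in $[1/\poly(\log n),2^{-c_0}]$, the graph has shrunk by a $\poly(\log n)$ factor. This leaves room for large bundles, $b=\poly(\log\log n)$, so each of the $O(\log\log n)$ remaining rounds loses only a $1/\log\log n$ fraction. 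The bundle sizes are then reduced gradually to constants as $p$ approaches $2^{-c_0}$, with the losses summing to a geometric series controlled by $c_0$.

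The main obstacle is the accounting that makes the loss and the work bounds hold simultaneously. One has to show that the total edge count across all rounds of all classes is $O(|V|+|E|)$ even though bundling inflates the conflict graph by a factor of $b$, while the summed errors $\sum\sqrt{1/b}$ and $\sum\eps'$ over $\tilde O(\log n)$ rounds stay below $\epsilon$. My first attempt would be a schedule in which step $t$ of a class uses bundle size $b_t=2^{t/4}$ during the shrinking regime, so that the work $m_t b_t\approx m2^{-t}2^{t/4}$ is summable. Near the top probabilities I would choose $b_t$ decreasing doubly exponentially, so that the errors $\sqrt{1/b_t}$ form a convergent series dominated by the first high-probability rounds. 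Finally I would pick $c_0$ large enough that the leftover error is below $\epsilon$.
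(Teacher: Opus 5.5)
\textbf{Verdict: there is a genuine gap.} Your ingredients are the paper's: bundled Chebyshev potentials, defective coloring with conditional expectations, a square-root phase and then a finishing phase. But the schedule you fix does not give linear work, and it breaks exactly in the accounting you flag as the main obstacle.

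\textbf{The schedule.}
\begin{itemize}
\item Phase one treats only classes with $p<1/\poly(\log n)$ and sends $p\mapsto\sqrt p$. Since $p$ can be $1/n^2$, probabilities afterwards still go down to $1/n$. So the premise of phase two (all probabilities in $[1/\poly(\log n),2^{-c_0}]$) is false, and nothing in your plan handles the $\Theta(\log n)$ levels in between.
\item Classes that start at or above $1/\poly(\log n)$ are never shrunk, so the claimed $\poly(\log n)$ shrinkage fails for them. If every $p_u=1/\log n$, your phase two runs bundles of size $b=\poly(\log\log n)$ on the whole graph. The auxiliary graph of its first round can then already have $\Theta(|E|\,b)=\omega(|E|)$ edges.
\item Using $O(\log\log n)$ rounds that each lose a $1/\log\log n$ fraction is precisely the scheme of \cite{GG25} that the paper identifies as the source of the overhead.
\end{itemize}
The missing idea is to run the square-root step on \emph{every} class $2^{-i}$ with $i\ge c_0$. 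Start from constant-size bundles and let them grow by a factor $1/0.95$ per round, which is slower than the $2/3$ edge shrinkage; this is the paper's $\gamma_j=\max\{\eta\epsilon^2 0.95^{j},1/\log^4 n\}$. The per-class loss is then $\epsilon W(G^{(i)})+\epsilon 2^{-ci}\hat W(G)$. Summed over the edge-disjoint classes this is $O(\epsilon)(W(G)+\hat W(G))$, with no $\log\log n$ factor.

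This also gives the level-wise invariant that drives the finishing phase. Whatever sits at probability $2^{-u}$ comes only from original classes $i\ge 2u$, each shrunk by $(2/3)^{i/2}$, so $|E(H_u)|=O((2/3)^u)|E|$. That pays for $\gamma_u=\max\{1/\log^4 n,\eta\epsilon^2 2^{-cu}\}$ with $c>0$ small. Both the work $\sum_u (2/3)^u/\gamma_u$ and the loss $\sum_u\sqrt{\gamma_u}$ are then geometric.

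\textbf{Per-step edge shrinkage.} Your claim that each step removes about half the edges is unsupported. Keeping half of the left nodes need not remove half of the edges. Your right-side bundles do not control the edge count either: their terms are weighted by $w_v$, and they omit up to $b-1$ leftover edges at every right node, which is all edges when right degrees are below $b$. The paper adds a second quadratic potential built with \cref{lem:near-independent-bundling}, which bundles left vertices by degree. Vertex $u$ appears $d_u$ times across groups of size in $[B,2B]$, and at most $B$ vertices are left over and deleted. This is what guarantees $|E(G')|\le\frac23|E(G)|$, and hence the geometric work series.

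\textbf{Two smaller points.}
\begin{itemize}
\item Pairing inside bundles so that exactly one element of each pair is doubled cannot be enforced. A left node lies in the bundles of many right nodes, and the pair constraints can form odd cycles. With independent $\tfrac12$-sampling the leftover error is $pb$ per right node, the paper's $\frac{p}{\gamma}\hat W(G)$ term, not $p$.
\item Your choice $b_t=2^{t/4}$ gives $\sum_{t\ge1}b_t^{-1/2}\approx 11$, not at most $\epsilon$. You need an $\epsilon^{-2}$ prefactor, as in the paper's $\gamma_j$.
\end{itemize}
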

\subsection{Main reductions}

We first show how finding a hitting set can be reduced to finding a pre-hitting-set instance whose probabilities are lower bounded by a constant and whose distance from the original distribution is relatively small.

\begin{observation}[Reducing hitting sets to distance]\label{obs:hitting-to-distance}
Fix a constant $c_0$ and let $p_0 \coloneqq 2^{-c_0}$.
By deterministically including all left nodes with probability at least $p_0$
(and increasing the constant $C$ by a factor of $2^{c_0}$), we may assume that all left probabilities
satisfy $p_u \le p_0$.

Reweight the right nodes by
$w'_v \coloneqq \frac{w_v}{c_v+1},$
and let
\[
W \coloneqq \sum_{v\in V_R} w'_v(c_v+1)=\sum_{v\in V_R} w_v.
\]
For a set $H\subseteq V_L$, define an instance $G_H$ by setting the left probabilities to
$p_u=p_0$ for $u\in H$ and $p_u=0$ otherwise, and let $c_v(H)$ denote the resulting expected hit count.

Assume that $C$ is chosen so that $p_0 C \ge 2$.
If
$d(G,G_H) \le \varepsilon W,$
then $H$ makes at least a $(1-2\varepsilon)$-fraction of the total right weight happy.
\end{observation}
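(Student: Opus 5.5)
The plan is a Markov-type argument. First I show that every unhappy right node pays a deviation $|c_v - c_v(H)|$ of order at least $c_v+1$. Then the reweighting $w'_v = w_v/(c_v+1)$ turns the bound $d(G,G_H)\le \varepsilon W$ into a bound on the total original weight of unhappy nodes. One bookkeeping point: the distance $d$ here is computed with the reweighted weights $w'_v$. So
\[
d(G,G_H)=\sum_{v\in V_R} \frac{w_v}{c_v+1}\,|c_v-c_v(H)|,
\qquad c_v(H)=p_0 h_v,
\]
where $h_v=|N(v)\cap H|$. The reduction that first deterministically includes left nodes with $p_u\ge p_0$ is handled separately. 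Those nodes contribute at most $p_u^{-1}\le 2^{c_0}$ times their probability mass to $h_v$, and this is exactly what enlarging $C$ by $2^{c_0}$ absorbs. I will treat that part only briefly and work in the instance where all $p_u\le p_0$.

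\textbf{Key step: unhappy implies deviation at least $\tfrac12(c_v+1)$.} I split into two cases by the way $v$ fails to be happy.

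\emph{Low side}, $h_v\le \tfrac12(c_v-1)$. Since $p_0\le 1$, we have $c_v(H)=p_0h_v\le h_v\le \tfrac12(c_v-1)$. Hence
\[
c_v-c_v(H)\ \ge\ \tfrac12(c_v+1).
\]

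\emph{High side}, $h_v> C(c_v+1)$. Here $c_v(H)=p_0h_v> p_0C(c_v+1)\ge 2(c_v+1)$, using $p_0C\ge 2$. Hence
\[
c_v(H)-c_v\ >\ c_v+2\ \ge\ \tfrac12(c_v+1).
\]

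\textbf{Concluding step.} Let $U$ be the set of unhappy right nodes. By the key step, each $v\in U$ contributes at least $\tfrac12 w_v$ to $d(G,G_H)$. Therefore
\[
\tfrac12\sum_{v\in U}w_v\ \le\ d(G,G_H)\ \le\ \varepsilon W=\varepsilon\sum_{v\in V_R} w_v .
\]
So the unhappy weight is at most $2\varepsilon$ times the total weight, which is the claim.

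The main obstacle is the low-side case, which depends on the strict inequality and on $p_0\le 1$. When $c_v<1$ the low-side condition forces $h_v<0$, which is impossible, so no node with $c_v<1$ is unhappy from below. The other delicate point is confirming that the preliminary inclusion of high-probability nodes is exactly compensated by the enlarged $C$. I expect the remaining steps to be routine.
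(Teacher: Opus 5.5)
Your proof is correct and is essentially the paper's argument. You use the same two-case split, with $p_0\le 1$ on the low side and $p_0C\ge 2$ on the high side, to get $|c_v-c_v(H)|\ge\tfrac12(c_v+1)$ for every unhappy $v$; each such node then contributes at least $w'_v\cdot\tfrac12(c_v+1)=\tfrac12 w_v$ to $d(G,G_H)$, and summing gives the $2\varepsilon$ bound. The paper also treats the preliminary inclusion of nodes with $p_u\ge p_0$ as a ``we may assume'' step. If you spell it out, note that besides the upper bound you sketched, the lower threshold is only helped, since each included node adds $1\ge p_u$ to $h_v$.
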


\begin{proof}
Let $h_v \coloneqq |N(v)\cap H|$. In $G_H$ we have $c_v(H)=p_0 h_v$.

If $v$ is unhappy, then either

\smallskip
\noindent
(i) \(h_v \leq  \frac{1}{2}(c_v-1)\), in which case
$c_v(H)=p_0 h_v \le h_v \leq \tfrac{1}{2}(c_v-1),$
so
$|c_v-c_v(H)| \ge c_v-\tfrac{1}{2}(c_v-1) = \tfrac{1}{2}(c_v+1);$

\smallskip
\noindent
or

\smallskip
\noindent
(ii) \(h_v > C(c_v+1)\), in which case
$c_v(H)=p_0 h_v > p_0 C(c_v+1)\ge 2(c_v+1),$
so
$|c_v(H)-c_v| \ge 2(c_v+1)-c_v = c_v+2 \ge \tfrac{1}{2}(c_v+1).$

Thus, every unhappy $v$ satisfies
$|c_v-c_v(H)|\ge \tfrac{1}{2}(c_v+1),$
and therefore contributes at least
$w'_v\cdot \tfrac{1}{2}(c_v+1)=\tfrac{1}{2}w_v$
to $d(G,G_H)$.
Hence
\[
\sum_{\text{$v$ unhappy}} w_v
\le
2\,d(G,G_H)
\le
2\varepsilon W
=
2\varepsilon \sum_{v\in V_R} w_v,
\]
so the happy weight is at least
\[
(1-2\varepsilon)\sum_{v\in V_R} w_v.
\qedhere
\]
\end{proof}

\paragraph{Overview of the job of one probability class.}
Using $\log n$ sorting (\cref{logn-sorting}) with minor modifications, we bucket left nodes by their exponent $k_u$
and split $G$ into $O(\log n)$ edge-disjoint subgraphs $G_i$, where in each $G_i$
all remaining left probabilities equal $2^{-i}$.
When we process the classes separately, we use subadditivity to argue that the total distance introduced is at most the sum of the distances introduced in the individual classes.

Fix one such subgraph $G_i$. Our goal is to construct a set $H_i\subseteq V_L(G_i)$ such that the induced instance $(G_i)_{H_i}$ satisfies $d\bigl(G_i,(G_i)_{H_i}\bigr) \le \epsilon\,W(G_i)+2^{-ci}\hat{W}(G)$, where $c>0$ is some constant.
The resulting instance will have left probabilities equal to $2^{-\lfloor i/2\rfloor}$, and its number of edges will be at most $2^{-ci}|E(G)|$.
We then combine the selected left nodes over all probability classes, and the triangle inequality guarantees that the total distance remains bounded.
After that, we perform a cleanup step on the rounded-up and shrunken instance. 
Finally, \cref{obs:hitting-to-distance} implies that this makes almost all right weight in $G$ happy.

\Cref{lem:graph-splitting} shows how to actually split the graph in linear work.

\begin{lemma}[Graph splitting]\label{lem:graph-splitting}
Let $G=(V_L\cup V_R,E)$ be a bipartite graph.
Each left vertex $u\in V_L$ has a key $\kappa(u)\in[\tau]$, where $\tau=O(\log n)$.
There is an algorithm that uses $O(|V|+|E|)$ work and $\poly(\log n)$ depth (where $|V| \leq n$) and outputs $\tau$ subgraphs
$G^{(1)},\dots,G^{(\tau)}$ in compact form, where for each $i\in[\tau]$,
$V_L^{(i)}=\{u\in V_L:\kappa(u)=i\},\qquad E^{(i)}=\{(u,v)\in E:u\in V_L^{(i)}\},$
and
$V_R^{(i)}=\{v\in V_R:\exists\,u\in V_L^{(i)}\text{ with }(u,v)\in E\}.$
\end{lemma}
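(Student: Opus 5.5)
The plan is to reduce graph splitting to the $\log n$ sorting primitive (\cref{logn-sorting}) together with prefix sums, in the spirit of \cref{lem:logn-subgraph-splitting}. The difference from that lemma is that here the edge sets $E^{(i)}$ partition $E$, since every edge has exactly one left endpoint. So no edges are dropped; we only need to regroup them, and regroup the right vertices' adjacency lists, by the key of the left endpoint.

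\textbf{Left side.} First sort the left vertices by $\kappa(u)\in[\tau]$ using \cref{logn-sorting}. If $\tau>\log n$, apply it a constant number of times, as in the remark following that lemma. This takes $O(|V_L|)$ work. A prefix sum over the sorted order gives each $u$ its new index inside $V_L^{(i)}$ and the size of every class. Since all edges of $u$ go to $G^{(\kappa(u))}$, the adjacency array of $u$ can be copied verbatim into the block of $G^{(\kappa(u))}$. The offsets come from a prefix sum of degrees in the sorted order, for $O(|V_L|+|E|)$ work in total.

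\textbf{Right side.} This is the main obstacle. For each $v\in V_R$, the incident edges must be split into at most $\tau$ groups according to the key of the left endpoint. We also need to know which classes $v$ belongs to, and we must stay at $O(\deg v)$ work per vertex, with no $\tau$ overhead for low-degree vertices. The plan is to tag each edge $(v,u)$ of $v$'s list with $\kappa(u)$ and then run \cref{logn-sorting} on $v$'s list alone. The lemma is stated with work $O(k)$ for $k$ pairs, including the small case $k<\frac19\log n$, which uses lookup tables and never allocates length-$\log n$ arrays. The one-time $O(n)$ preprocessing is shared across all calls, so the total work is $\sum_v O(\deg v)=O(|E|)$.

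After sorting, one scan of $v$'s list marks the boundaries between key groups. Each nonempty group $i$ yields one copy of $v$ in $V_R^{(i)}$, whose adjacency list is that contiguous segment. Thus $v$ produces at most $\deg v$ copies.

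\textbf{Assembly.} Next we place the right copies into the per-class vertex arrays. Each copy is a pair $(i,v)$, and there are at most $|E|$ of them. Sorting all these pairs by $i$ with \cref{logn-sorting} uses $O(|E|)$ work. A prefix sum then assigns each copy its index in $V_R^{(i)}$ and the offset of its segment in the class's edge array.

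Finally, the compact representation needs cross-references: for each edge, its position in the other endpoint's list within $G^{(i)}$. Each edge records its new position on the left side and on the right side in a slot indexed by the original edge. Both endpoints then read the other's new index and position through the original edge's cross-reference. Every step is a scan, a prefix sum, or a call to \cref{logn-sorting}, so the depth is $\poly(\log n)$ and the total work is $O(|V|+|E|)$.
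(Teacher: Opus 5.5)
Your proposal is correct. On the left side and in the assembly step it matches the paper: you emit one pair $(i,v)$ per class that a right vertex touches (at most $|E|$ pairs) and sort these by $i$ with \cref{logn-sorting} to assign compact IDs.

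The difference is in how each right vertex $v$ is split. The paper never sorts $v$'s list. Instead it does the following:
\begin{itemize}
\item it ORs $\kappa(u)$ into a $\tau$-bit mask $K(v)$ at $O(1)$ cost per incident edge and emits $(i,v)$ for each set bit;
\item separately, it buckets all edges globally by the key of their left endpoint;
\item finally, it relabels the right endpoints inside each $E^{(i)}$.
\end{itemize}
You instead sort each adjacency list of $v$ locally by key. This is legitimate only because \cref{logn-sorting} costs $O(k)$ even for $k<\frac19\log n$ (the lookup-table case), and its preprocessing depends only on $n$, so it is shared across all $|V_R|$ parallel calls. You correctly identify this as the crux.

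Each route has a trade-off. The mask is the lighter tool if all you need is to identify the classes of $v$. However, the paper's write-up ends with edge lists $E^{(i)}$ and does not say how each copy of $v$ obtains a contiguous adjacency list in $G^{(i)}$. A global bucketing by an unstable sort need not keep $v$'s edges contiguous. Your local sort produces exactly those segments directly. The price is that your final remapping relies on the input's cross-reference property, which the preliminaries do assume. Without that property, $v$ could publish the mask and locate its class-$i$ copy by a rank computation, which is essentially the paper's device.
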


\begin{proof}
First, sort the left vertices by key using $\log n$ sorting (\cref{logn-sorting}), and compute the $\tau$ bucket boundaries
(by a histogram and prefix sums over keys). This yields the compact lists $V_L^{(i)}$.

Next, bucket the edges by the key of their left endpoint, again using $\log n$ sorting (\cref{logn-sorting}).
This produces $\tau$ edge lists $E^{(i)}$ (still referencing original vertex IDs).

It remains to build $V_R^{(i)}$ and remap right endpoints to the compact IDs.
For each right vertex $v\in V_R$, scan its incident edges once and determine the set of keys
$K(v)\coloneqq \{\kappa(u):(u,v)\in E\}.$
Since $\tau=O(\log n)$, this can be maintained as a $\tau$-bit mask with $O(1)$ work per incident edge.
For each $i\in K(v)$, output one pair $(i,v)$.
The total number of such pairs is at most $|E|$ (each pair witnesses at least one edge). To put all the pairs into one compact list, we can count how many $v$ correspond to each $i$, and do a prefix sum to put them in the correct place for each $i$. 

Now apply $\log n$ sorting (\cref{logn-sorting}) to the pairs $(i,v)$ by their first component $i$. 
Within each bucket $i$, scan the resulting list to assign compact IDs to its right vertices,
thereby forming $V_R^{(i)}$.
Finally, scan $E^{(i)}$ and replace each right endpoint $v$ by its assigned compact ID in $V_R^{(i)}$.
All steps use $O(|V|+|E|)$ work, plus $poly(\log n)$ overhead.
\end{proof}

\subsection{$\tfrac12$ sampling methods}
\label{subsec:12-sampling}
This section shows how to implement the $\tfrac12$-sampling lemma from \cite{GG25} with linear work. The lemma deterministically selects half of the nodes on the left and doubles their probabilities, while maintaining a small distance from the original pre-hitting set.
\begin{lemma}[Approximate quadratic optimization]\label{lem:aqo}
Let $G=(V,E)$ be a (weighted) graph with vertex coefficients $c_v\in\mathbb{R}$ for all $v\in V$
and edge weights $w_e>0$ for all $e\in E$.
Let $(p_v)_{v\in V}$ be arbitrary probabilities with $p_v\in[0,1]$.
Fix any constant $c>0$, and assume $\varepsilon \ge 1/\log^c n$.
Then there is a deterministic parallel algorithm with $O(|V|+|E|)$ work and $\poly(\log n)$ depth that outputs a set
$U\subseteq V$ such that
\[
\sum_{u\in U} c_u \;-\!\!\sum_{e\in E[U]} w_e
\;\ge\;
\sum_{v\in V} p_v c_v \;-\!\!\sum_{e=\{u,v\}\in E} p_u p_v\, w_e
\;-\; \varepsilon \sum_{e\in E} w_e,
\]
where $E[U]:=\{\{u,v\}\in E : u,v\in U\}$.
\end{lemma}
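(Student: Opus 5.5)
Let $\Phi(\mathbf{x}) := \sum_{v} x_v c_v - \sum_{e=\{u,v\}} x_u x_v w_e$ for $\mathbf{x}\in[0,1]^V$. The plan is the method of conditional expectations, scheduled by a defective coloring. Since $\Phi$ is multilinear, fixing one coordinate at a time to $0$ or $1$ without decreasing $\Phi$ is always possible. The initial value $\Phi(\mathbf{p})$ is exactly the right-hand side without the $\varepsilon$ term, and the final value at an integral $\mathbf{x}=\mathbf{1}_U$ is exactly the left-hand side.

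To get $\poly(\log n)$ depth, we first compute a $(\lceil 2/\varepsilon\rceil,\varepsilon)$ defective coloring of $G$ with respect to the weights $w_e$, using \Cref{thm:near-optimal-defective-coloring}. This costs $O(|V|+|E|)$ work and $\poly(\log n)$ depth, and yields $Q=O(1/\varepsilon)=\poly(\log n)$ colors. Let $E_{\mathrm{mono}}$ be the monochromatic edges, so $\sum_{e\in E_{\mathrm{mono}}} w_e\le\varepsilon\sum_e w_e$. Let $\Phi'$ be $\Phi$ with the terms of $E_{\mathrm{mono}}$ deleted. Since every weight is positive and every $x$ lies in $[0,1]$, we have $\Phi \ge \Phi' - \sum_{e\in E_{\mathrm{mono}}} w_e$ pointwise. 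We also have $\Phi'(\mathbf{p})\ge\Phi(\mathbf{p})$.

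Next, sort the vertices by color using repeated $\log n$ sorting (\cref{logn-sorting}); the color range is $\poly(\log n)$, so this is linear work. Using subgraph selection (\cref{clm:subgraph-splitting}), drop the monochromatic edges. Then process the color classes $q=1,\dots,Q$ in order. Within class $q$ the vertices are pairwise non-adjacent in $\Phi'$, so $\Phi'$ is affine in each of them separately, and they can all be decided simultaneously. For each vertex $v$ of color $q$, compute its marginal
$$g_v = c_v - \sum_{u\in N'(v)} x_u w_{uv},$$
where $x_u$ is the current value: already rounded for earlier colors, and still $p_u$ for later colors. Set $x_v=1$ if $g_v\ge 0$ and $x_v=0$ otherwise.

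Because the class is independent in the reduced graph, the change in $\Phi'$ equals $\sum_v (x_v^{\mathrm{new}}-p_v)g_v\ge 0$. So $\Phi'$ never decreases, and at the end
$$\Phi(\mathbf{1}_U)\ge\Phi'(\mathbf{1}_U)-\varepsilon\sum_e w_e\ge\Phi'(\mathbf{p})-\varepsilon\sum_e w_e\ge\Phi(\mathbf{p})-\varepsilon\sum_e w_e,$$
which is the claim.

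For the work bound, computing $g_v$ costs $O(\deg v)$ via a prefix sum over $v$'s adjacency list, with depth $O(\log n)$. Each vertex is processed once, so the total work over all classes is $O(|V|+|E|)$, and the depth is $Q\cdot O(\log n)=\poly(\log n)$.

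The main technical point I expect is in the coloring step. \Cref{thm:near-optimal-defective-coloring} needs the weights to be nonnegative and defined on edges, but the coefficients $c_v$ may be negative; this is harmless, since the $c_v$ only enter linearly. The one remaining detail is that the reduction from $\Phi$ to $\Phi'$ uses $w_e>0$ to bound the discarded terms, which is exactly what the hypothesis of the lemma provides.
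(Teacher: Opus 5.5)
Your proposal is correct and follows the same route as the paper. You compute a $(\lceil 2/\varepsilon\rceil,\varepsilon)$ defective coloring, drop the monochromatic edges at a cost of at most $\varepsilon\sum_e w_e$, and derandomize color class by color class, using that each class is independent in the reduced graph so the objective is affine in its variables. Your explicit rule (set $x_v=1$ iff the marginal $g_v\ge 0$, evaluated on the multilinear extension at the current fractional point) is a concrete form of the paper's conditional-expectation step, and your work and depth accounting matches the paper's.
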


\begin{proof}
Compute an $(N,\varepsilon)$ defective coloring of $G$ with
$N=\left\lceil 2/\varepsilon\right\rceil$, and let $E_{\mathrm{def}}$
be the set of defective edges (edges whose endpoints have the same color).
By definition, $\sum_{e\in E_{\mathrm{def}}} w_e \le \varepsilon \sum_{e\in E} w_e$.

Consider the modified objective that ignores defective edges:
\[
F(U)\;:=\;\sum_{u\in U} c_u \;-\!\!\sum_{e\in E[U]\setminus E_{\mathrm{def}}} w_e.
\]
Let $X$ be the random set obtained by independently including each $v$ with probability $p_v$.
Then
\[
\mathbb{E}[F(X)]
=
\sum_{v\in V} p_v c_v
-
\sum_{e=\{u,v\}\in E\setminus E_{\mathrm{def}}} p_u p_v\, w_e
\;\ge\;
\sum_{v\in V} p_v c_v
-
\sum_{e=\{u,v\}\in E} p_u p_v\, w_e
.
\]
Since $\sum_{e\in E_{\mathrm{def}}} w_e \le \varepsilon \sum_{e\in E} w_e$, it suffices to find
$U$ with $F(U)\ge \mathbb{E}[F(X)]$. Indeed, after finding such $X$, the defective edges will further cost at most $\sum_{e\in E_{def}}w_e \leq \epsilon \sum_{e\in E} w_e$.

We apply the method of conditional expectations, fixing the variables color class by color class.
Because $E_{\mathrm{def}}$ contains all edges inside a color class, the function $F(\cdot)$ is
linear in the variables of the current color class once previous colors are fixed, so we can decide
all vertices of that color in parallel without decreasing the conditional expectation.
This takes $N=O(1/\varepsilon)=poly(\log n)$ rounds and $O(|V|+|E|)$ total work.
The resulting set $U$ satisfies $F(U)\ge \mathbb{E}[F(X)]$, and thus the stated inequality.
\end{proof}

\begin{remark}
Unlike most previous derandomization methods, for example, those based on pairwise independence, our lemma requires the graph $G$ to be built explicitly. That is, it is not sufficient to keep $G$ in an implicit sum-of-squares form and claim a work bound in terms of the size of that representation alone. Furthermore, when constructing $G$, we must ensure that no extra $\log\log n$ factor is introduced by sorting.
\end{remark}
We next improve the work bound of the $\tfrac12$-sampling lemma from \cite{GG25} using \cref{lem:aqo}. The application of \cref{lem:aqo} here requires an auxiliary graph whose construction is not completely straightforward. \Cref{lem:near-independent-bundling} explains how to construct such a graph.

\begin{lemma}[Near-Independent Bundling]\label{lem:near-independent-bundling}
Suppose that we are given positive integers $d_1,d_2,\dots,d_k$ and a parameter $B=\operatorname{poly}(\log n)$.
Then there exists a deterministic parallel algorithm that forms groups with the following properties:
\begin{itemize}
    \item Each group has size in $[B,2B]$, and the multiplicity of each index in each group is at most $2$.
    \item For every $i$, the index $i$ appears at most $d_i$ times among all groups.
    \item For all but at most $B$ indices $i$, the index $i$ appears exactly $d_i$ times among all groups.
\end{itemize}
Let the groups be $S_1,\dots,S_s$. Then, we can construct a compact representation of the auxiliary graph corresponding to the approximate quadratic optimization objective
\[
\sum_j \left(x_j-\frac{|S_j|}{2}\right)^2,
\]
where $x_j$ denotes the number of chosen elements in $S_j$.
Moreover, the algorithm runs in
\[
O\!\left(B\sum_i d_i\right)
\]
work and $\operatorname{poly}(\log n)$ depth.
\end{lemma}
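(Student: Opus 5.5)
The plan is to build the groups by a deterministic round-robin layout over a single array of copies, and then build the auxiliary graph group by group. Expanding the objective explains what graph is needed. If $y_a\in\{0,1\}$ indicates whether element (copy) $a$ of $S_j$ is chosen, then
\[
\left(x_j-\tfrac{|S_j|}{2}\right)^2=\sum_{a\in S_j}y_a(1-|S_j|)+2\sum_{\{a,b\}\subseteq S_j}y_ay_b+\tfrac{|S_j|^2}{4},
\]
using $y_a^2=y_a$. Copies of the same index are identified, since they are one variable. So the auxiliary graph of \cref{lem:aqo} has the indices as vertices. Vertex coefficients collect the linear terms. For every group $S_j$ and every pair of positions in $S_j$ there is a (multi-)edge between the corresponding indices with weight $2$. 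A pair of equal indices inside one group contributes a self-term $2y_i^2=2y_i$, which is folded into the linear coefficient. Parallel edges are allowed, as in the multigraph remark of the defective coloring section. The graph therefore has $\sum_j O(|S_j|^2)=O(B\sum_i d_i)$ edges. This matches the claimed work, provided every edge is written directly into a compact representation with $O(1)$ work and no integer sorting.

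\textbf{Forming the groups.} Compute prefix sums $D_i=\sum_{i'<i}d_{i'}$ and write index $i$ into positions $D_i,\dots,D_i+d_i-1$ of an array $A$ of length $D=\sum_i d_i$. This takes $O(k+D)$ work and $O(\log n)$ depth. I would not cut $A$ into consecutive blocks, because a large $d_i$ would put many copies of $i$ into one group. Instead, let $s=\lfloor D/B\rfloor$ and place position $p$ into group $p\bmod s$. Then group $j$ contains positions $j,j+s,j+2s,\dots$, so every group has size $\lfloor D/s\rfloor$ or $\lceil D/s\rceil$, which lies in $[B,2B]$. The copies of index $i$ are consecutive in $A$. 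Any $s+1$ consecutive positions hit each residue class at most twice, so a block of $d_i\le 2s$ copies has multiplicity at most $2$ in every group.

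\textbf{Handling heavy indices.} Indices with $d_i>2s$ are the obstacle. Since $\sum_i d_i=D<(s+1)B$, at most $B/2$ indices can exceed $2s$ (for $s\ge 1$). For those indices I would truncate the count to $2s$ before forming $A$, recompute $D$ and $s$, and iterate a constant number of times until the bound is consistent. Truncation only lowers counts, so the second bullet holds. The at most $B$ truncated indices form the exceptional set in the third bullet. If $D<B$, all indices are simply dropped, which is again at most $B$ indices.

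\textbf{Building the graph.} For group $j$ of size $g_j$, we know its members by arithmetic, namely positions $j+ts$, and its offset by a prefix sum over group sizes. We allocate a block of $g_j(g_j-1)$ directed adjacency slots for it. To give each index vertex a compact adjacency array, note that every copy of $i$ in group $j$ has exactly $g_j-1$ neighbors. So the degree of $i$ is $\sum_{\text{copies}}(g_{j}-1)$, computable by a prefix sum over the copies of $i$, which are contiguous in $A$. Each copy $(i,j,t)$ then owns a contiguous segment of $i$'s array, filled with the other members of group $j$ in order of $t'$. The reverse-position property (where $u$ sits in $v$'s list) is given by closed-form arithmetic: the entry of copy $(i,j,t)$ pointing to $(i',j,t')$ has partner index $t$ adjusted for skipping $t'$ inside the segment of copy $(i',j,t')$. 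No sorting is involved, the work is $O(\sum_j g_j^2)=O(BD)$, and the depth is $\operatorname{poly}(\log n)$.

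I expect the main obstacle to be making this reverse-pointer bookkeeping and the heavy-index truncation fully rigorous without sorting.
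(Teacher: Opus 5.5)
Your round-robin layout is a different route from the paper's. The paper buckets indices by $\lfloor\log_2 d_i\rfloor$, cuts each class into blocks of exactly $B$ indices, gives each block $2^t$ groups in which every block index appears once or twice, and greedily cleans up the $O(B\log n)$ leftover indices. Several parts of your argument are fine \emph{once} you have an $s$ whose truncated demands $d_i'=\min(d_i,2s)$ satisfy $\sum_i d_i'/s\in[B,2B]$: the group sizes, the multiplicity bound, the expansion of the objective, and the sorting-free adjacency arrays with closed-form reverse positions.

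The gap is in finding that $s$. Your claim that a constant number of truncate-and-recompute rounds suffices is false. Truncation lowers $D$ and hence $s$. Whenever $s$ strictly drops, every index already truncated to the old $2s$ exceeds the new $2s$, so the process runs until $\lfloor D/B\rfloor$ stabilizes. Take $B=100$, $k$ indices of demand $1$, and one index of demand $k/2$. The rounds follow $s_{t+1}=\lfloor (k+2s_t)/B\rfloor$. The gap to the fixed point $\approx k/(B-2)$ shrinks only by a factor of about $B/2$ per round, so there are $\Theta(\log k/\log B)$ rounds. Each round must locate the indices that became heavy under the smaller threshold. Without any ordering of the $d_i$ this is a $\Theta(k)$ scan, so for constant $B$ the total is $\Theta(k\log k)$, against a budget of $O(B\sum_i d_i)=O(k)$. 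Constant $B$ is a real case: in \cref{lem:half-sampling}, $B=\lceil 1/\gamma\rceil$ with $\gamma$ as large as $\epsilon_0$. Separately, your heavy-index count should be $h<(s+1)B/(2s)\le B$, not $B/2$; this is harmless.

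You can repair this without touching the rest of your construction by restricting $s$ to powers of two:
\begin{itemize}
\item Bucket the indices by $\lfloor\log_2 d_i\rfloor$ with \cref{logn-sorting}, and compute class counts $n_t$ and class sums $\Sigma_t$.
\item Then $g_j:=\sum_i\min(d_i,2^{j+1})=\sum_{t\le j}\Sigma_t+2^{j+1}\sum_{t>j}n_t$ is available for every $j$ with polylogarithmic extra work.
\item The ratio $r_j:=g_j/2^j$ is nonincreasing with $r_{j+1}\ge r_j/2$. So the first $j$ with $r_j\le 2B$ has $r_j\in[B,2B]$, unless $g_0<B$; in that case $k<B$ and you may drop every index.
\item Set $s=2^j$ and $d_i'=\min(d_i,2s)$. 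Your layout then gives sizes in $[B,2B]$ and multiplicity at most $2$. At most $\sum_i d_i'/(2s)\le B$ indices are truncated.
\end{itemize}
One bookkeeping point: when you fold the self-pair of a doubled index into its linear coefficient, its two copies sit at consecutive slots $t,t+1$ of the group. Segment lengths and reverse offsets therefore need only an $O(1)$ correction.

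With this fix, your route avoids the paper's leftover cleanup and gives fully explicit reverse pointers, but it relies on the same degree-class bucketing the paper starts from. The paper's blocks instead keep each group supported on a fixed set of $B$ indices, so each block's auxiliary graph is a union of cliques on the same $B$ vertices.
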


\begin{proof}
We first use $\log n$ sorting (\cref{logn-sorting}) to sort the indices $i$ by the value of $\lfloor \log_2 d_i \rfloor$.

Fix one degree class corresponding to degrees in $[2^t,2^{t+1})$. Split the indices in this class into consecutive blocks of size exactly $B$, leaving at most $B-1$ leftover indices from this class.

Now consider one such full block. We create exactly $2^t$ groups from this block. For each index $i$ in the block, write
$d_i = 2^t + r_i, \qquad 0 \le r_i < 2^t.$
We place one copy of $i$ into every one of the $2^t$ groups, and then place a second copy of $i$ into exactly $r_i$ of these groups. This is possible because $r_i<2^t$. Hence, the total number of copies of $i$ across these groups is exactly
$2^t+r_i=d_i.$
Also, in every group, each index appears with multiplicity at most $2$.

Since the block contains exactly $B$ indices and each contributes either one or two copies to each group, every group produced from this block has size in $[B,2B]$.

We now explain how to construct the auxiliary graph for the quadratic objective. Expanding one term $\left(x_j-\frac{|S_j|}{2}\right)^2$ produces linear terms and quadratic terms involving only pairs of indices that occur in the same group $S_j$. Since every group is supported on a single block of size $B$, all auxiliary edges produced by this group lie inside that block. Thus, for one block, the auxiliary graph consists of the union of $2^t$ clique-like gadgets on $B$ vertices, and therefore has $O(2^t B^2)$ edges in total. Because every index in this degree class has degree $\Theta(2^t)$, summing over all blocks in the class gives
\[
O\!\left(B\sum_{i \text{ in the class}} d_i\right)
\]
work to construct the auxiliary graph for that class. Summing over all degree classes yields
\[
O\!\left(B\sum_i d_i\right)
\]
total work.

It remains to handle the leftovers. There are at most $(B-1)\log n=\operatorname{poly}(\log n)$ leftover indices in total. On this set, we may simply use a sequential greedy cleanup. Repeatedly choose any $B$ active leftover indices of the smallest current residual demand, let
$\delta$ 
be the smallest residual demand among them, and create $\delta$ groups, each containing one copy of each of the chosen $B$ indices. Each such group has size exactly $B$, and every chosen index loses exactly $\delta$ units of residual demand. Therefore, at least one of the chosen indices becomes exhausted and disappears. Hence, the number of active leftover indices strictly decreases in each cleanup step, so the process terminates with at most $B-1$ active indices remaining.

Throughout the cleanup, every group has size exactly $B$, and each index appears at most once in each such group. Since the cleanup involves only $\operatorname{poly}(\log n)$ indices, its total depth is $\operatorname{poly}(\log n)$, and its total work is absorbed in the bound above.

This proves all three properties of the groups and the claimed construction bound for the auxiliary graph.
\end{proof}

\begin{lemma}[$\tfrac{1}{2}$ sampling]\label{lem:half-sampling}
There exists a universal constant $C$ such that the following holds. Let $G=(V_L\cup V_R,E)$ be a bipartite instance in which each right node $v\in V_R$ has weight $w_v>0$
and each left node has the same probability $p< \frac 12$.
Let $\gamma\in\left[\frac{1}{\log^4 n},\,\epsilon_0\right]$ be an error-related parameter, where $\epsilon_0$ is a desirably small constant. 
Then there is an algorithm that outputs a subset $S\subseteq V_L$ in $O\!\left((|V|+|E|)/\gamma\right)$ work and $\poly(\log n)$ depth such that, letting $G'$ be the instance obtained by setting
the left probabilities to $2p$ on $S$ and to $0$ on $V_L\setminus S$,
\begin{itemize}
\item
$d(G,G') \le C(\sqrt{\gamma}\,W(G)+\frac{p}{\gamma}\hat W(G)),$
\item
$|E(G')| \in \left[ \left(\frac12-C\sqrt{\gamma}\right)|E(G)|-\frac{C}{\gamma}\Delta,\, \left(\frac12+C\sqrt{\gamma}\right)|E(G)| \right],$
where $\Delta$ is the maximum degree on the left.
\end{itemize}
As a corollary, we have $|E(G')|\le \frac23\,|E(G)|$.
\end{lemma}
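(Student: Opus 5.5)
The plan is to reduce the statement to one call of the approximate quadratic optimization lemma (\cref{lem:aqo}) on an auxiliary graph built by near-independent bundling (\cref{lem:near-independent-bundling}). The bundles will control two things at once: the deviation of each right node's hit count, and the total edge count.

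\textbf{Step 1: the bundles.} Set $B:=\lceil 1/\gamma\rceil$, so $B=\poly(\log n)$ because $\gamma\ge 1/\log^4 n$. Each right node $v$ contributes a ``demand'' on its neighbors; equivalently, list the edges at $v$ as index occurrences of left nodes. I would like to bundle the neighborhood $N(v)$ into groups of size in $[B,2B]$, which is the natural Chebyshev setting. The work budget $O(B|E|)$ would permit this, but doing it per right node loses up to $B$ leftover neighbors at every $v$.

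\textbf{Step 2: charging the leftovers.} A node $v$ with $|N(v)|<B$ has $c_v<pB$. Its error when rounding by $2p$ is at most $O(c_v)$ plus a deviation. In the worst case this is charged as $O(p/\gamma)\,w_v$, which is exactly the $\frac{p}{\gamma}\hat W(G)$ term.

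The cleaner route is to apply \cref{lem:near-independent-bundling} directly, with index $i$ ranging over left nodes and $d_i=\deg(i)$. Then every group is a multiset of left nodes, and every right node is assigned, through its edges, to group slots. Concretely, I would bundle edges at each right node $v$ into consecutive groups of size $\approx B$. I would also bundle edges per left node for the global edge-count constraint, which is the form the lemma handles.

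\textbf{Step 3: the objective.} The objective passed to \cref{lem:aqo} is
\[
\sum_{v}\frac{w_v}{B'}\sum_{j\text{ at }v}\bigl(x_j-|S_j|/2\bigr)^2+\lambda\sum_j\bigl(x_j-|S_j|/2\bigr)^2 ,
\]
minimized, which amounts to maximizing its negative; the coefficient $B'$ is a normalizer to be fixed. Under independent $1/2$ sampling each term has expectation $|S_j|/4$. \Cref{lem:aqo} with $\varepsilon=\gamma$ loses $\gamma$ times the total edge weight, which is comparable to $B$ times the expectation, and hence $O(1)$ times the expectation. So $\sum_j (x_j-|S_j|/2)^2\le O(\sum_j|S_j|)$.

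\textbf{Step 4: from variance to the distance bound.} Apply Cauchy--Schwarz per $v$. The deviation $|c'_v-c_v|$ is $2p\,\bigl|\sum_j(x_j-|S_j|/2)\bigr|$ plus the leftover error. The sum over roughly $|N(v)|/B$ groups of $|x_j-|S_j|/2|$ is at most $\sqrt{(|N(v)|/B)\cdot O(|N(v)|)}$. Multiplied by $2p$, this gives $O(c_v/\sqrt B)=O(\sqrt\gamma\,c_v)$, but only on weighted average. That is why the weights are chosen so that the weighted sum of the per-$v$ quadratic terms is controlled by the expectation, and then Cauchy--Schwarz is applied globally. The edge-count statement follows the same way from the global bundles over left degrees. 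The additive $\frac{C}{\gamma}\Delta$ comes from the at most $B$ leftover indices of \cref{lem:near-independent-bundling}, each of degree at most $\Delta$. The corollary follows because $C\sqrt{\epsilon_0}$ is small. If $\Delta$ is large, I would handle it by adding the trivial upper bound.

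\textbf{Work and the main obstacle.} Work is $O(B(|V|+|E|))=O((|V|+|E|)/\gamma)$ for constructing the graph plus running \cref{lem:aqo}. The main obstacle is building per-right-node bundles with the right multiplicities in linear-in-$B|E|$ work without sorting overheads. For this I would invoke the construction of \cref{lem:near-independent-bundling} on the right side's adjacency lists. I would also make the global Cauchy--Schwarz weighting produce $\sqrt\gamma\,W$ rather than $\sqrt\gamma\,(W+\hat W)$. First I would try weighting each group term at $v$ by $w_v p/\sqrt{c_v B}$-type normalizations.
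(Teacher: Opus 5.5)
Your skeleton matches the paper's proof: $B=\lceil 1/\gamma\rceil$, consecutive blocks of exactly $B$ neighbors at each right node with the fewer than $B$ leftovers per $v$ charged to $pB\,\hat W(G)$, near-independent bundling of the left degrees for the edge count, one call to the approximate quadratic optimization with $\varepsilon=\Theta(1/B)$, and Cauchy--Schwarz. That call loses only $O(1)$ times the expected objective, because the auxiliary edge weight is $\Theta(B)$ times that expectation.

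However, you never remove from $S$ the at most $B$ left vertices that the bundling lemma does not represent exactly $d_u$ times. Without this, the $\Delta$-loss is two-sided. These vertices are unconstrained by the degree bundles and can add up to $B\Delta$ edges, so the upper bound $|E(G')|\le(\frac{1}{2}+C\sqrt{\gamma})|E(G)|$ does not follow. Neither does the corollary $|E(G')|\le\frac{2}{3}|E(G)|$, which the later geometric work bounds rely on. Concretely, if $|V_L|<B$, there are no degree bundles and no full groups at any right node. The objective is then constant, and nothing prevents $S=V_L$, which gives $|E(G')|=|E(G)|$; no `trivial upper bound' repairs this. The paper deletes these vertices. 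Then every surviving left vertex is counted exactly $d_u$ times, so $|E(G')|\le\sum_i y_i$ and the $\frac{C}{\gamma}\Delta$ term appears only in the lower bound. The deletion costs at most $pB\,\hat W(G)\le\frac{2p}{\gamma}\hat W(G)$ in distance, which you also need to account for.

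Second, the obstacle you name at the end is not real: you already get $\sqrt{\gamma}\,W(G)$ rather than $\sqrt{\gamma}\,(W(G)+\hat W(G))$. The $w_vp/\sqrt{c_vB}$ normalization you would try only makes the bound worse. Your Step~3 weighting (weight $w_v$ on every group at $v$, one global normalizer) is the right one. Normalize this family by $M=\frac{B}{4}\sum_i w_i$ and the degree bundles by $K=B\ell$, so that each has expectation $\Theta(1)$ and the single optimization guarantee bounds each by $O(1)$ separately. Since $v$ has only $\lfloor |N(v)|/B\rfloor$ full groups,
$S_W:=\sum_i w_i\le\frac{1}{B}\sum_v w_v|N(v)|=\frac{W(G)}{Bp}$.
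Hence
$2p\sum_i w_i|x_i-\frac{B}{2}|\le 2p\sqrt{S_W\cdot O(BS_W)}=O(W(G)/\sqrt{B})$,
with no $\hat W$ term. Among all per-$v$ weightings, those proportional to $w_v$ minimize this Cauchy--Schwarz bound, so an extra $1/\sqrt{c_v}$ factor only loosens it when the $c_v$ vary. Finally, the per-right-node blocks need no bundling lemma. They are consecutive chunks of each adjacency list, and their clique edges can be written directly from the stored cross-positions in $O(B|E|)$ work.
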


\begin{proof}
Let $B:=\lceil 1/\gamma\rceil$. Since $\gamma\le 0.01$, we have $B\ge 100$.

For each $v\in V_R$, partition $N(v)$ into disjoint groups of size exactly $B$, leaving at most
$B-1$ leftover neighbors. Let the resulting groups be $S_1,\dots,S_k$, where each group $S_i$
comes from a single right node and inherits its weight $w_i$.
For a subset $S\subseteq V_L$, define $x_i:=|S\cap S_i|$.

To control the number of edges, apply \cref{lem:near-independent-bundling} to the left degrees
$d_u\qquad (u\in V_L),$
with the same parameter $B$. This produces groups $T_1,\dots,T_\ell$ such that all but at most $B$
left vertices are represented exactly according to their degree demand. 
For the degree-based groups, we allow repeated occurrences of the same left vertex inside one group. 
Thus, each \(T_i\) is a multiset of left vertices rather than a set. 
For a chosen subset \(S\subseteq V_L\), define
\[
y_i \coloneqq \sum_{u\in S} m_i(u),
\]
where \(m_i(u)\in\{0,1,2\}\) denotes the multiplicity of \(u\) in the multiset \(T_i\).
Equivalently, \(y_i\) is the number of selected elements in \(T_i\), counted with multiplicity.

We delete the at most $B$ ungrouped left vertices. This changes the number of edges by at most
$B\Delta,$
and contributes at most
$Bp\,\hat W(G)\le \frac{2p}{\gamma}\hat W(G)$
to the distance.

Now consider the nonnegative objective
\[
Q(S)
:=
\frac{1}{M}\sum_{i=1}^k w_i\left(x_i-\frac{B}{2}\right)^2
+
\frac{1}{K}\sum_{i=1}^{\ell}\left(y_i-\frac{|T_i|}{2}\right)^2,
\]
where
\[
M:=\frac{B}{4}\sum_{i=1}^k w_i,
\qquad
K:=B\ell.
\]
If each left node is chosen independently with probability $1/2$, then
$\mathbb{E}\!\left[\left(x_i-\frac{B}{2}\right)^2\right]=\frac{B}{4},$
and, since $|T_i|\in[B,2B]$ and each element has multiplicity at most $2$,
$\mathbb{E}\!\left[\left(y_i-\frac{|T_i|}{2}\right)^2\right] = \Theta(B).$
Therefore
$\mathbb{E}[Q]=O(1).$

Expanding $Q(S)$ yields a quadratic polynomial in the indicator variables of the left vertices,
supported only on pairs of vertices that co-occur in some group.
This gives an auxiliary graph on $V_L$ with $O(|E|B)$ edges. The second term is handled by \cref{lem:near-independent-bundling}. For the first term,
the required groups come from the right neighborhoods, and can be expanded directly using the
stored adjacency-list positions.

Apply \cref{lem:aqo} to the objective $-Q(S)$ with parameter $\varepsilon:=\frac{1}{B}$.
Since \(\varepsilon\ge 1/\log^4 n\), this is within the range of \cref{lem:aqo}.
Because $\mathbb{E}[-Q]=-O(1)$ and the total edge weight of the auxiliary graph is $O(B)$,
we obtain a set $S$ with $Q(S)\le O(1)$.
In particular,
\[
\frac{1}{M}\sum_{i=1}^k w_i\left(x_i-\frac{B}{2}\right)^2 \le O(1),
\qquad
\frac{1}{K}\sum_{i=1}^{\ell}\left(y_i-\frac{|T_i|}{2}\right)^2 \le O(1).
\]

\paragraph{Distance bound.}
For each right node \(v\), let \(R_v\) be the set of leftover neighbors of \(v\), so \(|R_v|<B\).
Then
\[
|c_v-c_v'|
\le
2p\sum_{i:\,S_i\subseteq N(v)} \left|x_i-\frac{B}{2}\right|
+
p|R_v|.
\]
Summing over \(v\) gives
\[
d(G,G')
\le
2p\sum_{i=1}^k w_i\left|x_i-\frac{B}{2}\right|
+
pB\,\hat W(G).
\]
By Cauchy--Schwarz,
\[
\sum_{i=1}^k w_i\left|x_i-\frac{B}{2}\right|
\le
\sqrt{
\left(\sum_{i=1}^k w_i\right)
\left(\sum_{i=1}^k w_i\left(x_i-\frac{B}{2}\right)^2\right)
}.
\]
Let
\[
S_W:=\sum_{i=1}^k w_i.
\]
Since each right node contributes at most \(|N(v)|/B\) groups,
\[
S_W
\le
\frac{1}{B}\sum_{v\in V_R} w_v |N(v)|
=
\frac{1}{Bp}W(G).
\]
Also, from the bound on \(Q(S)\),
\[
\sum_{i=1}^k w_i\left(x_i-\frac{B}{2}\right)^2 \le O(M)=O(BS_W).
\]
Therefore,
$d(G,G') \le O\!\left( p\sqrt{S_W\cdot BS_W} \right) + \frac{2p}{\gamma}\hat W(G) \le O\!\left(\frac{1}{\sqrt{B}}W(G)\right) + \frac{2p}{\gamma}\hat W(G).$
Since \(B\asymp 1/\gamma\), this gives
$d(G,G') \le C\sqrt{\gamma}\,W(G)+\frac{Cp}{\gamma}\hat W(G).$

\paragraph{Edge reduction.}
Let $m:=\sum_{i=1}^{\ell}|T_i|$. Since at most \(B\) left vertices are ungrouped, we have
$|E(G)|-B\Delta \le m \le |E(G)|.$
Moreover,
\[
|E(G')|=\sum_{i=1}^{\ell} y_i,
\]
because all ungrouped left vertices were deleted.
Hence
\[
\left||E(G')|-\frac{m}{2}\right|
=
\left|\sum_{i=1}^{\ell} y_i-\frac12\sum_{i=1}^{\ell}|T_i|\right|
\le
\sum_{i=1}^{\ell}\left|y_i-\frac{|T_i|}{2}\right|.
\]
By Cauchy--Schwarz,
\[
\sum_{i=1}^{\ell}\left|y_i-\frac{|T_i|}{2}\right|
\le
\sqrt{
\ell\sum_{i=1}^{\ell}\left(y_i-\frac{|T_i|}{2}\right)^2
}
\le
O(\ell\sqrt{B}).
\]
Since each \(|T_i|\in[B,2B]\), we have \(m=\Theta(\ell B)\), and therefore
$\left||E(G')|-\frac{m}{2}\right| \le O\!\left(\frac{m}{\sqrt{B}}\right).$
As \(B\ge 100\), after adjusting constants we obtain
$|E(G')| \in \left[ \left(\frac12-C\sqrt{\gamma}\right)m,\, \left(\frac12+C\sqrt{\gamma}\right)m \right].$
Using \(m\le |E(G)|\) and \(m\ge |E(G)|-B\Delta\), we conclude
$|E(G')| \in \left[ \left(\frac12-C\sqrt{\gamma}\right)|E(G)|-\frac{C}{\gamma}\Delta,\, \left(\frac12+C\sqrt{\gamma}\right)|E(G)| \right].$
In particular, for sufficiently small \(\gamma\),
$|E(G')|\le \frac23\,|E(G)|$.\qedhere
\end{proof}

\subsection{Finishing}

\paragraph{Overview.}
The framework is similar to that of \cite{GG25}, but we use our reparameterization together with faster subroutines to achieve linear work.
We first deal with a fixed probability class.

Starting from a class whose probability is $2^{-i}$, we perform about $\log n$ boosting rounds: in each round, we sparsify the left side, keeping at most a $2/3$-fraction of the nodes, while doubling the probabilities of the surviving nodes. The decrease in the number of nodes creates room for additional work, allowing the total distance added over all rounds to form a geometric series.

Once the probability reaches $2^{-i/2}$, we stop this phase. The total distance incurred in this phase is of the form $\epsilon\bigl(W(G_i)+2^{-i/2}\hat{W}(G)\bigr)$.
By the triangle inequality, the total distance over all the probability classes over this phase is acceptable.

For the finishing phase after that, we repeatedly apply $\tfrac12$-sampling to the smallest remaining probability class. When the current smallest remaining class has probability $p=2^{-i}$, we know that it has at most a $2^{-i/2}$-fraction of its original size. This allows us to choose $\gamma=2^{-i/3}$ while still preserving the desired work bound.

Of course, these parameters must be combined with the appropriate \(\operatorname{poly}(\log n)\) floors and ceilings. We do not state these explicitly here, for the sake of keeping the overview simple.

\begin{lemma}[Fixed probability square root]\label{lem:fixed-prob-sampling}
Fix a constant $0<\epsilon<0.01$.
Let $G=(V_L\cup V_R,E)$ be a bipartite instance in which each $v\in V_R$ has weight $w_v>0$
and each left node has the same probability $p=2^{-t}$, where $c_0\le t\le 2\log n$ and
$c_0$ is a sufficiently large constant depending only on $\epsilon$.
Then there exists a deterministic parallel algorithm with $O(|V|+|E|)$ work and $\operatorname{poly}(\log n)$ depth that outputs a set $S\subseteq V_L$ such that,
letting $G'$ be the instance obtained by setting the left probabilities to $2^{-\lfloor t/2 \rfloor}$ on $S$ and to $0$
on $V_L\setminus S$ (keeping the same right weights),
$d(G,G') \le \epsilon\bigl(W(G) + 2^{-ct}\hat W(G)\bigr),$
where $c>0$ is an absolute constant.
Moreover,
$|E(G')| \le \left(\frac23\right)^{t/2}|E(G)|.$
\end{lemma}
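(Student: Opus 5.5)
The plan is to apply the $\tfrac12$ sampling lemma (\cref{lem:half-sampling}) iteratively, $r:=t-\lfloor t/2\rfloor=\lceil t/2\rceil$ times. Set $G_0:=G$, and let $G_{j+1}$ be obtained from $G_j$ by one application of \cref{lem:half-sampling} with its own parameter $\gamma_j$, restricted to the surviving left nodes. We delete the left nodes that get probability $0$, together with their edges, using subgraph selection (\cref{clm:subgraph-splitting}). After round $j$ all surviving left nodes have probability $2^{-t+j}$, so after $r$ rounds the probability is $2^{-\lfloor t/2\rfloor}$, as required; we output $S:=V_L(G_r)$. The edge bound is immediate, since each round gives $|E(G_{j+1})|\le \frac23|E(G_j)|$, and hence $|E(G_r)|\le (2/3)^{r}|E(G)|\le (2/3)^{t/2}|E(G)|$.

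For the distance we use the triangle inequality (\cref{obs:distance-properties}):
\[
d(G,G_r)\le \sum_{j<r} d(G_j,G_{j+1}) \le C\sum_{j<r}\Bigl(\sqrt{\gamma_j}\,W(G_j)+\tfrac{p_j}{\gamma_j}\hat W(G)\Bigr),
\qquad p_j=2^{-t+j}.
\]
To control $W(G_j)$, note that $W(G_{j+1})\le W(G_j)+d(G_j,G_{j+1})$, so inductively $W(G_j)\le 2W(G)+O(\hat W(G))$ provided the per-round errors sum to a small constant. The choice of $\gamma_j$ must satisfy three constraints simultaneously:
\begin{itemize}
\item $\sum_j\sqrt{\gamma_j}$ is a small constant times $\epsilon$;
\item $\sum_j p_j/\gamma_j\le \epsilon\, 2^{-ct}$;
\item the work $\sum_j (|V_j|+|E_j|)/\gamma_j=O(|V|+|E|)$.
\end{itemize}
Since $|E_j|\le (2/3)^j|E|$, a geometric choice such as $\gamma_j:=\gamma_0\,(1.2)^{-j}$ makes $\sum_j\sqrt{\gamma_j}=O(\sqrt{\gamma_0})$ and keeps the work geometric, because $(2/3)(1.2)<1$. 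We take $\gamma_0$ to be a small constant depending on $\epsilon$, and then the first and third constraints hold. We also clamp $\gamma_j\ge 1/\log^4 n$ to stay within the range of \cref{lem:half-sampling}. Clamping only enlarges $\gamma_j$, which reduces the work, but it increases $\sqrt{\gamma_j}$ relative to the geometric sequence; the clamped rounds must therefore be checked separately. A different schedule may be needed (see the last paragraph).

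The additive term is handled next. We have $p_j/\gamma_j\le 2^{-t+j}(1.2)^j/\gamma_0$. Over $j<r\le t/2+1$ this is at most $O(2^{-t/2}(1.2)^{t/2})/\gamma_0=2^{-ct}/\gamma_0$ for some absolute $c>0$, and choosing $c_0$ large makes the remaining constant factor at most $\epsilon$. The work bound also needs the vertex count. After deleting isolated right vertices and zero-probability left vertices, $|V_j|\le 2|E_j|$ for $j\ge1$, so the vertex contribution is also geometric. The maximum-degree loss term in the edge bound of \cref{lem:half-sampling} is irrelevant, because we only use the upper bound.

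The main obstacle is reconciling the clamp $\gamma_j\ge 1/\log^4 n$ with the geometric decay of $\gamma_j$ when $t$ is about $\log n$. Once $\gamma_0(1.2)^{-j}$ falls below $\log^{-4}n$, the remaining $O(\log n)$ rounds each add roughly $\log^{-2}n\cdot W$, which sums to only $O(\log^{-1}n)\,W$ and is fine; the additive terms stay tiny there since $p_j$ is small. So the clamp is harmless, provided the work in those rounds still decays geometrically at the rate $(2/3)^j\log^4 n$. Once $(2/3)^j\le \log^{-5}n$, that work is $o(|E|)$. Before that point the $\gamma_j$ are not yet clamped, which needs verifying. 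If the accounting fails, I would instead choose $\gamma_j$ adaptively as $\max(\log^{-4}n,\ \gamma_0\cdot(|E_j|/|E|)^{1/2})$, so that the work is $\sum_j\sqrt{|E_j||E|}/\gamma_0$, which is geometric, and the error $\sum_j\sqrt{\gamma_j}$ is also geometric.
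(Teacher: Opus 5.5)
Your plan is the paper's proof: $r=\lceil t/2\rceil$ successive applications of \cref{lem:half-sampling}, with a geometrically decaying schedule clamped at $1/\log^4 n$. The paper uses $\gamma_i=\max\{\eta\epsilon^2\,0.95^{i},1/\log^4 n\}$ and you use $\gamma_0\,1.2^{-j}$; both work because the decay ratio times $\tfrac23$ is below $1$. Two places still need closing, and both can be closed with your own ingredients.

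First, the clamp is not an obstacle, and the adaptive fallback is unnecessary. The fallback would also need its own argument for $\sum_j p_j/\gamma_j$, since tying $\gamma_j$ to $|E_j|$ lets it hit the floor early if $|E_j|$ collapses, and the $\frac{C}{\gamma}\Delta$ loss in \cref{lem:half-sampling} permits such a collapse. The direct check for your fixed schedule is one line:
\begin{itemize}
\item Round $j$ is clamped only when $\log^4 n<1.2^{j}/\gamma_0$.
\item Hence $(2/3)^j/\gamma_j\le 0.8^{j}/\gamma_0$ holds in every round, clamped or not.
\item So the total work is $O((|V|+|E|)/\gamma_0)$.
\item The clamped rounds add at most $r/\log^2 n=O(1/\log n)$ to $\sum_j\sqrt{\gamma_j}$, because $r\le \log n+1$.
\end{itemize}

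Second, your inductive bound $W(G_j)\le 2W(G)+O(\hat W(G))$ is too lossy. Multiplied by $\sum_j\sqrt{\gamma_j}$, it leaves a $\Theta(\sqrt{\gamma_0})\,\hat W(G)$ term with no $2^{-ct}$ decay. That decay is exactly what the proof of \cref{thm:distance_hitting_set} needs, when these bounds are summed over $\Theta(\log n)$ probability classes. Instead, unroll
\[
W(G_{j+1})\le \bigl(1+C\sqrt{\gamma_j}\bigr)W(G_j)+C\frac{p_j}{\gamma_j}\hat W(G)
\]
using your own estimate $\sum_j p_j/\gamma_j=O(2^{-c't}/\gamma_0)$. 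This gives $W(G_j)\le 2W(G)+O(2^{-c't}/\gamma_0)\,\hat W(G)$, so every $\hat W$ contribution to $d(G,G')$ is $O(2^{-c't}/\gamma_0)\,\hat W(G)$. Taking $c<c'$ and $c_0=c_0(\epsilon)$ large then absorbs the factor $1/\gamma_0$ and the constants into $\epsilon\,2^{-ct}\hat W(G)$, exactly as the paper does.
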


Before the proof, we record a basic relation between $W(\cdot)$ and $d(\cdot,\cdot)$.

\begin{observation}\label{obs:W-lipschitz}
Suppose that $G$ and $G'$ are two instances with the same right part and the same right weights.
Then $|W(G)-W(G')|\le d(G,G')$.
\end{observation}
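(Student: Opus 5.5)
The statement to prove is \cref{obs:W-lipschitz}: $|W(G)-W(G')|\le d(G,G')$. The plan is to expand both quantities right-node by right-node and apply the triangle inequality for absolute values. The key point is that the two instances share the right part $V_R$ and the weights $(w_v)_{v\in V_R}$. So $W(G)$ and $W(G')$ are sums over the same index set with the same positive coefficients, and only the expected hit counts $c_v$ and $c_v'$ differ.

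Concretely, by the definition of probability weight we will write
\[
W(G)-W(G') \;=\; \sum_{v\in V_R} w_v c_v \;-\; \sum_{v\in V_R} w_v c_v' \;=\; \sum_{v\in V_R} w_v\,(c_v-c_v').
\]
Taking absolute values, using the triangle inequality, and using $w_v>0$ to pull the weights outside, we get
\[
|W(G)-W(G')| \;\le\; \sum_{v\in V_R} w_v\,|c_v-c_v'| \;=\; d(G,G'),
\]
where the last equality is the definition of the distribution distance. Since the edge sets and left probabilities of $G$ and $G'$ may differ, it is worth noting that they enter only through the values $c_v$ and $c_v'$. No relation between the two edge sets is needed.

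I do not expect any real obstacle here. The only point to check is that the definition of $d(\cdot,\cdot)$ requires the same vertex sets and right weights, and the hypothesis of the observation provides exactly this. The sums may be infinite only if the instance is infinite, which it is not, so the rearrangement is valid.
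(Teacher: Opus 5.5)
Your proposal is correct and matches the paper's proof: you expand $W(G)-W(G')$ as $\sum_{v\in V_R} w_v(c_v-c_v')$ and apply the triangle inequality, using $w_v>0$, to bound it by $\sum_{v\in V_R} w_v|c_v-c_v'| = d(G,G')$. The remark about infinite sums is unnecessary, since all sums here are finite, but it does no harm.
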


\begin{proof}
Write $W(G)=\sum_{v\in V_R} w_v c_v$ and $W(G')=\sum_{v\in V_R} w_v c_v'$.
Then
\[
|W(G)-W(G')|
=
\left|\sum_{v\in V_R} w_v(c_v-c_v')\right|
\le
\sum_{v\in V_R} w_v|c_v-c_v'|
=
d(G,G').
\qedhere
\]
\end{proof}

\begin{proof}[Proof of \cref{lem:fixed-prob-sampling}]
We invoke \cref{lem:half-sampling} repeatedly. Let
$r:=\left\lceil \frac{t}{2}\right\rceil,\qquad G_0:=G,\qquad p_i:=2^{i-t}\ \text{for } i=0,1,\dots,r.$
Thus $p_0=2^{-t}$, and $p_r$ is between $2^{-t/2}$ and $2^{-t/2+1}$, so after rounding we obtain the desired probability $2^{-\lfloor t/2\rfloor}$.

In round $i$ (from $G_{i-1}$ to $G_i$), \cref{lem:half-sampling} returns a subset of left nodes; we keep only those left nodes, remove isolated vertices, and set the new uniform probability to $p_i=2p_{i-1}$. The resulting instance is $G_i$, and $G_r$ is the desired $G'$.

\paragraph{Distance bound.}
In invocation $i$, choose
$\gamma_i := \max\!\left\{ \eta\epsilon^2\cdot 0.95^{\,i},\; \frac{1}{\log^4 n} \right\},$
where $\eta>0$ is a sufficiently small absolute constant.
For sufficiently large $c_0=c_0(\epsilon)$, all the instances covered by the lemma satisfy
$\gamma_i\le \epsilon_0$ for every $i$, so \cref{lem:half-sampling} applies.

By \cref{lem:half-sampling},
$d(G_{i-1},G_i)\le a\sqrt{\gamma_i}\,W(G_{i-1}) + a\frac{p_{i-1}}{\gamma_i}\hat W(G),$
for some absolute constant $a>0$.
By \cref{obs:W-lipschitz},
$|W(G_i)-W(G_{i-1})| \le d(G_{i-1},G_i),$
and hence
$W(G_i)\le \left(1+a\sqrt{\gamma_i}\right)W(G_{i-1}) + a\frac{p_{i-1}}{\gamma_i}\hat W(G).$

The multiplicative and additive error terms are both geometric. More precisely, let 
\[I:=\max\left\{i: 0.95^{\,i}\ge \frac{1}{\log^4 n}\right\}.\]
Then, we have
\[
\sum_{i=1}^r \sqrt{\gamma_i}
\le
\sqrt{\eta}\,\epsilon\sum_{i=1}^{\min\{r,I\}} 0.95^{\,i/2}
+
\sum_{i>I}\frac{1}{\log^2 n}
=
O(\epsilon),
\]
provided $\eta$ is sufficiently small. Likewise,
\[
\sum_{i=1}^r \frac{p_{i-1}}{\gamma_i}
\le
\frac{1}{\eta\epsilon^2}\sum_{i=1}^r \frac{2^{i-1-t}}{0.95^{\,i}}
=
O(2^{-ct})
\]
for some absolute constant $c>0$, since the sum is dominated by its last term and $r=\lceil t/2\rceil$.

Applying the standard inequality
\[
X_i\le (1+b_i)X_{i-1}+y_i
\quad\Longrightarrow\quad
X_i\le \exp\!\left(\sum_{j\le i} b_j\right)\left(X_0+\sum_{j\le i} y_j\right),
\]
with
$X_i=W(G_i),\qquad b_i=a\sqrt{\gamma_i},\qquad y_i=a\frac{p_{i-1}}{\gamma_i}\hat W(G),$
gives the following for every $i\le r$:
\[W(G_i)\le \exp\!\left(O(\epsilon)\right)\left(W(G)+O(2^{-ct})\hat W(G)\right).\]

Substituting this into the distance bound and summing over $i$, we get
\[
d(G,G')
\le
\sum_{i=1}^r d(G_{i-1},G_i)
\le
a\sum_{i=1}^r \sqrt{\gamma_i}\,W(G_{i-1})
+
a\sum_{i=1}^r \frac{p_{i-1}}{\gamma_i}\hat W(G).
\]
Using the bounds above,
$d(G,G') \le O(\epsilon)\left(W(G)+O(2^{-ct})\hat W(G)\right)+O(2^{-ct})\hat W(G).$
By choosing $\eta$ sufficiently small and then $c_0$ sufficiently large, the hidden constants can be absorbed into the stated form
$d(G,G') \le \epsilon\bigl(W(G)+2^{-ct}\hat W(G)\bigr).$

\paragraph{Edge bound.}
In each round, \cref{lem:half-sampling} guarantees
$|E(G_i)|\le \frac23 |E(G_{i-1})|.$
Therefore
$|E(G')| = |E(G_r)| \le \left(\frac23\right)^r |E(G)| \le \left(\frac23\right)^{t/2}|E(G)|.$

\paragraph{Work and depth bound.}
In round $i$, \cref{lem:half-sampling} uses
$O\!\left(\frac{|V(G_{i-1})|+|E(G_{i-1})|}{\gamma_i}\right)$
work and $\operatorname{poly}(\log n)$ depth.
After removing isolated vertices, we have
$|V(G_{i-1})|=O(|E(G_{i-1})|),$
and by the edge bound proved above,
$|E(G_{i-1})|\le \left(\frac23\right)^{i-1}|E(G)|.$
Therefore, the total work is
\[
\sum_{i=1}^r O\!\left(\frac{|E(G_{i-1})|}{\gamma_i}\right)
\le
O(|E(G)|)\sum_{i=1}^r \frac{(2/3)^{i-1}}{\gamma_i}.
\]

The work sum is geometric as well. For $i\le I$ we use $\gamma_i\ge \eta\epsilon^2\cdot 0.95^{\,i}$, giving
\[
\sum_{i=1}^{I} \frac{(2/3)^{i-1}}{\gamma_i}
\le
\frac{1}{\eta\epsilon^2}\sum_{i=1}^{I}\frac{(2/3)^{i-1}}{0.95^{\,i}}
=
O(1),
\]
since $\frac{2}{3\cdot 0.95}<1$. For $i>I$ we use $\gamma_i\ge 1/\log^4 n$, so
\[
\sum_{i>I}\frac{(2/3)^{i-1}}{\gamma_i}
\le
\log^4 n\sum_{i>I}(2/3)^{i-1}
=
o(1),
\]
because $0.95^I\asymp 1/\log^4 n$ implies $(2/3)^I=O(1/\log^C n)$ for some $C>4$. Therefore
\[
\sum_{i=1}^r \frac{(2/3)^{i-1}}{\gamma_i}=O(1),
\]
and the total work is $O(|E(G)|)$.

Including the initial vertex set contributes an additional $O(|V(G)|)$ term, so the total work is
$O(|V(G)|+|E(G)|).$

The depth is $\operatorname{poly}(\log n)$ per round, and the number of rounds is
$r=\left\lceil \frac{t}{2}\right\rceil \le \log n,$
so the total depth remains $\operatorname{poly}(\log n)$.
\qedhere
\end{proof}

We are now ready to prove the distance version, from which the main hitting-set theorem follows immediately by \cref{obs:hitting-to-distance}.

\begin{proof}[Proof of \cref{thm:distance_hitting_set}]
Fix a constant \(\epsilon>0\). Let \(\epsilon'>0\) be a sufficiently small constant fraction of \(\epsilon\), and let \(c_0\) be the constant given by \cref{lem:fixed-prob-sampling} for the error parameter \(\epsilon'\).

We first apply \cref{lem:graph-splitting} to decompose \(G\) into edge-disjoint subinstances
$G^{(1)},\dots,G^{(\tau)},$
where \(\tau=O(\log n)\), and within each \(G^{(i)}\) all left probabilities are equal.
For each subinstance \(G^{(i)}\), we apply \cref{lem:fixed-prob-sampling} to obtain
an instance \((G^{(i)})'\) on the same bipartite graph and with the same right-node
weights, such that every left probability in \((G^{(i)})'\) is either \(0\) or
\(2^{-\lfloor i/2 \rfloor}\), and
$d\!\left(G^{(i)},(G^{(i)})'\right)\le \epsilon'W\!\left(G^{(i)}\right)+ \epsilon' 2^{-ci} \hat W(G).$

Let \(G^{\ast}\) be the union of the instances \((G^{(i)})'\). This is well defined because
the subinstances \(G^{(i)}\) are edge-disjoint and share the same right-node weights.

Using the subadditivity of \(d(\cdot,\cdot)\) under edge-disjoint union, we obtain
\[
d(G,G^{\ast})
\le
\sum_{i=1}^{\tau} d\!\left(G^{(i)},(G^{(i)})'\right)
\le
\epsilon' \sum_{i=1}^{\tau} W\!\left(G^{(i)}\right)
+
\epsilon' \sum_{i=1}^{\tau} 2^{-ci}\hat W(G).
\]
Since the edge sets are disjoint,
\[
\sum_{i=1}^{\tau} W\!\left(G^{(i)}\right)=W(G),
\]
and since \(\sum_i 2^{-ci}=O(1)\), we get
$d(G,G^{\ast})\le \frac{\epsilon}{4}\bigl(W(G)+\hat W(G)\bigr)$
after choosing \(\epsilon'\) sufficiently small as a constant fraction of \(\epsilon\).

We now continue rounding up the remaining probabilities until all nonzero probabilities equal \(2^{-c_0}\).
At any stage, let \(2^{-u}\) be the smallest nonzero probability class, and let \(H_u\) denote the current subinstance consisting of the left vertices of probability \(2^{-u}\) together with all incident edges.

We apply \cref{lem:half-sampling} to \(H_u\) with
$\gamma_u=\max\!\left\{\frac{1}{\log^4 n},\,\eta\epsilon^2 2^{-cu}\right\},$
where \(\eta>0\) is a sufficiently small absolute constant.
This rounds the class \(2^{-u}\) up to probability \(2^{-u+1}\), after which we merge it into the existing class of that probability.

\paragraph{Distance bound for the finishing phase.}
By \cref{lem:half-sampling}, the distance introduced in the round for class \(u\) is at most
$a\sqrt{\gamma_u}\,W(H_u)+a\frac{2^{-u}}{\gamma_u}\hat W(G)$
for some absolute constant \(a>0\).

Both contributions are geometric. Since $\gamma_u\ge \eta\epsilon^2 2^{-cu}$,
\[
\sum_{u\ge c_0+1}\frac{2^{-u}}{\gamma_u}
\le
\frac{1}{\eta\epsilon^2}\sum_{u\ge c_0+1}2^{-(1-c)u}
=
O(2^{-(1-c)c_0}).
\]
So for sufficiently large $c_0$ the additive term contributes at most
$\frac{\epsilon}{8}\hat W(G).$

For the multiplicative term, we have
$\sqrt{\gamma_u}\le \sqrt{\eta}\,\epsilon\,2^{-cu/2}+\frac{1}{\log^2 n}.$
Also, throughout the finishing phase, the current instance stays within distance
$O\bigl(\epsilon(W(G)+\hat W(G))\bigr)$ of the original instance, so by \cref{obs:W-lipschitz} its total $W$-mass is always $O\bigl(W(G)+\hat W(G)\bigr)$. Therefore
\[
\sum_{u\ge c_0+1}\sqrt{\gamma_u}\,W(H_u)
\le
O\bigl(W(G)+\hat W(G)\bigr)
\left(
\sqrt{\eta}\,\epsilon\sum_{u\ge c_0+1}2^{-cu/2}
+
\sum_{u\ge c_0+1}\frac{1}{\log^2 n}
\right),
\]
and the right-hand side is at most $\frac{\epsilon}{8}\bigl(W(G)+\hat W(G)\bigr)$ once $\eta$ is chosen sufficiently small and then $c_0$ sufficiently large.

Hence, the entire finishing phase contributes distance at most
$\frac{\epsilon}{4}\bigl(W(G)+\hat W(G)\bigr).$

\paragraph{Work bound for the finishing phase.}
After the first stage, the current class of probability \(2^{-u}\) can only come from original classes \(G^{(i)}\) with \(i\ge 2u\). For each such original class, \cref{lem:fixed-prob-sampling} gives an edge bound of the form
$|E((G^{(i)})')|\le \left(\frac23\right)^{i/2}|E(G^{(i)})|.$
Therefore, summing over all \(i\ge 2u\),
$|E(H_u)|\le O(1)\left(\frac23\right)^u |E(G)|.$

The work of the round for class \(u\) is thus
$O\!\left(\frac{|E(H_u)|}{\gamma_u}\right) \le O(|E(G)|)\frac{(2/3)^u}{\gamma_u}.$
Using the definition of \(\gamma_u\), the same geometric-series calculation as in the proof of \cref{lem:fixed-prob-sampling} yields
\[
\sum_{u\ge c_0+1}\frac{(2/3)^u}{\gamma_u}=O(1),
\]
provided $c>0$ is chosen sufficiently small. Hence, the total work of the finishing phase is
$O(|E(G)|),$
and its total depth is \(\operatorname{poly}(\log n)\).

Combining the first stage and the finishing phase, and using the triangle inequality, we obtain
$d(G,G')\le \epsilon\bigl(W(G)+\hat W(G)\bigr).$
This proves \cref{thm:distance_hitting_set}.
\end{proof}

The main hitting-set theorem (\cref{thm:main-hitting-set}) is now immediate from \cref{thm:distance_hitting_set} and \cref{obs:hitting-to-distance}.
\section{$\Delta+1$ Coloring}

\subsection{Definition and Results}

\begin{definition}[$\Delta+1$ coloring]
A \(\Delta+1\) coloring of a simple undirected graph \(G=(V,E)\) is a labeling
\(f:V\to[\Delta+1]\) such that for every edge \(e=(u,v)\in E\), we have
\(f(u)\neq f(v)\), where \(\Delta\) denotes the maximum degree of \(G\).
\end{definition}

There is a trivial sequential greedy algorithm for the above problem. The main
result of this section is a parallel algorithm with the same asymptotic work as
the input size.

\begin{theorem}\label{thm:delta+1_coloring}
Given a simple undirected graph \(G=(V,E)\) with \(|V|\le n\), represented in
compact form and with maximum degree \(\Delta\), we can compute a
\(\Delta+1\) coloring of \(G\) in \(O(|V|+|E|)\) work and \(poly(\log n)\) depth.
\end{theorem}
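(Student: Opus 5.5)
We reduce $(\Delta+1)$-coloring to a more flexible list-coloring problem and derandomize the classic ``random color trial'' with the hitting-set machinery. Concretely, we work with the \emph{$(\deg+1)$-list coloring} variant: each uncolored vertex $v$ keeps a palette $L(v)\subseteq[\Delta+1]$ with $|L(v)|>d_U(v)$, where $d_U(v)$ is its number of uncolored neighbors. This invariant is preserved when we permanently color any independent set of vertices and delete the used colors from their neighbors' palettes. The plan is to repeatedly color a constant fraction (weighted by degree plus one) of the remaining vertices. The charged work per phase should be proportional to the size of the remaining uncolored graph, so that a geometric decay gives $O(|V|+|E|)$ in total. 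Once a vertex is colored we delete it, together with its incident edges, by subgraph selection (\cref{clm:subgraph-splitting}). Deleting colors from the neighbors' palettes costs $O(1)$ per edge, charged to that edge's deletion. We represent each palette implicitly, as $[\Delta+1]$ minus the colors already used by neighbors, rather than storing it, so that we never pay $\Delta$ per vertex.

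\textbf{One phase.} In the randomized algorithm each uncolored vertex picks a uniformly random color from its palette, and keeps it if no neighbor picked the same color. Since $|L(v)|>d_U(v)$, a vertex keeps its color with constant probability, and pairwise analysis suffices. To derandomize, we first handle the degree scales, using the large degree extractor (\cref{big-degree-ds}) to repeatedly extract the vertices of the top degree class. Within a class of degree about $2^k$, each vertex only needs to sample among its first $2d_U(v)+1$ available colors. Finding these by a scan costs $O(d(v))$ work, because at most $d(v)$ of the colors are blocked. We then phrase the trial as a pairwise quadratic objective. For each vertex $v$, the gain is $1$ if it tries some color $c$, minus the sum over neighbors $u$ of the indicator that $u$ also tries $c$. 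The expected value of this objective is $\Omega(1)$ per vertex, and it has exactly the form handled by approximate quadratic optimization (\cref{lem:aqo}). The variables are the (vertex, color) choices, and the auxiliary graph has one edge per pair (edge $uv$, shared color $c$). To keep this graph linear-size, we first reduce the choice of color to a sequence of $\tfrac12$-sampling steps on the color index bits. Concretely, we view choosing a uniformly random color in a list of size $\Theta(d)$ as $O(\log d)$ halvings of the candidate list, performed by \cref{lem:half-sampling}/\cref{lem:fixed-prob-sampling}. The right nodes are the pairs (vertex, color) and the left nodes are the neighbors' candidate copies. Each halving must approximately preserve, weighted by vertex, the expected number of conflicts. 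The hitting-set theorem (\cref{thm:main-hitting-set}) with importance $imp_v=d_U(v)+1$ gives the final selection. At the end, a $0.99$ fraction of the importance consists of vertices whose chosen color has $O(1)$ expected conflicts, and hence a constant fraction is conflict-free. We resolve the remaining conflicts greedily (any conflicting pair drops its color), which keeps an $\Omega(1)$ fraction of $\sum_v (d_U(v)+1)$ colored.

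\textbf{Accounting.} Each phase colors a constant fraction of the degree-weighted mass, so the number of edges incident to uncolored vertices drops by a constant factor per phase. The phase work is $O$(current size), so the total is $O(|V|+|E|)$. We stop once the remaining instance is polylogarithmically small, or has been reduced to $\Delta=\poly(\log n)$. Then we finish by computing a proper $\poly(\log n)$-coloring schedule and applying batched greedy. Concretely, we run the defective coloring with very small $\epsilon$ and recurse on defective edges as in \cref{thm:near-optimal-defective-coloring}, and then color greedily class by class. Depth is $\poly(\log n)$ because there are $O(\log n)$ phases, each of polylogarithmic depth.

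\textbf{Main obstacle.} The hard step is making a single phase truly linear. Palettes of size up to $\Delta$ cannot be materialized, and a naive conflict graph on (vertex, color) pairs has size $\sum_v d(v)\cdot(\text{palette size})$. The first thing I would try is to restrict each vertex to a candidate list of $\Theta(d_U(v))$ colors, chosen from low indices so that neighbors of similar degree share structure. I would then run the color-bit halvings with bundle sizes that decay with the shrinking candidate list, exactly as in the proof of \cref{lem:fixed-prob-sampling}, so that the $1/\gamma$ overheads form a geometric series. A second difficulty is that neighbors in different degree classes interact. There I would rely on the extractor processing the highest class first, and charge cross-class edges to the higher-degree endpoint.
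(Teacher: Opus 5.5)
There is a genuine gap, in the one step the theorem depends on: derandomizing a color trial, or a palette halving, in linear work.

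\textbf{Why the hitting-set route fails.} The hitting-set theorem is the wrong abstraction here. It returns an unconstrained subset of left nodes. It only certifies, for a $0.99$ weighted fraction of right nodes, that a \emph{linear} hit count is within a large constant factor of its expectation. So it cannot enforce that each vertex keeps exactly one color, or exactly one half of its palette block. It also says nothing about the \emph{joint} event that $v$ selects $c$ while a neighbor also selects $c$. The few colors $v$ actually selects may be exactly those $c$ whose right nodes $(v,c)$ are unhappy, because happiness is not tied to $v$'s own choice.

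Controlling conflicts needs a quadratic objective. With variables indexed by (vertex, color) pairs, both that objective and your hitting-set instance have $\Theta(\sum_{uv\in E}|L(u)\cap L(v)|)$ edges, which can be $\Theta(m\,d)$. Routing the halvings through \cref{lem:half-sampling} does not fix this. That lemma keeps an arbitrary half of the left nodes rather than one half of each vertex's palette, and it again controls only an $\ell_1$ distance of hit counts. You flag this size problem as the ``main obstacle'' but leave it open, so the per-phase linear work bound is unsupported.

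Two smaller points. ``$O(1)$ conflicts, hence a constant fraction is conflict-free'' does not follow: every vertex could have exactly one conflict, and then dropping conflicting pairs uncolors everything. You need either total conflict mass below $2|V|$ or an extra independent-set extraction step. Also, your fallback of recursing defective coloring on defective edges cannot give a proper $\poly(\log n)$-coloring. Reaching zero defect needs $r$ rounds with $\eps^r m<1$, hence more than $m$ colors. That step is unnecessary anyway once every degree class is handled.

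\textbf{The missing idea.} Make the palette halvings themselves the derandomized steps, with a single binary variable per vertex. The paper processes dyadic original-degree classes $V_i$ from high to low, and gives all of $V_i$ the common universe $[\min(\Delta,2^i-1)+1]$; your low-index candidate lists are close to this.

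In each halving stage, a vertex in block $V_j$ picks a half of $U_j$ with probability proportional to $|\mathcal L_u\cap U_{2j-1}|$ versus $|\mathcal L_u\cap U_{2j}|$. This keeps $P_i=\sum_j\sum_{u\in V_j}\deg_j(u)/|\mathcal L_u\cap U_j|$ unchanged in expectation. The auxiliary graph of this potential is just the subgraph of $G$ of edges still inside a common block.

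One $(\lceil 2/\eta\rceil,\eta)$ defective coloring with $\eta=1/(20\log n)$, followed by class-by-class conditional expectations, loses only a factor $1+\delta$ per stage. Hence $P_r\le 1.5|V|$, so there are at most $0.75|V|$ monochromatic edges and a properly colored $1/8$ fraction of vertices. Work is linear because the number of same-block incidences at stage $i$ is at most $\max_j|U_j|\cdot P_i=O(L|V|/2^i)$, which decays geometrically. No hitting set, bundling, or (vertex, color) variables are needed.
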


\subsection{Black-Box Reduction through Dyadic Degree Splitting}

A natural generalization of this problem is list coloring, in which each vertex
must choose its color from a prescribed list. This setting arises naturally when
some vertices have already been colored, and we remove those colors from the
candidate lists of their neighbors. In this subsection, we reduce
\(\Delta+1\) coloring to a more restricted partial list-coloring problem, where
it suffices to color only a constant fraction of the vertices.

\begin{definition}[Partial \(\lbrack L+1\rbrack\) list coloring]
We define the partial \([L+1]\) list coloring problem as follows.

Let \(G=(V,E)\) be a compactly represented simple graph with \(|V|\le n\), and
suppose that every vertex has degree at most \(L\). Each vertex is given a
subset of \([L+1]\), represented compactly, as its list of candidate colors. It
is guaranteed that if a vertex has degree \(k\), then its list has size at least
\(k+1\). The goal is to compute a proper coloring of at least \(0.1|V|\) vertices,
using only colors from their lists, in \(O(|V|L)\) work and \(poly(\log n)\) depth.
\end{definition}

\cref{clm:reduction-delta-to-partial-list} reduces \(\Delta+1\) coloring to the
above partial list-coloring problem.

\begin{claim}\label{clm:reduction-delta-to-partial-list}
Suppose there is an algorithm for the partial \([L+1]\) list coloring problem.
Then we can solve the \(\Delta+1\) coloring problem in \(O(|V|+|E|)\) work and
\(poly(\log n)\) depth.
\end{claim}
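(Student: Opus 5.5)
We reduce \(\Delta+1\) coloring to repeated calls of the partial list-coloring oracle on dyadic degree classes. We maintain the standard invariant: each uncolored vertex \(v\) has a palette \(\Psi(v)\subseteq[\Delta+1]\) of colors not used by its colored neighbors, and \(|\Psi(v)|\ge d_U(v)+1\), where \(d_U(v)\) is its number of uncolored neighbors. Initially \(\Psi(v)=[\Delta+1]\), so the invariant holds, and it is preserved whenever we properly color some vertices from their palettes and delete those colors from their uncolored neighbors' palettes. Coloring a vertex \(v\) removes at most one palette color from each uncolored neighbor \(u\) and decreases \(d_U(u)\) by one, so the slack is kept.

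The schedule uses the large-degree extractor of \cref{big-degree-ds} in the undirected form of the subsequent remark, applied to the graph induced by the uncolored vertices.
\begin{itemize}
\item Each query returns the set \(V'\) of uncolored vertices whose uncolored degree lies in the top class \([2^k,2^{k+1})\), together with their neighborhoods.
\item We form the induced graph \(G[V']\), in which every degree is at most \(L:=2^{k+1}\). For each \(v\in V'\) we truncate \(\Psi(v)\) to a list of size \(d_{G[V']}(v)+1\le L+1\). This keeps the oracle's requirement (list size at least degree plus one).
\item The obstacle is that the lists must be subsets of \([L+1]\) rather than \([\Delta+1]\). Since the instance is induced on \(V'\), the oracle only needs distinct colors on adjacent vertices of \(V'\). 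We therefore relabel colors: we sort palette entries (for instance by hashing the colors used into a range of size \(O(|E(G[V'])|+|V'|)\) via a table), so that the truncated lists live in a universe of size \(O(|V'|L)\). If the oracle genuinely needs the universe \([L+1]\), we instead fold colors modulo \(L+1\) and discard collisions inside a list, keeping at least \(d+1\) distinct residues. This is the step I expect to be delicate. My fallback is to assign each \(v\) the first \(d(v)+1\) palette colors and compress them by a per-instance rank map, computed with subgraph-selection prefix sums in \(O(|V'|L)\) work.
\end{itemize}

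Call the oracle, which colors at least \(0.1|V'|\) vertices of \(V'\) properly within \(G[V']\). The newly colored vertices are also consistent with previously colored vertices, because their colors come from their palettes. We then, in parallel, delete the used colors from the palettes of uncolored neighbors. Each newly colored vertex touches its at most \(2^{k+1}\) incident edges, and palettes are kept as arrays indexed by color with deletion flags, with a lazily compacted list. Finally, we issue a batched deletion to the extractor for the colored vertices and repeat until all vertices are colored.

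\textbf{Work accounting.} An oracle call costs \(O(|V'|L)=O(|V'|2^k)\), which is \(O\) of the sum of current degrees in \(V'\) since each has degree at least \(2^k\). A constant fraction of \(V'\) is colored each call, so we charge the cost to the colored vertices. Degree classes can only decrease, and a vertex can remain in the same class for \(O(\log n)\) consecutive calls in the worst case, so we must avoid re-charging survivors. The argument is that at least \(0.1|V'|\) vertices are colored each time, and each has current degree at least \(2^k\), which at most doubles relative to its degree. Hence the cost of a call is at most \(O(1)\) times the total original degree of the vertices it colors, and each vertex is charged once. This gives \(O(|V|+|E|)\) oracle work plus the extractor's \(O(|V|+|E|)\) total from \cref{big-degree-ds}. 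The depth is \(\poly(\log n)\) per iteration. Within a degree class, the number of uncolored vertices in the class drops geometrically, and vertices moving down only enter lower classes, so there are \(O(\log n)\) iterations per class and \(O(\log^2 n)\) overall.
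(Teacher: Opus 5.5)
Your palette invariant, the extractor-based schedule, and charging each oracle call to the vertices it colors are all fine. However, the step you flag as delicate is a genuine gap, and none of your three fixes closes it. The oracle only accepts lists inside $[L+1]$, where $L$ is the degree bound of the instance, and its $O(|V|L)$ work is tied to that universe. Your palettes live in $[\Delta+1]$, and your classes are defined by the \emph{current} uncolored degree, which can be far below $\deg(v)$.

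\begin{itemize}
\item Hashing into a range of size $O(|V'|L)$ does not land in $[L+1]$.
\item Folding modulo $L+1$ does preserve properness, but it cannot guarantee $d+1$ distinct residues. If $\Delta+1=q(L+1)$, the $\deg(v)-d_U(v)$ already-colored neighbors may have used up entire residue classes, leaving only about $(d_U(v)+1)/q$ residues.
\item A single global rank map has as its range the union of all truncated lists, which can be far larger than $L+1$.
\item Per-vertex rank maps destroy properness, because adjacent vertices receiving different labels may still receive the same real color.
\end{itemize}

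Separately, storing each palette as an array indexed by $[\Delta+1]$ already costs $\Theta(n\Delta)$ work, which is not $O(|V|+|E|)$ (consider a star).

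The missing idea is the degree$+1$ observation combined with \emph{static} classes by original degree, which is exactly the paper's route. Put $v$ into $V_i$ when $\deg(v)\in[2^{i-1},2^i)$, and process $i$ from large to small. For each $i$, repeatedly call the oracle on the induced subgraph of the still-uncolored vertices of $V_i$, with $L=\min(\Delta,2^i-1)$ and list $[L+1]$ minus the colors of already-colored neighbors. Since $L\ge\deg(v)$, this list has size at least the induced degree plus one; uncolored neighbors in lower classes only add slack. The list also lies in the required universe by construction, and it costs $O(2^i)=O(\deg(v))$ to build. The geometric shrinkage of the uncolored part of $V_i$ then gives $O(2^i|V_i|)$ work per class and $O(|V|+|E|)$ overall, with no extractor needed.

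Merely restricting your palettes to $[\deg(v)+1]$ would not rescue your dynamic classes. A vertex with large $\deg(v)$ but uncolored degree in $[2^k,2^{k+1})$ may have every color of $[2^{k+1}+1]$ blocked by colored neighbors. So the class index has to be tied to the original degree.
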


\begin{proof}
For each \(i\), let \(V_i\) be the set of vertices whose degree in the original
graph lies in the range \([2^{i-1},2^i)\). Then
\[
m=\Theta\!\left(\sum_i 2^i |V_i|\right).
\]
Using \(\log n\) sorting (\cref{logn-sorting}), we can compute all sets \(V_i\)
in linear work.

We process the indices \(i\) from large to small and color the vertices in
\(V_i\). We do not remove colored vertices from the graph, so the sets \(V_i\)
remain fixed throughout the algorithm. For each \(i\), we repeatedly invoke the
partial \([L+1]\) list-coloring algorithm on the vertices of \(V_i\) that remain
uncolored. The graph passed to the oracle contains only these vertices and the
edges induced by them. We will use $L=\min(\Delta,2^i-1)$. The list of each such vertex is obtained by removing the
colors already used by its colored neighbors from the initial list $[L+1]$. This color-removal step uses the
CW property of the model: each colored neighbor tries to flag its color as
forbidden, and an arbitrary successful write still yields a correct flag.

The total work of that color-removal step for each vertex is bounded by its
original degree, namely \(O(2^i)\). The work for a fixed \(i\) is therefore
\(O(2^i |V_i|)\), since in the \(t\)-th iteration the number of vertices passed
to the oracle is at most \(0.9^t |V_i|\). Summing over all \(i\), the total work
is \(O(m)\).

For a fixed \(i\), the number of calls to the oracle is \(O(\log n)\), since in
each iteration the number of uncolored vertices decreases by a constant factor.
Therefore, the total depth is \(poly(\log n)\).
\end{proof}

\subsection{Partial List Coloring}

In this subsection, we prove \cref{thm:partial_list_coloring}, which solves the
partial \([L+1]\) list coloring problem.

\begin{theorem}\label{thm:partial_list_coloring}
Given an instance of the partial \([L+1]\) list coloring problem, we can solve it
in \(O(|V|L)\) work and \(poly(\log n)\) depth.
\end{theorem}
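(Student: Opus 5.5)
We derandomize the natural one-round random trial: each vertex $v$ picks a color uniformly from its list $\Psi(v)$, with $|\Psi(v)|\ge \deg(v)+1$, and keeps it if no neighbor picked the same color. We cast this as an instance of approximate quadratic optimization (\cref{lem:aqo}) on an explicitly built auxiliary graph of size $O(|V|L)$. The randomized analysis only needs pairwise information. For a single vertex $v$, the expected number of conflicting neighbors is $\sum_{u\in N(v)}|\Psi(u)\cap\Psi(v)|/(|\Psi(u)||\Psi(v)|)\le \sum_{u}1/|\Psi(v)|<1$. This is not quite enough for a constant success probability, so we first trim the lists.

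\textbf{Step 1: trimming and variables.} Each vertex $v$ truncates its list to exactly $\ell_v:=2\deg(v)+2$ colors when possible, or keeps its full list, which has size at least $\deg(v)+1$. Each vertex of degree $0$ is simply colored directly. We introduce Boolean variables $x_{v,c}$ for $c\in\Psi(v)$ with probabilities $p_{v,c}=1/|\Psi(v)|$. The number of variables is $\sum_v|\Psi(v)|=O(|V|L)$, which matches the allowed work.

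\textbf{Step 2: the objective.} We use the potential
\[
\Phi(x)=\sum_{v}\sum_{c}x_{v,c}\;-\;\sum_{v}\sum_{c\neq c'}x_{v,c}x_{v,c'}\;-\;\sum_{\{u,v\}\in E}\sum_{c\in\Psi(u)\cap\Psi(v)}x_{u,c}x_{v,c}.
\]
Its value lower-bounds the number of vertices that pick exactly one color with no conflict. Since $\sum_v(\sum_c x_{v,c}-\binom{\cdot}{2}\text{ terms})\ge [\text{picked}]$ and conflicts are subtracted, the vertices "uniquely happy" number at least $\Phi(x)$.

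Under independent $p$-sampling, $\mathbb{E}[\Phi]\ge |V|-\sum_v\tfrac12-\sum_v \deg(v)/\min|\Psi|$. This is a constant fraction, roughly $|V|/4$, after halving the probabilities to $p_{v,c}=1/(2|\Psi(v)|)$.

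The second sum creates cliques of size $|\Psi(v)|=O(L)$, which gives $O(L^2)$ edges per vertex and is too many. We replace it by the bundled Chebyshev trick of \cref{lem:near-independent-bundling}: split each list into bundles of size $B=O(1)$ and only penalize pairs within a bundle, plus a linear penalty. More simply, we can make each vertex choose with probability $1/2$ a single bundle, and choose at most one color per bundle. Alternatively, we can note that choosing $\ge 2$ colors can be handled by letting $v$ keep any one of its chosen colors that is conflict-free, in which case only the conflict edges matter.

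The conflict edges number $\sum_{\{u,v\}}|\Psi(u)\cap\Psi(v)|\le |E|(L+1)=O(|V|L^2)$. This is again too large, so we use trimmed lists of size $\Theta(\deg)$ and charge per edge $\min(\ell_u,\ell_v)$. That gives $\sum_{uv}\min(\deg u,\deg v)$, still possibly $|V|L^2$ in the worst case. The intended route is therefore a defective-coloring-style two-level scheme: first reduce the palette via \cref{lem:aqo} with groups of colors of size $\Theta(L/\text{const})$, giving each vertex one of $O(1)$ color blocks so that each edge retains expected $O(1)$ shared mass, then recurse geometrically inside blocks. Each level has work $O(|E|)$ times block size decay, summing to $O(|V|L)$.

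\textbf{Step 3: rounding and extraction.} We apply \cref{lem:aqo} with $\varepsilon$ a small constant to the constructed graph to get $x$ with $\Phi(x)\ge \mathbb{E}[\Phi]-\varepsilon\cdot(\text{total weight})\ge 0.1|V|$. Each vertex with exactly one chosen color and no conflicting neighbor adopts that color; this is a proper partial coloring. The vertices are grouped by lists using $\log n$ sorting (\cref{logn-sorting}) and intersections are computed via CRCW flags.

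\textbf{Main obstacle.} The main obstacle is keeping the explicit auxiliary graph, that is, the list intersections across edges, within $O(|V|L)$ rather than $O(|E|L)$ size. I would first try the hierarchical palette halving in Step 2, in the spirit of \cref{lem:fixed-prob-sampling}: at each level, halve each vertex's candidate colors via a \cref{lem:half-sampling}-type step, which preserves the degree-versus-list slack up to a $1\pm O(\sqrt\gamma)$ factor while shrinking the representation geometrically.
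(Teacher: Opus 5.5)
\textbf{Verdict: there is a genuine gap.} Your concrete plan (Steps 1--3) does not go through, and the problem is not only the auxiliary-graph size that you flag.

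\textbf{Why the one-shot rounding fails.} \cref{lem:aqo} pays an additive $\varepsilon\sum_e w_e$ summed over \emph{all} quadratic terms. With your unit weights this is $\varepsilon\bigl(\sum_v\binom{\ell_v}{2}+\sum_{\{u,v\}\in E}|\Psi(u)\cap\Psi(v)|\bigr)$, while $\mathbb{E}[\Phi]=\Theta(|V|)$. Each pair term fires with probability $\Theta(1/(\ell_u\ell_v))$, so when degrees are $\Theta(L)$ the expected penalty is only a $\Theta(1/L^2)$ fraction of the total weight. You would therefore need $\varepsilon=O(1/L^2)$. That means $\Omega(L^2)$ color classes, hence $\Omega(L^2)$ depth, and it lies outside the allowed range $\varepsilon\ge 1/\log^c n$ once $L$ is superpolylogarithmic. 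Rounding the $\Theta(1/\ell)$-probability color variables in one shot is the wrong tool.

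Separately, $\Phi$ as written does not lower-bound the number of happy vertices. Two adjacent vertices that both choose only $c$ contribute $+1$ to $\Phi$, yet neither is happy, so the conflict terms need weight $2$.

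\textbf{The fallback is unspecified.} Hierarchical palette halving is the right direction, but it is where the whole proof lives, and you leave it open. A \cref{lem:half-sampling}-type step selects a subset with probability $1/2$ and controls only a weighted aggregate distance. It neither implements a biased two-way split of each list nor preserves a per-vertex degree-versus-list slack.

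\textbf{The missing idea.} You need a single global potential whose expectation is \emph{exactly} preserved by a biased split, so that no bundling or concentration is needed. The paper does this as follows.
\begin{itemize}
\item At stage $i$ it partitions $[L+1]$ dyadically into $U_1,\dots,U_{2^i}$ and the vertices into parts $V_j$ with $\mathcal L_u\cap U_j\neq\emptyset$.
\item It tracks $P_i=\sum_j\sum_{u\in V_j}\deg_j(u)/|\mathcal L_u\cap U_j|$, with $P_0\le|V|$.
\item Sending $u$ to child half $c$ with probability $s_{u,c}/(s_{u,0}+s_{u,1})$ gives $\mathbb{E}[P_i]=P_{i-1}$ exactly.
\item This is derandomized on the graph $H_{i-1}$ of same-part edges of $G$, weighted by $1/|\mathcal L_u\cap U_j|+1/|\mathcal L_v\cap U_j|$. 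Since $H_{i-1}$ is a subgraph of $G$, no color-level conflict edges ever appear.
\item It takes a $(\lceil 2/\eta\rceil,\eta)$ defective coloring with $\eta=1/(20\log n)$ and fixes classes one by one by conditional expectations, with each $u$ minimizing $A_u(c)+b_t(u)/s_{u,c}$.
\item Because $\mathbb{E}[b_t(u)/s_{u,X_u}]\le 2b_t(u)/(s_{u,0}+s_{u,1})$, defective edges cost at most twice their weight even for very unbalanced splits. A black-box \cref{lem:aqo} call would instead charge the much larger quadratic coefficients $\sum_{c}1/s_{u,c}+\sum_c 1/s_{v,c}$.
\item Hence $P_i\le(1+\frac{1}{10\log n})P_{i-1}$, and $P_r\le 1.5|V|$ after $r=\log(L+1)$ stages.
\item At most $0.75|V|$ edges are then monochromatic, so at least $|V|/4$ vertices have monochromatic degree at most $1$. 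Keeping every vertex of monochromatic degree $0$ and at least half of those of degree $1$ leaves at least $0.125|V|$ properly colored vertices.
\item Work is linear because $|E(H_i)|\le\max_j|U_j|\cdot P_i\le 1.5(L+1)|V|/2^i$. This is a geometric series summing to $O(L|V|)$.
\end{itemize}
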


\paragraph{Method Overview.}
We use a gradual-rounding technique. Although our final algorithm is
deterministic, it is helpful to imagine first choosing, for each vertex, a
random color from its list. The expected number of monochromatic incidences is
upper bounded by the potential
\[
\sum_{u\in V}\frac{\deg(u)}{|\mathcal L_u|}.
\]

Our algorithm proceeds in \(O(\log n)\) stages. In each stage, we split the
current universe of candidate colors into two halves, and each vertex chooses
one of them. If the choice were made randomly, with probabilities proportional
to the intersection sizes, then the expected potential would remain unchanged.
We use defective coloring to derandomize this step while ensuring that the
potential increases by at most a sufficiently small multiplicative factor. After
enough stages, every set \(\mathcal L_u\) has been reduced to a singleton, while
the potential remains \(O(|V|)\). This implies that a constant fraction of the
vertices can be kept with a proper coloring.

\begin{proof}
We first construct a coloring with at most \(0.75|V|\) monochromatic edges, and
then extract the desired subset of properly colored vertices.

\paragraph{Constructing an almost-proper coloring.}
We may assume without loss of generality that \(L+1\) is a power of \(2\), and
write \(r=\log(L+1)\). The algorithm proceeds in \(r\) stages. At stage \(i\),
the universe \([L+1]\) is partitioned into \(2^i\) subsets of equal size
\(U_1,U_2,\dots,U_{2^i}\), and the vertex set \(V\) is partitioned into
\(V_1,V_2,\dots,V_{2^i}\). The invariant is that for every \(u\in V_j\), we have
\(\mathcal L_u\cap U_j\neq\emptyset\).

We define the potential
\[
P_i=\sum_{j=1}^{2^i}\sum_{u\in V_j}\frac{\deg_j(u)}{|\mathcal L_u\cap U_j|},
\]
where \(\deg_j(u)\) denotes the number of neighbors of \(u\) inside \(V_j\).
Clearly,
\[
P_0=\sum_{u\in V}\frac{\deg(u)}{|\mathcal L_u|}\le |V|,
\]
since \(|\mathcal L_u|\ge \deg(u)+1\) for every vertex \(u\). Let
\(\delta=\frac{1}{10\log n}\). We will ensure that \(P_i\le (1+\delta)P_{i-1}\)
for every \(i\).

Now consider stage \(i\ge 1\). For each \(j\), the set \(U_j\) from stage \(i-1\)
is split into two equal parts, which we denote by \(U_{2j-1}\) and \(U_{2j}\).
Correspondingly, we split \(V_j\) into two parts \(V_{2j-1}\) and \(V_{2j}\).

For a vertex \(u\in V_j\), define \(s_{u,0}=|\mathcal L_u\cap U_{2j-1}|\) and
\(s_{u,1}=|\mathcal L_u\cap U_{2j}|\). A randomized version of the algorithm
would place \(u\) into \(V_{2j-1}\) with probability
\(\frac{s_{u,0}}{s_{u,0}+s_{u,1}}\) and into \(V_{2j}\) with probability
\(\frac{s_{u,1}}{s_{u,0}+s_{u,1}}\).

We claim that under this distribution, \(\mathbb E[P_i]=P_{i-1}\). Fix a vertex
\(u\), and fix a neighbor \(v\) of \(u\) that lies in the same part \(V_j\) as
\(u\) at stage \(i-1\). Their contribution to \(P_{i-1}\) coming from the
endpoint \(u\) is \(1/(s_{u,0}+s_{u,1}) = 1/|\mathcal L_u\cap U_j|\). Now
condition on the event that \(v\) is placed into \(V_{2j-1}\). Then \(u\) is
also placed into \(V_{2j-1}\) with probability
\(\frac{s_{u,0}}{s_{u,0}+s_{u,1}}\), and in that case its contribution to \(P_i\)
is \(1/s_{u,0}\). Therefore the expected contribution from \(u\) is
\( \frac{1}{s_{u,0}} \cdot \frac{s_{u,0}}{s_{u,0}+s_{u,1}}
= \frac{1}{s_{u,0}+s_{u,1}} \), which is exactly its contribution to
\(P_{i-1}\). The same argument applies if \(v\) is placed into \(V_{2j}\).
Summing over all ordered pairs \((u,v)\) with \(uv\in E\) and using linearity of
expectation proves the claim.

It is convenient to rewrite the potential as
\[
P_i
=
\sum_{j=1}^{2^i}\ \sum_{e=(u,v)\in E[V_j]}
\left(
\frac{1}{|\mathcal L_u\cap U_j|}
+
\frac{1}{|\mathcal L_v\cap U_j|}
\right),
\]
where \(E[V_j]\) denotes the set of edges with both endpoints in \(V_j\).

To derandomize the \(i\)-th stage, we define an auxiliary weighted graph
\(H_{i-1}\) on the current vertex set \(V\). For every edge \(e=(u,v)\) with
both endpoints in the same current part \(V_j\) at stage \(i-1\), we include
the edge \(e\) in \(H_{i-1}\) with weight
\(w_e=\frac{1}{|\mathcal L_u\cap U_j|}+\frac{1}{|\mathcal L_v\cap U_j|}\). Then
the total edge weight of \(H_{i-1}\) is exactly \(P_{i-1}\).

Let \(\eta=\delta/2\). We run the
\(\left(\left\lceil \frac{2}{\eta}\right\rceil,\eta\right)\)
defective-coloring algorithm on \(H_{i-1}\). This produces
\(q=\left\lceil \frac{2}{\eta}\right\rceil
=O\!\left(\frac{1}{\delta}\right)=O(\log n)\) color classes
\(C_1,\dots,C_q\), such that the total weight of edges with both endpoints in
the same class is at most \(\eta P_{i-1}\). Call these edges \emph{bad}, and
call all other edges \emph{good}.

We now process the classes \(C_1,\dots,C_q\) one by one. When processing a class
\(C_t\), the choices made for earlier classes are already fixed, while the
choices for later classes are still viewed as random according to the above
distribution. For a vertex \(u\in C_t\) and a side \(c\in\{0,1\}\), let
\(A_u(c)\) denote the conditional expected contribution, from the \(u\)-endpoint,
of all good edges incident to \(u\), if we place \(u\) into the \(c\)-th child.
Also, let \(b_t(u)\) be the number of bad neighbors of \(u\) inside \(C_t\).

If we place \(u\) into side \(c\), then the total contribution of bad edges from
the \(u\)-endpoint is at most \(b_t(u)/s_{u,c}\). Indeed, each bad neighbor can
contribute at most \(1/s_{u,c}\) from the \(u\)-endpoint. Therefore, when
processing \(C_t\), we let each vertex \(u\in C_t\) choose the side
\(c\in\{0,1\}\) that minimizes \(A_u(c)+b_t(u)/s_{u,c}\).

These choices can be made in parallel within \(C_t\), because no good edge has
both endpoints inside \(C_t\): any such edge would have been bad by definition.
Hence, the quantities above are independent across the vertices of \(C_t\).

Now compare this deterministic choice with the randomized one. If \(X_u\) is the
random side chosen for \(u\), then \(\mathbb E\!\left[\frac{b_t(u)}{s_{u,X_u}}\right]
=
\frac{s_{u,0}}{s_{u,0}+s_{u,1}}\cdot \frac{b_t(u)}{s_{u,0}}
+
\frac{s_{u,1}}{s_{u,0}+s_{u,1}}\cdot \frac{b_t(u)}{s_{u,1}}
\le
\frac{2b_t(u)}{s_{u,0}+s_{u,1}}\). Therefore, the minimizing deterministic choice satisfies
\(A_u(c_u)+\frac{b_t(u)}{s_{u,c_u}}
\le \mathbb E[A_u(X_u)] + \frac{2b_t(u)}{s_{u,0}+s_{u,1}}\).

Summing over all \(u\in C_t\), we obtain that after fixing the choices of
vertices in \(C_t\), the sum of the conditional expected contribution of good
edges and the actual contribution of bad edges inside \(C_t\) increases by at
most
\[
2\sum_{u\in C_t}\frac{b_t(u)}{s_{u,0}+s_{u,1}}.
\]
Summing this over all classes \(C_t\), the additive loss is at most twice the
total weight of bad edges in \(H_{i-1}\), and hence at most
\(\delta P_{i-1}\).

Initially, the expected contribution of the good edges is at most
\(\mathbb E[P_i]=P_{i-1}\). After all classes have been processed, all random
choices have been fixed, so the conditional expectation becomes the actual
contribution of the good edges. Adding the contribution of the bad edges gives
\(P_i\le P_{i-1}+\delta P_{i-1}=(1+\delta)P_{i-1}\), as required.

\paragraph{Finishing the argument.}
After all \(r\) stages, we have \(2^r=L+1\) groups, each consisting of a single
color. Thus, the process defines a coloring of the graph. Since each vertex only
restricts its list and never introduces new colors, the final color assigned to
each vertex indeed belongs to its original list.

At this point, \(P_r\) is exactly the sum, over all vertices, of their
monochromatic degree in the final coloring. Therefore \(P_r\) is twice the number
of monochromatic edges. Since \(P_0\le |V|\), we obtain
\(P_r \le |V|(1+\delta)^r \le 1.5|V|\). Hence, the number of monochromatic edges
is at most \(P_r/2\le 0.75|V|\).

It follows that at most \(0.75|V|\) vertices can have monochromatic degree at
least \(2\); otherwise the sum of monochromatic degrees would exceed \(1.5|V|\).
Therefore at least \(0.25|V|\) vertices have monochromatic degree at most \(1\).

We keep every vertex of monochromatic degree \(0\). Among the vertices of
monochromatic degree \(1\), each connected component is a single edge, so we keep
exactly one endpoint of each such edge, say the endpoint with smaller ID. In this
way we retain at least half of the vertices of monochromatic degree at most \(1\),
and thus at least \(0.125|V|\) vertices. The retained vertices are properly
colored, as required.

\paragraph{Work and depth bounds.}
The depth bound is immediate from the fact that there are \(O(\log n)\) stages,
each of which has \(poly(\log n)\) depth.

For the work bound, the only nontrivial part is the gradual-rounding step.
Observe that \(P_i\le 1.5|V|\) for every stage \(i\). Therefore, the total number
of edge incidences in the graphs passed to the defective-coloring procedure at
stage \(i\) is bounded by
\[
\sum_{j=1}^{2^i}\sum_{u\in V_j}\deg_j(u)
\le
\max_{j,u} |\mathcal L_u\cap U_j|\cdot P_i
\le
\max_j |U_j|\cdot P_i
\le
1.5\cdot \frac{L+1}{2^i}\,|V|.
\]
Thus, the total work over all stages is
\[
\sum_{i=0}^{r} O\!\left(\frac{L}{2^i}|V|\right)=O(L|V|),
\]
as desired.
\end{proof}

\section{Maximal Matching}
\subsection{Definition and Results}

\begin{definition}[Maximal matching]
A \emph{maximal matching} of an undirected graph \(G=(V,E)\) is a subset
\(M\subseteq E\) such that
\begin{itemize}
    \item no two edges in \(M\) share an endpoint; and
    \item every edge in \(E\setminus M\) shares an endpoint with some edge in \(M\).
\end{itemize}
\end{definition}

In this section, we solve the maximal matching problem in linear work.

\begin{theorem}\label{thm:maximal_matching}
There exists a deterministic parallel algorithm that, given a simple undirected
graph \(G=(V,E)\) in compact representation, computes a maximal matching of \(G\)
in \(O(|V|+|E|)\) work and \(poly(\log n)\) depth.
\end{theorem}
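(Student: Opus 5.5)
We will follow the classic Israeli--Itai / Luby-style approach of repeatedly matching a constant fraction of the edges incident to the highest-degree class, derandomizing each step with the hitting-set algorithm (\cref{thm:main-hitting-set}). Maintenance of the high-degree class is done with the large degree extractor (\cref{big-degree-ds}), so that the work summed over all iterations is linear.

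\textbf{Outer loop.} We convert $G$ into the bipartite double cover used in the remark after \cref{big-degree-ds}. We then iterate. In each iteration we query the data structure for the current top degree class: the set $V'$ of vertices with degree in $[2^k,2^{k+1})$, where $2^k\le\Delta<2^{k+1}$, together with all incident edges $E'$. We compute a matching $M'$ inside this subgraph such that the vertices of $V'$ incident to $M'$ carry at least a constant fraction of $|E'|$, measured by degree. We then add $M'$ to the output, delete all matched vertices (a batched update), and repeat. Deleting matched endpoints removes every edge incident to them. Since each $u\in V'$ has degree $\Theta(2^k)$, removing a constant fraction of the $V'$-degree mass removes $\Omega(|E'|)$ edges, which pays for the $O(|V'|+|E'|)$ work of the iteration. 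The redundancy terms sum to $O(n+m)$ by \cref{big-degree-ds}. Maximality holds because we stop only when no edges remain, and every deleted edge has a matched endpoint. For depth, we argue that a fixed degree class $2^k$ stays the top class for only $O(\log n)$ iterations: each iteration removes a constant fraction of the degree mass of that class, weighted appropriately. Hence there are $O(\log^2 n)$ iterations in total, each of $\poly(\log n)$ depth.

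\textbf{One iteration via hitting set.} Inside the extracted subgraph, each $u\in V'$ proposes along its incident edges. We build a pre-hitting-set instance in which the left side consists of the edges $e=(u,v)$ with $u\in V'$, each with probability $p_e\approx 2^{-k}$ rounded to a power of two. The right side consists of the endpoints, where each vertex $x$ is adjacent to its incident proposal edges, and has weight equal to its degree. For $u\in V'$ the expected count $c_u=\Theta(1)$. For other endpoints $x$, $c_x\le\deg(x)/2^k\le O(1)$, because $2^{k+1}$ bounds the maximum degree.

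\cref{thm:main-hitting-set} gives a set $H$ of edges such that $(1-\epsilon)$ of the weight is happy. A happy $x$ with $c_x=O(1)$ has $|H\cap N(x)|=O(1)$, and a happy $u\in V'$ with $c_u\ge 1$ has at least one edge of $H$. We then restrict to the happy part, which has constant degree. On this bounded-degree graph we compute a matching that covers a constant fraction of the $V'$-nodes. To do this, we drop edges touching unhappy vertices and give each happy $u\in V'$ one chosen edge. This yields a graph of maximum degree $O(1)$, on which a maximal matching can be found in $O(|H|)$ work and $O(\log^* n)$ depth, using Cole--Vishkin style coloring with $O(1)$ colors followed by greedy matching per color class. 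A maximal matching in such a graph covers a constant fraction of its vertices.

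\textbf{Main obstacle.} The main obstacle is the accounting. We must show that the matched $V'$-vertices carry a constant fraction of $|E'|$ even though unhappy weight and conflicts are discarded. We also need the hitting-set instance to be built in $O(|V'|+|E'|)$ work despite the CW-coordination of edge endpoints.

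For the accounting, I would set the right weights to $\deg(u)$ for $u\in V'$ and to $0$ for other endpoints, and additionally require those endpoints to have bounded hits. This can be done by running the hitting set a second time with the roles swapped, or by truncating the edges of high-hit endpoints. In the truncation variant, each surviving $u$ loses its chosen edge with probability bounded away from $1$ by averaging, so a constant fraction of the $\deg$-weight of $V'$ stays matched.
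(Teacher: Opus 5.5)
\textbf{Verdict: there is a genuine gap in the single iteration.} Your outer loop is exactly the paper's: extract the top degree class with \cref{big-degree-ds}, find a matching whose $V'$-endpoints carry $\Omega(|E'|)$ degree, delete the matched vertices, charge the round to the removed edges, and bound the number of rounds by $O(\log^2 n)$.

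\textbf{Where it fails.} \cref{thm:main-hitting-set} only guarantees that a $(1-\epsilon)$ fraction of the right \emph{weight} is happy. It says nothing about how many selected edges land on an unhappy endpoint, or on an endpoint of weight $0$. Such an $x$ can receive up to $\deg(x)\approx 2^{k+1}$ selected edges, each possibly the only selected edge of a different $u\in V'$. With weights $\deg(\cdot)$, the damage $x$ causes therefore exceeds its weight by a factor $\Theta(2^k)$. With weight $0$, it is not charged at all.

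\textbf{Concrete failure.} Take $G$ bipartite, with $V'$ on one side and all other vertices of degree $<2^k$. The guarantee of \cref{thm:main-hitting-set} is met by an $H$ with the following properties. Every $u\in V'$ is hit exactly once. All hits land on a set $B_1$ of $O(|V'|/2^k)$ endpoints. $B_1$ is unhappy, but it carries only an $O(2^{-k})$ fraction of the weight. Every other endpoint has $c_x<1$ and $h_x=0$, so it is happy. Every matching inside $H$ then has size at most $|B_1|=o(|V'|)$.

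Consequently, no post-processing of $H$ can recover a constant fraction of $V'$. This includes dropping edges at unhappy vertices, truncating high-hit endpoints, and running a second hitting-set pass. Your sentence ``loses its chosen edge with probability bounded away from $1$ by averaging'' has no content here. The selection is deterministic, and the theorem's guarantee is linear in the weights; it is not a bound on collisions. (A minor point: Cole--Vishkin-type color reduction costs $O(|H|\log^* n)$ work, not $O(|H|)$. Use \cref{thm:delta+1_coloring} instead.)

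\textbf{What is missing.} You need a guarantee that pays for \emph{every} excess hit at \emph{every} endpoint. The paper obtains this by not calling \cref{thm:main-hitting-set} as a black box.
\begin{itemize}
\item \cref{lem:strict-half-sampling-edges} performs one step of \cref{lem:half-sampling}. The edges are on the left, and only the vertices $T$ of degree at least $1/p$ are on the right.
\item It then deletes the overflow at each $v\in T$. The number of deleted edges is at most $\frac{1}{2p}d(B,B')+\frac{|T|}{2}$. This is linear in the deviation of every node of $T$. The additive term is harmless because $\hat W(B)=|T|\le W(B)$.
\item \cref{lem:fixed-prob-strict-sampling} iterates this halving about $\log d$ times with a geometric $\gamma_i$ schedule. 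This maintains the hard invariant $\Delta(G_i)\le 2/p_i$. Vertices of degree below $1/p_i$ satisfy it for free, which a one-shot rounding cannot provide.
\item Throughout, $\Omega(K)$ edges incident to high-degree vertices are retained. The largest color class of the $O(1)$-degree conflict graph is then the matching.
\end{itemize}

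\textbf{A tempting patch that does not give linear work.} You could add a collision penalty over pairs of selected edges sharing an endpoint, in the style of \cref{lem:strong-hitting-set}. That would fix correctness. But its auxiliary graph naively has $\Theta(|E'|2^k)$ edges, so the work bound is no longer linear.
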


\subsection{Somehow Maximal Matching}

\paragraph{Overview.}
In this section, we prove a key progress lemma toward maximal matching.
Let \(H\) be the set of vertices of degree at least \(d/2\), and let \(K:=|H|\).
We first restrict the graph to the edge set \(E_H:=\{e\in E:\text{$e$ is incident to at least one vertex of }H\}\).
Since every vertex of \(H\) has degree at least \(d/2\) and every edge of \(E_H\) is counted at
most twice when summing the degrees of vertices in \(H\), we have \(\frac{Kd}{4}\le |E_H|\le Kd\).

We now repeatedly perform a \emph{strict \(1/2\)-sampling} step on \((V,E_H)\).
In round \(i\), every surviving edge has current probability \(p_i\), and after the step the
surviving edges have probability \(2p_i\).
The step has two properties.
First, it keeps about half of the current edges.
Second, every vertex whose current degree is at least \(1/p_i\) has its degree reduced to at most
half.
Thus, after iterating this process until the probability becomes a fixed constant, the remaining
graph has maximum degree \(O(1)\), while still containing \(\Omega(K)\) edges.

Finally, we form the conflict graph whose vertices are the surviving edges and where two such
vertices are adjacent iff the corresponding edges in the original graph share an endpoint.
Because the surviving graph has maximum degree \(O(1)\), this conflict graph also has maximum
degree \(O(1)\).
We color it with \(O(1)\) colors and choose the largest color class.
This yields a matching of size \(\Omega(K)\).
Since every surviving edge is incident to a vertex of degree at least \(d/2\), deleting the
endpoints of this matching removes \(\Omega(Kd)\) edges from the original graph.

\begin{lemma}[Strict $\tfrac12$-sampling for edges]\label{lem:strict-half-sampling-edges}
There is an absolute constant \(C>0\) such that the following holds.

Let \(G=(V,E)\) be a simple graph represented in compact form, with \(|V|\le n\).
Let \(p\in(0,\tfrac14)\), and let \(\gamma\in[1/\log^4 n,\,0.01]\).
Then there is a deterministic parallel algorithm that computes a subgraph \(G'=(V,E')\), where \(E'\subseteq E\), in \(O((|V|+|E|)/\gamma)\) work and \(poly(\log n)\) depth such that:
\begin{enumerate}
    \item \(\left(\frac12-C\!\left(\sqrt\gamma+\frac{p}{\gamma}\right)\right)|E| \le |E'| \le \left(\frac12+\sqrt\gamma\right)|E|\).
    \item For every vertex \(v\in V\) with \(\deg_G(v)\ge 1/p\), we have \(\deg_{G'}(v)\le \frac12\,\deg_G(v)\).
\end{enumerate}
\end{lemma}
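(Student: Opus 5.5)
The plan is to mimic the proof of \cref{lem:half-sampling}, with the roles of left nodes and edges exchanged. We first get an \emph{approximately} balanced edge sample via \cref{lem:aqo} with Chebyshev-type bundle constraints. Afterwards, a deterministic truncation step enforces the strict degree property 2. The point is that the truncation only deletes an amount of edges equal to the total positive deviation at high-degree vertices, and this deviation is exactly what the quadratic objective controls.

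\textbf{Step 1 (bundles and the objective).} Let $B:=\lceil 1/\gamma\rceil$ and let $H:=\{v:\deg_G(v)\ge 1/p\}$ be the set of high-degree vertices. Two families of bundles are used.
\begin{enumerate}
\item For every $v\in H$, cut its adjacency array into consecutive chunks of exactly $B$ incident edges, leaving fewer than $B$ leftover edges $R_v$. These are the \emph{vertex bundles} $S_1,\dots,S_k$.
\item Cut the global edge list into consecutive groups of exactly $B$ edges (plus fewer than $B$ leftovers, which we simply drop, at cost $O(B)\le O(\gamma^{-1})$ edges). These are the \emph{edge bundles} $T_1,\dots,T_\ell$.
\end{enumerate}
Each edge $e$ gets a variable $z_e\in\{0,1\}$ with $p_e=\tfrac12$. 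The objective is
\[
Q=\frac{1}{Bk}\sum_i\Bigl(x_i-\tfrac B2\Bigr)^2+\frac{1}{B\ell}\sum_j\Bigl(y_j-\tfrac B2\Bigr)^2,
\]
where $x_i$ and $y_j$ count the selected edges in the respective bundles. Then $\mathbb E[Q]=O(1)$. Expanding $Q$ gives an auxiliary graph on $E$ whose edges are pairs of edges co-occurring in a bundle. There are $O(|E|B)$ such pairs (each edge lies in at most three bundles), so the graph has total weight $O(1)$.

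The bundles are contiguous chunks of adjacency arrays or of the edge list. Hence the auxiliary graph can be written out directly, in compact form, in $O(|E|B)$ work using only prefix sums, with no sorting. We need global edge identifiers consistent at both endpoints, and these are provided by the communication property of the compact representation. Applying \cref{lem:aqo} with $\varepsilon=1/B\ge 1/\log^4 n$ yields a selection $E_0$ with $Q(E_0)=O(1)$, in $O(|E|/\gamma)$ work.

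\textbf{Step 2 (size bounds before truncation).} As in the edge-reduction part of \cref{lem:half-sampling}, Cauchy--Schwarz on the second term of $Q$ gives
\[
\bigl||E_0|-\tfrac{|E|}{2}\bigr|\le O(|E|/\sqrt B)+O(B).
\]
Choosing the constant in $B$ appropriately, the upper bound $|E_0|\le(\tfrac12+\sqrt\gamma)|E|$ follows, and truncation only decreases $|E_0|$.

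For the vertex side, define the excess $\mathrm{ex}(v):=\bigl(\deg_{E_0}(v)-\tfrac12\deg_G(v)\bigr)_+$. For $v\in H$ we have
\[
\mathrm{ex}(v)\le\sum_{S_i\subseteq N(v)}\bigl|x_i-\tfrac B2\bigr|+|R_v|.
\]
Summing over $v$ and applying Cauchy--Schwarz with the first term of $Q$, the bundle part of the sum is at most $\sqrt{k\cdot O(Bk)}=O(k\sqrt B)=O(|E|/\sqrt B)=O(\sqrt\gamma|E|)$. The leftover part is at most $B|H|\le B\cdot 2p|E|=O(p/\gamma)|E|$. This is precisely where the $p/\gamma$ term in the statement comes from.

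\textbf{Step 3 (truncation).} Every $v\in H$ with $\deg_{E_0}(v)>\tfrac12\deg_G(v)$ marks its first $\lceil\mathrm{ex}(v)\rceil$ kept incident edges, found by a prefix sum over its adjacency array, as deleted. Let $E'$ be $E_0$ minus all marked edges, extracted with \cref{clm:subgraph-splitting}.

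Deletions only lower degrees, so marks placed concurrently by different vertices never conflict. Consequently every $v\in H$ ends with $\deg_{G'}(v)\le\tfrac12\deg_G(v)$, which is property 2. The number of deleted edges is at most $\sum_v\lceil\mathrm{ex}(v)\rceil\le C(\sqrt\gamma+p/\gamma)|E|$; the rounding adds at most $|H|\le 2p|E|$. This gives the lower bound in property 1.

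\textbf{Main obstacle.} I expect the delicate step to be Step 1: building the auxiliary graph explicitly in $O(|E|/\gamma)$ work, as the remark after \cref{lem:aqo} requires. The vertex bundles live in adjacency arrays, while the variables are edges, so the two endpoint copies of each edge must resolve to a single identifier. I would first try using the stored position of $u$ in $v$'s list to let the smaller-ID endpoint own the edge's index. A secondary concern is the exact constant $1$ in front of $\sqrt\gamma$ in the upper bound, which must be absorbed by choosing $B$ as a suitable constant multiple of $1/\gamma$ and using that $\gamma\le 0.01$.
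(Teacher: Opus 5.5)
Your proposal is correct and is, in substance, the paper's proof with \cref{lem:half-sampling} unrolled.

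The paper's proof runs as follows.
\begin{itemize}
\item It builds a bipartite instance with one left node per edge of $G$ (probability $p$) and the vertices of $T=\{v:\deg_G(v)\ge 1/p\}$ on the right (weight $1$).
\item It calls \cref{lem:half-sampling} as a black box.
\item It reads $\sum_{v\in T}|\deg_G(v)-2\deg_{G_1}(v)|$ off the distance bound, via $c_v=p\deg_G(v)$, $c'_v=2p\deg_{G_1}(v)$, $W=\sum_{v\in T}p\deg_G(v)$ and $\hat W=|T|\le W$.
\item It deletes $\max\{0,\deg_{G_1}(v)-\lfloor\deg_G(v)/2\rfloor\}$ edges at each $v\in T$, which is exactly your truncation.
\end{itemize}
Your vertex bundles are the per-right-node groups inside that lemma.

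The real difference is how the global edge count is controlled. The paper takes it from the edge-count guarantee of \cref{lem:half-sampling}. That guarantee bounds incidences of the auxiliary bipartite graph, not selected edges: a selected edge counts once per endpoint in $T$, and edges with no endpoint in $T$ are not controlled at all. So the paper glosses over the step from that guarantee to a bound on $|E_1|$. Your contiguous bundles of the global edge list bound $|E_0|$ directly. Scaling $B$ by a constant also gives you the literal coefficient $1$ on $\sqrt\gamma$, which the black-box call only yields up to a constant. The paper's route, in exchange, is shorter and reuses the distance analysis verbatim.

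Two small repairs are needed.
\begin{itemize}
\item The auxiliary graph has total weight $\Theta(B)$, not $O(1)$: each of the $k$ vertex bundles carries $\Theta(B^2)$ pair weight scaled by $1/(Bk)$, and similarly for the edge bundles. Your choice $\varepsilon=1/B$ in \cref{lem:aqo} already keeps the additive loss at $O(1)$, so $Q(E_0)=O(1)$ still stands.
\item The additive $O(B)$ from dropping the leftover edges of the global list is absorbed into $C(p/\gamma)|E|$ only when $|E|\ge 1/p$. This holds automatically if $H\neq\emptyset$, since $|E|\ge\deg_G(v)\ge 1/p$ for any $v\in H$, and you should say so. But if $H=\emptyset$ and $|E|<B$, your procedure returns $E'=\emptyset$ and violates the lower bound. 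Either treat $H=\emptyset$ separately as the paper does (property 2 is then vacuous, so keep an arbitrary half of the edges), or select half of the leftover edges instead of dropping them.
\end{itemize}
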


\begin{proof}
Let \(T:=\{v\in V:\deg_G(v)\ge 1/p\}\). If \(T=\emptyset\), then item \(2\) is vacuous, and we may simply keep any \(\lfloor |E|/2\rfloor\) edges. Thus assume \(T\neq\emptyset\), so \(|E|\ge 1/p\).

We build a bipartite instance \(B=(V_L\cup V_R,F)\) as follows.
\begin{itemize}
    \item The left side \(V_L\) consists of the edges of \(G\), one left node for each \(e\in E\).
    \item The right side \(V_R\) consists of the vertices in \(T\).
    \item A left node corresponding to an edge \(e=\{u,v\}\) is adjacent in \(B\) to each endpoint
    that lies in \(T\). Thus every left node has degree \(0\), \(1\), or \(2\) in \(B\).
    \item Every left node has probability \(p\).
    \item Every right node has weight \(1\).
\end{itemize}

Let \(S\subseteq V_L\) be the output of \cref{lem:half-sampling}, let \(E_1:=\{e\in E:\text{the left node corresponding to }e\text{ lies in }S\}\), and write \(G_1:=(V,E_1)\).
Since left nodes of \(B\) are in bijection with edges of \(G\), the edge-count guarantee of
\cref{lem:half-sampling} gives \(\left(\frac12-\sqrt\gamma\right)|E|-\frac{2C_0}{\gamma}\le |E_1|\le \left(\frac12+\sqrt\gamma\right)|E|\), where \(C_0\) is the constant in \cref{lem:half-sampling} and we used \(\Delta=2\) on the left side of the bipartite graph.
Because \(T\neq\emptyset\), we have \(|E|\ge 1/p\), and hence \(1/\gamma=O(p|E|)\).

We now repair \(G_1\) so that all vertices of \(T\) satisfy the desired degree bound.
For each \(v\in T\), define its overflow by \(o(v):=\max\{0,\deg_{G_1}(v)-\lfloor \deg_G(v)/2\rfloor\}\).
Using the compact representation of \(G_1\), for each \(v\in T\) we scan the adjacency list of
\(v\) and mark an arbitrary set of exactly \(o(v)\) incident edges for deletion.
Let \(M\subseteq E_1\) be the union of all marked edges, and define \(E':=E_1\setminus M\) and \(G':=(V,E')\).

Since an edge is deleted if it is marked by at least one endpoint, the arbitrary-write CRCW model
is sufficient for the marking step.
By construction, for every \(v\in T\), \(\deg_{G'}(v)\le \lfloor\deg_G(v)/2\rfloor\le \frac12\,\deg_G(v)\), which proves item \(2\).

It remains to bound \(|M|\).
Fix \(v\in T\), and let \(d_v:=\deg_G(v)\) and \(d_v':=\deg_{G_1}(v)\).
If \(o(v)>0\), then \(d_v'>\lfloor d_v/2\rfloor\), and a direct check gives \(o(v)\le \frac12|d_v-2d_v'|+\frac12\).
Hence, 
\[
\sum_{v\in T} o(v)
\le
\frac12\sum_{v\in T}\bigl|\deg_G(v)-2\deg_{G_1}(v)\bigr|
+\frac{|T|}{2}.
\]
Since every edge in \(M\) is counted in the overflow of at least one endpoint, \(|M|\le \sum_{v\in T}o(v)\).

Now apply the distance bound from \cref{lem:half-sampling}.
For every \(v\in T\), we have \(c_v=p\,\deg_G(v)\) and \(c_v'=2p\,\deg_{G_1}(v)\), so \(|c_v-c_v'|=p\,|\deg_G(v)-2\deg_{G_1}(v)|\).
Therefore
\[
p\sum_{v\in T}\bigl|\deg_G(v)-2\deg_{G_1}(v)\bigr|
\le
\left(2\sqrt\gamma+\frac{2p}{\gamma}\right)W(B),
\]
where
\[
W(B)=\sum_{v\in T} p\,\deg_G(v).
\]
Dividing by \(p\), we get \(\sum_{v\in T}|\deg_G(v)-2\deg_{G_1}(v)| \le C(\sqrt\gamma+p/\gamma)\sum_{v\in T}\deg_G(v)\) for a suitable absolute constant \(C\).

Also, since \(\deg_G(v)\ge 1/p\) for every \(v\in T\), we have \(|T|\le p\sum_{v\in T}\deg_G(v)\).
Combining the previous bounds yields \(|M|\le C(\sqrt\gamma+p/\gamma)\sum_{v\in T}\deg_G(v)\).
Finally, \(\sum_{v\in T}\deg_G(v)\le 2|E|\), so \(|M|\le C(\sqrt\gamma+p/\gamma)|E|\).

Therefore, 
\(|E'| \ge |E_1|-|M| \ge \left(\frac12-\sqrt\gamma-C\!\left(\sqrt\gamma+\frac{p}{\gamma}\right)\right)|E|,\)
which is \(\left(\frac12-C(\sqrt\gamma+p/\gamma)\right)|E|\) after adjusting \(C\).
The upper bound \(|E'|\le |E_1|\le \left(\frac12+\sqrt\gamma\right)|E|\) is immediate.

The work is that of one application of \cref{lem:half-sampling}, namely \(O((|V|+|E|)/\gamma)\), plus \(O(|V|+|E|)\) additional work for constructing \(G_1\), scanning adjacency lists, marking deletions, and compacting.
The depth remains \(poly(\log n)\).
\end{proof}

\begin{lemma}[Fixed-probability strict sampling]\label{lem:fixed-prob-strict-sampling}
There exist absolute constants \(c_0\in \mathbb N\) and \(c_1>0\) such that the following holds.

Let \(G=(V,E)\) be a simple graph represented in compact form, with \(|V|\le n\) and maximum
degree at most \(d=2^k\).
Let \(H:=\{v\in V:\deg_G(v)\ge d/2\}\), let \(K:=|H|\), and let \(E_H:=\{e\in E:\text{$e$ is incident to at least one vertex of }H\}\).
Then there is a deterministic parallel algorithm that computes a subgraph \(G'=(V,E')\), where \(E'\subseteq E_H\), in \(O(|V|+|E|)\) work and \(poly(\log n)\) depth such that:
\begin{enumerate}
    \item \(\Delta(G')\le 2^{c_0+1}\).
    \item \(|E'|\ge c_1 K\).
\end{enumerate}
\end{lemma}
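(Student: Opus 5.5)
The plan is to follow the overview preceding the lemma: restrict to $E_H$, run strict $\tfrac12$-sampling (\cref{lem:strict-half-sampling-edges}) repeatedly with doubling probabilities, and stop at a constant probability $2^{-c_0}$.

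First, compute $H$ and $G_0:=(V,E_H)$ by one scan and subgraph selection (\cref{clm:subgraph-splitting}), removing isolated vertices. As in the overview, $Kd/4\le |E_H|\le Kd$. Set $p_0:=1/d=2^{-k}$ and $p_i:=2^{i}p_0$. For $i=0,1,\dots,r-1$ with $r:=k-c_0$, apply \cref{lem:strict-half-sampling-edges} to $G_i$ with probability $p_i$ and a parameter $\gamma_i$, obtaining $G_{i+1}$ (again dropping isolated vertices). If $k\le c_0$ we do nothing and output a subgraph of $E_H$ of size at least $K/2$: give each $v\in H$ one incident edge, which yields at least $K/2$ distinct edges with maximum degree $\le d\le 2^{c_0}$. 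Mirroring the proof of \cref{lem:fixed-prob-sampling}, I take $\gamma_i:=\max\{\eta\,0.95^{\,r-i},\,1/\log^4 n\}$. This makes $\gamma_i$ smallest in the early rounds, where $p_i$ is tiny, and largest near the end. The only constraint then is that $p_i/\gamma_i\le 2^{-c_0}\cdot 2^{-(r-i)}/(\eta\,0.95^{\,r-i})$ must be geometrically decaying in $r-i$ and small overall, which holds once $c_0$ is large relative to $\eta$.

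Degree bound: by induction, $\Delta(G_i)\le d\,2^{-i}=1/p_i$. If $\deg_{G_i}(v)\ge 1/p_i$, item 2 of \cref{lem:strict-half-sampling-edges} halves it; otherwise $\deg_{G_i}(v)<1/p_i$ and edge deletion keeps it at most $1/p_i\le 2/p_{i+1}$. Actually this gives $\Delta(G_{i+1})\le 2^{-(i+1)}d\cdot 2$ only for the low vertices, so I instead carry the invariant $\Delta(G_i)\le 2/p_i$ and apply the lemma with threshold $1/p_i$. High vertices drop to $\le\tfrac12\cdot 2/p_i=1/p_{i}=2/p_{i+1}$, and low vertices stay below $1/p_i=2/p_{i+1}$. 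At the end, $\Delta(G_r)\le 2/p_r=2^{c_0+1}$.

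Edge count: item 1 gives $|E_{i+1}|\ge(\tfrac12-C(\sqrt{\gamma_i}+p_i/\gamma_i))|E_i|$. Hence $|E_r|\ge 2^{-r}|E_H|\prod_i(1-2C(\sqrt{\gamma_i}+p_i/\gamma_i))$. The sums $\sum_i\sqrt{\gamma_i}$ and $\sum_i p_i/\gamma_i$ are $O(\sqrt\eta)+O(k/\log^2 n)$ and $O(2^{-c_0}/\eta)$ respectively, so the product is a constant. This yields $|E_r|\ge \Omega(2^{-r}Kd/4)=\Omega(2^{c_0}K)\ge c_1K$.

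Work: round $i$ costs $O(|E_i|/\gamma_i)$ with $|E_i|\le(0.51)^i|E_H|$, since vertices were pruned and $|V_i|\le 2|E_i|$. The main obstacle is here: $1/\gamma_i$ is largest ($\approx 1/\eta\cdot 0.95^{-(r-i)}$) in the late rounds, exactly when the edge count is small. The total is $\sum_i 0.51^{\,i}\,0.95^{-(r-i)}/\eta$, which is dominated by early terms only if the ratio is favorable, and it is: $0.51/0.95^{-1}<1$ is false in the wrong direction. So I would restructure the bound as $|E_i|\le 0.51^{\,i}|E_H|$ together with $0.95^{-(r-i)}\le 0.95^{-r}$, and use $|E_H|\le Kd=K2^{r+c_0}$ while bounding $|E_i|$ instead by the degree bound $|E_i|\le |V(G_i)|\cdot 2^{r-i+c_0+1}$. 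If that fails, I would choose $\gamma_i$ decreasing in $i$ (e.g.\ $\gamma_i=\max\{\eta 0.95^{i},\log^{-4}n\}$ as in \cref{lem:fixed-prob-sampling}), making the work geometric. The error $\sum_i p_i/\gamma_i$ then becomes dominated by its last term, $2^{-c_0}/(\eta\,0.95^{r})$, which is not bounded. Reconciling these two demands, most likely by checking that $(2/3)/0.95$ geometric work and $2^{-c_0}(2\cdot0.95)^{...}$ errors can coexist, is the step I expect to require the real care. The initial $O(|V|+|E|)$ term covers the first extraction, and the depth is $r\cdot\poly(\log n)$.
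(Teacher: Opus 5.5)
There is a genuine gap. Several parts of your proposal match the paper: the restriction to $E_H$, the invariant $\Delta(G_i)\le 2/p_i$, the small-$d$ case, and the reduction of item 2 to showing that $\sum_i\alpha_i$ is small, where $\alpha_i=C(\sqrt{\gamma_i}+p_i/\gamma_i)$. However, you never give a schedule $\gamma_i$ that satisfies the work bound and the edge-count bound at the same time, and neither of your two candidates does.

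\textbf{First candidate.} Take $\gamma_i=\max\{\eta\,0.95^{\,r-i},\log^{-4}n\}$. This $\gamma_i$ is smallest in the \emph{early} rounds, so you have the direction reversed. Round $0$ alone then costs $\Theta(|E_H|/\gamma_0)=|E_H|\cdot\min\{\log^4 n,\,0.95^{-r}/\eta\}$, which is superlinear once $d$ is large. Re-bounding $|E_i|$ through the degree invariant cannot fix this, because $|E_0|=|E_H|$ can be $\Theta(|E|)$.

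\textbf{Second candidate.} Take $\gamma_i=\max\{\eta\,0.95^{\,i},\log^{-4}n\}$. The work is now geometric. But, as you noticed, the last error term can be as large as $2^{-c_0}\log^4 n$, which destroys $\prod_i(1-2\alpha_i)$.

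\textbf{Any single geometric schedule.} No schedule $\gamma_i=\eta a^i$ works either. Bounded work forces $a$ above a constant less than $1$. Bounded $\sum_i p_i/\gamma_i\approx 2^{-c_0}a^{-r}/\eta$ forces $a\ge 1$. Bounded $\sum_i\sqrt{\gamma_i}$ forces $a<1$ when $r\approx\log n$.

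\textbf{The missing idea.} Let $\gamma_i$ depend on the current probability as well as the round index, by taking the maximum of both floors. In the paper, with rounds indexed from $1$ and round $i$ using probability $p_{i-1}$,
$\gamma_i:=\max\{\eta\cdot\max(\sqrt{p_{i-1}},\,0.95^{\,i}),\,\log^{-4}n\}$. Each of the three required sums is then bounded:
\begin{itemize}
\item The floor $\gamma_i\ge\eta\,0.95^{\,i}$ keeps the work at $O(|E|)\sum_i 0.6^{\,i-1}/(\eta\,0.95^{\,i})=O(|E|)$, exactly as in your fallback.
\item The floor $\gamma_i\ge\eta\sqrt{p_{i-1}}$ gives $p_{i-1}/\gamma_i\le\sqrt{p_{i-1}}/\eta$. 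This is geometric in $i$ and dominated by the last round, so $\sum_i p_{i-1}/\gamma_i=O(2^{-c_0/2}/\eta)$.
\item Since $\sqrt{\gamma_i}\le\sqrt\eta\,(p_{i-1}^{1/4}+0.95^{\,i/2})+\log^{-2}n$, the sum $\sum_i\sqrt{\gamma_i}$ is still $O(\sqrt\eta)+O(1/\log n)$.
\end{itemize}
The resulting schedule is U-shaped. It decreases while $0.95^{\,i}$ dominates and increases once $\sqrt{p_{i-1}}$ dominates, and this is what lets all three sums stay bounded simultaneously. Choose $\eta\le 0.01$ small first, then $c_0$ large; this gives $\sum_i\alpha_i\le 0.01$ and $\gamma_i\le 0.01$, and the rest of your argument goes through.
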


\begin{proof}
Let \(G_0:=(V,E_H)\).
Since every vertex in \(H\) has degree at least \(d/2\), and every edge of \(E_H\) is counted
at most twice when summing the degrees of vertices in \(H\), we have \(\frac{Kd}{4}\le |E_H|\le Kd\).
Indeed,
\[
\sum_{v\in H}\deg_{G_0}(v)\ge K\cdot \frac d2,
\]
while each edge of \(E_H\) contributes at most \(2\) to this sum.

Choose \(t\) so that \(p_0:=2^{-t}\in [1/(32d),\,1/(16d)]\).
Fix a sufficiently large absolute constant \(c_0\), let \(r:=t-c_0\), and for \(i=0,1,\dots,r\) define \(p_i:=2^i p_0\).
Then \(p_r=2^{-c_0}\) is a fixed positive constant.

We define a sequence of subgraphs \(G_0,G_1,\dots,G_r\) iteratively.
For \(i=1,2,\dots,r\), let \(\gamma_i:=\max\{\eta\cdot \max(\sqrt{p_{i-1}},\,0.95^i),\,1/\log^4 n\}\), where \(\eta>0\) is a sufficiently small absolute constant to be fixed later, and obtain
\(G_i\) from \(G_{i-1}\) by applying \cref{lem:strict-half-sampling-edges} with parameters
\(p_{i-1}\) and \(\gamma_i\).

\paragraph{Maximum-degree bound.}
We claim that \(\Delta(G_i)\le 2/p_i\) for every \(i=0,1,\dots,r\).
This is immediate for \(i=0\), since \(\Delta(G_0)\le d\le 2/p_0\).
For the inductive step, fix \(v\in V\).
If \(\deg_{G_{i-1}}(v)\ge 1/p_{i-1}\), then \cref{lem:strict-half-sampling-edges} gives \(\deg_{G_i}(v)\le \deg_{G_{i-1}}(v)/2\le 1/p_i\).
Otherwise \(\deg_{G_i}(v)\le \deg_{G_{i-1}}(v)<1/p_{i-1}=2/p_i\).
Thus \(\Delta(G_i)\le 2/p_i\) for all \(i\), and in particular \(\Delta(G_r)\le 2/p_r=2^{c_0+1}\), proving item \(1\).

\paragraph{Number of surviving edges.}
For \(i=1,2,\dots,r\), \cref{lem:strict-half-sampling-edges} gives \(|E(G_i)|\ge (\frac12-\alpha_i)|E(G_{i-1})|\), where \(\alpha_i:=C(\sqrt{\gamma_i}+p_{i-1}/\gamma_i)\).
Multiplying by \(p_i=2p_{i-1}\), we get \(p_i|E(G_i)|\ge (1-2\alpha_i)\,p_{i-1}|E(G_{i-1})|\), and hence
\(p_r|E(G_r)| \ge \left(\prod_{i=1}^r (1-2\alpha_i)\right)p_0|E(G_0)|.\)

It therefore suffices to show \(\sum_{i=1}^r \alpha_i=O(1)\), in fact smaller than \(0.01\).
We bound the two parts of \(\alpha_i\) separately.

For \(\sum_i \sqrt{\gamma_i}\), we use \(\sqrt{\max(a,b,c)}\le \sqrt a+\sqrt b+\sqrt c\), giving
\[
\sum_{i=1}^r \sqrt{\gamma_i}
\le
\sqrt{\eta}\sum_{i=1}^r p_{i-1}^{1/4}
+
\sqrt{\eta}\sum_{i=1}^{I}0.95^{\,i/2}
+
O\!\left(\frac{r-I}{\log^2 n}\right),
\]
where \(I:=\max\{i:0.95^{\,i}\ge 1/\log^4 n\}\).
The first two sums are geometric, and the last term is \(O(1/\log n)\), so \(\sum_i \sqrt{\gamma_i}=O(\sqrt{\eta})+O(2^{-c_0/4}\sqrt{\eta})+o(1)\).

For \(\sum_i p_{i-1}/\gamma_i\), we split into the same three regimes as in the hitting-set analysis.
If \(\sqrt{p_{i-1}}\ge 0.95^i\), then \(\gamma_i\ge \eta\sqrt{p_{i-1}}\), so the contribution is at most
\[
\frac{1}{\eta}\sum_{i=1}^r \sqrt{p_{i-1}}
=
O\!\left(\frac{2^{-c_0/2}}{\eta}\right).
\]
If \(\sqrt{p_{i-1}}<0.95^i\) and \(i\le I\), then \(\gamma_i\ge \eta\cdot 0.95^i\), so
\(\frac{p_{i-1}}{\gamma_i} \le \frac{2^{-t-1}}{\eta}\left(\frac{2}{0.95}\right)^i.\)
This occurs on an initial interval of indices, since \(\sqrt{p_{i-1}}/0.95^i\) is multiplied each round by \(\sqrt2/0.95>1\), so the whole contribution is bounded by its last term, namely \(O(2^{-c_0}/\eta)\).
Finally, for \(i>I\), we use \(\gamma_i\ge 1/\log^4 n\), and the contribution is
\[
\log^4 n\sum_{i>I}2^{i-1-t}
=
O(2^{-c_0}\log^4 n).
\]

Combining the three regimes, we obtain
\[
\sum_{i=1}^r \alpha_i
\le
C\!\left(
O(\sqrt{\eta})
+
O(2^{-c_0/4}\sqrt{\eta})
+
O\!\left(\frac{2^{-c_0/2}}{\eta}\right)
+
O\!\left(\frac{2^{-c_0}}{\eta}\right)
+
O(2^{-c_0}\log^4 n)
+
o(1)
\right).
\]
We first choose \(\eta>0\) sufficiently small, and then choose \(c_0\) sufficiently large, so that \(\sum_{i=1}^r \alpha_i\le 0.01\).
Hence \(\prod_{i=1}^r (1-2\alpha_i)\ge 1-2\sum_i \alpha_i\ge 0.98\), and therefore \(p_r|E(G_r)|=\Omega(p_0|E(G_0)|)\).

Now \(p_0|E(G_0)|=\Theta(|E_H|/d)=\Theta(K)\), because \(p_0=\Theta(1/d)\) and \(|E_H|=\Theta(Kd)\).
Since \(p_r=2^{-c_0}\) is a fixed positive constant, we conclude that \(|E(G_r)|=\Omega(K)\).
Setting \(G':=G_r\) proves item \(2\).

\paragraph{Work and depth.}
Each application of \cref{lem:strict-half-sampling-edges} to \(G_{i-1}\) uses \(O((|V(G_{i-1})|+|E(G_{i-1})|)/\gamma_i)\) work and \(poly(\log n)\) depth.
After deleting isolated vertices and compacting, we have \(|V(G_{i-1})|=O(|E(G_{i-1})|)\).
Also, by item \(1\) of \cref{lem:strict-half-sampling-edges}, \(|E(G_i)|\le (\frac12+\sqrt{\gamma_i})|E(G_{i-1})|\le 0.7\,|E(G_{i-1})|\), since \(\gamma_i\le 0.01\).
Finally, \(\gamma_i\ge \eta\cdot 0.95^i\).
Therefore, the total work is
\[
\sum_{i=1}^r O\!\left(\frac{|E(G_{i-1})|}{\gamma_i}\right)
\le
O(|E|)\sum_{i\ge 1}\frac{0.7^{\,i-1}}{\eta\cdot 0.95^{\,i}}
=
O(|E|),
\]
because \(0.7/0.95<1\).
The initial restriction from \(G\) to \(G_0\) also uses \(O(|V|+|E|)\) work, so the total work is
\(O(|V|+|E|)\).
The number of rounds is \(r=O(\log d)\le O(\log n)\), and each round has \(poly(\log n)\) depth,
so the total depth remains \(poly(\log n)\).
\end{proof}

\begin{lemma}\label{lem:somehow-mm}
There exists a constant \(c>0\) such that the following holds.

Suppose \(G=(V,E)\) is a simple graph represented in compact form, with \(|V|\le n\) and maximum
degree at most \(d=2^k\).
Let \(H:=\{v\in V:\deg_G(v)\ge d/2\}\) and \(K:=|H|\).
Then there is a deterministic parallel algorithm that computes a matching \(M\subseteq E\) such
that deleting all endpoints of \(M\) removes at least \(cKd\) edges from \(G\).
The algorithm uses \(O(|V|+|E|)\) work and \(poly(\log n)\) depth.
\end{lemma}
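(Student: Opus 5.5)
The plan follows the overview given at the start of this subsection. First apply \cref{lem:fixed-prob-strict-sampling} to $G$ to obtain a subgraph $G'=(V,E')$ with $E'\subseteq E_H$, maximum degree $\Delta(G')\le 2^{c_0+1}=O(1)$, and $|E'|\ge c_1K$. This step uses $O(|V|+|E|)$ work and $\poly(\log n)$ depth.

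Next, build the conflict graph $\mathcal{C}$ whose vertices are the edges of $G'$, with two vertices adjacent if and only if the corresponding edges share an endpoint.
\begin{itemize}
\item Each edge of $G'$ meets at most $2(\Delta(G')-1)=O(1)$ other edges, so $\mathcal{C}$ has maximum degree $D=O(1)$.
\item $\mathcal{C}$ has $O(|E'|)$ edges.
\item Its compact representation can be built in $O(|E'|)$ work. For each vertex $v$ of $G'$, enumerate all $O(1)$ pairs of edges in $v$'s adjacency list. Each conflict edge then appears in the list of both edge-nodes via positions that the vertex $v$ computes itself, which gives the required cross-indexing. Constructing $G'$ in compact form itself uses subgraph selection (\cref{clm:subgraph-splitting}).
\end{itemize}

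Then compute a proper $(D+1)$-coloring of $\mathcal{C}$. Since $\mathcal{C}$ is a simple graph of maximum degree $D$, \cref{thm:delta+1_coloring} does this in $O(|E'|)$ work and $\poly(\log n)$ depth. Alternatively, one can use \cref{lem:logn-c-defective} followed by a greedy pass over the $O(\log n)$ color classes. The defective route would require repairing defective edges, so I would simply invoke \cref{thm:delta+1_coloring}.

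Each color class is a matching in $G'$. Count class sizes by prefix sums over the $O(1)$ colors and let $M$ be the largest class, so $|M|\ge |E'|/(D+1)=\Omega(K)$.

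The main step is the edge-removal count. Every $e\in M\subseteq E_H$ has an endpoint $h_e\in H$ with $\deg_G(h_e)\ge d/2$. Because $M$ is a matching, the vertices $h_e$ are distinct. Deleting all endpoints of $M$ therefore deletes every edge incident to the set $\{h_e\}$, which has $|M|$ vertices each of degree at least $d/2$. An edge of $G$ is counted at most twice when summing these degrees, so the number of removed edges is at least
\[
\tfrac12\cdot |M|\cdot \tfrac d2=\Omega(|M|d)=\Omega(Kd).
\]
Choosing $c$ accordingly gives the claim.

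All steps use $O(|V|+|E|)$ work and $\poly(\log n)$ depth. The only nonroutine obstacle is building $\mathcal{C}$ in compact form without sorting; the bounded degree of $G'$ makes the local pair enumeration suffice.
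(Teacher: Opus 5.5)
Your proposal is correct and matches the paper's proof step for step: apply \cref{lem:fixed-prob-strict-sampling}, build the bounded-degree conflict graph, color it via \cref{thm:delta+1_coloring}, take the largest color class as $M$, and charge the removed edges to the distinct high-degree endpoints of $M$. Your count $\frac12\cdot|M|\cdot\frac d2$ is slightly more careful than the paper's $(d/2)\,|M|$, because it accounts for an edge joining two such endpoints being counted twice.
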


\begin{proof}
Apply \cref{lem:fixed-prob-strict-sampling} to \(G\).
Let \(G'=(V,E')\), where \(E'\subseteq E_H\), be the resulting subgraph, with \(H:=\{v\in V:\deg_G(v)\ge d/2\}\) and \(K:=|H|\).
By \cref{lem:fixed-prob-strict-sampling}, we have \(\Delta(G')\le 2^{c_0+1}\) and \(|E'|\ge c_1K\) for absolute constants \(c_0\in \mathbb N\) and \(c_1>0\).

We now form the conflict graph \(Q\) of \(G'\).
The vertex set of \(Q\) is \(E'\), and two vertices of \(Q\) are adjacent iff the corresponding
edges of \(G'\) share an endpoint.
Since every vertex of \(G'\) has degree at most \(2^{c_0+1}\), every edge of \(G'\) is incident
to fewer than \(2^{c_0+2}\) other edges, and hence \(\Delta(Q)=O(1)\).

By \cref{thm:delta+1_coloring}, we can compute a proper \((\Delta(Q)+1)\)-coloring of \(Q\) in \(O(|V(Q)|+|E(Q)|)\) work and \(poly(\log n)\) depth.
Since \(\Delta(Q)=O(1)\), this is \(O(|E'|)\) work.
Let the color classes be \(C_1,C_2,\dots,C_q\), where \(q=O(1)\).
Every color class is an independent set in \(Q\), and therefore the corresponding set of edges in
\(G'\) is a matching.

Since the classes partition \(E'\), the largest class has size at least \(|E'|/q=\Omega(|E'|)=\Omega(K)\).
Let \(M\) be the matching in \(G'\) corresponding to this largest class.
Then \(|M|=\Omega(K)\).

Finally, every edge of \(E'\) lies in \(E_H\), so every edge of \(M\) is incident to at least one
vertex of \(H\), that is, to at least one vertex of degree at least \(d/2\) in the original graph
\(G\).
Because \(M\) is a matching, these high-degree endpoints are all distinct.
Therefore, deleting all endpoints of \(M\) removes at least \((d/2)\cdot |M|=\Omega(Kd)\) edges from the original graph.

For the complexity, the call to \cref{lem:fixed-prob-strict-sampling} uses \(O(|V|+|E|)\) work and \(poly(\log n)\) depth.
The conflict graph \(Q\) has \(|V(Q)|=|E'|\) and \(|E(Q)|=O(|E'|)\), since \(\Delta(Q)=O(1)\).
Thus, constructing \(Q\), coloring it, and extracting the largest color class all take \(O(|E'|)\subseteq O(|E|)\) work and \(poly(\log n)\) depth.
Hence the whole algorithm runs in \(O(|V|+|E|)\) work and \(poly(\log n)\) depth.
\end{proof}

\subsection{Wrap Up}

With the extractor above and the data structure in \cref{big-degree-ds}, we can now
complete the proof of maximal matching.

\begin{proof}[Proof of \cref{thm:maximal_matching}]
The algorithm proceeds in rounds.
In each round, let the current maximum degree be \(\Delta\in[2^k,2^{k+1})\).
We use \cref{big-degree-ds} to extract the induced subgraph on the left vertices of degree at least \(2^k\), and then apply \cref{lem:somehow-mm} to this subgraph.
Although \cref{big-degree-ds} is stated for bipartite graphs, we may reduce the general case to the bipartite case by replacing every edge \((u,v)\) with two bipartite edges \((u_L,v_R)\) and \((u_R,v_L)\); one side of the extracted bipartite graph then recovers the extracted graph on the original vertex set.
The resulting matching is added to the global matching, and we delete all matched vertices together with their incident edges.

Consider one round, and let \(H\) be the extracted subgraph.
If \(K\) denotes the number of vertices of \(H\) whose degree is at least \(2^k\), then every such vertex has degree in \([2^k,2^{k+1})\), since the current maximum degree lies in that range.
Hence \(|E(H)|=\Theta(K\,2^k)\).
By \cref{lem:somehow-mm}, the matching computed on \(H\) removes \(\Omega(K\,2^k)=\Omega(|E(H)|)\) edges from the current graph.
The work of the round is \(O(|V(H)|+|E(H)|+R)=O(|E(H)|+R)\), where \(R\) is the redundancy term from \cref{big-degree-ds} and we use \(|V(H)|=O(|E(H)|)\) after deleting isolated vertices.

Therefore, the total work over all rounds is
\[
\sum O(|E(H)|) + \sum O(R).
\]
The first sum is \(O(|E|)\), since each round removes a constant fraction of the edges
of the extracted subgraph, and the second sum is \(O(|V|+|E|)\) by
\cref{big-degree-ds}. Hence the total work is \(O(|V|+|E|)\).

For the depth bound, note that each degree bucket survives for at most \(O(\log n)\)
rounds, since every round removes a constant fraction of the edges in the current top
bucket. There are only \(O(\log n)\) buckets, so the total number of rounds is
\(O(\log^2 n)\). As each round has \(poly(\log n)\) depth, the overall depth remains
\(poly(\log n)\).

Finally, the matching produced is maximal, since in each round we delete every edge
incident to the newly chosen matching edges, and the algorithm terminates only when no
edges remain.
\end{proof}


\section{Set Cover}

\subsection{Definition and Results}

\begin{definition}[Set cover]
A set-cover instance consists of a family of sets \(S_1,S_2,\dots,S_\ell\) and a universe of
elements \(x_1,x_2,\dots,x_t\). It is guaranteed that every element belongs to at least one set.
A \emph{set cover} is a subfamily of the \(S_i\)'s whose union contains all elements. The
\emph{optimal set cover} is a set cover of minimum cardinality.
\end{definition}

\begin{theorem}\label{thm:set-cover}
There exists a deterministic parallel algorithm that, given a set-cover instance with sets
\(S_1,S_2,\dots,S_\ell\) and elements \(x_1,x_2,\dots,x_t\), where \(\ell,t\le n\), computes an
\(O(\log s\log\log n)\)-approximation to the optimal set cover in \(O(\ell+t+|E|)\) work and
\(poly(\log n)\) depth. Here \(|E|\) is the total size of the sets, the instance is represented as
a bipartite graph in compact form, and \(s:=\max_i |S_i|\).
\end{theorem}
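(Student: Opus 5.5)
We will follow the classical Berger--Rompel--Shor / Rajagopalan--Vazirani outline, with the randomized sampling steps replaced by the hitting-set machinery (\cref{thm:main-hitting-set}) and the large-degree extractor (\cref{big-degree-ds}) providing the linear total work. The algorithm proceeds in phases indexed by a dyadic degree scale $D=2^k$, going from $k=\lceil\log s\rceil$ down to $0$. The degree of a set is the number of currently uncovered elements it contains. In phase $k$ we repeatedly query \cref{big-degree-ds} on the bipartite graph (sets on the left, uncovered elements on the right). The query returns the sets of current degree in $[2^k,2^{k+1})$ together with their uncovered elements. Whenever an element is covered, it is deleted by a batched update. \cref{big-degree-ds} also deletes a set once its degree drops to zero.

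Within one iteration of phase $k$, let $\mathcal S$ be the extracted large-degree sets and let $X$ be the elements they touch. For each $x\in X$ let $d_x$ be its number of neighbors in $\mathcal S$. We further split $X$ by $\lfloor\log d_x\rfloor$ using \cref{lem:graph-splitting}, and always treat the class of largest element degree $\delta$ first. This is the standard second level of bucketing, and it costs a factor $O(\log\log n)$ or $O(\log s)$ in the approximation. For that class we set $p_S=2^{-\lceil\log\delta\rceil}$ on the sets, so that each element sees mass $\Theta(1)$. We then invoke \cref{thm:main-hitting-set} with elements as the right side and importance $imp_x=1$. The output is a family $\mathcal H$ such that at least $0.99$ of these elements are hit $\Theta(1)$ times. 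Since each element is hit only $O(1)$ times and sets have degree $\approx D$, a counting argument gives $|\mathcal H|\cdot D=O(\#\text{newly covered elements})$. This is the standard condition for charging each chosen set to $\Omega(D)$ covered elements, up to constants. The usual dual-fitting / price argument then shows that each degree scale $k$ contributes $O(\mathrm{OPT})$ times the number of inner element-degree scales actually used. Restricting inner classes to $\delta\ge D/\poly$ or to the top $O(\log\log n)$ scales, with the remaining mass handled by a Chernoff-free averaging argument, gives the $O(\log s\log\log n)$ factor. I would first verify this accounting carefully, following Berger--Rompel--Shor: the total element-side mass $\sum_x c_x$ drops by a constant factor per iteration. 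As a result, each (set-scale, element-scale) pair uses $O(\log n)$ iterations, which bounds the depth by $\poly(\log n)$.

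For the work bound we charge each iteration to the edges removed. A constant fraction of the extracted elements gets covered, and each covered element deletes all of its incident edges in the extracted subgraph. Moreover, since the extracted sets all have degree $\Theta(D)$, $|E(\text{extracted})|=\Theta(|\mathcal S|D)$. The hitting set call costs $O(|V'|+|E'|)$ by \cref{thm:main-hitting-set}, and the extractor adds only the redundancy $R$, which sums to $O(\ell+t+|E|)$ overall.

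The main obstacle is exactly this charging. Covering a $0.99$ fraction of the elements in the top element-degree class need not remove a constant fraction of $E(\text{extracted})$, because edges to lower-degree elements remain. My first attempt would be to weight importance by degree, $imp_x=d_x$, and to run the hitting set on the whole extracted graph, with probabilities chosen per element class. The guarantee that importance-weighted $0.99$ of the elements are happy then directly means that a constant fraction of the edges of the extracted subgraph is deleted. The approximation analysis would instead use only those elements whose $c_x\ge1/2$. If the mixed probabilities break the $\Theta(1)$-mass condition, I would fall back on bucketing by element degree and charging the work of lower classes to the geometric decrease of the sets' degrees, as done with the phases of \cref{big-degree-ds}.
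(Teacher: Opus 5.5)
You have a genuine gap at the step that carries the approximation guarantee. \cref{thm:main-hitting-set} says nothing about how many left nodes are selected, and your counting argument for \(|\mathcal H|\cdot D=O(\#\text{newly covered elements})\) does not hold. A chosen set has \(\Theta(D)\) incidences, but happiness only controls hits on the happy elements of the current class \(X_\delta\). The other incidences may go to elements of lower element-degree classes, which are not right nodes of the instance, so nothing controls them. They may also go to the unhappy elements of \(X_\delta\), each of which can absorb up to \(2\delta\) hits. So \(\sum_{S\in\mathcal H}\deg(S)\) is not bounded by the number of covered elements, and without a bound on \(|\mathcal H|\) there is no approximation ratio at all.

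Even if you bounded \(|\mathcal H|\) by its expectation, it would not be enough. Up to \(2\delta|X_\delta|\) sets can be adjacent to \(X_\delta\), so with \(p_S\approx 1/\delta\) you get only \(O(1)\) chosen sets per covered element. The price argument needs \(O(1/D)\), which would require every participating set to have \(\Omega(D)\) of its incidences inside \(X_\delta\). Your scheme does not enforce that. The step ``restricting to the top \(O(\log\log n)\) scales with a Chernoff-free averaging argument'' is where your \(\log\log n\) is supposed to come from, but no argument is given. The work charging fails for the same reason, as you note yourself.

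The paper avoids this by separating the greedy structure from the derandomization.

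\cref{lem:fractional-set-cover} computes an \(O(\log s)\)-approximate \emph{fractional} cover with no sampling at all. Within each set-size phase extracted by \cref{big-degree-ds}, active sets double their value from \(1/D\) up to \(1\). Elements die once their incident mass reaches \(1\), and sets deactivate once their size drops below \(2^{k-1}\). Charging each doubling to \(2^{k-1}\) remaining elements gives \(O(1)\) per optimal set per phase, hence \(O(\log s)\) overall. Activation-round bucketing with sleeping sets keeps the work linear.

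\cref{lem:frac-to-int-set-cover} then rounds this cover:
\begin{itemize}
\item Values are rounded up to powers of two.
\item Elements get weight \(\deg(x)/|E|\), so large happy weight means a constant fraction (taken to be half) of the current incidences is covered. This gives geometric decay and linear total work.
\item A special right node \(r^\star\) adjacent to every set gets weight \(1\), half of the total. It must therefore be happy, which gives \(|H|\le C(\sum_i p_i+1)=O(\log s)\cdot\mathrm{OPT}\).
\end{itemize}
The \(\log\log n\) factor is simply the number of iterations needed to shrink the residual incidences by a \(\log^{100}n\) factor. After that, Berger--Rompel--Shor, with its \(O(\log^5 n)\) work overhead, finishes in sublinear work and adds only \(O(\log s)\cdot\mathrm{OPT}\) sets.

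Your plan is missing two things: a budget device like \(r^\star\) that bounds the number of chosen sets, and a concrete source for the \(\log\log n\) factor.
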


\subsection{Reducing to Fractional Set Cover}

We first define the fractional relaxation of set cover.

\begin{definition}[Fractional set cover]
A fractional set cover for a set-cover instance with sets \(S_1,S_2,\dots,S_\ell\) and elements
\(x_1,x_2,\dots,x_t\) is a vector \(z=(z_1,z_2,\dots,z_\ell)\in[0,1]^\ell\) such that for every
element \(x_j\),
\[
\sum_{i:\,x_j\in S_i} z_i \ge 1.
\]
Its cost is \(\sum_{i=1}^{\ell} z_i\).
\end{definition}

We now show how a linear-work algorithm for approximately solving the fractional relaxation implies
a linear-work algorithm for integral set cover, losing only an additional \(O(\log\log n)\) factor.

\begin{lemma}\label{lem:frac-to-int-set-cover}
Suppose there is a deterministic parallel algorithm that, given a compactly represented set-cover
instance with \(\ell,t\le n\), computes a feasible fractional set cover of cost at most
\(A\cdot \mathrm{OPT}\) in \(O(\ell+t+|E|)\) work and \(poly(\log n)\) depth, where \(\mathrm{OPT}\)
denotes the optimum integral set-cover size. Then there is a deterministic parallel algorithm that
computes an integral set cover of size \(O(A\log\log n)\cdot \mathrm{OPT}\) in \(O(\ell+t+|E|)\)
work and \(poly(\log n)\) depth.
\end{lemma}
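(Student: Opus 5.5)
We round the fractional solution $z$ using the hitting-set machinery, treating sets as left nodes and elements as right nodes. The plan has three steps: discretize $z$, round it in $O(\log\log n)$ rounds that each cover a constant fraction of the remaining elements, and then clean up the few uncovered elements.

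\textbf{Step 1: discretize.} Run the assumed algorithm to obtain $z$ with cost $Z\le A\cdot\mathrm{OPT}$. Put every set with $z_i\ge 1/2$ into the cover; this costs at most $2Z$. For the remaining sets, round $z_i$ down to a power of two, $p_i=2^{-k_i}$. Sets with $z_i<1/n^2$ contribute at most $t/n^2\le 1/n$ total coverage to any element, so they can be dropped after first doubling $z$. After rescaling by a constant, every uncovered element $x_j$ still satisfies $c_j=\sum_{i\ni x_j}p_i\ge 1$, all exponents satisfy $k_i\in[2\log n]$, and the cost grows by only a constant factor.

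\textbf{Step 2: one rounding round.} Apply \cref{thm:main-hitting-set} to the bipartite graph (sets on the left with probabilities $p_i$, currently uncovered elements on the right) with unit weights. To also control the cost, I would add a dummy right node adjacent to all sets, or better, run the distance version \cref{thm:distance_hitting_set} with $W_G$ augmented by a cost term so that $\sum_i p_i$ is approximately preserved. This gives a set $H$ of size $O(Z)$ that makes $99\%$ of the remaining elements happy. Since $c_j\ge1$, every happy element is hit. Repeat on the uncovered elements with the same fractional values $p$. The number of uncovered elements then drops geometrically, and after $O(\log\log n)$ rounds only $t/\poly(\log n)$ elements remain. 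Each round costs $O(Z)$ sets, which gives the $O(A\log\log n)$ factor. The work of round $r$ is proportional to the edges incident to surviving elements. To make this sum geometric, I would restrict each round's graph to the surviving elements and their incident edges using \cref{clm:subgraph-splitting}, and charge the work to the removed elements' edges. The left side must also be restricted to sets still adjacent to some surviving element.

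\textbf{Step 3: cleanup.} After $O(\log\log n)$ rounds the remaining instance is small, but a naive ``add one set per uncovered element'' is not bounded by $\mathrm{OPT}$. Instead, I would continue the same rounding until no element is uncovered. The additional rounds must cost a geometrically decreasing number of sets. To achieve this, I would weight elements by importance so that the number of sets selected in a round scales with the remaining fractional mass $\sum_{i}p_i\cdot[\text{$S_i$ touches an uncovered element}]$, which shrinks along with the uncovered elements.

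\textbf{Main obstacle.} The hard part is exactly the shape of this argument: bounding the total over rounds by $O(A\log\log n)\mathrm{OPT}$ while keeping the work linear. Work is easy to make geometric because the edge count shrinks, but the cost is not geometric: the sets needed stay $\Theta(Z)$ per round until the fractional mass of the active sets shrinks. The $\log\log n$ factor suggests a different route that I would try first. Scale $z$ by $\Theta(\log\log n)$, so that each element has $c_j=\Theta(\log\log n)$. Then use a Chernoff-type happiness guarantee so that only a $1/\poly(\log n)$ fraction of elements stay uncovered. The hitting-set theorem only gives a constant fraction, so this strengthening is the step I expect to need the most new work. Finally, cover the $t/\poly(\log n)$ leftover elements recursively with a fresh fractional solution restricted to them (still of cost at most $A\cdot\mathrm{OPT}$), so that the recursion depth, and hence the cost overhead, stays bounded.
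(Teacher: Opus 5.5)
\paragraph{The finishing step is missing.} The gap is the step you flagged as the main obstacle. You never give a way to cover the last $1/\poly(\log n)$ fraction of the instance cheaply within linear work.

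Continuing the hitting-set rounding costs $\Theta(A\cdot\mathrm{OPT})$ sets per round for $\Theta(\log n)$ further rounds. Your alternative does not help. Scaling $z$ by $\Theta(\log\log n)$ and recursing on the leftovers costs $\Theta(A\log\log n)\cdot\mathrm{OPT}$ per level, but shrinks the instance only by a $\poly(\log n)$ factor. So $\Theta(\log n/\log\log n)$ levels are needed, and the total is again $O(A\log n)\cdot\mathrm{OPT}$; nothing bounds the recursion depth. The Chernoff-type strengthening is also unnecessary: $O(\log\log n)$ constant-progress rounds already give $\poly(\log n)$ shrinkage.

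The missing idea is to spend the work slack that this shrinkage creates, instead of rounding further. Once the residual instance has at most $|E|/\log^{100}n$ incidences, the paper runs the existing deterministic parallel algorithm of Berger, Rompel, and Shor. That algorithm is an $O(\log s)$-approximation using $O((\ell'+t'+|E'|)\log^5 n)$ work, so its polylogarithmic overhead costs only $O(|E|)$ in total. It adds $O(\log s)\cdot\mathrm{OPT}$ sets. The paper writes this as $O(A)\cdot\mathrm{OPT}$, implicitly taking $A$ of order $\log s$ as in its application, so the bound actually obtained is $O(A\log\log n+\log s)\cdot\mathrm{OPT}$. You should not expect to remove that additive term by rounding alone: applied to an optimal LP solution ($A=1$), a polynomial-time rounding guaranteeing $O(A\log\log n)$ would beat the known $\ln n$ hardness of set cover.

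\paragraph{Unit element weights in Step 2.} A second, smaller problem is the choice of unit weights on elements. Making 99\% of the uncovered elements happy says nothing about incidences, since the covered elements may all have low degree. Then neither the per-round work nor the residual edge count is forced to shrink, and your plan to charge round $r$'s work to the removed elements' edges fails. Yet a residual with few incidences is exactly what the finishing step needs.

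The paper instead gives element $x_j$ weight $\deg(x_j)/|E|$ (total $1$), and gives the special node $r^\star$, adjacent to all sets, weight $1$. With $\epsilon\le 1/4$, the happy weight $2(1-\epsilon)>1$ forces $r^\star$ to be happy, so $|H|=O(\sum_i p_i+1)=O(A\cdot\mathrm{OPT})$. The remaining happy element weight is at least $1-2\epsilon\ge 1/2$, so at least half of the current incidences get covered. Halving incidences every round makes the total work geometric and leaves at most $|E|/\log^{100}n$ incidences after $100\lceil\log\log n\rceil$ rounds.

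Your Step 1 discretization is fine. So is reusing the restricted $z$ in each round instead of recomputing a fractional cover as the paper does: the restriction stays feasible for the residual instance and has cost at most $Z$.
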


\begin{proof}
\paragraph{One iteration.}
Fix a residual set-cover instance with \(\ell\) sets, \(t\) elements, and \(|E|\) incidences, and
let \(z=(z_1,\dots,z_\ell)\) be a feasible fractional set cover of cost at most
\(A\cdot \mathrm{OPT}\).

For each set \(S_i\), round \(z_i\) up to the next power of two by setting
\(p_i:=0\) if \(z_i=0\), and \(p_i:=2^{\lceil \log_2 z_i\rceil}\) otherwise. Then
\(z_i \le p_i < 2z_i\) for every \(i\). Hence \(\sum_i p_i \le 2\sum_i z_i \le 2A\cdot \mathrm{OPT}\),
and for every element \(x_j\),
\[
\sum_{i:\,x_j\in S_i} p_i \ge \sum_{i:\,x_j\in S_i} z_i \ge 1.
\]

We now build a pre-hitting-set instance. The left side consists of the sets
\(S_1,\dots,S_\ell\), with probabilities \(p_i\). The right side consists of the elements
\(x_1,\dots,x_t\), together with one additional special node \(r^\star\). The special node is
adjacent to every set. Each element \(x_j\) receives weight \(\deg(x_j)/|E|\), so the total weight
of all element-nodes is \(1\). The special node \(r^\star\) also receives weight \(1\).

Apply \cref{thm:main-hitting-set} to this instance with a sufficiently small constant
\(\epsilon\), and let \(H\) be the resulting hitting set of left nodes. For an element-node
\(x_j\), its expected hit count is at least \(1\), so happiness implies that it is hit at least
once. For the special node \(r^\star\), the total right-node weight is \(2\), so the hitting-set
theorem guarantees happy weight at least \(2(1-\epsilon)\). For sufficiently small \(\epsilon\),
this is strictly larger than \(1\), hence \(r^\star\) itself must be happy. Consequently,
\(|H| = O(\sum_i p_i+1)=O(A\cdot\mathrm{OPT})\).

Moreover, the total weight of happy element-nodes is still a positive constant. Since the weight of
an element-node \(x_j\) is \(\deg(x_j)/|E|\), this implies that the selected sets cover a positive
constant fraction of the current incidences. By shrinking the constant in the theorem statement if
needed, we may assume they cover at least half of the current incidences.

Thus one application of \cref{thm:main-hitting-set} returns \(O(A\cdot\mathrm{OPT})\) sets that
cover at least half of the current incidences.

\paragraph{Iteration and cleanup.}
We iterate the previous step. In each iteration, we:
\begin{enumerate}
    \item compute an \(A\)-approximate fractional set cover on the current residual instance;
    \item round it up to power-of-two probabilities as above;
    \item apply \cref{thm:main-hitting-set};
    \item add the selected sets to the solution, delete all elements covered by them, and remove
    empty sets.
\end{enumerate}

Since each iteration removes at least half of the remaining incidences, after \(O(\log\log n)\)
iterations, the number of remaining incidences drops by an arbitrary polylogarithmic factor; for
concreteness, after \(100\lceil \log\log n\rceil\) iterations it is at most
\(|E|/\log^{100} n\). The number of sets chosen during these iterations is
\(O(A\log\log n)\cdot \mathrm{OPT}\).

We then finish on the residual instance using the deterministic parallel algorithm of
Berger, Rompel, and Shor~\cite{BergerRompelShor94}. Its approximation factor is \(O(\log s)\), and
its work is \(O((\ell'+t'+|E'|)\log^5 n)\) on the residual instance. After deleting covered
elements and removing empty sets, every remaining set and every remaining element is incident to at
least one residual edge, so \(\ell'+t'=O(|E'|)\). Since \(|E'|\le |E|/\log^{100}n\), this
contributes only \(O(\ell+t+|E|)\) total work. The number of additional sets selected is at most
\(O(A)\cdot \mathrm{OPT}\), and hence the final cover has size
\(O(A\log\log n)\cdot \mathrm{OPT}\).

Each iteration has \(poly(\log n)\) depth, and there are only \(O(\log\log n)\) iterations, so the
overall depth remains \(poly(\log n)\).
\end{proof}

\subsection{Fractional Set Cover}

\begin{lemma}\label{lem:fractional-set-cover}
Given a compactly represented set-cover instance with \(\ell,t\le n\), there is a deterministic
parallel algorithm that computes a feasible fractional set cover of cost
\(O(\log s)\cdot \mathrm{OPT}\) in \(O(\ell+t+|E|)\) work and \(poly(\log n)\) depth, where
\(s:=\max_i |S_i|\).
\end{lemma}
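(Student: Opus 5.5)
The plan is to compute the fractional cover greedily by degree classes, in the style of the parallel greedy / primal--dual algorithms of Berger--Rompel--Shor, using the large degree extractor (\cref{big-degree-ds}) on the bipartite set--element graph with sets on the left. We keep a fractional vector $z$, initially zero, together with a residual instance of uncovered elements and the sets restricted to them. We then repeat the following round. Query the extractor for the current top degree class: sets whose residual size lies in $[2^k,2^{k+1})$. Let $\mathcal{S}_k$ be these sets and $X_k$ their residual elements. Every element of $X_k$ has $\rho_x=|\{S\in\mathcal{S}_k: x\in S\}|\ge 1$. Raise $z_S$ by $2^{-j}$ on every $S\in\mathcal{S}_k$, where $j$ is a common level chosen per round as described below. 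Delete every element whose accumulated coverage $\sum_{S\ni x}z_S$ has reached $1$, and report the deletions to \cref{big-degree-ds}. Sets are never deleted explicitly; they just fall to lower degree classes. Finally, cap each $z_S$ at $1$ and multiply $z$ by a constant if needed so that it is feasible.

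For the cost analysis I would use the standard dual-fitting argument. In a round at degree scale $2^k$, charge the increase $2^{-j}|\mathcal{S}_k|$ to the coverage mass $2^{-j}\sum_{S\in\mathcal{S}_k}|S\cap X_k|\ge 2^{k-j}|\mathcal{S}_k|$ that the elements of $X_k$ receive. So each unit of coverage an element receives costs at most $2^{-k}$ per unit, up to a factor of $2$. An element stops receiving coverage once it reaches $1$, overshooting by at most one round's increment. Summing, each element $x$ is charged at most $O(1/2^{k_x})$, where $2^{k_x}$ is the largest degree scale at which $x$ was last covered. This gives $\sum_x O(1)\cdot(\text{price of }x)$, and the usual harmonic argument over an optimal set $S^*$ bounds its elements' prices by $O(H_{|S^*|})=O(\log s)$. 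That yields an $O(\log s)\cdot\mathrm{OPT}$ bound.

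To control overshoot I would choose the increment per round to be $2^{-j}$ with $2^{j}$ approximately the maximum $\rho_x$ in the current round, computed by a max over $X_k$. Then no element jumps by more than a constant within one round, and the degree class either drops or the maximum $\rho$ halves after $O(\log n)$ rounds. This gives $O(\log^2 n)$ rounds per scale and $\poly(\log n)$ depth overall.

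The main obstacle is the work bound, since a set may participate in many rounds at the same scale. The total work is $\sum O(|V'|+|E'|)$ over queries plus the redundancy from \cref{big-degree-ds}, which is $O(\ell+t+|E|)$. Each round's output edges $E'$ must therefore be charged to something that decreases geometrically. I would try to argue that each round either deletes a constant fraction of $E'$ (elements hitting coverage $1$), or doubles $2^{-j}$ for the elements with $\rho_x$ near the max. That alone is not linear, so instead I would first try to replace the raw raising step by one call to the hitting set theorem (\cref{thm:main-hitting-set}) on the extracted subgraph. Use probabilities $p_S=2^{-j}$ and weight elements by degree, so that a constant fraction of $E'$ gets covered. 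Assign $z_S=1$ only to the selected sets $H$, which are $O(\sum p_S)$ many and therefore chargeable as above. The covered edges are removed, and this geometric decay of $|E'|$ is what makes the total work linear. If the constant-fraction guarantee only holds in weighted form, I would weight elements by their degree in the extracted subgraph, exactly as in \cref{lem:frac-to-int-set-cover}.
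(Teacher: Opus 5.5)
\textbf{Gap: the linear work bound is not established.} Your cost analysis for the continuous version is fine: the price is at most $2^{-k}$ per unit of coverage at scale $2^k$, scales are non-increasing, and the harmonic sum runs over an optimal set. But the work bound, which is the real content of the lemma, does not follow from either of your two routes.

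With one extractor query per increment, a set is re-output in every round it stays in $[2^k,2^{k+1})$. Suppose, for example, that each set's elements are spread evenly over $\Theta(\log n)$ levels of $\rho_x$. Then a round with increment $2^{-j}$, $2^j\approx\max\rho$, essentially kills only the top level, since an element with $\rho_x=2^i$ receives only $2^{i-j}$. So every set loses an $O(1/\log n)$ fraction per round, stays in its class for $\Theta(\log n)$ rounds, and the total extractor output is $\Theta(|E|\log n)$.

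The hitting-set substitute does not repair this. \cref{thm:main-hitting-set} only guarantees happiness, and a happy element with $c_x=\rho_x 2^{-j}<1$ may have $h_x=0$. So nothing forces a constant fraction of $E'$ to be covered; in the example, even ideal independent sampling covers only an $O(1/\log n)$ fraction. In \cref{lem:frac-to-int-set-cover} this step works only because the input is already a feasible fractional cover, so every element has $c_x\ge 1$. Lowering $2^j$ to get that here breaks your $O(2^{-k})$ per-element price for elements with $\rho_x\gg 2^j$. Two further points: $|H|=O(\sum_S p_S)$ needs the extra node $r^\star$, and if this route worked it would directly give an integral $O(\log s)$-approximation in linear work, which is stronger than \cref{thm:set-cover}.

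\textbf{Missing idea: handle a whole degree class in one phase per extraction.} The paper covers all $\rho$-levels at once with lazy bookkeeping.
\begin{itemize}
\item Every set of $H$ starts at $1/D$, where $D$ is the maximum element degree in $H$, and all active values double for $\log D$ rounds.
\item A set is frozen permanently as soon as its residual size drops below $2^{k-1}$.
\item An element of $H$-degree $c_x$ sleeps until round $\tau(x)=\lceil\log_2(D/c_x)\rceil$; before that its coverage is below $1$.
\item Sets keep a counter $\sigma(S)$ of sleeping elements and themselves sleep when they have no awake elements.
\end{itemize}
Each incidence is woken, removed, and dropped from its element's active list once. An element surviving round $r$ has fewer than $D/2^r$ active sets, so the per-round scanning is geometric and the whole phase costs $O(|E(H)|)$.

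At the end of a phase, every set of $H$ was either raised to $1$ (hence emptied) or fell below $2^{k-1}$. So a set's size halves between consecutive phases containing it, giving $\sum_H |E(H)|=O(|E|)$ plus the extractor's redundancy. The approximation is charged per phase: each live element receives $O(2^{-k})$, and an optimal set has at most $2^{k+1}$ live elements, so it absorbs $O(1)$ per phase and $O(\log s)$ overall. No derandomization is needed for the fractional problem.
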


\begin{proof}
We view the instance as a bipartite graph, with the sets on the left and the elements on the
right. The algorithm proceeds in phases.

\paragraph{Method overview.}
In one phase, let the current maximum set size lie in \([2^k,2^{k+1})\). Using
\cref{big-degree-ds}, we extract the subinstance \(H\) consisting of all sets of current size at
least \(2^k\), together with the elements contained in those sets. Since the maximum remaining set
size is below \(2^{k+1}\), every set in \(H\) has size in \([2^k,2^{k+1})\).

During this phase, we increase the fractional values of the sets in \(H\). At the end of the
phase, every surviving set of \(H\) has size below \(2^{k-1}\). We then delete all elements whose
total assigned fractional value among all phases has reached at least \(1\), and continue to the
next phase.

We first describe one phase, then prove the approximation ratio, and finally analyze the work and
depth.

\paragraph{One phase.}
Fix one phase and let \(H\) be the extracted subinstance. Let \(D\) be the maximum degree on the
right side of \(H\), rounded up to the next power of \(2\). Initially, every set in \(H\) is
active and has current probability \(1/D\). In round \(r\), every active set has current
probability \(2^r/D\). The phase lasts for \(R:=\log D\) rounds, so in the last round the current
probability reaches \(1\).

For an element \(x\), let \(c_x\) be its degree in \(H\), and define its activation round by
\(\tau(x):=\max\{0,\lceil \log_2(D/c_x)\rceil\}\). Before round \(\tau(x)\), the total
contribution of the active sets containing \(x\) is less than \(1\), so \(x\) cannot yet be
removed. We therefore keep \(x\) asleep until round \(\tau(x)\), and only then start processing
it.

For each set \(S\), we maintain:
\begin{itemize}
    \item a status bit saying whether \(S\) is still active,
    \item a current list \(A(S)\) of awake elements still contained in \(S\), and
    \item a counter \(\sigma(S)\) equal to the number of sleeping elements still contained in
    \(S\).
\end{itemize}
Thus the current number of remaining elements of \(S\) is \(|A(S)|+\sigma(S)\). A set is asleep
when \(A(S)\) is empty. In that case, it does no work in the current round, but it still knows
\(\sigma(S)\), so whenever new elements wake up, it can correctly decide whether it should remain
active.

Using \(\log n\) sorting (\cref{logn-sorting}) once at the beginning of the phase, we bucket all
incidences \((x,S)\) by the activation round \(\tau(x)\). Thus, in round \(r\), each set \(S\) can
access the block of its incident elements that wake up in that round.

The update in round \(r\) is as follows.
\begin{enumerate}
    \item We wake up the sets that need to wake up in this round. Specifically, we consider all
    sets that contain an element whose activation round is \(r\). This list can be precomputed
    using \(\log n\) sorting (\cref{logn-sorting}). We prune it by checking the status bits and
    keep only those sets that are currently sleeping. We then mark these sets as awake and append
    them to the current awake list of sets.

    \item For each awake active set \(S\), scan the block of incidences \((x,S)\) with
    \(\tau(x)=r\). Every still-alive such element \(x\) is appended to \(A(S)\), and
    \(\sigma(S)\) is decreased by \(1\).

    \item Every awake element \(x\) scans its current list of incident active sets, removes those
    that have already become inactive, and adds their frozen probabilities to a fixed scalar
    \(f_x\). If the total \(f_x+a_x\cdot 2^r/D\) reaches at least \(1\), where \(a_x\) is the
    number of still-active incident sets of \(x\), then \(x\) is marked dead and deleted from the
    residual instance.

    \item Every awake active set \(S\) scans its list \(A(S)\) and removes the dead elements. If
    \(|A(S)|+\sigma(S)<2^{k-1}\), then \(S\) becomes inactive permanently and stores its current
    probability as its final contribution. Otherwise, if \(A(S)\) becomes empty, then \(S\) goes
    to sleep.

    \item We compact the awake-set list so that only sets that remain awake are carried to the next
    round.
\end{enumerate}

At the end of the phase, every set that stayed active throughout all \(R\) rounds has current
probability \(1\), so every remaining awake element incident to it is deleted. Every set that
became inactive did so only after its number of remaining elements had dropped below \(2^{k-1}\).
Hence every surviving set in \(H\) has size below \(2^{k-1}\), as required.

\paragraph{Approximation analysis.}
We charge the increase in the fractional value of a set to the elements still present when the
increase occurs. More precisely, suppose that during this phase a set \(S\) changes its current
probability from \(p\) to \(2p\). This contributes an additional amount \(p\) to the fractional
cost of \(S\). At that moment \(S\) is still active, so it contains at least \(2^{k-1}\) remaining
elements. We therefore distribute the charge \(p\) evenly among any \(2^{k-1}\) remaining elements
of \(S\), charging each of them \(p/2^{k-1}\). We also treat the initial assignment \(0\to 1/D\)
in the same way, charging \(1/(D2^{k-1})\) to \(2^{k-1}\) remaining elements of the set.

Fix an element \(x\), and let \(c_x'\) be the total fractional value incident to \(x\) at the last
moment in this phase when it is still present. If \(x\) is deleted during the phase, then
\(c_x' < 2\), since it was below \(1\) in the previous round and all active probabilities only
double from one round to the next. If \(x\) survives the phase, then certainly \(c_x' < 1\). Thus
in all cases \(c_x' < 2\). Since every increase in an incident set of size class \(k\) charges
\(x\) by at most \(p/2^{k-1}\), the total charge received by \(x\) during this phase is at most
\(c_x'/2^{k-1}=O(2^{-k})\).

Now fix an optimal integral set cover \(\mathcal O\), and assign each element to one set of
\(\mathcal O\) containing it. Consider a fixed set \(A\in \mathcal O\). During a phase with size
class \([2^k,2^{k+1})\), the number of still-present elements assigned to \(A\) is at most
\(2^{k+1}\), since the current maximum set size is below \(2^{k+1}\). Each such element receives
charge \(O(2^{-k})\) in the phase, so the total charge assigned to \(A\) in that phase is \(O(1)\).

There are only \(O(\log s)\) phases, so the total charge assigned to \(A\) over the whole
execution is \(O(\log s)\). Summing over all sets in \(\mathcal O\), we obtain that the total
fractional cost produced by the algorithm is \(O(\log s)\cdot |\mathcal O|
= O(\log s)\cdot \mathrm{OPT}\).

\paragraph{Work and depth.}
Fix one phase and let \(H\) be the extracted subinstance. Using \(\log n\) sorting
(\cref{logn-sorting}), we bucket the incidences of \(H\) by activation round. This costs
\(O(t_H+\ell_H+|E(H)|)\) work.

Inside the phase, each incidence \((x,S)\) is appended to the awake list \(A(S)\) at most once,
namely when \(x\) wakes up, and is removed from that list at most once, namely when \(x\) dies.
Similarly, each incidence is deleted from the active-set list of \(x\) at most once, namely when
\(S\) becomes inactive. Thus, all list maintenance over the entire phase costs \(O(|E(H)|)\) work.
Including the preprocessing, one phase uses \(O(t_H+\ell_H+|E(H)|)\) work, which is \(O(|E(H)|)\)
after removing isolated vertices.

To sum over the phases, note that if a set survives a phase, then by construction its number of
remaining elements drops by at least a factor of \(2\). Therefore, for every set \(S_i\), the sum
of its sizes over all phases is \(O(|S_i|)\). Summing over all sets yields
\[
\sum_{\text{phases }H} |E(H)| = O(|E|).
\]
The extractor \cref{big-degree-ds} contributes an additional total redundancy of
\(O(\ell+t+|E|)\). Hence the overall work is \(O(\ell+t+|E|)\).

For the depth bound, one phase contains \(R+1=O(\log D)\le O(\log n)\) rounds. Each round uses
only scans, compactions, prefix sums, and access to the precomputed activation blocks, all of which
have \(poly(\log n)\) depth. Since there are only \(O(\log s)\le O(\log n)\) phases, the total
depth is \(poly(\log n)\).
\end{proof}

\section{Maximal Independent Set}

\subsection{Definition and Results}

\begin{definition}[Maximal Independent Set]
A maximal independent set is a vertex set $V' \subseteq V$ for a graph $G=(V,E)$ such that:
\begin{itemize}
    \item the vertices in $V'$ are pairwise nonadjacent; and
    \item every vertex in $V$ is either in $V'$ or adjacent to a vertex in $V'$.
\end{itemize}
\end{definition}

The main result of this section is a linear-work parallel algorithm for maximal independent set. It follows the main derandomization idea of \cite{GG25}, but uses an improved hitting-set lemma with a quadratic objective.

\begin{theorem}\label{thm:mis}
Given a graph $G=(V,E)$ in compact representation, there is a deterministic parallel algorithm that finds a maximal independent set of $G$ in $O(|V|+|E|)$ work and $poly(\log n)$ depth.
\end{theorem}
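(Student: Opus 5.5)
We follow the same template as the maximal matching proof of \cref{thm:maximal_matching}. There is one progress lemma that removes a constant fraction of the edges incident to the current top degree class. Around it sits a wrapper driven by the large degree extractor of \cref{big-degree-ds} (applied to the bipartite double cover of $G$), which charges each round's work to the edges it removes.

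\textbf{Progress lemma.} Let the current maximum degree lie in $[2^k,2^{k+1})$, let $H$ be the set of vertices of degree at least $2^k$ with $K:=|H|$, and let $d=2^{k+1}$. We want an independent set $I$ whose deletion together with $N(I)$ removes $\Omega(Kd)$ edges, computed in work linear in the extracted subgraph. Following the Luby/\cite{GG25} idea, we build a pre-hitting-set instance:
\begin{itemize}
\item the left side is a copy of $V$, with every node given probability $p=2^{-t}\approx 1/(16d)$;
\item the right side is $H$, each $u\in H$ adjacent to its neighbours, so that $c_u=p\deg(u)\in\Theta(1)$;
\item every right node has weight $1$.
\end{itemize}
Applying \cref{thm:main-hitting-set} gives a set $S$ with $\Omega(K)$ vertices $u\in H$ having $|N(u)\cap S|\in[1,O(1)]$.

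$S$ need not be independent, so we also need the number of edges inside $S$ to be small. For this we follow the proof of \cref{lem:fixed-prob-strict-sampling}: we repeat strict $\tfrac12$-sampling rounds (\cref{lem:half-sampling} with geometrically scheduled $\gamma_i$), doubling $p$ each round. Alongside the hitting-set groups, each round also controls, through a second quadratic term, the induced degrees $\deg_{G[S]}(v)$, so that the selected graph has expected internal degree $O(p\deg)$. After $t-c_0$ rounds, $S$ has constant probability $2^{-c_0}$, every surviving $u\in H$ still sees $\Theta(1)$ selected neighbours for a constant fraction of $H$, and $G[S]$ has $O(|S|)$ edges with $|S|=O(K)$.

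We then take the subgraph $G[S]$. We delete vertices of degree above a large constant, which by Markov removes only a small fraction of $S$, and compute a $(\Delta+1)$-coloring of the rest via \cref{thm:delta+1_coloring}. Following the wrap-up in \cref{lem:somehow-mm}, we choose the color class $I$ that dominates the largest number of happy $u\in H$. Since each happy $u$ has $O(1)$ selected neighbours, some class dominates an $\Omega(1)$ fraction of them, i.e.\ $\Omega(K)$ vertices of degree $\ge 2^k$. Removing $I\cup N(I)$ therefore deletes $\Omega(K2^k)=\Omega(|E(H)|)$ edges.

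\textbf{Wrapper.} Exactly as in the proof of \cref{thm:maximal_matching}: repeatedly query \cref{big-degree-ds}, run the progress lemma on the extracted subgraph (whose edges are those incident to the top class together with the two-hop information needed), add $I$ to the output, and batch-delete $I\cup N(I)$. Work telescopes to $O(|V|+|E|)$ plus total redundancy $O(|V|+|E|)$. Each degree class survives $O(\log n)$ rounds, so depth is $\poly(\log n)$. Maximality holds because we stop only when the graph is empty, and every deleted vertex is in $I$ or adjacent to it.

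\textbf{Main obstacle.} The crux is the sampling phase, where two objectives must be kept simultaneously. First, the hitting counts at $H$: this is the first quadratic term of \cref{lem:half-sampling}. Second, the number of edges inside $S$: this needs either the edge groups of \cref{lem:near-independent-bundling} over the edges of $G[S]$, or a reweighting that counts, for each $v$, its selected neighbours. We must show the second term's error also forms a geometric series with $\gamma_i=\max\{\eta\,0.95^i,1/\log^4 n\}$, and that the graph shrinks by factor $\le 2/3$ per round so total work stays linear. A subtle point is that the extracted subgraph must include edges among neighbours of $H$ (to control $G[S]$). If that breaks the linear charge, we would instead restrict the left side to nodes of degree $\ge 2^{k-1}$ in the style of the Luby analysis, so the extra edges are charged to $\Omega(K2^k)$.
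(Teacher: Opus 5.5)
\textbf{Main gap: the work bound.} Your wrapper charges each round $O(|E(H)|+R)$. That presumes the progress lemma touches only what \cref{big-degree-ds} returns, i.e.\ $H$ and the edges incident to $H$. But controlling $|E(G[S])|$ needs the auxiliary graph $G[N(H)]$. Just to find it you must scan the full adjacency lists of $N(H)$, at cost $\sum_{v\in N(H)}\deg(v)$, which can be $\Theta(K4^k)$ while the round only guarantees deleting $\Omega(K2^k)$ edges. For example, let each $u\in H$ have $2^k$ private neighbours of degree $2^k-1$ whose other edges go to pendant leaves. Then $G[N(H)]$ is empty, but certifying this costs a factor $\Theta(2^k)$ more than the progress. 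Moreover, neighbours of the undominated part of $H$ are neither selected nor dominated, so they are rescanned in later rounds. So your telescoping argument fails, and you offer no other amortization.

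You cannot postpone the $G[S]$-control to the late rounds, when the candidate set is small enough to scan: rounding steps that are blind to these edges may steer $S$ onto a dense subgraph. Your fallback does not help either. If the left side is restricted to degree $\ge 2^{k-1}$, vertices of $H$ whose neighbours all have smaller degree (e.g.\ star centres) can no longer be hit. And in the example above the neighbours already have degree $2^k-1\ge 2^{k-1}$.

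\textbf{Two smaller problems.} First, Markov applied to the unweighted count $|E(G[S])|$ bounds how many selected vertices you delete, not how many vertices of $H$ lose all their hitters. A single selected vertex may be the only hitter of up to $2^{k+1}$ vertices of $H$, so the internal edges must be weighted by the hitting responsibility of their endpoints. Second, with $p\approx 1/(16d)$ you have $c_u\le 1/16<1$, so \cref{thm:main-hitting-set} by itself guarantees no hits at all. Only the distance version, after rounding up to probability $2^{-c_0}$, gives them.

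\textbf{How the paper avoids this.} It does not use the extractor for MIS at all. It performs a Luby step on the whole graph with $p_v\approx\min(1,3/\deg(v))$, rounded to powers of two, and orients edges from lower to higher degree. The right nodes are the good vertices $v$ (out-degree at most twice the in-degree; their total degree is at least $|E|/2$). Each has weight $\deg(v)$ and can be hit only through in-neighbours $IN^*(v)$ of total mass in $[1,2]$. The auxiliary objective is $\Phi(A)=\sum_{u\to w,\;u,w\in A}S_u$ with $S_u=\sum_{v:\,u\in IN^*(v)}\deg(v)$. It has three useful properties:
\begin{itemize}
\item it is supported on $E$ itself, so no two-hop data is ever needed;
\item the orientation gives $\sum_{u\to w}p_w\le 3$, hence $\mathbb{E}[\Phi]=O(|E|)$;
\item the weights $S_u$ bound the loss from discarding selected vertices of selected out-degree above $C$ by $\Phi/C$.
\end{itemize}
The resulting independent-ish set is coloured with $O(C)$ colours by peeling plus partial list colouring. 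The best colour class deletes $\Omega(|E|)$ edges of the current graph, so plain geometric decay gives linear work. The technical core is \cref{lem:strong-hitting-set}, proved through \cref{lem:strict-half-sampling-aux} and a potential argument across probability classes. That is exactly the two-objective sampling step your proposal leaves open as its ``main obstacle''.
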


\subsection{Reducing to Hitting Set via Luby's Algorithm}

To prove \cref{thm:mis}, it suffices to give an algorithm for finding an independent-ish set, exactly as defined in \cite{GG25}.

To simplify the analysis, we orient the edges of $G$. For each edge, we orient it from the lower-degree endpoint to the higher-degree endpoint, breaking ties by vertex IDs.

\begin{definition}[Independent-ish set]
Fix a small constant $c>0$ and a constant $C$. An independent-ish set of a graph $G=(V,E)$ is a subset $S\subseteq V$ such that the total degree of the vertices in $S\cup N(S)$ is at least $c|E|$, and every vertex in $S$ has out-degree at most $C$ in the induced subgraph $G[S]$. Here, $N(S)$ is the neighbor set of $S$.
\end{definition}

Indeed, given such a set, we can color the induced subgraph on $S$ using $2C+1$ colors in $O(|V|+|E|)$ work, as explained in the next lemma.

\begin{lemma}\label{lem:independent-ish-coloring}
Let \(G=(V,E)\) be a graph, and let \(S\subseteq V\) be an independent-ish set with parameter \(C\).
Then, the induced subgraph \(G[S]\) can be properly colored with colors from \([4C+1]\) in \(O(|V|+|E|)\) work and \(poly(\log n)\) depth.
\end{lemma}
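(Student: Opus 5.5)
The plan is to exploit the orientation: every vertex of \(S\) has out-degree at most \(C\) in \(G[S]\). So \(G[S]\) is a graph whose edges can be oriented with out-degree at most \(C\), and hence its arboricity-style degeneracy is at most \(2C\). Every subgraph of \(G[S]\) has average degree at most \(2C\). Therefore in any subgraph at least half of the vertices have total degree at most \(4C\).

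First I build \(G[S]\) in compact form using \cref{clm:subgraph-splitting}. I mark the vertices of \(S\), and both endpoints of each edge can read the mark, so this costs \(O(|V|+|E|)\) work. I then run a peeling procedure in the style of an H-partition. In each round, every remaining vertex computes its degree in the current remaining graph. All vertices of degree at most \(4C\) form the next layer \(L_j\) and are removed. By the averaging argument, each round removes at least half of the remaining vertices, so there are \(O(\log n)\) layers.

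To keep the work linear, I will not rescan full adjacency lists in every round. The intended approach is that a vertex of degree at most \(4C\) is removed immediately. Vertices of larger degree pay a scan of their list each round, and I compact their lists via \cref{clm:subgraph-splitting} after deletions. The difficulty is that a high-degree vertex may survive many rounds while its list does not shrink, so I charge differently. I will charge the scan of a surviving vertex \(v\)'s list to edges that are being deleted, or else simply use the out-degree structure. Each remaining vertex only needs to know whether its remaining degree is at most \(4C\).

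A cleaner route, which I would try first, is to count degrees only through out-edges. Each vertex has at most \(C\) out-neighbors, and it learns when they are removed in \(O(C)\) work per round. For in-edges, each removed vertex decrements the counter at each of its at most \(C\) out-neighbors. This is a concurrent decrement, which I would realize through per-round aggregation. The removed vertices' out-edges are grouped by target using the stored reverse positions, and each target then sums the marks over its in-list segment. The main obstacle is making this sum cost proportional to the deleted edges rather than the full in-list. If that fails, I fall back to charging: a vertex whose list is scanned in a round either gets removed (charged to its edges once), or has degree above \(4C\). In the latter case, at most half the vertices survive, and the geometric decrease in the number of vertices together with the bounded out-degree, \(|E(G[S])|\le C|S|\), bounds the total scanning by \(O(C\cdot |S|)\) summed geometrically.

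Finally, I color layer by layer in reverse order, from \(L_{\text{last}}\) down to \(L_1\). When layer \(L_j\) is processed, each vertex has at most \(4C\) neighbors in layers \(\ge j\). Vertices within \(L_j\) may be adjacent to each other, so I cannot color them greedily in parallel directly. Instead I use \cref{thm:delta+1_coloring} on \(G[L_j]\), which has maximum degree at most \(4C\), to obtain an \(O(1)\)-coloring schedule. I then process its \(4C+1\) classes sequentially. In each class, every vertex picks a color in \([4C+1]\) not used by its at most \(4C\) already-colored neighbors in higher layers or earlier classes, using the CW flagging trick. Each layer costs work linear in its incident edges, the total work is \(O(|V|+|E|)\), and the depth is \(\poly(\log n)\).
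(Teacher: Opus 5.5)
Your proof is correct. Its skeleton matches the paper's: out-degree at most $C$ persists in every induced subgraph of $G[S]$, so you peel off the vertices of current degree at most $4C$ in $O(\log n)$ rounds via repeated subgraph selection, and then color the layers in reverse order.

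\textbf{Peeling.} Your ``fallback'' charging is exactly the paper's argument, and it is all you need; the per-round aggregation route can be dropped. The worry about a high-degree vertex surviving many rounds disappears once you state the bound for every remaining set, namely $|E(G[V_j])|\le C|V_j|$, rather than only $|E(G[S])|\le C|S|$. With that bound and $|V_j|\le 2^{-j}|S|$, the total size of the compacted graphs is a geometric sum.

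\textbf{Per-layer coloring.} This is where you genuinely diverge. The paper gives each uncolored vertex of the layer the list $[4C+1]$ minus the colors of its already-colored neighbors. It feeds the layer to the partial list-coloring algorithm of \cref{thm:partial_list_coloring} with $L=4C$, and repeats $O(\log n)$ times, since each call colors a constant fraction. This requires re-checking the list-size hypothesis on the uncolored part at each repetition and summing the geometric shrinkage.

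You instead invoke \cref{thm:delta+1_coloring} on $G[L_j]$, whose maximum degree is at most $4C$, only to obtain a schedule of at most $4C+1$ independent classes. You then color greedily class by class with CW flagging. Correctness reduces to a one-line count: a vertex has at most $4C$ already-colored neighbors, all lying in layers $\ge j$. Your route uses a heavier black box, which is itself built from partial list coloring via dyadic degree splitting, so there is no circularity. In exchange, the per-layer argument is simpler; the paper's route works directly with the lower-level primitive.

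\textbf{One detail to make explicit.} Build $G[L_j]$ from the compacted round-$j$ adjacency lists, each of length at most $4C$. Applying subgraph selection to all of $G[S]$ once per layer would cost an extra $O(\log n)$ factor in work.
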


\begin{proof}
Let \(H:=G[S]\). Every vertex of \(S\) has out-degree at most \(C\) in \(H\), so \(|E(H)|=\sum_{v\in S}\deg_H^+(v)\le C|S|\). Hence the average undirected degree in \(H\) is at most \(2C\). The same argument applies to every induced subgraph \(H'\) of \(H\), so every induced subgraph of \(H\) contains a vertex of degree at most \(4C\).

We therefore repeatedly peel away all vertices of degree at most \(4C\). In every nonempty induced subgraph, at least half of the vertices have degree at most \(4C\), since otherwise the average degree would exceed \(2C\). Thus the process removes a constant fraction of the remaining vertices in each step. Using repeated subgraph selection, we obtain a decomposition \(S_1,S_2,\dots,S_t\) with \(t=O(\log n)\), where every vertex in \(S_i\) has degree at most \(4C\) in the subgraph induced by \(S_i\cup S_{i+1}\cup \cdots \cup S_t\).

We now color the layers in reverse order. When we color a layer \(S_i\), all later layers have already been colored. For each vertex \(v\in S_i\), let its list be the set of colors in \([4C+1]\) not used by its already-colored neighbors. Since \(v\) has current degree at most \(4C\), the list size is at least \(k+1\), where \(k\) is the current degree of \(v\). Thus the hypothesis of the partial \([L+1]\) list-coloring problem is satisfied with \(L=4C\).

We apply the partial \([4C+1]\) list-coloring algorithm to the current layer. Each application colors a constant fraction of the remaining vertices of that layer, so \(O(\log n)\) repetitions color the whole layer. Since \(C\) is a constant, each application uses linear work in the current layer, and the uncolored part shrinks geometrically. Summing over all repetitions and layers gives \(O(\sum_i |S_i|)=O(|S|)\) work, plus the linear work needed to maintain the induced subgraphs and adjacency information. Thus the total work is \(O(|V|+|E|)\), and the total depth remains \(poly(\log n)\).

This yields a proper coloring of \(H=G[S]\) using colors from \([4C+1]\).
\end{proof}

\paragraph{Finishing with an independent-ish set.}
Let \(S_1,\ldots,S_q\) be the color classes of \(G[S]\), where \(q=O(C)\).
Since the closed neighborhoods \(S_i\cup N(S_i)\) cover \(S\cup N(S)\), there is
a color \(i^\star\) such that
\[
\sum_{v\in S_{i^\star}\cup N(S_{i^\star})}\deg_G(v)
\ge
\frac{1}{q}
\sum_{v\in S\cup N(S)}\deg_G(v)
=
\Omega(|E|).
\]
The set \(S_{i^\star}\) is independent, so we add it to the MIS and delete
\(S_{i^\star}\cup N(S_{i^\star})\). This removes \(\Omega(|E|)\) edges from the
current graph. Repeating on the remaining graph, while adding isolated vertices
to the MIS, gives a maximal independent set in total linear work.

\paragraph{Finding the independent-ish set.}
The next step is to define a hitting-set objective that directly yields an independent-ish set. This part is essentially the same as in \cite{GG25}, except that here we use a stronger version of the hitting-set lemma. We repeat the details for completeness.

Suppose that we intend to select each vertex \(v\) with probability \(p_v=\min\!\left(1,\frac{3}{\deg(v)}\right)\), rounded up to the nearest power of \(2\). Call a vertex \(v\) \emph{good} if its out-degree is at most twice its in-degree. A degree-counting argument shows that the total degree of the good vertices is at least \(|E|/2\).

For each good vertex \(v\), the total probability mass of its in-neighbors is at least \(1\). Indeed, \(v\) has at least \(\deg(v)/3\) in-neighbors, every in-neighbor \(u\) satisfies \(\deg(u)\le \deg(v)\), and therefore \(p_u \ge 3/\deg(v)\). Hence we may choose a subset \(IN^*(v)\subseteq N^{-}(v)\) such that \(1\le \sum_{u\in IN^*(v)} p_u \le 2\).

Fix a sufficiently large constant \(C\). We keep every selected vertex whose out-degree in the selected subgraph is at most \(C\), and discard every selected vertex whose out-degree in the selected subgraph exceeds \(C\).

To quantify the effect of discarding a selected vertex \(u\), define \(S_u \coloneqq \sum_{v:\,u\in IN^*(v)} \deg(v)\). Thus \(S_u\) is the total degree mass of good vertices \(v\) for which \(u\) is one of the vertices that can claim \(v\). If a selected vertex \(u\) is discarded, then the total degree contribution lost because of deleting \(u\) is at most \(S_u\).

This leads us to consider the quadratic objective \(\Phi (A)\coloneqq \sum_{u\to w, u, w \in A} S_u\). Its expectation is \(O(|E|)\), because
\[
\mathbb E[\Phi(A)]
=
\sum_v \deg(v)\sum_{u\in IN^*(v)} p_u \sum_{u\to w} p_w
\le
\sum_v \deg(v)\sum_{u\in IN^*(v)} p_u \cdot 3
\le
6\sum_v \deg(v)
=
O(|E|).
\]

Now, suppose that we have the following stronger version of the hitting-set lemma.

\begin{lemma}[Hitting set with auxiliary quadratic objective]\label{lem:strong-hitting-set}
Fix a constant \(\epsilon>0\). Let \(G=(V_L\cup V_R,E)\) be a pre-hitting-set instance, and let \(\Psi(S)=\sum_{e\in F[S]} b_e\) be an auxiliary objective on the left side, where \(F\) is a weighted multigraph on \(V_L\), and the quadratic coefficients \(b_e\) are nonnegative. Then there is a deterministic parallel algorithm that chooses a subset \(S\subseteq V_L\) such that at least a \((1-\epsilon)\)-fraction of the total right weight is happy, and \(\Psi(S)\le O\!\left(\mathbb E[\Psi(X)]\right)\), where \(X\) denotes the random subset obtained by independently selecting each left vertex according to its prescribed probability.
\end{lemma}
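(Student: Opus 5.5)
We follow the proof of \cref{thm:main-hitting-set} step by step, and carry the auxiliary objective $\Psi$ along as a second potential. The guiding invariant is the one natural under gradual rounding. Let $q_u$ be the current probability of $u$. Define
\[
\Psi^{(q)} := \sum_{e=\{u,u'\}\in F} b_e\, q_u q_{u'},
\]
the expectation of $\Psi$ under the current fractional solution. We round so that $\Psi^{(q)}$ grows by at most a factor $1+\beta_i$ in round $i$, with $\sum_i \beta_i=O(1)$. At the end all probabilities are $0$ or at least $2^{-c_0}$. Selecting every node with positive probability then costs at most $2^{2c_0}\Psi^{(q)}=O(\mathbb E[\Psi(X)])$, since each term $q_uq_{u'}\ge 2^{-2c_0}$. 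The first deterministic step of \cref{obs:hitting-to-distance} also needs care: nodes with $p_u\ge p_0$ are included outright, which multiplies their terms by at most $1/p_0^2=O(1)$. The hitting guarantee is then inherited verbatim from \cref{obs:hitting-to-distance}.

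\textbf{Modifying one half-sampling step.} In \cref{lem:half-sampling}, a class with probability $p$ is rounded by choosing $S$ and doubling. We add one more term to the objective $Q(S)$:
\[
\frac{1}{\Psi^{(q)}}\sum_{e=\{u,u'\}\in F} b_e\, \hat q_u \hat q_{u'},
\]
where $\hat q_u$ is the new probability ($2p\cdot\mathbf 1[u\in S]$ for the class being rounded, unchanged otherwise). Under independent $\frac12$ sampling its expectation is at most $\Psi^{(q)}$. Equality holds for edges with at most one endpoint in the class. For edges with both endpoints in the class, the new term is $4p^2\cdot\frac14=p^2$, so these are preserved exactly too. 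This term is quadratic in the indicators, with support exactly the edges of $F$, so the auxiliary graph grows only by $|F|$ per application.

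\cref{lem:aqo} is applied with error $\varepsilon$ times the total edge weight. To make the loss relative to $\Psi^{(q)}$ rather than to $\sum_e b_e$, we rescale: the edge weight of $e$ in the auxiliary graph is $b_e\hat q_u\hat q_{u'}$ evaluated at the doubled values, which is at most $4\times$ its current contribution. The additive loss is therefore at most $4\gamma_i\Psi^{(q)}$, i.e.\ $\beta_i=O(\gamma_i)$. The schedules of \cref{lem:fixed-prob-sampling} and the finishing phase have $\sum_i\sqrt{\gamma_i}=O(\epsilon)$, so $\sum_i\gamma_i=O(1)$ as well, and $\Psi^{(q)}$ only grows by a constant factor. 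The existing bounds on the distance and edge count keep their constants, because the combined objective still has expectation $O(1)$ after normalization. Markov-type separation of the three summands gives each of them an $O(1)$ bound.

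\textbf{Main obstacle: work.} Each round touches $F$-edges incident to the rounded class, and $F$ does not shrink with $|E(G)|$. We would try charging: an $F$-edge $\{u,u'\}$ becomes inactive as soon as one endpoint is dropped (probability $0$). An active endpoint in class $2^{-t}$ survives $i$ rounds with probability about $2^{-i}$ in the fractional sense, and the edge-count control of \cref{lem:near-independent-bundling} applies to $F$-degrees as well as $G$-degrees: bundle by $d_u+\deg_F(u)$. Then the number of active $F$-edges decays geometrically as $(2/3)^i$. The $1/\gamma_i$ overhead multiplies only the bundled degree terms, while the $F$-term itself is added once per round without bundling. The total work is thus $O(|V|+|E|+|F|)$ and the depth is unchanged, which is what the MIS application needs with $F$ the oriented graph. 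The delicate point I expect is the finishing phase, where classes merge; there I would reuse the $i\ge 2u$ origin argument of the proof of \cref{thm:distance_hitting_set} for $F$-edges.
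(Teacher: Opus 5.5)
There is a genuine gap in the per-round control of $\Psi^{(q)}$. \cref{lem:aqo} bounds only the \emph{combined} objective $Q(S)+\Psi^{(\hat q)}/\Psi^{(q)}$ by its expectation plus the rounding error. It gives no separate guarantee for each summand. Here $\mathbb E[Q]=\Theta(1)$; the distance part alone has expectation exactly $1$. The rounding error allowed on the $Q$-part is also of constant order, because that part of the auxiliary graph has total weight $\Theta(B)$ and is rounded with $\varepsilon\approx 1/B$. So nothing in the guarantee rules out an output with $Q(S)\approx 0$ and $\Psi^{(\hat q)}$ as large as $(1+\mathbb E[Q]+O(1))\Psi^{(q)}\ge 2\Psi^{(q)}$. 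Your estimate that the additive loss is at most $4\gamma_i\Psi^{(q)}$ counts only the defective $F$-edges and ignores this trade-off. Your own sentence that Markov-type separation gives each summand an $O(1)$ bound is exactly a constant-factor growth bound, i.e.\ $\beta_i=\Theta(1)$. This compounds over $\Theta(\log n)$ rounds: a class with $t\approx 2\log n$ already undergoes $\lceil t/2\rceil$ rounds in the square-root step, and the finishing phase adds more. So $\Psi^{(q)}$ can grow by a factor $n^{\Theta(1)}$.

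The missing idea is to unbalance the weights, as \cref{lem:strict-half-sampling-aux} does. Multiply the control potentials by a small factor such as $\gamma^{1/3}$, or equivalently weight the normalized $\Psi$-term by $\gamma^{-1/3}$. The combined expectation is then $1+O(\gamma^{1/3})$ in $\Psi$-units. The rounding error on the control part becomes $\gamma^{1/3}\cdot O(B)\cdot\frac1B=O(\gamma^{1/3})$. Together these force $\Psi^{(\hat q)}\le(1+O(\gamma^{1/3}))\Psi^{(q)}$.

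The price is that each control potential is now bounded only by $O(\gamma^{-1/3})$. The per-step distance error therefore degrades from $O(\sqrt\gamma)\,W$ to $O(\gamma^{1/6})\,W$, and the edge-count and $F$-degree deviations degrade the same way. So your claim that the existing bounds keep their constants fails, and the schedules must be retuned:
\begin{itemize}
\item in the square-root phase, take $\gamma\propto\epsilon^6\cdot 0.95^{\,j}$;
\item in the finishing phase, take $\gamma_u\propto\epsilon^6 2^{-\beta u}$ with $\beta<\log_2(3/2)$;
\item lower the floor $1/\log^4 n$ to $1/\log^{A} n$ with $A>6$, since otherwise the $\Theta(\log n)$ rounds sitting at the floor contribute $\Theta(\log^{1/3} n)$ to $\sum\gamma^{1/6}$.
\end{itemize}

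With this repair the rest of your plan goes through. Tracking the fractional expectation $\Psi^{(q)}$ is a cleaner bookkeeping than the paper's $(1+\epsilon_1)^{-u}$-weighted potential: cross-class $F$-edges enter naturally as linear terms, and the final loss is just the factor $2^{2c_0}$. Bundling by $F$-degrees is exactly the paper's $d_u$-mass potential for the work bound.
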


We apply this lemma to the hitting-set instance in which each right vertex \(v\) can only be hit by vertices in \(IN^*(v)\), together with the quadratic objective \(\Phi\). After doing so, we still have \(\Phi \le C_0|E|\) for some absolute constant \(C_0\).

Let \(H\) be the set of happy good vertices, and let \(D\) be the set of selected vertices that are later discarded because their out-degree in the selected subgraph exceeds \(C\). Then the total degree of the resulting independent-ish set and its neighbors is bounded below by \(\sum_{v\in H}\deg(v)-\sum_{u\in D} S_u\). Indeed, the first term measures the total degree mass contributed by happy good vertices, while the second upper-bounds the total loss caused by deleting selected vertices of excessively large out-degree.

Finally, if \(u\in D\), then its selected out-degree is at least \(C\), so its contribution to \(\Phi\) is at least \(C S_u\). Hence \(\sum_{u\in D} S_u \le \Phi/C\le (C_0/C)|E|\). Choosing \(C\) sufficiently large compared to \(C_0\), the second term becomes a small fraction of \(|E|\), whereas the first term is \(\Omega(|E|)\). Therefore, the resulting set is an independent-ish set.

\subsection{Hitting Set with Quadratic Auxiliary Objective}

\paragraph{Proof overview.}
We follow essentially the same proof strategy as for the original hitting-set lemma. We split the graph into probability classes, perform repeated \(\tfrac12\)-sampling inside each single probability class in order to obtain a transition \(p\to \sqrt p\), and then finish by repeatedly rounding up the smallest remaining class.

The difference is that here, in addition to controlling the hitting-set distance, we must also control an auxiliary quadratic objective, as well as one-sided shrinkage for both the number of edges of the current bipartite instance and the quantity \(\sum_{u\in V_L} d_u\). Here, the parameter \(d_u\) may be arbitrary, but its main intended use is to represent the degree of \(u\) in the full instance \(F\), when the current left side \(V_L\) is only a subset of that full instance. The required one-step statement is given by the next strengthened \(\tfrac12\)-sampling lemma. Once that lemma is proved, the rest of the argument is the same as before: first apply the \(p\to \sqrt p\) transition separately to each probability class, and then perform a finishing phase by repeatedly rounding up the smallest remaining class. The resulting distance and work bounds again come from summing geometric series.

\begin{lemma}[Strict \(\tfrac12\)-sampling with auxiliary objective]\label{lem:strict-half-sampling-aux}
There is an absolute constant \(C>0\) such that the following holds.

Let \(G=(V_L\cup V_R,E)\) be a bipartite instance in compact representation, with \(|V_L|,|V_R|\le n\). Assume every left vertex has the same probability \(p\in(0,\tfrac14)\), and let \(\gamma\in[1/\log^A n,\epsilon_0]\), where \(\epsilon_0>0\) is a sufficiently small absolute constant and \(A\) is a fixed positive constant.

In addition, let \(H=(V_L,F)\) be a weighted multigraph on the left side, with nonnegative vertex coefficients \((a_u)_{u\in V_L}\), positive edge weights \((b_e)_{e\in F}\), and positive integers \((d_u)_{u\in V_L}\). Define \(\Phi(S)\coloneqq \sum_{u\in S} a_u + \sum_{e\in F[S]} b_e\) for \(S\subseteq V_L\), where \(F[S]\) denotes the set of auxiliary edges whose two endpoints both lie in \(S\).

Then there is a deterministic parallel algorithm that outputs a subset \(S\subseteq V_L\) using \(poly(\log n)\) depth and \(O\!\left((|V_L|+|V_R|+|E|+|F|)/\gamma\right)\) work such that, letting \(G'\) be the instance obtained by setting the left probabilities to \(2p\) on \(S\) and to \(0\) on \(V_L\setminus S\), the following hold:
\begin{enumerate}
    \item \(d(G,G') \le C\gamma^{1/6}\,W(G) + C\frac{p}{\gamma}\,\hat W(G)\).
    \item \(|E(G')|\le \frac23 |E(G)|\) and \(\sum_{u\in S}d_u \le \frac23 \sum_{u\in V_L}d_u\).
    \item If \(\mathbb{E}_{X\sim \mathrm{Ber}(1/2)}[\Phi(X)]\) denotes the value of the auxiliary objective when each left vertex is independently chosen with probability \(1/2\), then
    \[
    \Phi(S)
    \le
    C\,\mathbb E_{X\sim \mathrm{Ber}(1/2)}[\Phi(X)]
    +
    C\gamma^{1/3}\sum_{e\in F} b_e.
    \]
\end{enumerate}
\end{lemma}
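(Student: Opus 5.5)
We mirror the proof of \cref{lem:half-sampling}, adding the auxiliary objective and the $d_u$-weighted shrinkage as extra terms in the quadratic potential handed to \cref{lem:aqo}. Let $B:=\lceil \gamma^{-1/3}\rceil$; the exponents $\gamma^{1/6}$ and $\gamma^{1/3}$ in the statement suggest that bundles of size roughly $\gamma^{-1/3}$ suffice, with the remaining $1/\gamma$ slack spent on the defective-coloring precision. As before, partition each $N(v)$, $v\in V_R$, into groups of size exactly $B$ with fewer than $B$ leftovers. Apply \cref{lem:near-independent-bundling} twice, once to the left degrees $\deg_G(u)$ and once to the demands $d_u$. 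The second call needs care, since the $d_u$ may be large; we can first replace $d_u$ by $\lceil d_u/\bar d\rceil$-type rescaled integers or use a single copy per vertex with weight $d_u$ and bundle by $\lfloor\log d_u\rfloor$ classes, whichever keeps the work within $O(B|V_L|)$. Delete the at most $2B$ ungrouped left vertices; this costs $O(Bp)\hat W(G)\le O(p/\gamma)\hat W(G)$ in distance and is harmless for the one-sided bounds in item 2, which only need upper bounds.

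The potential is
\[
Q(S):=\frac{1}{M}\sum_i w_i\Bigl(x_i-\frac B2\Bigr)^2+\frac1{K_1}\sum_j\Bigl(y_j-\frac{|T_j|}2\Bigr)^2+\frac1{K_2}\sum_j\Bigl(z_j-\frac{|T'_j|}2\Bigr)^2+\frac{\Phi(S)}{\Phi_0},
\]
where $M,K_1,K_2$ are normalized so that each of the first three terms has expectation $O(1)$ under $\mathrm{Ber}(1/2)$. Here $\Phi_0:=\mathbb E_{\mathrm{Ber}(1/2)}[\Phi]+\gamma^{1/3}\sum_e b_e$, which is positive and gives the fourth term expectation at most $1$. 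Thus $\mathbb E[Q]=O(1)$. The auxiliary graph for \cref{lem:aqo} is the union of the bundle cliques (total size $O(B(|E|+\sum_u(\text{bundled copies})))$) and the multigraph $F$ itself. Its total edge weight is $O(B)$ from the first three terms, plus $\sum_e b_e/\Phi_0\le\gamma^{-1/3}$ from the fourth. Running \cref{lem:aqo} with $\varepsilon:=\gamma^{2/3}$, which is legitimate since $\gamma\ge 1/\log^A n$, yields $Q(S)\le O(1)+\gamma^{2/3}\cdot O(B+\gamma^{-1/3})=O(1)$, because $B\approx\gamma^{-1/3}$. The construction and \cref{lem:aqo} cost $O((|V_L|+|V_R|+|E|+|F|)B)\le O((\cdots)/\gamma)$ work, and the defective coloring with $O(1/\varepsilon)=\mathrm{poly}\log n$ colors keeps the depth polylogarithmic.

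The consequences then follow term by term. From the first term, the Cauchy--Schwarz computation of \cref{lem:half-sampling} gives $d(G,G')\le O(B^{-1/2})W(G)+O(pB)\hat W(G)$, i.e.\ $O(\gamma^{1/6})W+O(p/\gamma)\hat W$, which is item 1. From the second and third terms, the same argument gives $|E(G')|\le(\frac12+O(B^{-1/2}))|E(G)|\le\frac23|E(G)|$, and likewise for $\sum_{u\in S}d_u$, giving item 2 for $\gamma\le\epsilon_0$. The fourth term gives $\Phi(S)\le O(\Phi_0)$, which is item 3.

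The main obstacle is the bundling for the $d_u$-demand term when the $d_u$ are arbitrary positive integers. \cref{lem:near-independent-bundling} costs $O(B\sum d_u)$, and $\sum d_u$ could exceed the allowed work. We would first try bundling only by $\lfloor\log_2 d_u\rfloor$ classes, with one copy per vertex and group weight $2^t$. The Chebyshev term then controls $\sum_{u\in S}2^{\lfloor\log d_u\rfloor}$, which is within a factor $2$ of $\sum_{u\in S}d_u$. To recover the sharp factor $\frac23$, we would further split each class into sub-bands of ratio $1+\delta$ for a small constant $\delta$, so that the loss is only a constant factor in the number of classes. A secondary subtlety is the normalization $\Phi_0$ when $\mathbb E[\Phi]=0$; the additive $\gamma^{1/3}\sum b_e$ in $\Phi_0$ handles this.
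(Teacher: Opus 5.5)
Your main argument matches the paper's. The same three Chebyshev-type control potentials are added to $\Phi$ and handed to \cref{lem:aqo}: right-neighbourhood groups for the distance, and \cref{lem:near-independent-bundling} on the left degrees and on $(d_u)$. Items 1 and 2 then follow by Cauchy--Schwarz, and item 3 directly.

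Only the balancing differs. The paper uses $B=\lceil 1/\gamma\rceil$ and puts weight $\gamma^{1/3}$ on the control potentials against an unnormalized $\Phi$. Its conclusion that each control potential is $O(\gamma^{-1/3})$ therefore tacitly needs $\Phi$ scaled so that $\mathbb E[\Phi]=O(1)$. You use $B=\lceil\gamma^{-1/3}\rceil$, equal weights, the explicit normalization $\Phi/\Phi_0$, and precision $\gamma^{2/3}$. This is valid, gives $\gamma^{1/6}=B^{-1/2}$ directly, and is slightly cleaner. Note also that $\mathbb E_{\mathrm{Ber}(1/2)}[\Phi]\ge\frac14\sum_e b_e$, so $\Phi_0\asymp\mathbb E[\Phi]$, and the degenerate case is only $\Phi\equiv 0$.

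Where you go wrong is the fix for the $d_u$-term. Your concern is legitimate. The paper simply applies \cref{lem:near-independent-bundling} to $(d_u)$ at cost $O(B\sum_u d_u)$. Its stated work bound therefore tacitly assumes $\sum_u d_u=O(|V_L|+|E|+|F|)$, which holds in the intended use where the $d_u$ are auxiliary degrees.

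However, the sub-band scheme you commit to breaks item 1.
\begin{itemize}
\item With one copy per vertex, each of the up to $\Theta(\log n/\delta)$ bands leaves as many as $B-1$ ungrouped vertices.
\item These must be kept out of $S$. Otherwise a band with fewer than $B$ members is entirely uncontrolled and may carry most of the $d_u$-mass.
\item So you exclude $\Theta(B\log n)$ vertices rather than ``at most $2B$''. The only available bound on the resulting distance is $p\cdot\Theta(B\log n)\cdot\hat W(G)=\Theta(p\gamma^{-1/3}\log n)\,\hat W(G)$.
\item This exceeds $C\frac{p}{\gamma}\hat W(G)$ once $\gamma\gg\log^{-3/2}n$, whereas item 1 must hold for $\gamma$ up to $\epsilon_0$.
\end{itemize}
Merging leftovers across bands, as the cleanup in \cref{lem:near-independent-bundling} does, relies on representing demand by multiple copies. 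That is exactly what you were trying to avoid.

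Your other option does work once you insert a constant. Let $\bar d$ be the mean of the $d_u$ and set $d'_u:=\lceil Kd_u/\bar d\rceil$ for a constant $K\ge 4$. Then $\sum_u d'_u\le (K+1)|V_L|$, so \cref{lem:near-independent-bundling} costs $O(B|V_L|)$ and still leaves at most $B$ ungrouped vertices. Moreover, $\sum_{u\in S}d_u\le\frac{\bar d}{K}\sum_{u\in S}d'_u\le\bigl(1+\frac1K\bigr)\bigl(\frac12+O(\gamma^{1/6})\bigr)\sum_u d_u\le\frac23\sum_u d_u$. With $K=1$, as you wrote it, this only yields the factor $1+o(1)$.
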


\begin{proof}
Let \(B:=\lceil 1/\gamma\rceil\). Since \(\gamma\le \epsilon_0\), we may assume \(B\) is sufficiently large.

\paragraph{Distance potential.}
For each \(v\in V_R\), partition \(N(v)\) into disjoint groups of size exactly \(B\), leaving at most \(B-1\) leftover neighbors. Let the resulting groups be \(S_1,\dots,S_k\), where each \(S_i\) comes from a single right vertex and inherits its weight \(w_i\). For a subset \(S\subseteq V_L\), define \(x_i:=|S\cap S_i|\), and set
\[
M_{\mathrm{dis}}:=\frac{B}{4}\sum_{i=1}^k w_i,
\qquad
\Phi_{\mathrm{dis}}(S):=
\frac{1}{M_{\mathrm{dis}}}\sum_{i=1}^k
w_i\left(x_i-\frac{B}{2}\right)^2.
\]
Under independent \(\tfrac12\)-sampling, \(\mathbb E[(x_i-B/2)^2]=B/4\), so \(\mathbb E[\Phi_{\mathrm{dis}}]=1\).

\paragraph{Edge-count potential.}
Apply \cref{lem:near-independent-bundling} to the left degrees in \(G\). This gives multisets \(T_1,\dots,T_{\ell_G}\), each of size in \([B,2B]\), such that all but at most \(B\) left vertices are represented exactly according to their degree demand. For each \(i\), let \(y_i\) be the number of selected elements of \(T_i\), counted with multiplicity. Set
\[
K_G:=B\ell_G,
\qquad
\Phi_G(S):=
\frac{1}{K_G}\sum_{i=1}^{\ell_G}
\left(y_i-\frac{|T_i|}{2}\right)^2.
\]
Since \(|T_i|\in [B,2B]\), the variance of \(y_i\) under independent \(\tfrac12\)-sampling is \(\Theta(B)\), so \(\mathbb E[\Phi_G]=O(1)\).

\paragraph{\(d_u\)-mass potential.}
Apply \cref{lem:near-independent-bundling} to the demand sequence \((d_u)_{u\in V_L}\). This gives multisets \(U_1,\dots,U_{\ell_D}\), again each of size in \([B,2B]\), such that all but at most \(B\) vertices are represented exactly according to their demand. For each \(i\), let \(z_i\) be the number of selected elements of \(U_i\), counted with multiplicity. Set
\[
K_D:=B\ell_D,
\qquad
\Phi_D(S):=
\frac{1}{K_D}\sum_{i=1}^{\ell_D}
\left(z_i-\frac{|U_i|}{2}\right)^2.
\]
Again \(\mathbb E[\Phi_D]=O(1)\).

\paragraph{Combined objective.}
Apply \cref{lem:aqo} to the objective \(-(\gamma^{1/3}(\Phi_{\mathrm{dis}}+\Phi_G+\Phi_D)+\Phi(S))\). Since all coefficients in \(\Phi\) are nonnegative, the AQO output \(S\) satisfies
\[
\gamma^{1/3}\bigl(\Phi_{\mathrm{dis}}(S)+\Phi_G(S)+\Phi_D(S)\bigr)+\Phi(S)
\le
C_1\gamma^{1/3}+\mathbb E[\Phi(X)]+C_1\gamma\sum_{e\in F}b_e
\]
for some absolute constant \(C_1\), where \(X\) denotes independent \(\tfrac12\)-sampling. In particular, \[\Phi_{\mathrm{dis}}(S), \Phi_G(S), \Phi_D(S)\le O(\gamma^{-1/3}),\] and \(\Phi(S)\le C\,\mathbb E[\Phi(X)] + C\gamma^{1/3}\sum_{e\in F}b_e\) for a suitable absolute constant \(C\).

\paragraph{Distance bound.}
Let \(S_W:=\sum_{i=1}^k w_i\). Exactly as in the proof of \cref{lem:half-sampling}, after deleting the leftover vertices not covered by the distance groups, we have
\[
d(G,G')
\le
2p\sum_{i=1}^k w_i\left|x_i-\frac{B}{2}\right|
+
O\!\left(\frac{p}{\gamma}\hat W(G)\right).
\]
By Cauchy--Schwarz,
\[
\sum_{i=1}^k w_i\left|x_i-\frac{B}{2}\right|
\le
\sqrt{\left(\sum_{i=1}^k w_i\right)
\left(\sum_{i=1}^k w_i\left(x_i-\frac{B}{2}\right)^2\right)}.
\]
Now \(\sum_i w_i(x_i-B/2)^2=M_{\mathrm{dis}}\Phi_{\mathrm{dis}}(S)\le O(BS_W)\cdot O(\gamma^{-1/3})\), so
\[
\sum_{i=1}^k w_i\left|x_i-\frac{B}{2}\right|
\le
O\!\left(S_W\sqrt{B\gamma^{-1/3}}\right).
\]
Also \(S_W\le W(G)/(Bp)\), as before. Substituting this yields \(d(G,G')\le O(\gamma^{1/6})W(G)+O\!\left((p/\gamma)\hat W(G)\right)\).

\paragraph{Edge bound.}
Let \(m_G:=\sum_{i=1}^{\ell_G}|T_i|\). Since at most \(B\) vertices are left uncovered by the bundling, and we delete all of them, we have \(|E(G')|=\sum_{i=1}^{\ell_G} y_i\) and \(m_G\le |E(G)|\). By Cauchy--Schwarz,
\[
\left||E(G')|-\frac{m_G}{2}\right|
=
\left|\sum_{i=1}^{\ell_G} y_i-\frac12\sum_{i=1}^{\ell_G}|T_i|\right|
\le
\sum_{i=1}^{\ell_G}\left|y_i-\frac{|T_i|}{2}\right|
\le
\sqrt{\ell_G\sum_{i=1}^{\ell_G}\left(y_i-\frac{|T_i|}{2}\right)^2}.
\]
Using \(\sum_i (y_i-|T_i|/2)^2=K_G\Phi_G(S)\le O(B\ell_G)\cdot O(\gamma^{-1/3})\), we get \(\left||E(G')|-m_G/2\right|\le O(\ell_G\sqrt{B\gamma^{-1/3}})\). Since \(m_G=\Theta(\ell_G B)\), this is \(O(\gamma^{1/6})m_G\). Hence \(|E(G')|\le (1/2+O(\gamma^{1/6}))m_G\le (1/2+O(\gamma^{1/6}))|E(G)|\). Choosing \(\epsilon_0\) sufficiently small yields \(|E(G')|\le \frac23 |E(G)|\).

\paragraph{\(d_u\)-mass bound.}
Let \(D:=\sum_{u\in V_L} d_u\) and \(m_D:=\sum_{i=1}^{\ell_D}|U_i|\). Since we delete all leftover vertices not covered by the bundling, \(\sum_{u\in S} d_u = \sum_{i=1}^{\ell_D} z_i\), and by construction \(m_D\le D\). The same calculation as above gives
\[
\left|\sum_{u\in S} d_u-\frac{m_D}{2}\right|
\le
O(\gamma^{1/6})\,m_D.
\]
Hence \(\sum_{u\in S} d_u \le (1/2+O(\gamma^{1/6}))m_D \le (1/2+O(\gamma^{1/6}))D\). Choosing \(\epsilon_0\) sufficiently small gives \(\sum_{u\in S} d_u \le \frac23 \sum_{u\in V_L} d_u\).

\paragraph{Work and depth.}
The three control potentials are built by one copy of the distance grouping and two applications of near-independent bundling. The resulting auxiliary graph has \(O((|E|+|F|)/\gamma)\) edges and can be constructed in \(O((|V_L|+|V_R|+|E|+|F|)/\gamma)\) work. One application of \cref{lem:aqo} then yields the same asymptotic work bound and \(poly(\log n)\) depth.
\end{proof}

With this version and the aforementioned scheduling, we can finish the proof of the hitting-set lemma with the auxiliary objective.

\begin{proof}[Proof of \cref{lem:strong-hitting-set}]
\paragraph{Outline.}
The algorithm has two phases.

In the first phase, we treat each probability class separately and repeatedly apply \cref{lem:strict-half-sampling-aux} in order to transform \(p\) into \(\sqrt p\). This phase is responsible for creating geometric shrinkage in both the size of the bipartite graph and the auxiliary graph \(F\).

In the second phase, we combine all remaining classes and repeatedly round up the smallest remaining probability class until every nonzero probability equals \(2^{-c_0}\). We then deterministically keep all vertices with nonzero probability.

By the triangle inequality, it suffices to show that each phase contributes \(O(\epsilon)(W(G)+\hat W(G))\) to the distance, while the auxiliary objective \(\Phi\) increases by at most a constant factor. The work bound follows from the same geometric shrinkage.

Throughout the proof, let \(L:=(\log n)^A\), where \(A>0\) is a sufficiently large absolute constant. Everywhere below, the floor \(1/\log^4 n\) from the earlier lemmas may be replaced by \(1/L\).

\paragraph{The square-root phase.}
We first split the bipartite instance into probability classes \(G^{(1)},G^{(2)},\dots,G^{(\tau)}\), where \(\tau=O(\log n)\), and in \(G^{(i)}\) every left probability equals \(2^{-i}\).

Fix one class \(G^{(i)}\). Let \(r_i:=\lceil i/2\rceil\) and \(p_{i,j}:=2^{j-i}\) for \(0\le j\le r_i\). Thus \(p_{i,0}=2^{-i}\), and \(p_{i,r_i}\) is between \(2^{-i/2}\) and \(2^{-i/2+1}\), so after rounding we obtain the target probability \(2^{-\lfloor i/2\rfloor}\).

At round \(j\), we apply \cref{lem:strict-half-sampling-aux} to the current subinstance consisting of the left vertices of class \(2^{j-i}\) and all right vertices adjacent to them. The auxiliary graph used in this call is the restriction of the current auxiliary graph \(F\) to the left vertices of this class. Thus:
\begin{itemize}
    \item if an edge of \(F\) has both endpoints in the current class, it contributes a quadratic term in this round;
    \item if it has exactly one endpoint in the current class, then after fixing the other endpoint it contributes a nonnegative linear term in this round;
    \item if neither endpoint lies in the current class, it plays no role in this round.
\end{itemize}
For the parameters \(d_u\), we use the degree of \(u\) in the current auxiliary graph \(F\).

For the \(j\)-th round inside class \(i\), choose
\[
\gamma_{i,j}:=
\max\!\left\{
\eta \epsilon^6\cdot 0.95^{\,j},\;
\frac{1}{L}
\right\},
\]
where \(\eta>0\) is a sufficiently small absolute constant.

\paragraph{Distance and work.}
The proof is the same as for the fixed-probability square-root lemma: replace every appearance of \(\sqrt{\gamma}\) there by \(\gamma^{1/6}\), and replace \(\epsilon^2\) by \(\epsilon^6\) in the definition of \(\gamma_{i,j}\). This yields
\[
d\bigl(G^{(i)},(G^{(i)})'\bigr)
\le
\frac{\epsilon}{4}\,W\!\left(G^{(i)}\right)
+
\frac{\epsilon}{4}\,2^{-ci}\hat W(G)
\]
for some absolute constant \(c>0\), provided \(\eta\) is chosen sufficiently small and then \(c_0\) sufficiently large.

The same proof also gives \(|E((G^{(i)})')|\le (2/3)^{i/2}|E(G^{(i)})|\) and, because \cref{lem:strict-half-sampling-aux} gives the same one-sided shrinkage for the parameters \(d_u\),
\[
\sum_{u\in V_L((G^{(i)})')} d_u
\le
\left(\frac23\right)^{i/2}
\sum_{u\in V_L(G^{(i)})} d_u.
\]
The work bound is still linear in the size of \(G^{(i)}\), and the depth remains \(poly(\log n)\).

\paragraph{Auxiliary objective.}
To control the total increase of \(\Phi\), we use a weighted potential \(\Phi^*\). For a term of \(\Phi\), let \(u\in\{0,1,2\}\) denote the number of endpoints whose probability class has already been processed in the square-root phase. Define \(\Phi^*\) by multiplying that term by \((1+\epsilon_1)^{-u}\), where \(\epsilon_1>0\) is a sufficiently small absolute constant chosen later.

We claim that \(\Phi^*\) is nonincreasing throughout the square-root phase. Consider one round acting on a fixed class. Only the terms touching that class can change. By the one-step lemma, the restriction of \(\Phi\) to those terms increases by at most a factor \(1+\epsilon_1\), after normalizing the parameters appropriately. Indeed, the multiplicative loss in one round is \(1+O(\gamma_{i,j}^{1/3})\), so the total multiplicative loss over all rounds relevant to one endpoint is at most
\[
\prod_j \bigl(1+O(\gamma_{i,j}^{1/3})\bigr)
\le
\exp\!\left(O\!\left(\sum_j \gamma_{i,j}^{1/3}\right)\right).
\]
Now \(\sum_j \gamma_{i,j}^{1/3}=O(\eta^{1/3}\epsilon^2)+o(1)\) by the choice of \(\gamma_{i,j}\). Therefore, by choosing \(\eta\) sufficiently small, the above product is at most \(1+\epsilon_1\). On the other hand, every such term has its value in \(\Phi^*\) multiplied by an additional factor \((1+\epsilon_1)^{-1}\), because exactly one more endpoint has now been processed. Hence, the contribution of all terms touching the current class does not increase in \(\Phi^*\), while all other terms remain unchanged. Therefore \(\Phi^*\) is nonincreasing.

At the beginning of the square-root phase, \(\Phi^*=\Phi\). At the end, every term has had at most two endpoints processed, so \(\Phi \le (1+\epsilon_1)^2 \Phi^* \le (1+\epsilon_1)^2 \Phi_{\mathrm{initial}}\). Thus, the total auxiliary objective increases by at most a constant factor during the square-root phase.

Summing the distance bounds over all classes and using \(\sum_i W(G^{(i)})=W(G)\) and \(\sum_i 2^{-ci}=O(1)\), we conclude that the total distance introduced in the square-root phase is at most \(\frac{\epsilon}{2}(W(G)+\hat W(G))\), while \(\Phi\) has increased by at most a constant factor. The total work of this phase is \(O(|V|+|E|+|F|)\), and the depth is \(poly(\log n)\).

\paragraph{The cleanup phase.}
After the square-root phase, the remaining nonzero probabilities are \(2^{-u}\) for \(u\ge c_0\). We now repeatedly round up the smallest remaining class until every nonzero probability equals \(2^{-c_0}\).

Fix a stage in which the smallest remaining probability is \(2^{-u}\). Let \(H_u\) be the current subinstance formed by that probability class. We apply \cref{lem:strict-half-sampling-aux} to \(H_u\) with parameter
\[
\gamma_u
:=
\max\!\left\{
\eta\epsilon^6 2^{-\beta u},\;
\frac{1}{L}
\right\},
\]
where \(0<\beta<\log_2(3/2)\) is a sufficiently small absolute constant, and \(\eta>0\) is the same small constant as above.

\paragraph{Distance.}
By the one-step lemma, the distance added at stage \(u\) is at most \(C\gamma_u^{1/6}W(H_u)+C2^{-u}\hat W(G)/\gamma_u\). The first term sums to \(O(\epsilon)(W(G)+\hat W(G))\), because \(\sum_{u\ge c_0}\gamma_u^{1/6}\le C\epsilon\sum_{u\ge c_0}2^{-\beta u/6}+O(\log n/L^{1/6})=O(\epsilon)\). For the second term,
\[
\sum_{u\ge c_0}\frac{2^{-u}}{\gamma_u}
\le
\frac{1}{\eta\epsilon^6}\sum_{u\ge c_0}2^{-(1-\beta)u}
=
O(2^{-c'c_0})
\]
for some \(c'>0\). Choosing \(c_0=c_0(\epsilon)\) sufficiently large makes this contribution at most \(\frac{\epsilon}{4}\hat W(G)\). Thus, the total distance introduced in the cleanup phase is at most \(\frac{\epsilon}{2}(W(G)+\hat W(G))\).

\paragraph{Auxiliary objective.}
Exactly the same stagewise decomposition as in the square-root phase shows that if \(\Phi_t\) is the current auxiliary objective, then one cleanup step at level \(u\) multiplies it by at most \(1+\delta_u\), where \(\delta_u:=C_2\gamma_u^{1/3}\) for some absolute constant \(C_2\). Therefore, the total multiplicative increase over the cleanup phase is at most
\[
\prod_{u\ge c_0}(1+\delta_u)
\le
\exp\!\left(\sum_{u\ge c_0}\delta_u\right).
\]
Now \(\sum_{u\ge c_0}\delta_u\le C_2\sum_{u\ge c_0}\gamma_u^{1/3}\le C_2\eta^{1/3}\epsilon^2\sum_{u\ge c_0}2^{-\beta u/3}+O(\log n/L^{1/3})=O(\epsilon^2)+o(1)\). Hence, the cleanup phase also increases \(\Phi\) by at most a constant factor.

\paragraph{Work.}
By the square-root phase, when the smallest remaining probability is \(2^{-u}\), both the number of edges of the current class and the total \(d_u\)-mass of that class are at most \(O((2/3)^u)\) times their original values. Thus the work of the cleanup step at level \(u\) is bounded by \(O((2/3)^u(|E|+|F|)/\gamma_u)\). Since \(\beta<\log_2(3/2)\), we have \((2/3)\cdot 2^\beta<1\), and therefore \(\sum_{u\ge c_0}(2/3)^u/\gamma_u=O(1)\). It follows that the total work of the cleanup phase is \(O(|E|+|F|)\), and the depth is again \(poly(\log n)\).

\paragraph{Conclusion.}
At the end of the cleanup phase, every nonzero probability equals \(2^{-c_0}\). We now keep all left vertices whose probability is nonzero. By \cref{obs:hitting-to-distance} and the total distance bound proved above, at least a \((1-\epsilon)\)-fraction of the total right weight is happy. The total multiplicative increase of \(\Phi\) over the two phases is bounded by an absolute constant. The total work is \(O(|V|+|E|+|F|)\), and the total depth is \(poly(\log n)\).
\end{proof}

\bibliographystyle{alpha}
\bibliography{refs, ref2}
\appendix
\end{document}